\documentclass[a4paper, 11pt, leqno]{amsart} 

\usepackage[utf8]{inputenc}
\usepackage[T1]{fontenc}	

\setlength{\textwidth}{6truein}
\setlength{\hoffset}{-0.5truein}

\usepackage{lmodern}			

\usepackage{graphicx}	

\usepackage[plainpages=false,colorlinks,linkcolor=bleuFonce,citecolor=rougeFonce,urlcolor=vertFonce,breaklinks,unicode,bookmarks=true]{hyperref}

\usepackage{placeins}
\usepackage{float} 

\usepackage{tikz}
\usepackage{makecell}
\usepackage{dsfont}		
\usepackage{mathtools}
\usetikzlibrary{patterns}
\usetikzlibrary{decorations.pathreplacing}


\usepackage{etoolbox}
\usepackage{hyperref} 

\newcounter{Hcounter}

\newenvironment{HypothesisList}{%
  \setcounter{Hcounter}{0} 
  \begin{enumerate}
}{%
  \end{enumerate}
}

\newcommand{\Hitem}{\stepcounter{Hcounter}\item}

\usepackage[normalem]{ulem}

\usepackage{color}
\definecolor{vertFonce}	{rgb}{0,0.5,0}
\definecolor{numLignes}	{rgb}{0.17,0.57,0.7}	
\definecolor{gris}		{rgb}{0.5,0.5,0.5}
\definecolor{grisFonce}	{rgb}{0.2,0.2,0.2}
\definecolor{orange}	{rgb}{1,0.65,0.31}		
\definecolor{orangeFonce}{rgb}{1,0.4,0}
\definecolor{bleuFonce}	{rgb}{0,0,0.4}
\definecolor{rougeFonce}{rgb}{0.3,0,0}
\definecolor{rougeWord}	{rgb}{0.5,0,0}
\definecolor{vertClair}	{rgb}{0.8,1,0.8}
\definecolor{rougeClair}{rgb}{1,0.5,0.5}
\definecolor{violet}	{rgb}{0.5,0,0.5}

\usepackage{pict2e}
\setlength{\unitlength}{4pt}

\usepackage{multido}
\usepackage{upgreek}

\usepackage{amsfonts,amssymb,amsthm,amsmath}
\usepackage{dsfont}				
\usepackage{mathrsfs}
\usepackage[mathscr]{euscript}
\usepackage{yfonts}
\usepackage{cancel}
\usepackage{enumerate}

\theoremstyle{plain}
\newtheorem{theorem}{Theorem}[section]
\newtheorem{lem}[theorem]{Lemma}

\newtheorem{proposition}[theorem]{Proposition}
\newtheorem{cor}[theorem]{Corollary}
\newtheorem{prop}[theorem]{Proposition}

\theoremstyle{definition}

\theoremstyle{remark}
\newtheorem{remark}{Remark}[section]
\newtheorem{remarks}{Remarks}[section]

\newenvironment{thm}[1][]{
\begin{theorem}[#1]
}{
\end{theorem}
}

\usepackage{stmaryrd}
\SetSymbolFont{stmry}{bold}{U}{stmry}{m}{n}
\newcommand		{\subsetArrow}	{\mathrel{\ooalign{$\subset$\cr
		\hidewidth\raise-.087ex\hbox{$_\shortrightarrow\mkern-1.5mu$}\cr}}}
\newcommand		{\subsetarrow}	{\mathrel{\ooalign{$\subset$\cr
		\hidewidth\raise-1.45ex\hbox{$\vec{}\mkern6mu$}\cr}}}

\newcommand		{\N}		{\mathbb N}			
\newcommand		{\RR}		{\mathbb R}			
\newcommand		{\R}		{\RR}

\newcommand		{\Rd}		{\R^3}

\newcommand		{\CC}		{\mathbb C}			
			
\newcommand		{\cH}		{\mathcal H}		
		
\newcommand		{\cM}		{\mathcal M}		
\newcommand		{\cN}		{\mathcal N}		
\newcommand		{\cP}		{\mathcal P}

\newcommand		{\cK}		{\mathcal K}		
		
\newcommand		{\cW}		{\mathcal W}		

\newcommand		{\cA}		{\mathcal A}
\newcommand		{\cB}		{\mathcal B}
\newcommand		{\cE}		{\mathcal E}
\newcommand		{\cG}		{\mathcal G}
\newcommand		{\cI}		{\mathcal I}

\newcommand		{\cO}		{\mathcal O}
\newcommand		{\cQ}		{\mathcal Q}

\newcommand		{\cT}		{\mathcal T}
\newcommand		{\cU}		{\mathcal U}
\newcommand		{\cV}		{\mathcal V}
\newcommand		{\cL}		{\mathcal L}

\newcommand		\sfG		{\mathsf G}

\newcommand		\sfJ		{\mathsf J}

\newcommand		\sfL		{\mathsf L}			
			
\newcommand		\sfR		{\mathsf R}

\newcommand		{\lt}			{\left}				
\newcommand		{\rt}			{\right}			
\renewcommand	{\(}			{\lt(}
\renewcommand	{\)}			{\rt)}
\newcommand		{\bangle}[1]	{\lt\langle #1\rt\rangle}
\newcommand		{\weight}[1]	{\bangle{#1}}	

\newcommand		{\inprod}[2]	{\bangle{#1, #2}}
\newcommand		{\com}[1]		{\lt[{#1}\rt]}

\newcommand		{\n}[1]			{\lt\lvert #1 \rt\rvert}
\newcommand		{\norm}[1]		{\big\lVert #1 \big\rVert}		
\newcommand		{\nrm}[1]		{\lt\lVert #1\rt\rVert}
\newcommand		{\Nrm}[2]		{\nrm{#1}_{#2}}

\renewcommand		{\d}		{\mathrm{d}}		
\newcommand			{\dd}		{\,\d}				
\newcommand			{\bd}		{\partial}

\newcommand			{\grad}		{\nabla}
\newcommand			{\lapl}		{\Delta}

\newcommand			{\conj}[1]	{\overline{#1}}

\DeclareMathOperator{\cF}		{\mathcal{F}}

\DeclareMathOperator{\re}		{Re}				
\DeclareMathOperator{\im}		{Im}				
				
\DeclareMathOperator{\tr}		{Tr}

\DeclareMathOperator{\Nor}		{Nor}

\newcommand			{\ch}		{\operatorname{ch}}
\newcommand			{\sh}		{\operatorname{sh}}

\newcommand		{\nor}[1]		{\Nor\!\lt\{ #1 \rt\}}

\newcommand		{\intd}			{\int_{\Rd}}

\newcommand		{\iintd}		{\iint_{\Rd\times\Rd}}

\newcommand		{\ii}			{\mathrm{i}}	
\newcommand		{\jj}			{\mathrm{j}}	
\newcommand		{\init}			{\mathrm{in}}

\newcommand		{\eps}			{\varepsilon}

\usepackage{braket}

\newcommand		{\wt}		{\widetilde}

\newcommand{\vect}[1]{\boldsymbol{\mathbf{#1}}}

\newcommand		{\h}		{\mathfrak{h}}		

\newcommand		{\fH}		{\mathfrak{H}}

\newcommand		\sfK		{\mathsf K}

\newcommand		\sfM		{\mathsf M}
	
\newcommand		\sfS		{\mathsf S}	
\newcommand		\sfW		{\mathsf W}

\newcommand     \sfn        {\mathsf n}

\newcommand{\id}{\mathds{1}}

\newcommand{\dGamma}{\d\Gamma}

\newcommand {\asc}      {\mathfrak{a}}
\newcommand {\bsc}      {\mathfrak{b}}
\newcommand {\cc}      {\mathfrak{c}}

\newcommand		{\wtbphi}	{\boldsymbol{\wt\phi}}
\newcommand		{\bphi}	{\boldsymbol{\phi}}

\numberwithin{equation}{section}

\begin{document}

\title[Quantitative Derivation of the Two-Component GP Equation]{Quantitative Derivation of the Two-Component Gross--Pitaevskii Equation in the Hard-core Limit with Uniform-in-Time Convergence Rate}

\author{Jacky Chong}
\address[J. Chong]{School of Mathematical Sciences, Peking University, Beijing, China}
\email{jwchong@math.pku.edu.cn}    

\author{Jinyeop Lee}
\address[J. Lee]{Department of Mathematics, University of British Columbia, 1984 Mathematics Rd., Vancouver, BC, Canada V6T 1Z2}
\email{lee@math.ubc.ca}
\thanks{Corresponding author: Jinyeop Lee, 
\texttt{lee@math.ubc.ca}}

\author{Zhiwei Sun}
\address[Z. Sun]{Institute of Analysis and Scientific Computing, TU Wien, Wiedner Hauptstraße 8--10, 1040 Wien, Austria}
\email{zhiwei.sun@asc.tuwien.ac.at}

\subjclass[2020]{Primary  82C10, 35Q55 ; Secondary 35Q40}

\begin{abstract}
We derive the time-dependent two-component Gross--Pitaevskii (GP) equation as an effective description of the dynamics of a dilute two-component Bose gas near its ground state, which exhibits a two-component Bose--Einstein condensate, in the GP limit. Our main result establishes a uniform-in-time bound on the convergence rate between the many-body dynamics and the effective description, explicitly quantified in terms of the particle number $N$, and also implies a uniform-in-time bound for the one-component case. This improves upon the works of Michelangeli and Olgliati \cite{MichelangeliOlgliatiBEC, olgiati2017effective} by providing a sharper, $N$-dependent, time-independent convergence rate. Our approach further extends the framework of Benedikter, de Oliveira, and Schlein \cite{Benedikter2015quantitative} to the multi-component Bose gas in the hard-core limit setting. More specifically, we develop the necessary Bogoliubov theory to analyze the dynamics of multi-component Bose gases in the GP regime.  
\end{abstract}

\keywords{many-body dynamics, multi-component Bose--Einstein condensate, dilute Bose gas, multi-component Gross--Pitaevskii equation, Bogoliubov transformations, Bogoliubov states}

\maketitle

\section{Background and Main Result}

\subsection{Setting} Following the first successful experimental realization of Bose--Einstein condensate (BEC) in a gas of $^{87}$Rb \cite{WC1995BEC}, BECs have garnered significant attention from both theoretical and experimental researchers. Subsequently, two-component mixture BECs have also been successfully observed in gases of atoms of the
same element, typically $^{87}$Rb, which occupy two hyperfine states \cite{myatt1997production, hall1998dynamics},  and in heteronuclear mixture systems such as $^{41}$K–$^{87}$Rb \cite{modugno2001bose}, $^{41}$K–$^{85}$Rb \cite{modugno2002two}, $^{39}$K–$^{85}$Rb \cite{mancini2004observation} , and $^{85}$Rb–$^{87}$Rb \cite{papp2006observation}. For a review of the physical
properties of quantum mixtures we refer to \cite{pitaevskii2016bose}. 

We consider a two-component Bose gas consisting of $N_1$ bosons of the first species and $N_2$ bosons of the second species, with a total particle number $N = N_1 + N_2$, and assume that the ratios $N_\ii / N$ converge to $\sfn_\ii \in (0, 1)$ as $N \to \infty$.
The particles are confined by a trapping potential $W_{\rm trap}(x)$ to a volume of order one and interact through repulsive, short-range potentials $V_{\ii}$ with effective range of order $1/N$. Denoting $V^{\lambda}_{\ii} := \lambda V_{\ii}$ for some coupling constant $\lambda \ge 1$ (possibly $N$‑dependent), we model the system by the Hamiltonian
\begin{equation}\label{eq:2species_trapped_Hamiltonian} 
	\begin{aligned}
		H_{N_1, N_2}^{\rm trap} =&\, \sum_{j=1}^{N_1} \(-\lapl_{x_j}+W_{\rm trap}(x_j)\)+ \sum_{j<k}^{N_1} N^2 
         V^\lambda_{1}(N (x_k - x_j))\\
		&\, +\sum_{j=1}^{N_2} \(-\lapl_{y_j}+W_{\rm trap}(y_j)\)+ \sum_{j<k}^{N_2} N^2 V^\lambda_{2}(N (y_k - y_j))  \\
		&\, + \sum^{N_2}_{j=1}\sum_{k=1}^{N_1} N^2 V^\lambda_{12}(N (x_k - y_j)),
	\end{aligned}
\end{equation}
which acts on the Hilbert space $\fH_{N_1, N_2}=L^2_s (\RR^{3N_1})\otimes L^2_s(\RR^{3N_2})$, the subspace of $L^2 (\RR^{3N})$ consisting of wave functions that are symmetric with respect to intra-species particle permutations, that is, 
\begin{align*}
	\psi_{N}(x_1, \ldots, x_{N_1}; y_1, \ldots, y_{N_2})= \psi_{N}(x_{\sigma(1)}, \ldots, x_{\sigma(N_1)}; y_{\pi(1)}, \ldots, y_{\pi(N_2)})
\end{align*}
for every $\sigma \in \mathfrak{S}_{N_1}$ and $\pi \in \mathfrak{S}_{N_2}$, where $\mathfrak{S}_n$ is the symmetric group of $n$ elements. Moreover, in the rest of the paper, we make the following assumption on $V_\ii$ for $\ii\in\{1,2,12\}$.
\begin{HypothesisList}
    \Hitem \label{assume: L3} The interaction potential $V_\ii$ satisfies $V_\ii \in L^1(\mathbb{R}^3) \cap L^3(\mathbb{R}^3)$ 
    and is a positive radial function with compact support $ \{ x \in \mathbb{R}^3 :\n{x} \leq \bsc_\ii \}$.
    \Hitem \label{assume: boundedness} The potential $V_\ii$ is essentially bounded, i.e., $V_\ii \in L^\infty(\mathbb{R}^3)$.
\end{HypothesisList}
Our analysis employs Assumption~\ref{assume: L3} for the case of $\lambda = 1$, 
and both Assumptions~\ref{assume: L3} and \ref{assume: boundedness} in the limiting regime $\lambda \to \infty$.
Hence, Assumption~\ref{assume: L3} is always satisfied throughout this work.

An important quantity to consider is the scattering length $\asc \ge 0$ of an interaction potential $V$ defined through the solution $f$ of the zero-energy scattering problem 
\begin{align}\label{eq:zero-energy_scattering}
	\(-\lapl+ \tfrac12V \) f = 0, 
    \quad \text{ with }  \quad
    \lim_{\n{x}\rightarrow \infty} f(x) = 1,
\end{align}
via the expression
\begin{equation}
    8\pi \asc:= \intd V(x) f(x)\dd x\ ,
\end{equation}
so that $f(x) = 1 - \asc/\n{x}$ outside the support of $V$. 
The scattering lengths $\asc_\ii$ of $V_\ii$ play important roles in determining of the ground state energy of the system $E_{N_1, N_2}^{\rm trap}$ defined by
\begin{equation*}
    E_{N_1, N_2}^{\rm trap}
    :=
    \min_{\substack{\psi_{N} \in \fH_{N_1, N_2} \\  \Nrm{\psi_{N}}{L^2}^2=1 }} 
    \langle \psi_{N},\, H_{N_1, N_2}^{\rm trap} \psi_{N}  \rangle.
\end{equation*}
In \cite{michelangeli2019ground}, it was proven, to leading order in $N$, the ground state energy with $\lambda=1$ fixed, satisfies the property
\begin{equation}\label{eq:conv1} 
	\lim_{N \to \infty} \frac{E_{N_1, N_2}^{\rm trap}}{N} = \min_{\substack{u, v \in H^1(\RR^3) \\  \Nrm{u}{L^2}^2=1,\, \Nrm{v}{L^2}^2=1}} \cE_{\rm GP}[u, v] =: e_{\rm GP}
\end{equation} 
with the two-component Gross--Pitaevskii (GP) energy functional 
\begin{equation}\label{eq:GP_energy_functional}
	\begin{aligned}
		\cE_{\rm GP}[u, v]  =&\, \intd \sfn_1\n{\nabla u(x)}^2 +   W_{\rm trap} \sfn_1\n{u(x)}^2 + 4 \pi  \asc_{1} \sfn_1^2\n{u(x)}^4 \dd x\\
		&\, +\intd \sfn_2\n{\nabla v(x)}^2 + W_{\rm trap} \sfn_2\n{v(x)}^2 + 4 \pi  \asc_{2}\sfn_2^2\n{v(x)}^4 \dd x\\
		&\, +\intd 8\pi \asc_{12}\sfn_1\sfn_2 \n{u(x)}^2\n{v(x)}^2\dd x \, .
	\end{aligned}
\end{equation} 
provided $\asc_{1}\asc_{2}-\asc_{12}^2\ge 0$, which is called the \emph{miscibility condition}.
Note that the miscibility condition guarantees the existence and uniqueness of the minimizer to the two-component GP functional. 

Let $( u_{\rm GP},  v_{\rm GP}) \in L^2 (\RR^3; \CC^2)$ denote the normalized minimizer (unique, up to a phase) of the GP functional~\eqref{eq:GP_energy_functional}. It turns out that the ground state of the Hamiltonian~\eqref{eq:2species_trapped_Hamiltonian} and, in fact, every sequence of approximate ground states exhibits a complete two-component BEC in the states $u_{\rm GP}$ and $v_{\rm GP}$. More precisely, let us consider a normalized sequence $\psi_{N} \in \h_{N_1, N_2}$ satisfying 
\begin{align*}
	\frac{1}{N} \inprod{\psi_{N}}{H_{N_1, N_2}^\text{trap} \psi_{N}} \to e_{\rm GP}
\end{align*}
as $N \to \infty$ (i.e., $\psi_{N}$ is a sequence of approximate ground states). Let $\gamma_{N}^{(k, \ell)}$ denote the $(k,\ell)$-particle reduced density matrix associated with $\psi_{N}$ and $k, \ell \in \N_{0}$, which is defined as the non-negative trace class operator on $L^2 (\RR^{3k})\otimes L^2(\RR^{3\ell})$ with the integral kernel 
\begin{align}\label{def:reduced_density}
	\gamma^{(k,\ell)}_{N}(X,Y; X',Y')
	:=\, \int_{\mathbb{R}^{3(N_1-k)}}\int_{\mathbb{R}^{3(N_2-\ell)}}\psi_{N}(X,Z;Y,Z')\, \overline{\psi_{N}(X',Z;Y',Z')} \dd Z\d Z'
\end{align}
where $X=(x_1, \ldots, x_{k})\in \RR^{3k}$ and $Y=(y_1, \ldots, y_{\ell})\in \RR^{3\ell}$. 
By letting $\norm{\psi_N}_{L^2}=1$, we normalize $\gamma_{N}^{(k, \ell)}$, that is, $\tr_{L^2 (\RR^{3k})\otimes L^2(\RR^{3\ell})}(\gamma_{N}^{(k, \ell)}) = 1$. Then, it was first proved in \cite{michelangeli2019ground} that 
\begin{equation}\label{eq:BEC} 
	\lim_{N \to \infty} \inprod{u_{\rm GP}^{\otimes k}\otimes v_{\rm GP}^{\otimes\ell}}{\gamma_{N}^{(k, \ell)} u_{\rm GP}^{\otimes k}\otimes v_{\rm GP}^{\otimes\ell}}= 1.
\end{equation}

When the trapping potential in the Hamiltonian~\eqref{eq:2species_trapped_Hamiltonian} is switched off (so that the system is no longer in equilibrium), the time evolution of system is governed by the $N$-particle Schr\"{o}dinger equation of two-species:
\begin{equation}\label{eq:linear_schrodinger}  
	i \partial_t \psi_{N, t} = H_{N_1, N_2} \psi_{N, t}
\end{equation} 
with the corresponding translation-invariant Hamiltonian
\begin{equation}\label{eq:2species_Hamiltonian} 
	\begin{aligned}
		H_{N_1, N_2} =&\, \sum_{j=1}^{N_1} -\lapl_{x_j} +\sum_{j=1}^{N_2} -\lapl_{y_j}
		+ \sum_{k<j}^{N_1} N^2 V^{\lambda}_{1}(N (x_k - x_j)) \\ &\, + \sum_{k<j}^{N_2} N^2 V^{\lambda}_{2}(N (y_k - y_j)) + \sum^{N_2}_{j=1}\sum_{k=1}^{N_1} N^2 V^{\lambda}_{12}(N (x_k - y_j))
	\end{aligned}
\end{equation}
for initial data $\psi_{N}^\init$ approximating the ground state of Hamiltonian~\eqref{eq:2species_trapped_Hamiltonian}.

\noindent\textbf{(a) GP limit}. For fixed coupling constant $\lambda = 1$, suppose the time-evolution $\psi_{N, t}$ exhibits BEC in the factorized form $u_t^{\otimes N_1}\otimes v_t^{\otimes N_2} \in L^2_s (\RR^{3N_1})\otimes L^2_s(\RR^{3N_2})$ at time $t$. Then the rescaled condensate profiles $(\phi_{1,t},\, \phi_{2,t})$ defined through the density-scaling
\begin{equation}
\phi_{1,t} = \sqrt{\sfn_1}\, u_t 
\quad\text{and}\quad 
\phi_{2,t} = \sqrt{\sfn_2}\, v_t
\end{equation}
satisfy the coupled GP system derived from the Hamiltonian \eqref{eq:GP_energy_functional} in the absence of trapping potential:
\begin{equation}\label{eq:GP_system}
	\left\{
	\begin{aligned}
		i\bd_t\phi_{1,t} =&\, -\lapl \phi_{1,t} + 8\pi \asc_{1}\n{\phi_{1,t}}^2\phi_{1,t} + 8\pi \asc_{12}\n{\phi_{2,t}}^2\phi_{1,t} \, ,\\
		i\bd_t\phi_{2,t} =&\, -\lapl \phi_{2,t} + 8\pi \asc_{2}\n{\phi_{2,t}}^2\phi_{2,t} + 8\pi \asc_{12}\n{\phi_{1,t}}^2\phi_{2,t} \, ,\\
		&\Nrm{\phi_{1}}{L^2}^2 = \sfn_1, \qquad \Nrm{\phi_2}{L^2}^2 = \sfn_2 \, .
	\end{aligned}
	\right.
\end{equation}

\noindent\textbf{(b) Hard-core limit}.
Moreover, we are also interested in the case $\lambda\rightarrow \infty$, which we called the \emph{hard-core limit}, since in this case we have 
\begin{align}\label{hard core limit}
    \lim_{\lambda \rightarrow \infty} \lambda V_\ii(x) =
    \begin{cases}
        \infty & \text{ if } \n{x}< \bsc_\ii;\\
        0 & \text{ otherwise},
    \end{cases}
\end{align}
pointwise almost everywhere. In the hard-core limit, we expect to derive the two-component GP equation \begin{equation}\label{eq:GP_system hard core}
	\left\{
	\begin{aligned}
		i\bd_t\phi_{1,t} =&\, -\lapl \phi_{1,t} + 8\pi \bsc_{1}\n{\phi_{1,t}}^2\phi_{1,t} + 8\pi \bsc_{12}\n{\phi_{2,t}}^2\phi_{1,t} \, ,\\
		i\bd_t\phi_{2,t} =&\, -\lapl \phi_{2,t} + 8\pi \bsc_{2}\n{\phi_{2,t}}^2\phi_{2,t} + 8\pi \bsc_{12}\n{\phi_{1,t}}^2\phi_{2,t} \, ,\\
		&\Nrm{\phi_{1}}{L^2}^2 = \sfn_1, \qquad \Nrm{\phi_2}{L^2}^2 = \sfn_2 \, .
	\end{aligned}
	\right.
\end{equation}

In the one-component setting, rigorous quantitative methods for deriving the GP equation have been developed, as shown in \cite{Benedikter2015quantitative, boccato2017quantum, caraci2024quantum}. 
For the two-component quantum mixtures, \cite{olgiati2017effective} shows that the time-evolution \eqref{eq:linear_schrodinger} of an initial data $\psi_{N}^{\init}$ exhibiting BEC  still exhibits BEC at time $t$, described by \eqref{eq:GP_system}.
However, several fundamental questions remain open.
First, while \cite{olgiati2017effective} established qualitative convergence of the many-body Schrödinger dynamics to the GP system, explicit $N$-dependent convergence rate of the error remains undetermined. Second, results in the GP regime guarantee approximation accuracy only with double-exponential-in-time bound (see \cite{Benedikter2015quantitative, caraci2024quantum}), with uniform‑in‑time error bounds remain unexplored. 

This work bridges these gaps by deriving the GP systems \eqref{eq:GP_system} and \eqref{eq:GP_system hard core} from two-component quantum mixtures in the GP scaling regime, addressing both the $\lambda = 1$ and $\lambda\rightarrow +\infty$ (hard-core) limits. 
Our contributions are as follows.
\begin{itemize}
    \item 
    \textbf{Quantitative Convergence Rate}.
We extend the quantitative Bogoliubov transformation framework of \cite{Benedikter2015quantitative} to the two-component setting, deriving an explicit $O(N^{-1/2})$ error bound between the one‑particle reduced density operator and GP dynamics. 
Unlike previous qualitative results for two‑component condensates \cite{olgiati2017effective}, our approach yields a particle‑number–dependent convergence rate that is robust even in the hard‑core limit of strong interactions \eqref{hard core limit} .
    \item \textbf{Uniform‑in‑Time Control}. To obtain a uniform-in-time error estimate,
we introduce a modified GP system with a solution denoted by $\wtbphi^{(N)}_t$, which arises from a localizing the zero-energy scattering problem to the corresponding Neumann problem. This localization, along with the more refined estimates of Proposition~\ref{prop:estimate of fluctuation}, allows us to bound the fluctuation generator in terms of
\begin{equation}\label{integral of modified solution}
    \int_0^T \|\wtbphi^{(N)}_t\|_{L^{\infty}_x} \d t
\end{equation}
and—by using Morawetz estimates with Strichartz estimates—prove that this time integral remains bounded for all $T<\infty$. As a result, the convergence error remains uniformly controlled for arbitrarily large times.
\end{itemize}

\subsection{Modified Gross--Pitaevskii System}
The goal is to approximate the many-body wave function $\psi_{N,t}$ in the sense of reduced density matrices for a class of initial data exhibiting a complete BEC mixture.
Following the approaches in \cite{boccato2017quantum, Brennecke2019, caraci2024quantum}, it is convenient to consider a slightly modified, $N$-dependent, self-consistent equations instead of \eqref{eq:GP_system}. 
In the modified system, the interaction potential that appears in \eqref{eq:2species_Hamiltonian} is corrected, to take into account the correlations among the particles.
Fix $\ell>0$, we localize the solution of \eqref{eq:zero-energy_scattering} to the ball $\n{x} \leq N\ell$, namely, we consider solution $f_{\ii, \ell}$ to the Neumann problem
\begin{equation}\label{eq:scatl} 
\begin{cases}
	\(-\lapl + \tfrac12 V^{\lambda}_{\ii}\) f_{\ii, \ell} = \nu_{\ii,\ell} f_{\ii, \ell}
    & \text{for } \n{x} \leq N\ell ,\\[1ex]
    (x\cdot \nabla)f_{\ii, \ell} (x) = 0,
    \quad
    f_{\ii, \ell} (x) = 1 
    & \text{on } \n{x} = N\ell,
\end{cases}
\end{equation}
where $f_{\ii, \ell}(x)$ is extended by 1 outside the ball.
We omit here the $\lambda$ dependence in the notation for $f_{\ii, \ell}$ and for $\nu_{\ii,\ell}$; notice that $\nu_{\ii,\ell}$ scales as $N^{-3}$. 
We introduce the notation $w_{\ii, \ell} (x) := 1 - f_{\ii, \ell} (x)$. To describe the correlations created by the rescaled interaction appearing in Hamiltonian~\eqref{eq:2species_trapped_Hamiltonian} and in \eqref{eq:2species_Hamiltonian}, we use the functions $f^{(N)}_{\ii, \ell}(x) = f_{\ii, \ell} (Nx)$, $w^{(N)}_{\ii, \ell} (x) = w_{\ii, \ell} (Nx) = 1- f^{(N)}_{\ii, \ell}(x)$. By scaling, we observe that 
\begin{align}\label{eq:scatlN} 
	\(-\lapl + \tfrac12 N^2V^{\lambda}_{\ii} (N \cdot ) \) f^{(N)}_{\ii, \ell} = N^2 \nu_{\ii,\ell} f^{(N)}_{\ii, \ell} =: \nu_{\ii,\ell}^{(N)} f^{(N)}_{\ii, \ell}
\end{align}
on the ball $\n{x} \leq \ell$. With this choice, we expect that $f_{\ii, \ell}$ will be close, in the large $N$ limit, to the solution of the zero-energy scattering equation \eqref{eq:zero-energy_scattering}. 

With $f_{\ii,\ell}$ satisfying \eqref{eq:scatl}, we now define the two-component condensate wave function $\wt{\vect{\phi}}^{(N)}_t=(\wt \phi_{1, t}^{(N)}, \wt \phi_{2, t}^{(N)})^\top$ satisfying the modified Gross--Pitaevskii (GP) system  
\begin{equation}\label{eq:modified_GP_system}
	\left\{
	\begin{aligned}
		i\bd_t\wt \phi_1^{(N)} =&\, -\lapl \wt \phi_1^{(N)} +\Big(N^{3}V^{\lambda}_{1}(N\cdot)f_{1, \ell}(N\cdot)\ast |\wt \phi_1^{(N)}|^2\Big)\wt \phi_1^{(N)} \\
		&+ \Big(N^{3}V^{\lambda}_{12}(N\cdot)f_{12, \ell}(N\cdot)\ast |\wt \phi_2^{(N)}|^2\Big)\wt \phi_1^{(N)} \, ,\\
		i\bd_t\wt \phi_2^{(N)} =&\, -\lapl \wt \phi_2^{(N)} + \Big(N^{3}V^{\lambda}_{2}(N\cdot)f_{2, \ell}(N\cdot)\ast |\wt \phi_2^{(N)}|^2\Big)\wt \phi_2^{(N)} \\
		&+ \Big(N^{3}V^{\lambda}_{12}(N\cdot)f_{12, \ell}(N\cdot)\ast |\wt \phi_1^{(N)}|^2\Big)\wt \phi_2^{(N)} \, ,\\
		&\norm{\wt \phi_{1, 0}^{(N)}}_{L^2_x}^2   = \frac{N_1}{N}, \qquad 
		\norm{\wt \phi_{2, 0}^{(N)}}_{L^2_x}^2  = \frac{N_2}{N} \, .
	\end{aligned}
	\right.
\end{equation}

\subsection{Some Fock space notations}
To derive an explicit rate of convergence, we employ the Bogoliubov state approximation in the grand canonical picture, which reduces the problem of comparing the one‑particle reduced density operator of $\Psi_{N,t}$ with the orthogonal projection operator onto $\widetilde{\boldsymbol\phi}^{(N)}_t$ to the analysis of a fluctuation dynamics.
To explain this approach, we start by reviewing some rudimentary Fock space (grand canonical) formalism. 

The two-component bosonic Fock space is defined by
\begin{equation*}
	\cF:=
    \bigoplus_{n,m\in\mathbb{N}_0 } 
    L_s^2(\mathbb{R}^{3n}) \otimes L_s^2(\mathbb{R}^{3m}).
\end{equation*}
Normalized two-component Fock state $\Psi = \{\psi_{n,m}\}_{n,m\in\N_0} \in \cF$ describes a system that has a probability $\nrm{\psi_{n,m}}^2$ to contain $n$ particles of first species and $m$ particles of second species in the state $\psi_{n, m}/\nrm{\psi_{n, m}}$.
For any $f\oplus g \in L^2 (\RR^{3})\oplus L^2(\RR^{3})$, we define the corresponding creation and annihilation operators by $z^\ast(f\oplus g)$ and $z(f\oplus g)$. We also write $ a^*(f):=z^\ast(f\oplus 0)$ and $a(f):=z(f\oplus 0)$, which are the creation and annihilation operators for the first species. They act on $\cF$ by creating and annihilating a particle of the first species with a wave function $f$, respectively.  Similarly, we write $b^*(g):=z^\ast(0\oplus g)$ and $b(g):=z(0\oplus g)$ for the second species.
It is also convenient to introduce the spatially localized operator-valued distributions $a^\ast_x, a_x, b^\ast_x$ and $b_x$
(definitions and basic properties are given in Section \ref{sec: Fock}).

In terms of these distributions, we define the Hamiltonian
\begin{equation}\label{def:two-component_Fock_Hamiltonian}
	\begin{aligned}
		\cH_{N} =&\, \cH_{1} + \cH_{2} + \cV_{12} \, ,
	\end{aligned}
\end{equation}
where the intra-species Hamiltonian $\cH_{\ii}$ and the inter-species interaction potential $\cV_{12}$ are given by
\begin{equation}\label{intra and inter Hamiltonian}
	\left\{\begin{aligned}
		\cH_{1}=&\, \intd 
        a^\ast_x(-\lapl_x) a_x\dd x 
        +\frac12\iintd N^2 V^{\lambda}_{1}(N(x-y))\, a^{\ast}_x a^{\ast}_y a_y a_x\dd x\d y , \\
		\cH_{2}=&\, \intd 
         b^\ast_x(-\lapl_x)b_x\dd x 
        +\frac12\iintd N^2 V^{\lambda}_{2}(N(x-y))\, b^{\ast}_x b^{\ast}_y b_y b_x\dd x\d y, \\
		\cV_{12}=&\, \iintd N^2 V^{\lambda}_{12}(N(x-y))\,a^{\ast}_x a_x b^{\ast}_y b_y\dd x\d y \, .
	\end{aligned}\right.
\end{equation}
Another useful operator to consider is the total number operator $\cN=\cN_1 + \cN_2$, whose action on $\psi = \{\psi_{n,m}\}_{n,m\in\mathbb{N}_0}\in \cF$ yields $(\cN_1 \psi)_{n,m} = n \psi_{n,m}$, and $(\cN_2 \psi)_{n,m} = m \psi_{n,m}$. 

Following \cite{Benedikter2015quantitative}, we also define the two-component coherent state. We start by defining the two-component Weyl operator  
\begin{equation}
    \cW(\mathbf{f}) = \exp\(a^*(f_1)+ b^*(f_2) - a(f_1)-b(f_2)\).
\end{equation}
Denote the Fock vacuum in $\cF$ by $\Omega= (1, 0, 0, \ldots)\otimes (1, 0, 0, \ldots)$, which describes a state without particles. A two-component coherent state of $\mathbf{f}$ is defined by
\begin{align}
	\cW(\mathbf{f}) \Omega
	= e^{-\frac{\|\mathbf{f}\|_2^2}{2}}
    \sum_{n,m\ge0}
    \frac{a^\ast(f_1)^n\;
    b^\ast(f_2)^m
    \Omega}{n!\;m!} \, .
\end{align}


\subsection{Bogoliubov state approximation}
In the one‑component GP regime, it was already known from \cite{erdos2010derivation} that the many‑body evolution of a coherent state develops significant two-body correlation structures and therefore cannot remain approximately a coherent state. 
To capture these correlations, \cite{Benedikter2015quantitative} introduce the following Bogoliubov operator on Fock space
\begin{equation}\label{operator T for one-species}
T(k)=\exp \left(\frac{1}{2} \iintd \lt\{k(x, y)\, a_x^* a_y^*-\conj{k(x, y)}\, a_x a_y\right\}\d x\d y\right),
\end{equation}
where the kernel $k\in L^2(\mathbb{R}^3\times \mathbb{R}^3)$ is chosen to encode the short‐range correlation structure.

In the following, we extend this construction to the two-component setting.
Introducing the following matrix-valued kernel called the pair excitation matrix: 
\begin{equation}
	\boldsymbol{k}(x, y)=
	\begin{pmatrix}
		k_{1}(x, y) & k_{12}(x, y) \\
		k_{21}(x, y) & k_{2}(x, y) 
	\end{pmatrix}    ,
\end{equation}
and the operator-valued distribution vectors 
\begin{equation}\label{distribution vector}
    z_x = \begin{pmatrix}a_x\\b_x\end{pmatrix},
\qquad
z_x^* = \begin{pmatrix}a_x^*\\b_x^*\end{pmatrix}.
\end{equation}
We formally define the two‑component Bogoliubov operator by
\begin{equation}\label{eq: generalized T}
	\cT(\boldsymbol{k}) 
    = 
    \exp \left(
    \frac 12 \iintd 
    \lt\{ (z^\ast_x)^\top \boldsymbol{k}(x, y) z_y^*
    -
    z_x^\top\, \conj{\boldsymbol{k}(x, y) } z_y\rt\}\d x \d y\right),
\end{equation}
where $z^\top$ denotes the vector transpose. 
This $
\cT(\boldsymbol{k})$ reduces to the one‑component operator when only one species is present and extends naturally to the general multi‑component case. 
As in the one‑component setting, conjugation by $
\cT(\boldsymbol{k})$ implements a two‑component Bogoliubov transformation of
$(z_x, z^\ast_x)$; see Section \ref{subsect:bogoliubov_transf} for definitions and basic properties.

Similar to the one-component case, where correlations were successfully described by the solution of \eqref{eq:scatl}, we define the (time-dependent) kernel $\boldsymbol{k} = \boldsymbol{k}_t$ by
\begin{subequations}
	\begin{equation}\label{def:k_i}
		\left\{\begin{aligned}
			&k_{1,t}(x, y)=-N w_{1, \ell}\(N(x-y)\) \wt\phi_{1,t}^{(N)}(x) \wt\phi_{1,t}^{(N)}(y)\, ,\\
			&k_{2,t}(x, y)=-N w_{2, \ell}\( N(x-y)\) \wt\phi_{2,t}^{(N)}(x) \wt\phi_{2,t}^{(N)}(y) \, ,
		\end{aligned}\right.
	\end{equation}
    for the intraspecies components, and
	\begin{equation}\label{def:k_ij}
		\left\{\begin{aligned}
			&k_{12,t}(x, y)=-N w_{12, \ell}\(N(x-y)\) \wt\phi_{1,t}^{(N)}(x) \wt\phi_{2,t}^{(N)}(y)\, ,\\
			&k_{21,t}(x, y)=-N w_{12, \ell}\(N(x-y)\) \wt\phi_{2,t}^{(N)}(x) \wt\phi_{1,t}^{(N)}(y) \, ,
		\end{aligned}\right.
	\end{equation}
\end{subequations}
for the cross terms,
where $(\wt\phi_{1,t}^{(N)},\wt\phi_{2,t}^{(N)})$ is the solution to the modified GP system \eqref{eq:modified_GP_system}, and $w_{\ii, \ell} (x) = 1 - f_{\ii, \ell} (x)$
with $f_{\ii, \ell}$ being the solution of \eqref{eq:scatl} for $\ii \in \{1, 2, 12\}$.
Note that one has the symmetry $k_{12,t}(x,y) = k_{21,t}(y, x)$. 
Using these kernels, the two‑component Bogoliubov operator \eqref{eq: generalized T} can be written as $\cT_t:= \cT (\boldsymbol{k}_t)= e^{-\cB(\boldsymbol{k}_t)}$, where
\begin{equation}\label{def: operator B}
	\begin{aligned}
    \cB(\boldsymbol{k}_t):=
        &\, \iintd \lt\{\conj{k_{12, t}(x, y)}\,a_xb_y-k_{12, t}(x, y)\, a^\ast_x b^\ast_y\rt\}\d x\d y\\
		&\, + \frac12 \iintd \lt\{\conj{k_{1, t}(x, y)}\,a_xa_y-k_{1, t}(x, y)\, a^\ast_x a^\ast_y\rt\}\d x\d y\\
		&\, +  \frac12 \iintd \lt\{\conj{k_{2, t}(x, y)}\,b_xb_y-k_{2, t}(x, y)\, b^\ast_x b^\ast_y\rt\}\d x\d y.
	\end{aligned}
\end{equation}
Similarly to the one-component case, $\cT_t$ provides a way to generate pair correlation structures in the GP dynamics. Hence, we shall study the time evolution of initial data close to the squeezed coherent state 
\begin{align}\label{squeezed coherent state}
    \Psi_0 = \cW(\sqrt{N}\wt{\vect{\phi}}^{(N)}_0) \cT_0 \Xi_{N} \, , 
\end{align}
where $\Xi_{N} \in \cF$ is close to the vacuum. 
Under the two‑component Fock Hamiltonian~\eqref{def:two-component_Fock_Hamiltonian}
  the exact solution is 
\begin{align}
	\Psi_{N, t}= e^{-it\cH_N}\cW(\sqrt{N}\wt{\vect{\phi}}^{(N)}_0) \cT_0 \Xi_{N} \,.
\end{align}
Equivalently, we can rewrite the solution by
\begin{equation}\label{eq:full-fock} 
	\Psi_{N, t}
    =
    \cW(\sqrt{N} \wt{\boldsymbol{\phi}}^{(N)}_t) \cT_t\, \cU_N (t; 0) \Xi_{N}
    =:
    \cW(\sqrt{N} \wt{\boldsymbol{\phi}}^{(N)}_t) \cT_t \Xi_{N, t} 
\end{equation}
where  $\cU_N(t; s)$ is the fluctuation dynamics given by 
\begin{equation}\label{eq:fluc-fock} 
	\cU_N (t; s) = \cT_t^* \cW^* (\sqrt{N} \wt{\vect{\phi}}^{(N)}_t) e^{-i (t-s)\cH_N } \cW(\sqrt{N} \wt{\vect{\phi}}^{(N)}_s) \cT_s\,.
\end{equation}
One can readily check it satisfies the following Schr\"odinger-type equation
\begin{equation}
	i \bd_t \cU_N(t ; s)=\cL_N(t) \cU_N(t ; s)
\end{equation}
with the time-dependent generator
\begin{multline*}
	\cL_N(t)
	= (i \bd_t \cT_t^*) 
	\cT_t +
	\cT_t^\ast
	\Big(i \bd_t \cW^\ast(\sqrt{N} \wt{\boldsymbol{\phi}}^{(N)}_t)\Big)
	\cW(\sqrt{N} \wt{\boldsymbol{\phi}}^{(N)}_t)
	\cT_t  \\
	+
	\cT_t^*
	\cW^\ast(\sqrt{N} \wt{\boldsymbol{\phi}}^{(N)}_t)
	\cH_N 
	\cW(\sqrt{N} \wt{\boldsymbol{\phi}}^{(N)}_t)
	\cT_t\,.
\end{multline*}

\subsection{Main Result} 
In order to state our main result, we need to define the two-component one-particle reduced density operator. 
Let $\Psi \in \cF$ be a normalized vector with finite particle expectation, then we define the two-component one-particle reduced density operator to be the positive semi-definite operator $\Gamma_{\Psi}^{(1)}$ defined on  $\fH$ by 
\begin{align*}
	\inprod{u\oplus v}{\Gamma_{\Psi}^{(1)} f\oplus g}_{\fH} =&\, \frac{\inprod{\Psi}{z^\ast(f\oplus g)z(u\oplus v)\Psi}}{\inprod{\Psi}{\cN\, \Psi}}\,, 
\end{align*}
for all $f\oplus g, u\oplus v \in \fH$, that is, the integral kernel of $\Gamma^{(1)}_{\Psi}$ is given by
\begin{align*}
	\Gamma_{\Psi}^{(1)}(x, y)= 
	\frac{\langle{\Psi},{z^\ast_y z_x^\top\, \Psi}\rangle}{\inprod{\Psi}{\cN\, \Psi}}
	=
	\frac{1}{\inprod{\Psi}{\cN\, \Psi}}
	\begin{pmatrix}
		\langle{\Psi},{a^\ast_y a_x\,\Psi}\rangle & \langle{\Psi},{a^\ast_y b_x\,\Psi}\rangle\\[1ex]
		\langle{\Psi},{b^\ast_y a_x\, \Psi}\rangle & \langle{\Psi},{b^\ast_y b_x\,\Psi}\rangle
	\end{pmatrix}.
\end{align*}
Furthermore, we denote by $|\bphi_t \rangle\!\langle\bphi_t |$ the orthogonal projection onto the vector $\bphi_t$, whose integral kernel is given by
\begin{equation}\label{orthogonal projection}
    |\bphi_t \rangle\!\langle\bphi_t | (x,y)
    =
    \begin{pmatrix}
		\phi_{1, t}(x)\conj{\phi_{1, t}(y)} & \phi_{2, t}(x)\conj{\phi_{1, t}(y)}\\[0.5ex]
		\phi_{1, t}(x)\conj{\phi_{2, t}(y)} & \phi_{2, t}(x)\conj{\phi_{2, t}(y)}
	\end{pmatrix}.
\end{equation}
To show that the one-particle reduced density operator $\Gamma_{N,t}^{(1)}$, associated with the LHS of \eqref{eq:full-fock},
remains close to the orthogonal projection \eqref{orthogonal projection} onto the solution of the GP system~\eqref{eq:GP_system}, it is enough to prove that the expectation of the number of particles in $\Xi_{N, t}$ is small, compared to the total number of particles $N$ (assuming that this is true for $\Xi_{N}$, at time $t=0$).

\begin{thm}\label{thm:main}
	Let $\wtbphi^{(N)}_0, \bphi_0 \in H^4\cap W^{4, 1}(\RR^3, \CC^2)$ uniformly in $N$, with $\|\wt\phi^{(N)}_{1, 0} \|_{L^2_x} =N_1/N, \, \|\wt\phi^{(N)}_{2, 0} \|_{L^2_x} =N_2/N$ and $\|\phi_{1, 0} \|_{L^2_x} =\sfn_1, \, \|\phi_{2, 0} \|_{L^2_x} =\sfn_2$. Moreover, assume 
	\begin{align}
		\|\wtbphi^{(N)}_0- \bphi_0\|_{H^1_x}\le \frac{1}{N}\,.
	\end{align}
	Let $\cH_N$ be the Hamiltonian operator defined in \eqref{def:two-component_Fock_Hamiltonian} where $V_\ii$ satisfies the Assumption~\ref{assume: L3}. Suppose $\Xi_N \in \cF$ (possibly depending on $N$) be such that
	\begin{equation}\label{eq:ass-thm1} 
		\langle \Xi_N , \cN \Xi_N \rangle,\,  \frac{1}{N} \langle \Xi_N , \cN^2 \Xi_N \rangle,\, \langle \Xi_N, \cH_N \Xi_N \rangle \leq D \end{equation}
	for a constant $D>0$. 
	Let $\Gamma_{N,t}^{(1)}$ denote the two-component one-particle reduced density operator associated with the evolved Bogoliubov state as defined above expression in \eqref{eq:full-fock}, i.e., $e^{-it\cH_N} \cW(\sqrt{N} \wtbphi^{(N)}_0) \cT_0 \Xi_{N}$. 
	Then, there exist constants $C>0$, depending only on $V_\ii, \, \|\wtbphi^{(N)}_0\|_{H^4_x}, \, \|\bphi_0\|_{H^4_x}, \, \|\wtbphi^{(N)}_0\|_{W^{4, 1}_x}, \, \|\bphi_0\|_{W^{4, 1}_x}$,  and on the constant $D$ appearing in (\ref{eq:ass-thm1}), such that the following holds for all $t \geq 0$ and $N \in \N$.
    \begin{enumerate}
    \item When $\lambda = 1$, then we have
	\begin{equation}\label{eq:mt} 
		\tr_{\fH}\left| \Gamma_{N,t}^{(1)} - |\bphi_t \rangle\!\langle\bphi_t | \right| \leq \frac{C}{\sqrt{N}} .
	\end{equation}
	 Here, $\bphi_t=(\phi_{1, t}, \phi_{2, t})^\top$ denotes the solution of the time-dependent two-component GP system \eqref{eq:GP_system}
	with the initial condition $\left.\bphi_{t}\right|_{t=0}=\bphi_0$.
    \item When $\lambda = \ln N^\gamma$ for some $0<\gamma<\frac{1}{2}$, we further impose Assumption~\ref{assume: boundedness}, then we have
    \begin{equation}
		\tr_{\fH}\left| \Gamma_{N,t}^{(1)} - |\bphi_t \rangle\!\langle\bphi_t | \right| \leq C\(N^{\gamma-\frac{1}{2}} +\upepsilon(\lambda)\).
	\end{equation}
    Here, $\bphi_t=(\phi_{1, t}, \phi_{2, t})^\top$ is the solution of the GP system \eqref{eq:GP_system hard core}
	with the initial condition $\left.\bphi_{t}\right|_{t=0}=\bphi_0$, and $\upepsilon(\lambda)$ is as defined in Lemma~\ref{lem:neumann_scattering_function}.
    \end{enumerate}
\end{thm}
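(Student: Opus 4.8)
The plan is to reduce the trace-norm convergence of the one-particle reduced density operator to a Gronwall-type estimate on the number operator in the fluctuation vector $\Xi_{N,t} = \cU_N(t;0)\Xi_N$, following the strategy of \cite{Benedikter2015quantitative} adapted to the two-component setting. First I would recall the standard reduction: using \eqref{eq:full-fock} and the coherent-state and Bogoliubov algebra (expanding $\cW^\ast(\sqrt N\wtbphi^{(N)}_t)z_x\cW(\sqrt N\wtbphi^{(N)}_t) = z_x + \sqrt N\wtbphi^{(N)}_t(x)$ and controlling $\cT_t$ by $\cN$ via the bounds in Section~\ref{subsect:bogoliubov_transf}), one shows
\begin{equation*}
\tr_\fH\left|\Gamma^{(1)}_{N,t} - |\wtbphi^{(N)}_t\rangle\!\langle\wtbphi^{(N)}_t|\right| \le \frac{C}{N}\sqrt{\langle \Xi_{N,t},(\cN+1)\Xi_{N,t}\rangle} + \frac{C}{N}\langle\Xi_{N,t},(\cN+1)\Xi_{N,t}\rangle^{1/2}\,,
\end{equation*}
so everything rests on propagating $\langle\Xi_{N,t},(\cN+1)\Xi_{N,t}\rangle \le C$ uniformly in $t$. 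Then I would combine this with the separate (classical) stability estimate $\|\wtbphi^{(N)}_t - \bphi_t\|_{L^2}\le C/N$, obtained by Gronwall on the difference of \eqref{eq:modified_GP_system} and \eqref{eq:GP_system} using $N^3 V^\lambda_\ii(N\cdot)f_{\ii,\ell}(N\cdot)\rightharpoonup 8\pi\asc_\ii\delta_0$ (and the $H^1$ propagation of regularity for both systems), to replace $|\wtbphi^{(N)}_t\rangle\!\langle\wtbphi^{(N)}_t|$ by $|\bphi_t\rangle\!\langle\bphi_t|$.

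The core of the argument is the differential inequality for $\langle\Xi_{N,t},(\cN+1)\Xi_{N,t}\rangle$. Using $i\bd_t\cU_N = \cL_N(t)\cU_N$, one computes $\dt\langle\Xi_{N,t},(\cN+1)\Xi_{N,t}\rangle = \langle\Xi_{N,t},[i\cL_N(t),(\cN+1)]\Xi_{N,t}\rangle$; since $(\cN+1)$ commutes with the number-preserving parts of $\cL_N(t)$, only the non-number-conserving terms (cubic and the residual quadratic/quartic pieces surviving after the Bogoliubov conjugation cancels the correlation-induced singular terms) contribute. The expected bound is of the form
\begin{equation*}
\left|\dt\langle\Xi_{N,t},(\cN+1)\Xi_{N,t}\rangle\right| \le C\left(1 + \|\wtbphi^{(N)}_t\|_{L^\infty_x}\right)\langle\Xi_{N,t},(\cN+1)\Xi_{N,t}\rangle + C\left(1+\|\wtbphi^{(N)}_t\|_{L^\infty_x}\right)\frac{1}{N}\langle\Xi_{N,t},\cH_N\Xi_{N,t}\rangle + \text{(lower order)}\,,
\end{equation*}
which requires in turn an a priori bound on $\frac1N\langle\Xi_{N,t},\cH_N\Xi_{N,t}\rangle$; this energy bound is itself propagated by a parallel Gronwall argument (differentiating $\langle\Xi_{N,t},\cH_N\Xi_{N,t}\rangle$ and using commutator estimates plus the regularity of $\wtbphi^{(N)}_t$), which is exactly the content I would invoke from Proposition~\ref{prop:estimate of fluctuation}. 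Feeding the energy bound back in and applying Gronwall yields $\langle\Xi_{N,t},(\cN+1)\Xi_{N,t}\rangle \le C\exp\!\big(C\int_0^t(1+\|\wtbphi^{(N)}_s\|_{L^\infty_x})\,ds\big)$.

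The uniform-in-time improvement then hinges on showing $\int_0^\infty\|\wtbphi^{(N)}_t\|_{L^\infty_x}\,dt < \infty$ with a bound independent of $N$ (it is precisely the quantity \eqref{integral of modified solution} flagged in the introduction). The plan here is to treat the modified GP system \eqref{eq:modified_GP_system} as a perturbation of the defocusing cubic system \eqref{eq:GP_system}: the interaction kernels $N^3 V^\lambda_\ii(N\cdot)f_{\ii,\ell}(N\cdot)$ have mass $\to 8\pi\asc_\ii$ and shrinking support, and positivity of $V_\ii$ gives a conserved (non-negative) modified GP energy, hence uniform $H^1_x$ bounds. I would then run the interaction Morawetz estimate for the two-component defocusing NLS together with Strichartz estimates to get space-time integrability — specifically $\wtbphi^{(N)}\in L^q_t W^{1,r}_x$ for a suitable Strichartz-admissible pair controlling $\|\wtbphi^{(N)}_t\|_{L^\infty_x}$ via Sobolev embedding, with the Morawetz bound supplying the global-in-time $L^q_t$ control and closing the bootstrap uniformly in $N$. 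The scattering-type input (interaction Morawetz for coupled defocusing systems, uniform in the regularized nonlinearity) is the main obstacle: one must verify that the Morawetz monotonicity formula survives the two-component coupling with $\bsc_{12}$-cross terms (this is where the miscibility-type positivity $\asc_1\asc_2-\asc_{12}^2\ge0$, or just positivity of each $V_\ii$, is used) and that the constants do not degenerate as $N\to\infty$ or, in part (2), as $\lambda = \ln N^\gamma\to\infty$ — in the latter case one additionally tracks the $\upepsilon(\lambda)$ error from Lemma~\ref{lem:neumann_scattering_function} and absorbs the $\lambda$-growth into the $N^{\gamma-1/2}$ budget, which is why $\gamma<1/2$ is needed. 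Everything else (the commutator estimates for $\cL_N(t)$, the Bogoliubov kernel bounds, the coherent-state reduction) is a more-or-less routine but lengthy adaptation of the one-component machinery.
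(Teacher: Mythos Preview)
Your overall architecture matches the paper's: reduce the trace norm to $\langle\Xi_{N,t},(\cN+1)\Xi_{N,t}\rangle$ via the coherent-state/Bogoliubov algebra, control that expectation by a Gronwall argument driven by Proposition~\ref{prop:estimate of fluctuation}, and pass from $\wtbphi^{(N)}_t$ to $\bphi_t$ by a PDE stability estimate. Two points deserve correction, one minor and one substantive.

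The minor point: your reduction inequality has the wrong powers. The cross terms in $\langle\Psi_{N,t},z^\ast_y z_x^\top\Psi_{N,t}\rangle$ are of size $\sqrt{N}\,\langle\cT_t\Xi_{N,t},z^\sharp\cT_t\Xi_{N,t}\rangle$, so after dividing by $\langle\Psi_{N,t},\cN\Psi_{N,t}\rangle\sim N$ one gets $\frac{C}{\sqrt N}\langle\Xi_{N,t},(\cN+1)\Xi_{N,t}\rangle$, not $\frac{C}{N}\sqrt{\langle\cdot\rangle}$; this is why the theorem gives $N^{-1/2}$ rather than $N^{-1}$. Relatedly, the paper's Gronwall is not a plain one on $\cN$: because $[\cN,\widetilde\cL_N(t)]$ contains $\cH_N$ (not just $\cN$), one propagates the \emph{combination} $2\widetilde\cL_N(t)+\lambda D\,\cN$ and uses the lower bound $\widetilde\cL_N\ge\tfrac12\cH_N-C\lambda\upeta(t)(\cN^2/N+\cN+1)$ together with the a priori control $\cU_N^\ast\cN^2\cU_N\le CN\,\cU_N^\ast\cN\cU_N+C(N(\cN+1)+(\cN+1)^2)$ to close. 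Your sketch gestures at a ``parallel Gronwall for the energy'' but does not capture this coupled structure or the $\cN^2/N$ input.

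The substantive gap is in your route to $\int_0^\infty\|\wtbphi^{(N)}_t\|_{L^\infty_x}\,dt<\infty$. Morawetz plus Strichartz gives global $L^q_tW^{1,r}_x$ bounds for admissible pairs, and Sobolev embedding $W^{1,r}_x\hookrightarrow L^\infty_x$ requires $r>3$, forcing $q<4$; at best you obtain $L^{2+}_tL^\infty_x$, never $L^1_tL^\infty_x$. Since the commutator bounds in Proposition~\ref{prop:estimate of fluctuation} are \emph{linear} in $\upeta(t)=\|\wtbphi^{(N)}_t\|_{W^{1,\infty}_x}+\|\bd_t\wtbphi^{(N)}_t\|_{L^\infty_x}$, the Gronwall exponent is $\int_0^t\upeta(s)\,ds$, and your scheme would only yield a growing-in-$t$ bound. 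The paper closes this by an additional Lin--Strauss type dispersive decay step (Corollary~\ref{cor:Linfty-t-indep-bdd}): after Morawetz, Strichartz, and propagation of $H^s$ regularity, one runs the Duhamel/$L^1\!\to\!L^\infty$ dispersive estimate to get $\|\wtbphi^{(N)}_t\|_{W^{1,\infty}_x}+\|\bd_t\wtbphi^{(N)}_t\|_{L^\infty_x}\le C(1+t^{3/2})^{-1}$, which is integrable. This step is precisely what consumes the hypothesis $\wtbphi^{(N)}_0,\bphi_0\in W^{4,1}$ --- an assumption your outline never uses --- and without it the uniform-in-time claim does not follow.
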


\begin{remarks}\phantom{ }
	\begin{enumerate}[1.]
        \item The above uniform-in-time bound also implies a uniform-in-time bound result for the case $p=1$. More precisely, Proposition~\ref{prop:estimate of fluctuation} and the dispersive estimate in Section \ref{sec:limiting-eq-estimates} also hold in the $p=1$ case improving upon the result of \cite[Theorem 3.5]{Benedikter2015quantitative}. 
		\item Theorem \ref{thm:main} can be extended to any finite $p$-component system. For simplicity of notation, we present the proof only for the case $p = 2$. However, this work is written with a view toward the general $p$-component case. By generalizing the Fock space to $\cF = \cF_1 \otimes \cF_2 \otimes \dots \otimes \cF_p$ and appropriately defining the particle creation and annihilation operators, as well as the $p$-component particle number operators, the proof can be readily extended. This generalization has already been addressed for the mean-field regime in \cite{Lee2021mixture}. Furthermore, a key contribution of our work is the systematic extension of the Bogoliubov transformation to the $p$-component setting, which enables us to handle the GP regime (see Subsection~\ref{subsect:bogoliubov_transf}).
		\item For a normalized $\Psi \in \cF$ with finite expected-valued with respect to $\cN^k$,  we define the two-component $k$-particle reduced density operator $\Gamma^{(k)}_{\Psi}$ acting on $\fH^{\otimes_s k}$ by  
		\begin{align*}
			\inprod{\vect{f}_1\otimes_s \cdots \otimes_s \vect{f}_k}{\Gamma^{(k)}_{\Psi} \vect{g}_1\otimes_s\cdots \otimes_s \vect{g}_k} =  \frac{\inprod{\Psi}{z^\ast(\vect{g}_1)\cdots z^\ast (\vect{g}_k)z(\vect{f}_1)\cdots z(\vect{f}_k)\Psi}}{\inprod{\Psi}{\cN(\cN-1)\cdots (\cN-k+1)\, \Psi}},
		\end{align*}
		for every $\vect{f}_i, \vect{g}_i \in \fH$, or equivalently, $\Gamma^{(k)}_{\Psi}$ has the tensor kernel 
		\begin{align*}
			\Gamma^{(k)}_{\Psi}(x_1, \ldots, x_k; x_1', \ldots, x_k') = \frac{\langle{\Psi},{z_{x_1'}^\ast\otimes \cdots \otimes z^\ast_{x_k'}\otimes z_{x_1}\otimes\cdots \otimes z_{x_k} \Psi}\rangle}{\inprod{\Psi}{\cN(\cN-1)\cdots (\cN-k+1)\,\Psi}}.
		\end{align*}
		Moreover, we define the $(k, \ell)$-particle reduced density operator by 
		\begin{align*}
			\upgamma^{(k, \ell)}_{\Psi}(X_k, Y_\ell; X_k', Y_\ell') = \frac{\langle{\Psi},{a^\ast_{x_1}\cdots a^\ast_{x_k} a_{x_k}\cdots a_{x_1} b^\ast_{y_1}\cdots b^\ast_{y_\ell} b_{y_\ell}\cdots b_{y_1}\Psi}\rangle}{\inprod{\Psi}{\cN(\cN-1)\cdots (\cN-k-\ell+1)\,\Psi}}
		\end{align*}
		where $X_{k} = (x_1, \ldots, x_k)$ and $Y_\ell =(y_1, \ldots, y_\ell)$. Notice that, for given $k,\ell\in \mathbb{N}_0$, the operator $\upgamma^{(k, \ell)}_{\Psi}$ is one of the the operator entries of $\Gamma^{(k+\ell)}_{N}$ that commute with the number operator. 
		Then, following the argument in \cite[Section 2]{knowles2010mean}, we could also obtain convergence of the higher reduced density operators as follows: For any $k\in \mathbb{N}$, we have the inequality
		\[
		\tr\left|\Gamma^{(k)}_{N, t}-|\bphi_t^{\otimes k}\rangle\!\langle \bphi_t^{\otimes k}|\right|
		\leq 
		C_k \sqrt{\tr\left|\Gamma^{(1)}_{N, t}-|\bphi_t\rangle\!\langle \bphi_t|\right|}
		\leq \frac{C_k'}{N^{\frac14}}.
		\]
		A convergence rate of $CN^{-\frac12}$ can be obtain if we directly compare $\Gamma^{(k)}_{N, t}$ with $|\bphi_t^{\otimes k}\rangle\!\langle \bphi_t^{\otimes k}|$, but, of course, this entails controlling the expectation value $\cN^k$ with respect to the fluctation dynamics. 
		\item 
		
		For the GP limit, convergence without a specific rate was provided in \cite{olgiati2017effective}. In fact, the result was stated with the existence of a constant $\gamma$ such that the left-hand side of \eqref{eq:mt} is bounded by $C_t N^{-\gamma}$. In this paper, we provide an explicit convergence rate.  More precisely, Theorem \ref{thm:main} and the above inequality implies 
		\[
		\tr\left|\upgamma^{(1,1)}_{N, t}-|\phi_{1, t}\otimes \phi_{2, t}\rangle\!\langle \phi_{1, t}\otimes \phi_{2, t}|\right|
		\leq 
		\tr\left|\Gamma^{(2)}_{N, t}-|\bphi_t^{\otimes 2}\rangle\!\langle \bphi_t^{\otimes2}|\right|
		\leq \frac{C}{N^{\frac14}}\,,
		\]
		which can be compared with \cite[Theorem 2]{olgiati2017effective}.
		\item Despite our result is written in the grand canonical language, Theorem~\ref{thm:main} also implies a convergence result for the canonical picture for a certain class of $N$-particle initial data. If we consider an $N$-particle state given by $\psi_{N}=P_{N_1, N_2}\cW(\sqrt{N} \wtbphi^{(N)}_0) \cT_0 \Xi_{N}$, where $P_{N_1, N_2}:\cF\rightarrow \fH_{N_1, N_2}$ is the projection operator onto the $(N_1, N_2)$ sector of $\cF$, with $\|P_{N_1, N_2}\cW(\sqrt{N} \wtbphi^{(N)}_0) \cT_0 \Xi_{N}\|_{L^2(\R^{3N})}\ge CN^{-\frac14}$ and $N$ sufficiently large, then we see that  
		\[
		\tr\left|\gamma^{(1, 0)}_{N, t}-|u_{t}\rangle\!\langle u_{t}|\right|
		\leq \frac{C}{N^{\frac14}} \quad \text{ and } \quad 
		\tr\left|\gamma^{(0, 1)}_{N, t}-|v_{t}\rangle\!\langle v_{t}|\right|
		\leq \frac{C}{N^{\frac14}},
		\]
		where $u_t = \phi_{1, t}/\sqrt{\sfn_1}$ and $v_t = \phi_{2, t}/\sqrt{\sfn_2}$.
		The proof is similar to the one given in \cite[Appendix C]{Benedikter2015quantitative}. The rate of convergence could be improved if we adapt the method used in \cite{lewin2015bogoliubov, Brennecke2019}, but, for simplicity, we decided not to pursuit this route. 
	\end{enumerate}
\end{remarks}

\subsection{Literature Overview}

\subsubsection*{One-Component BEC: Equilibrium Properties and Dynamics} The pioneering work of Lieb and Yngvason in \cite{lieb1998ground} on the ground state energy inspired numerous subsequent studies, including the rigorous proof that there is $100\%$ BEC in the ground state in the GP limit. This work also demonstrated that the GP theory correctly describes the ground-state properties of an interacting Bose gas \cite{lieb2002proof,lieb2000bosons,nam2016ground}. These results have since been refined and extended via Bogoliubov theory in \cite{boccato2019bogoliubov, boccato2020optimal, brennecke2022bogoliubov, brennecke2022bose, nam2022optimal, nam2023bogoliubov}. For recent developments concerning the equilibrium properties of Bose gases in the (translation-invariant) GP regime, beyond the GP regime, and in the thermodynamic limit, see \cite{adhikari2021bose, basti2024upper, basti2023second, basti2021new, boccato2023bose, brennecke2022excitation, caraci2021bose, caraci2023excitation, Fournais2020LengthSF, fournais2020energy, fournais2023energy, haberberger2023free, hainzl2022bogoliubov}.

The validity of GP theory in the time-dependent setting was first established in a series of foundational papers by Erdős, Schlein, and Yau \cite{erdos2006derivation, erdos2007derivation, erdos2009rigorous, erdos2010derivation}. These works rigorously derived the 3D time-dependent GP equation from the dynamics of a repulsive quantum many-body system in the GP limit, motivating a large body of research on quantum BBGKY and GP hierarchies \cite{CHPS15, CP11, CH16correlation, CH16on, CH19, HS16, KSS11, klainerman2008uniqueness, Soh15}. In parallel, using a slightly different and more quantitative approach based on projection operators in the $N$-particle Hilbert space, Pickl \cite{Pickl2010, Pickl2011, Pickl2015} also derived the GP equation and related nonlinear Schrödinger-type equations. For lower dimensions, in 1D, the nonlinear Schrödinger equation (NLSE)  was derived using the quantum BBGKY hierarchy in \cite{adami2004towards, adami2007rigorous}. In 2D, within the GP regime, the NLSE was derived in \cite{JLP2019} by generalizing the approach in \cite{Pickl2010}.

By studying quantum fluctuations about the effective dynamics, robust quantitative approaches to derive the GP equation in the canonical and grand canonical formalisms have been developed, as shown in \cite{Benedikter2015quantitative, boccato2017quantum, caraci2024quantum}. These methods build on earlier works \cite{ginibre1979classical, grillakis2010second, grillakis2011second, hepp1974classical, lewin2015bogoliubov, rodnianski2009quantum}. Notably, Rodnianski and Schlein \cite{rodnianski2009quantum} introduced a coherent state approach to derive effective dynamics for BECs in the mean-field regime, showing that the trace norm difference between the many-body state $\Psi_N$ and the limiting Hartree state decays with $N$. Specifically, they established $\tr \big| \gamma^{(1)}_{N, t} - |\phi_t\rangle\!\langle \phi_t| \big| \leq Ce^{Kt} N^{-1/2},$
where $\phi_t$ is the solution of the Hartree equation, and $C, K>0$ are constants independent of $N$. Subsequently, Grillakis, Machedon, and Margetis \cite{grillakis2010second, grillakis2011second} introduced the Bogoliubov transformation to account for the second-order correction to the mean-field evolution of weakly interacting bosons, offering a more precise description of behavior of the system beyond the mean-field limit.

These approaches provide a powerful tool for obtaining quantitative norm estimates of the error between the many-body and effective mean-field dynamics in the subcritical regime (e.g., \cite{brennecke2019fluctuations, Brennecke2019, XChen2012ARMA, chong2022dynamical, grillakis2013pair, grillakis2017pair, lewin2015fluctuations, nam2017bogoliubov, nam2017note}).
In certain regimes, approximations with optimal or arbitrarily precise decay in $N$ error bounds have been constructed (see \cite{bossmann2020higher, bossmann2022beyond, chen2011rate, erdos2009quantum}). Moreover, the usage of quantum fluctuations to study the hard-core limit was also done in \cite{chong2025derivation} to derive the compressible Euler equations from dilute Bose gas systems.
See also \cite{Napiorkowski2023} for an overview of the current state of research in the analysis of Bose gas. 

\subsubsection{Time Dependence of the Convergence Rate}
Since the development of quantitative approaches, the time dependence of the convergence rate has been an intensive studied topic. 
In the mean-field case, it was noted in \cite{Pickl2011} that a time-independent bound for the trace norm is possible if the solution to the Hartree equation satisfies a scattering condition. 
Using a different approach, Kuz obtained a $\sqrt{t}$ error bound in \cite{kuz2015rate}, and the time dependence of the convergence rate has been studied for various classes of mean-field interactions in \cite{Lee2019Time}. 
In the intermediate regimes between the mean-field and GP regimes, the time dependence of the convergence rate was partially analyzed in \cite{chong2021dynamics, DietzeLee2022Uniform}, leading to polynomial or time-independent bounds. 
In the GP regime, a double-exponential-in-time bound was obtained in \cite{Benedikter2015quantitative}, which could be improved to an exponential-in-time bound if the solution to the GP equation satisfies a scattering condition. 
In the norm approximation setting, significant work has been done to improve the time dependence of the convergence rate (see \cite{grillakis2011second, XChen2012ARMA, grillakis2013pair, kuz2015rate, kuz2017exact, chong2022dynamical}).

\subsubsection*{Multi-Component Case}  
For a multi-component Bose gas, we have already mentioned the results of \cite{michelangeli2019ground} regarding the GP regime above. Moreover, that work also demonstrated the validity of Bogoliubov’s approximation in the mean-field regime and provided the second-order expansion of the ground-state energy. Recently, an exact solution for the ground state of a general $p$-component harmonically trapped Bose gas interacting through harmonic forces was presented in \cite{AC2024}.  

For the dynamics, the trace norm convergence results have been extended to the case of multi-component Bose gases. Notably, the corresponding results for two-component BECs are well-established; see, for example, \cite{AnapolitanosHottHundertmark, MichelangeliOlgliatiBEC, MichelangeliOlgliatiPSEUDO, olgiati2017effective}.  
In \cite{olgiati2017effective}, Olgiati rigorously derived the two-component GP system from the dynamics of a two-component Bose gas using the counting method developed in \cite{Pickl2015}.  
In the mean-field regime, a similar result was obtained, with convergence in a trace Sobolev-type norm, in \cite{AnapolitanosHottHundertmark}.  
An explicit bound of order $O(N^{-1/2})$ on the convergence rate, with a Coulomb-type interaction, was derived by de Oliveira and Michelangeli \cite{MichelangeliOliveiraCoulomb} using the Fock space method.  
Furthermore, in \cite{Lee2021mixture}, an optimal convergence rate of $O(N^{-1})$ in the trace norm was established for a general $p$-component mixture.  

For fragmented BECs involving identical particles with spin, the convergence of the trace norm and its rate have been discussed in \cite{Dimonte_2021, LeeMichelangeli2023}. These works address systems in which the condensate occupies multiple quantum states, requiring a more nuanced analysis of the convergence behavior.

\section{Estimates for the Modified Gross--Pitaevskii System}\label{sec:limiting-eq-estimates}

\subsection{Preliminaries} Let $\wtbphi^{(N)}_t=(\wt \phi_{1, t}^{(N)}, \wt \phi_{2, t}^{(N)})^\top$ denotes the solution to System~\eqref{eq:modified_GP_system}, then we could recast the system in the following vector form
\begin{align}\label{eq:modified_GP_system_vector_form}
	i\bd_t\wtbphi^{(N)} = \vect{H}_0 \wtbphi^{(N)}+F_N(\wtbphi^{(N)})\wtbphi^{(N)}
\end{align}
where 
\begin{align*}
	\vect{H}_0 := 
	\begin{pmatrix}
		-\lapl & 0\\
		0 & -\lapl
	\end{pmatrix}
\end{align*}
and $F_N(\bphi):\fH\rightarrow\fH$ with kernel
\begin{align*}
	F_N(\bphi)(x, y)
	:=
	\begin{pmatrix}
		U_1(x-y)\phi_{1}(x)\conj{\phi_1(y)}& U_{12}(x-y)\phi_{1}(x)\conj{\phi_2(y)} \\
		U_{12}(x-y)\phi_{2}(x)\conj{\phi_1(y)} & U_{2}(x-y)\phi_{2}(x)\conj{\phi_2(y)}
	\end{pmatrix}
\end{align*}
with $U_\ii(\cdot) = N^3V^{\lambda}_\ii(N\cdot)f^\lambda_{\ii, \ell}(N\cdot) \in L^1(\R^3)$ uniformly in $N$ and $\lambda$ for $\ii \in \{1, 2, 12\}$. The purpose of this section is to obtain uniform-in-$N$ global estimates for solutions to system~\eqref{eq:modified_GP_system_vector_form} by comparing its solutions with solutions to 
\begin{align}\label{eq:GP_system_vector_form}
	i\bd_t\bphi= \vect{H}_0 \bphi+F_{\cc}(\bphi)\bphi
\end{align}
where $\cc = \{ \cc_{1}, \cc_{2}, \cc_{12}\}$ and $F_{\cc}(\bphi)$ is a multiplication operator defined by
\begin{align*}
	F_{\cc}(\bphi)(x)
	:=
	\begin{pmatrix}
		8\pi \cc_{1}\,\phi_{1}(x)\conj{\phi_1(x)} & 8\pi \cc_{12}\, \phi_{1}(x)\conj{\phi_2(x)} \\
		8\pi \cc_{12}\,\phi_{2}(x)\conj{\phi_1(x)} & 8\pi \cc_{2}\, \phi_{2}(x)\conj{\phi_2(x)} 
	\end{pmatrix}.
\end{align*}
The constant $\cc_{\ii}$ will later take the value of the scattering length $\asc_{\ii}$ or the radius $\bsc_{\ii}$.

We write $A\lesssim B$ to denote $A \le CB$ for some $C>0$, independent of $N$ and $\lambda$. Let $\bphi:I\times\R^3\rightarrow \CC^p$ with $\bphi = (\phi_1, \ldots, \phi_p)$, we write the Lebesgue and Sobolev norms of $\bphi$ by 
\begin{align*}
	\norm{\bphi}_{L^r_x}:=\Big(\sum^p_{\ii=1}\norm{\phi_{\ii}}_{L^r_x}^2\Big)^{\frac12} \quad \text{ and } \quad \norm{\bphi}_{W^{s, r}_x}:=\Big(\sum^p_{\ii=1}\norm{\phi_{\ii}}_{W^{s, r}_x}^2\Big)^{\frac12}.
\end{align*}
for $q\in [1, \infty]$ and $s\ge 0$, with the obvious modification when $q=\infty$. 

When $r=2$ and $s>0$, we use the standard notation $\norm{\bphi}_{H^{s}_x}:=\norm{\bphi}_{W^{s, 2}_x}$. We also define the corresponding norms 
\begin{align*}
	\norm{\bphi}_{L^q_tW^{s, r}_x(I\times \R^3)}:= \(\int_I\norm{\bphi}_{W^{s, r}_x}^q\dd t\)^{\frac1q} 
\end{align*}
for $q, r \in [1, \infty]$ and $s\ge 0$. Moreover, we said that a pair $(q, r)$ is 3D admissible provided $q\ge 2$ and $\frac{2}{q}+\frac{3}{r}=\frac{3}{2}$.

We start by summarizing some important properties of $f_{\ii, \ell}$ in the following lemma, which are taken from \cite[Lemma A.2]{chong2025derivation} (see also \cite[Lemma A.1]{erdos2006derivation}).

\begin{lem} \label{lem:neumann_scattering_function}
Let $V^{\lambda}=\lambda V$ where $V$ satisfies Assumption~\ref{assume: L3}. Let $\asc^\lambda$ be the corresponding scattering length and $\bsc$ be the radius of the support of $V^\lambda$. Fix $\ell > 0$ and let $f_{\ell}$ denote the solution \eqref{eq:scatl} with $V^\lambda$. For $N$ sufficiently large, the following properties hold true:
	\begin{enumerate}[(i)]
            \item Let $\upepsilon(\lambda) := \bsc -\asc^\lambda$ and employ Assumption~\ref{assume: boundedness} for $V$. Then it follows that $\upepsilon(\lambda)\rightarrow 0$ as $\lambda \rightarrow \infty$, where the convergence rate depends on the profile of $V$.
		\item  We have the  asymptotic upper bound 
		\begin{equation}\label{eq:lambdaell} 
			\nu_{\ell} \le \frac{3\asc^\lambda}{(\ell N)^3} \left(1 +\cO \big(\bsc  / \ell N\big) \right)\,.
		\end{equation}
		\item We have $0\leq f^\lambda_{\ell}, w^\lambda_{\ell}\leq1$. Moreover, there exists $C > 0$, independent of $N$, such that     
		\begin{equation} \label{eq:Vfa0} 
			\n{\int_{\n{x}\le N\ell}  V^{\lambda}(x) f^\lambda_{\ell} (x) \dd x - 8\pi \asc^\lambda} \leq \frac{C\bsc}{\ell N} \, 	
		\end{equation}
		for all $\ell$ fixed and $N \in \N$.
		\item There exists a constant $C>0 $ such that 
		\begin{equation}\label{est:scbounds1} 
			w_{\ell}^\lambda(x)\leq \frac{C\bsc}{|x|} \quad\text{ and }\quad |\nabla w_{\ell}^\lambda (x)|\leq \frac{C\bsc}{\n{x}^2}
		\end{equation}
		for all $x \in \R^3$ and all $N\in \N$ sufficiently large.   
	\end{enumerate}        
\end{lem}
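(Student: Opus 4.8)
The plan is to pass to a one-dimensional radial problem and then read off all four assertions from explicit ODE computations; this is the scheme of \cite[Lemma A.1]{erdos2006derivation} and \cite[Lemma A.2]{chong2025derivation}, and the only new bookkeeping is to track the dependence on $\lambda$. First I would write $f^{\lambda}_{\ell}$ as a function of $r=\n{x}$ and set $g(r):=r\,f^{\lambda}_{\ell}(r)$; then \eqref{eq:scatl} becomes $-g''+\tfrac12 V^{\lambda}g=\nu_{\ell}g$ on $(0,N\ell)$ with $g(0)=0$, while the two conditions on $\n{x}=N\ell$ translate into $g(N\ell)=N\ell$ and $g'(N\ell)=1$. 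I would record that $(\nu_{\ell},f^{\lambda}_{\ell})$ is the Neumann ground-state eigenpair of $-\lapl+\tfrac12V^{\lambda}$ on $B_{N\ell}$, normalised to equal $1$ on the boundary, so $f^{\lambda}_{\ell}>0$; the bound $f^{\lambda}_{\ell}\le 1$ (equivalently $w^{\lambda}_{\ell}\ge 0$) then comes from a convexity/concavity dichotomy for $g$ (convex where $V^{\lambda}>2\nu_{\ell}$, concave on $\bsc<r<N\ell$ since there $g''=-\nu_{\ell}g\le 0$), so that $g$ lies below the tangent line $r\mapsto r$ at the outer endpoint and below the secant through the origin on the inner part, giving $g(r)\le r$ throughout. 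This settles the first assertion of (iii).

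For the eigenvalue bound (ii), I would use the variational characterisation $\nu_{\ell}=\inf\{\int_{B_{N\ell}}(\n{\grad\phi}^2+\tfrac12 V^{\lambda}\n{\phi}^2):\phi\in H^1(B_{N\ell}),\ \int_{B_{N\ell}}\n{\phi}^2=1\}$ with the zero-energy scattering solution $f$ of \eqref{eq:zero-energy_scattering}, restricted to $B_{N\ell}$, as trial function: integration by parts turns the numerator into the boundary flux $\int_{\partial B_{N\ell}}f\,\partial_{r}f=4\pi\asc^{\lambda}(1-\asc^{\lambda}/N\ell)$, while $\int_{B_{N\ell}}\n{f}^2=\tfrac43\pi(N\ell)^3(1+\cO(\asc^{\lambda}/N\ell))$, and since $\asc^{\lambda}\le\bsc$ this yields \eqref{eq:lambdaell}. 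I would also prove the matching lower bound $\nu_{\ell}\ge\tfrac{3\asc^{\lambda}}{(N\ell)^3}(1-\cO(\bsc/N\ell))$, needed for \eqref{eq:Vfa0}, by the interior matching: on $\bsc<r<N\ell$ the exterior solution $g(r)=C\sin(\sqrt{\nu_{\ell}}(r-r_{0}))$ is forced by $g(N\ell)=N\ell$, $g'(N\ell)=1$ to obey $\tan(\sqrt{\nu_{\ell}}(N\ell-r_{0}))=\sqrt{\nu_{\ell}}\,N\ell$, hence $r_{0}=\tfrac13\nu_{\ell}(N\ell)^3(1+\cO(\bsc/N\ell))$; and the logarithmic derivative of $g$ at $r=\bsc^{+}$ agrees, up to an $\cO(\nu_{\ell}\bsc^{2})$ error from the $\nu_{\ell}$-term, with that of $r\mapsto r-\asc^{\lambda}$, forcing $r_{0}=\asc^{\lambda}+\cO(\nu_{\ell}\bsc^{3})$; combining the two expressions for $r_{0}$ gives the claim.

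With \eqref{eq:lambdaell} in hand, (iv) and the identity in (iii) reduce to elementary ODE estimates. For (iv) I would use that $w^{\lambda}_{\ell}$ solves $-(r^2 w_{\ell}')'=r^2\nu_{\ell}f^{\lambda}_{\ell}\ge 0$ on $(\bsc,N\ell)$ with $w_{\ell}'(N\ell)=0$, so $\bsc^2 w_{\ell}'(\bsc)=\nu_{\ell}\int_{\bsc}^{N\ell}s^2 f^{\lambda}_{\ell}\,\d s\le\tfrac13\nu_{\ell}(N\ell)^3\lesssim\asc^{\lambda}\le\bsc$; then $\n{r^2 w_{\ell}'(r)}\le\bsc^2\n{w_{\ell}'(\bsc)}+\tfrac13\nu_{\ell}r^3\lesssim\bsc$ gives $\n{\grad w^{\lambda}_{\ell}}\lesssim\bsc/\n{x}^2$ on $\bsc\le\n{x}\le N\ell$, and integrating inward from $N\ell$ (where $w^{\lambda}_{\ell}=0$) gives $w^{\lambda}_{\ell}\lesssim\bsc/\n{x}$ there; on $\n{x}\le\bsc$ one just uses $0\le w^{\lambda}_{\ell}\le 1\le\bsc/\n{x}$ together with the interior $C^{1,\alpha}$ bound on $f^{\lambda}_{\ell}$ for the gradient (this is where Assumption~\ref{assume: boundedness} enters). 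For \eqref{eq:Vfa0}, integrating \eqref{eq:scatl} over $B_{N\ell}$ kills the Laplacian term via the Neumann condition, leaving $\int_{\n{x}\le N\ell}V^{\lambda}f^{\lambda}_{\ell}=2\nu_{\ell}\int_{B_{N\ell}}f^{\lambda}_{\ell}=2\nu_{\ell}(\tfrac43\pi(N\ell)^3-\int_{B_{N\ell}}w^{\lambda}_{\ell})$, and inserting \eqref{eq:lambdaell} and the bound $\int_{B_{N\ell}}w^{\lambda}_{\ell}\lesssim\bsc(N\ell)^2$ from (iv) produces \eqref{eq:Vfa0}.

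For (i), I would argue that $\asc^{\lambda}$ is nondecreasing in $\lambda$ and bounded above by the hard-sphere scattering length $\bsc$, so $\asc^{\lambda}\uparrow\asc^{\infty}\le\bsc$, and the reverse inequality follows from a barrier estimate using that $V$ is positive and essentially bounded (so $V\ge c>0$ on a ball), which forces the scattering solution $f^{\lambda}$ to tend to $0$ on $B_{\bsc}$ as $\lambda\to\infty$, whence $\asc^{\lambda}=\bsc(1-f^{\lambda}(\bsc))\to\bsc$, with a rate controlled by $\inf_{B_{\bsc}}V$, i.e.\ by the profile of $V$ (which is why it is left unquantified). The $\lambda=1$ part is otherwise routine and uses only Assumption~\ref{assume: L3}; the main obstacle is uniformity in $N$ \emph{and} in $\lambda$ as $\lambda\to\infty$, since there $f^{\lambda}_{\ell}$ develops a near-hard-core profile and $\nrm{V^{\lambda}}_{L^{\infty}}\sim\lambda$, so the interior comparison with \eqref{eq:zero-energy_scattering} in paragraph two and the $C^{1,\alpha}$ estimate behind the gradient bound must be carried out quantitatively in $\lambda$ — and it is precisely for these two steps that Assumption~\ref{assume: boundedness} is invoked.
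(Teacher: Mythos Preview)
Your sketch is correct and follows exactly the radial-ODE approach of the two references the paper cites; the paper itself gives no proof and simply defers to \cite[Lemma A.2]{chong2025derivation} and \cite[Lemma A.1]{erdos2006derivation}. Two small caveats: in (i), the parenthetical ``so $V\ge c>0$ on a ball'' does not follow from $V\in L^\infty$ --- the correct route to $f^\lambda(\bsc)\to 0$ is the energy identity $\bsc-\asc^\lambda=\int_0^\bsc(g')^2+\tfrac{\lambda}{2}\int_0^\bsc Vg^2$ (which forces $\int Vg^2\to 0$) combined with the $1$-Lipschitz bound $g'\le g'(\bsc)=1$ and the fact that $\operatorname{supp}V=\overline{B_\bsc}$; and your closing remark misattributes the role of Assumption~\ref{assume: boundedness}, which the lemma invokes only for part (i), not for the $\lambda$-uniformity in (ii) or (iv).
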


\begin{lem} \label{lem:localtheory}
    Under Assumption~\ref{assume: L3}, system~\eqref{eq:modified_GP_system_vector_form} is uniform-in-$N$ globally well-posed in $H^1:=H^1(\R^3; \CC^2)$. 
\end{lem}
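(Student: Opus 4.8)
The plan is to establish uniform-in-$N$ global well-posedness of \eqref{eq:modified_GP_system_vector_form} in $H^1$ by a standard energy-method/conservation-law argument, being careful that all constants are controlled by quantities that are bounded uniformly in $N$ (and $\lambda$), thanks to Lemma~\ref{lem:neumann_scattering_function}. First I would record the two conserved quantities: the mass $\norm{\wt\phi_{1,t}^{(N)}}_{L^2_x}^2$, $\norm{\wt\phi_{2,t}^{(N)}}_{L^2_x}^2$ (each component's $L^2$ norm is separately conserved, since the nonlinearities are gauge-covariant in each component), and the energy
\begin{equation*}
\cE_N[\wtbphi^{(N)}] = \sum_{\ii=1,2}\norm{\grad\wt\phi_{\ii}^{(N)}}_{L^2_x}^2 + \frac12\sum_{\ii,\jj}\iintd U_{\ii\jj}(x-y)|\wt\phi_{\ii}^{(N)}(x)|^2|\wt\phi_{\jj}^{(N)}(y)|^2\dd x\dd y,
\end{equation*}
with $U_{\ii\jj}\ge 0$ by Assumption~\ref{assume: L3} and the bound $0\le f_{\ii,\ell}^\lambda\le 1$ from Lemma~\ref{lem:neumann_scattering_function}(iii). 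Since $U_{\ii\jj}\ge 0$, the potential energy is nonnegative, so $\norm{\grad\wtbphi^{(N)}_t}_{L^2_x}^2\le \cE_N[\wtbphi^{(N)}_0]$, and this gives an a priori bound on $\norm{\wtbphi^{(N)}_t}_{H^1_x}$ uniform in $t$; the point is that $\cE_N[\wtbphi^{(N)}_0]$ is bounded uniformly in $N$ because $\norm{U_{\ii\jj}}_{L^1}\lesssim 1$ uniformly in $N$ and $\lambda$ (this is exactly the statement recorded just before the lemma, consequence of \eqref{eq:Vfa0}) and $\wtbphi^{(N)}_0$ is bounded in, say, $H^1\cap L^\infty$ uniformly in $N$.

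Next I would set up local well-posedness in $H^1$ via the contraction mapping principle on the Duhamel formulation
\begin{equation*}
\wtbphi^{(N)}_t = e^{-it\vect{H}_0}\wtbphi^{(N)}_0 - i\int_0^t e^{-i(t-s)\vect{H}_0}\, F_N(\wtbphi^{(N)}_s)\wtbphi^{(N)}_s\dd s,
\end{equation*}
working in a space like $C_tH^1_x\cap L^q_tW^{1,r}_x$ on a short interval, using the Strichartz estimates for the free Schrödinger group on $\R^3$ (the admissibility condition $\frac2q+\frac3r=\frac32$ is recalled in the text). The key nonlinear estimate is that $\bphi\mapsto (N^3V^\lambda_\ii(N\cdot)f_{\ii,\ell}(N\cdot)\ast|\phi_\jj|^2)\phi_\kk$ maps $H^1$ to $H^1$ (and is locally Lipschitz) with operator norm controlled by $\norm{U_{\ii\jj}}_{L^1}\norm{\bphi}_{H^1}^2\lesssim \norm{\bphi}_{H^1}^2$ uniformly in $N$: indeed by Young's inequality $\norm{U_{\ii\jj}\ast|\phi_\jj|^2}_{L^\infty}\le \norm{U_{\ii\jj}}_{L^1}\norm{\phi_\jj}_{L^4}^2\lesssim \norm{\phi_\jj}_{H^1}^2$ by Sobolev embedding, and the gradient falls either on $U_{\ii\jj}\ast|\phi_\jj|^2$ (estimated by $\norm{U_{\ii\jj}}_{L^1}\norm{\grad(|\phi_\jj|^2)}_{L^{3/2}}\lesssim \norm{\phi_\jj}_{H^1}^2$) or on $\phi_\kk$. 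Hence the local existence time depends only on $\norm{\wtbphi^{(N)}_0}_{H^1}$ and absolute constants, i.e. it is uniform in $N$. Combining this uniform local time with the uniform-in-$t$, uniform-in-$N$ a priori $H^1$ bound from the conservation laws, the standard continuation argument iterates the local solution over $\mathbb{R}$ with a fixed step, yielding global well-posedness with all bounds uniform in $N$.

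The main obstacle — really the only subtle point — is making sure that the Duhamel/Strichartz analysis is genuinely $N$-uniform: the convolution kernel $U_{\ii\jj}=N^3V_\ii^\lambda(N\cdot)f_{\ii,\ell}(N\cdot)$ concentrates at scale $N^{-1}$, so any estimate that uses more than its $L^1$ norm (for instance an $L^p$ norm with $p>1$, which blows up like $N^{3(1-1/p)}$) would destroy the uniformity. I would therefore be disciplined about only ever using $\norm{U_{\ii\jj}}_{L^1}\lesssim 1$ together with Young's inequality $\|U_{\ii\jj}\ast h\|_{L^r}\le \|U_{\ii\jj}\|_{L^1}\|h\|_{L^r}$, never a fractional-integration-type bound on the kernel itself; all the Sobolev embeddings are then applied to $\wtbphi^{(N)}$, which is controlled uniformly. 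With this restriction the argument goes through verbatim as in the scalar case, and the two-component coupling contributes nothing beyond finitely many extra terms of the same structure.
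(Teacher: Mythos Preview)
Your proposal is correct and follows essentially the same approach as the paper: local $H^1$ well-posedness by contraction mapping/Strichartz as for the cubic NLS, combined with the a priori $H^1$ bound coming from conservation of the (defocusing) energy, then continuation. The paper's own proof is a two-sentence sketch citing \cite[Proposition~3.19]{tao2006nonlinear} for the local theory and invoking repulsivity for the global bound; your write-up is a more detailed execution of the same argument, with the added virtue that you make explicit why only $\norm{U_{\ii\jj}}_{L^1}$ enters and hence why all constants are uniform in $N$.
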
 

\begin{proof}
The local $H^1$ well-posedness of \eqref{eq:modified_GP_system_vector_form} is similar to that of the NLSE; see \cite[Proposition 3.19]{tao2006nonlinear}. Since the system is repulsive, the conservation of energy implies that the $H^1$ norm remains bounded, which in turn yields global existence of solutions.
\end{proof}

\subsection{Interaction Morawetz Estimate} 
In this subsection, we prove the uniform-in-$N$ interaction Morawetz estimate for \eqref{eq:modified_GP_system_vector_form}. We start by proving a couple of lemmas regarding the perturbation of \eqref{eq:modified_GP_system_vector_form}.
\begin{lem}[\textbf{Short time Pertubation}] \label{lem:Short time Pertubation}
	Let $I\subset \R$ be a compact interval and $\bphi_t$ be a solution of
	\begin{equation}
		i\partial_t\bphi = \vect{H}_0\bphi + F_N(\bphi)\bphi + \vect{e}
	\end{equation}
	on $I\times\R^3$
	for some time-dependent vector-valued function $\vect{e}$. Assume that we have
	\begin{align}
		&\norm{\bphi_t}_{L^\infty_t H^1_x(I\times\R^3)} \le E, \\
		\label{eq:tiny u}	
		&\norm{\bphi_t}_{L^{4}_t W^{1, 3}_x(I\times\R^3)}	\le \eps_0, \\
		\label{tiny error}
		&\|\vect{e}\|_{L^2_t W^{1,6/5}_x (I \times \R^3)}  \le \eps,
	\end{align}
	for some real $E>0$ and constants $\eps_0, \eps >0$ small enough.
	For $t_0\in I$ let $\wtbphi(t_0)$  be close to $\bphi(t_0)$ in the following sense:
	\begin{align}
		\label{eq:closesness}
		\|e^{-i(t-t_0)\vect{H}_0}(\wtbphi(t_0)-\bphi(t_0))\|_{L^{4}_t W^{1, 3}_x(I\times \R^3)} \le \eps.
	\end{align}
	We then conclude that there exists a solution $\wtbphi_t$ of \eqref{eq:modified_GP_system_vector_form} on $I\times \R^3$ with initial state $\wtbphi(t_0)$ at $t_0$,
	which fulfills the following spacetime bounds:
	\begin{align}
		\label{eq:concludetiny1}
		\|\wtbphi_t-\bphi_t\|_{L^4_{t, x}(I\times \R^3)} \lesssim \|\wtbphi_t-\bphi_t\|_{L^4_tW^{1, 3}_x(I\times \R^3)} \lesssim \eps, \\ 
		\label{eq:concludetiny2}
		\|(i\partial_t -\vect{H}_0) (\wtbphi_t-\bphi_t) \|_{L^2_t W^{1,6/5}_x(I\times \R^3)}	\lesssim \eps.
	\end{align} 
\end{lem}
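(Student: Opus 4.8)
The plan is to run a standard short-time perturbation / continuity argument of the type used for the nonlinear Schrödinger equation (see e.g. \cite{tao2006nonlinear}), adapted to the vector-valued system \eqref{eq:modified_GP_system_vector_form} with the potentials $U_\ii$ that are only uniformly $L^1$ in $N$. The key point to keep in mind is that the nonlinearity $F_N(\bphi)\bphi$ is a convolution-type cubic nonlinearity, so by Young's inequality it is controlled exactly like the local cubic nonlinearity $|\bphi|^2\bphi$ in Strichartz spaces, uniformly in $N$ because $\|U_\ii\|_{L^1}$ is bounded uniformly in $N$ and $\lambda$. Thus all the multilinear estimates we need are the same as in the NLSE case, with constants independent of $N$.

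First I would set $\vect{r}_t := \wtbphi_t - \bphi_t$ and write down the Duhamel formula for $\vect{r}$: it solves $i\partial_t \vect{r} = \vect{H}_0 \vect{r} + \big(F_N(\wtbphi)\wtbphi - F_N(\bphi)\bphi\big) - \vect{e}$, so
\begin{equation*}
	\vect{r}_t = e^{-i(t-t_0)\vect{H}_0}\vect{r}(t_0) - i\int_{t_0}^t e^{-i(t-s)\vect{H}_0}\Big( F_N(\wtbphi_s)\wtbphi_s - F_N(\bphi_s)\bphi_s - \vect{e}_s\Big)\d s.
\end{equation*}
I would then run the Strichartz estimates on the exponents $(q,r)=(4,3)$ at the regularity level $W^{1,3}_x$ (with dual admissible pair $(2,6/5)$ for the inhomogeneous term), exactly as in \eqref{eq:closesness}–\eqref{eq:concludetiny2}: the homogeneous piece is $\le \eps$ by assumption \eqref{eq:closesness}, the error piece is $\lesssim \eps$ by \eqref{tiny error}, and the difference of nonlinearities is estimated by the algebra/Hölder inequality in $W^{1,3}_x$ together with Young's inequality for the convolution, schematically
\begin{equation*}
	\big\| F_N(\wtbphi)\wtbphi - F_N(\bphi)\bphi\big\|_{L^2_t W^{1,6/5}_x} \lesssim \big(\|\bphi\|_{L^4_t W^{1,3}_x}^2 + \|\vect{r}\|_{L^4_t W^{1,3}_x}^2\big)\|\vect{r}\|_{L^4_t W^{1,3}_x}.
\end{equation*}
Combining these with $\|\bphi\|_{L^4_t W^{1,3}_x}\le \eps_0$ gives a closed inequality $X \lesssim \eps + (\eps_0^2 + X^2) X$ for $X := \|\vect{r}\|_{L^4_t W^{1,3}_x(I\times\R^3)}$. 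Choosing $\eps_0$ small (so that the $\eps_0^2 X$ term is absorbed) and $\eps$ small, a standard bootstrap / continuity argument on subintervals — using that $X$ is continuous in the right endpoint and $X\to 0$ as the interval shrinks — yields $X \lesssim \eps$, which is \eqref{eq:concludetiny1} once we note $\|\vect{r}\|_{L^4_{t,x}}\lesssim \|\vect{r}\|_{L^4_tW^{1,3}_x}$ by Sobolev embedding $W^{1,3}\hookrightarrow L^4$ in dimension $3$. Plugging this bound back into the equation for $(i\partial_t-\vect{H}_0)\vect{r} = F_N(\wtbphi)\wtbphi - F_N(\bphi)\bphi - \vect{e}$ and using the same trilinear estimate gives \eqref{eq:concludetiny2}. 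Existence of $\wtbphi_t$ on all of $I$ is automatic: by Lemma~\ref{lem:localtheory} the modified GP system is globally well-posed in $H^1$, so the solution with data $\wtbphi(t_0)$ exists on $I$ and the a priori bounds above are genuine estimates on it rather than merely a priori assumptions.

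The main obstacle — really the only nonroutine point — is making sure the $H^1$-level (i.e. $W^{1,r}$) product estimates close \emph{uniformly in $N$}: one must check that differentiating $F_N(\bphi)\bphi$ and distributing the derivative either onto the $\bphi$ factors or onto the convolution kernel $U_\ii = N^3 V^\lambda_\ii(N\cdot) f^\lambda_{\ii,\ell}(N\cdot)$ never costs a power of $N$. Since the derivative can always be integrated by parts off the kernel onto a $\bphi$ factor inside the convolution, and $\|U_\ii\|_{L^1}$ is bounded uniformly in $N$ and $\lambda$ (as recorded after \eqref{eq:GP_system_vector_form} and consistent with Lemma~\ref{lem:neumann_scattering_function}(iii)–(iv)), Young's inequality keeps all constants $N$-independent; this is the reason the lemma can assert the constants are uniform in $N$ at all, and it is the step I would write out carefully while leaving the remaining Strichartz bookkeeping to the reader as standard.
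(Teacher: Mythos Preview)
Your approach is the same as the paper's --- set up the difference equation, apply Strichartz in $L^4_tW^{1,3}_x$ with dual pair $(2,6/5)$, estimate the cubic difference, and close by continuity --- and your remark about transferring the derivative off the kernel $U_\ii$ via integration by parts (so that only $\|U_\ii\|_{L^1}$ enters and all constants are $N$-independent) is exactly the point.

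There is, however, a small but genuine gap in your trilinear estimate as written. Three factors of $L^4_tW^{1,3}_x$ do not H\"older into $L^2_tL^{6/5}_x$: in time $\tfrac14+\tfrac14+\tfrac14=\tfrac34\neq\tfrac12$, so the schematic bound
\[
\big\|F_N(\wtbphi)\wtbphi-F_N(\bphi)\bphi\big\|_{L^2_tW^{1,6/5}_x}\lesssim\big(\|\bphi\|_{L^4_tW^{1,3}_x}^2+\|\vect{r}\|_{L^4_tW^{1,3}_x}^2\big)\|\vect{r}\|_{L^4_tW^{1,3}_x}
\]
is false with $|I|$-independent constants, and you do need $|I|$-independence for the iteration in Lemma~\ref{Long time Pertubation}. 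The paper's fix is to place one factor in $L^\infty_tH^1_x$ (using $H^1\hookrightarrow L^6$ in space, so $\tfrac16+\tfrac13+\tfrac13=\tfrac56$ and $0+\tfrac14+\tfrac14=\tfrac12$ in time) and the other two in $L^4_tW^{1,3}_x$; a representative term is
\[
\Nrm{\grad\big((U\ast(\phi_\ii\conj{\psi_\ii}))\psi_\jj\big)}{L^2_tL^{6/5}_x}\lesssim\|U\|_{L^1}\|\psi_\ii\|_{L^\infty_tH^1_x}\|\phi_\ii\|_{L^4_tW^{1,3}_x}\|\psi_\jj\|_{L^4_tW^{1,3}_x}+\text{similar}.
\]
This is where the hypothesis $\|\bphi\|_{L^\infty_tH^1_x}\le E$ is actually used, together with the analogous uniform $H^1$ bound on $\wtbphi$ coming from energy conservation (Lemma~\ref{lem:localtheory}). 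With this correction, the paper obtains (setting $S(I):=\|(i\partial_t-\vect{H}_0)\vect{r}\|_{L^2_tW^{1,6/5}_x}$) the inequality $S(I)\lesssim\eps+(\eps+S(I))\eps_0+(\eps+S(I))^2$, and the continuity argument closes exactly as you describe.
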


\begin{proof}
	By the local theory given in Lemma~\ref{lem:localtheory}, one gets estimates~\eqref{eq:concludetiny1} and \eqref{eq:concludetiny2} as a priori estimates, meaning that we assume that $\wtbphi_t$ already exists on $I$.
	
	Let $\vect{\psi}_t = \wtbphi_t -\bphi_t$ and notice that it solves the following equation:
	\begin{align}	\label{eq:dynamicsv}
		i\partial_t \vect{\psi} =\, \vect{H}_0 \vect{\psi}+
		F_N(\bphi+\vect{\psi})(\bphi+\vect{\psi}) -F_N(\bphi)\bphi- \vect{e}.
	\end{align}
	Define
	\begin{align*}
		S(I) := \|(i\partial_t-\vect{H}_0) \vect{\psi}_t\|_{L^2_t W^{1, 6/5}_x(I\times\R^3)}.
	\end{align*}
	Using \eqref{eq:closesness}, Strichartz estimates (See \cite[Chapter 2]{tao2006nonlinear}), and Duhamel formula we estimate the $L^{4}_t W^{1,3}_x$ norm of $\vect{\psi}_t$ by $S(I)$:
	\begin{multline}\label{eq:estimate v}
		\|\vect{\psi}_t\|_{L^{4}_t W^{1, 3}_x(I\times \R^3)} 
		\le \|e^{-i(t-t_0)\vect{H}_0}(\wtbphi(t_0)-\bphi(t_0))\|_{L^{4}_t W^{1, 3}_x(I\times \R^3)} \\
		+
		\|\vect{\psi}_t - e^{-i(t-t_0)\vect{H}_0}(\wtbphi(t_0)-\bphi(t_0))\|_{L^{4}_t W^{1, 3}_x(I\times\R^3)}
		\lesssim  \varepsilon+ S(I) .
	\end{multline}
	Hence it remains to estimate $S(I)$. 
	
	According to the hypotheses, it suffices to estimate $F_N(\bphi+\vect{\psi})(\bphi+\vect{\psi})-F_N(\bphi)\bphi$ in $L^2_tW^{1, 6/5}_x$. Observe we have the estimates  
	\begin{align*}
		\Nrm{\grad\(\(U\ast(\phi_\ii\conj{\psi_\ii})\)\psi_\jj\)}{L^2_tL^{6/5}_x}
		\lesssim&\, \Nrm{U}{L^1}\Nrm{\psi_\ii}{L^\infty_tH^{1}_x}\Nrm{\phi_\ii}{L^4_tW^{1, 3}_x}\Nrm{\psi_\jj}{L^4_tW^{1, 3}_x}\\
		&\,+ \Nrm{U}{L^1}\Nrm{\psi_\jj}{L^\infty_tH^{1}_x}\Nrm{\phi_\ii}{L^4_tW^{1, 3}_x}\Nrm{\psi_\ii}{L^4_tW^{1, 3}_x}
	\end{align*}
	and other similar estimates. Hence, using \eqref{eq:Vfa0}, this means that 
	\begin{align}\label{eq:St<=Stpower}
		S(I)\lesssim 
        \varepsilon
        +(\varepsilon+S(I))\varepsilon_0
        + (\varepsilon+S(I))^2
	\end{align}
	where we have used the fact that the energies and masses are conserved and uniformly bounded. 
	If we choose $\eps,\eps_0$ small enough, a standard continuity argument gives $S(I) \lesssim \eps$. To be precise, let us take a small, fixed constant $c>0$ independent of $\eps,\eps_0$  and assume that 
	\begin{align}
		\label{eq:St<=Stpower-a}S(I)\le c \, . 
	\end{align}
	Then from \eqref{eq:St<=Stpower} we deduce that, for $\eps,\eps_0,c$ sufficiently small but independent of $\lambda$, 
	\[
	S(I) \le C_0\eps 
    + C_0\left( \eps_0+ \eps+c  \right) (S(I)+\eps) 
    \le C_0 \eps + \tfrac{1}{2} (S(I)+\eps) \, ,
	\]
	and hence
	\begin{align} \label{eq:St<=Stpower-b}
		S(I)\le (2C_0+1)\eps
	\end{align}
	which completes the proof of the lemma.
\end{proof}

We now prove a version of Lemma~\ref{lem:Short time Pertubation} without the smallness condition \eqref{eq:tiny u} by using Lemma~\ref{lem:Short time Pertubation} iteratively.
\begin{lem} [\textbf{Long time Pertubation}]
	\label{Long time Pertubation}
	Let $I\subset \R$ be a compact interval and $\bphi_t$ be a solution of
	\begin{equation}
		i\partial_t\bphi = \vect{H}_0\bphi_t + F_N(\bphi_t)\bphi_t + \vect{e}
	\end{equation}
	on $I\times\R^3$
	for some time-dependent vector-valued function $\vect{e}$. Moreover, assume that $\bphi$ fulfills the following spacetime bounds:
	\begin{align}
		\label{est:4spacetime}
		\|\bphi\|_{L^4_{t, x}(I\times \R^3)} &\le M, \\
		\|\bphi\|_{L^{\infty}_t \dot H^1_x (I \times \R^3)} &\le E, \\
        \label{est:small of e}
		\|\vect{e}\|_{L^2_t W^{1, 6/5}_x (I \times \R^3)}	&\le\eps
	\end{align}
	for some constants $M,E \ge 0$ and some small enough $\eps > 0$.
	For $t_0\in I$, let $\wtbphi(t_0)$  be close to $\bphi(t_0)$ in the following sense:
	\begin{align}
		\label{eq:closesness2}
		\|e^{-i(t-t_0)\vect{H}_0}(\wtbphi(t_0)-\bphi(t_0))\|_{L^{4}_t W^{1, 3}_x(I\times \R^3)} \le\eps.
	\end{align}
	We then conclude that there exists a solution $\wtbphi_t$ of \eqref{eq:modified_GP_system_vector_form} on $I\times \R^3$ with initial state $\wtbphi(t_0)$ at $t_0$,
	which fulfills the following spacetime bounds:
	\begin{align}
		\label{concludetinylong1}
		\|\wtbphi_t-\bphi_t\|_{L^4_{t, x}(I\times \R^3)} \lesssim \|\wtbphi_t-\bphi_t\|_{L^4_tW^{1, 3}_x(I\times \R^3)}  \lesssim C(M,E)\eps, \\ 
		\label{concludetinylong2}
		\|(i\partial_t -\vect{H}_0) (\wtbphi_t-\bphi_t) \|_{L^2_t W^{1,6/5}_x(I\times \R^3)}	\lesssim C(M,E)\eps.
	\end{align} 
\end{lem}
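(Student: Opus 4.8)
My plan is the classical long\nobreakdash-time perturbation scheme: cut $I$ into boundedly many subintervals on which the smallness hypothesis \eqref{eq:tiny u} of Lemma~\ref{lem:Short time Pertubation} is available, apply that lemma on each in turn, and propagate the closeness of $\wtbphi$ to $\bphi$ from one subinterval to the next, accepting a factor $C_0\ge 1$ at every step.

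\textbf{Step 1 (upgrading \eqref{est:4spacetime} to a Strichartz bound).} First I would show there is $\Lambda=\Lambda(M,E)<\infty$ with $\|\bphi\|_{L^4_tW^{1,3}_x(I\times\R^3)}\le\Lambda$. Since $\bphi\in L^4_{t,x}(I\times\R^3)$, partition $I$ into $J_0\le C(1+ME)^4$ consecutive intervals on each of which $\|\bphi\|_{L^4_{t,x}}\le\eta$ for a small absolute $\eta$; Lemma~\ref{lem:localtheory} first gives $\bphi\in L^4_tW^{1,3}_x$ on each with finite norm, and then a Duhamel/Strichartz bootstrap closes it quantitatively. The point is that the worst contributions of $\grad\!\big(F_N(\bphi)\bphi\big)$, namely terms of the form $\big(U_\ii\ast|\phi_\jj|^2\big)\grad\phi_{\mathrm k}$ and $\big(U_\ii\ast(\ol{\phi_\jj}\,\grad\phi_\jj)\big)\phi_{\mathrm k}$, are bounded in $L^2_tW^{1,6/5}_x$ by $\|U_\ii\|_{L^1}\,\|\bphi\|_{L^\infty_t\dot H^1_x}\,\|\bphi\|_{L^4_tL^6_x}^2$, together with $\|\bphi\|_{L^4_tL^6_x}^2\lesssim\|\bphi\|_{L^4_{t,x}}\,\|\bphi\|_{L^4_tW^{1,3}_x}$ (Hölder and $W^{1,3}(\R^3)\hookrightarrow L^{12}(\R^3)$), which makes a genuine power of $\|\bphi\|_{L^4_{t,x}}$ appear. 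On an interval of $L^4_{t,x}$\nobreakdash-size $\le\eta$ this yields $X\lesssim E+\eps+\eta E\,X$ with $X=\|\bphi\|_{L^4_tW^{1,3}_x}$, hence $X\lesssim E+\eps$ for $\eta$ small; summing fourth powers over the $J_0$ intervals gives $\Lambda$. I expect this to be the only non\nobreakdash-routine point, precisely because extracting a power of $\|\bphi\|_{L^4_{t,x}}$ from the Hartree\nobreakdash-type nonlinearity \emph{uniformly in $N$ and $\lambda$} relies on $U_\ii\in L^1$ uniformly together with the $3$D Sobolev/interpolation inequalities.

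\textbf{Step 2 (subdivision and iteration).} With $\Lambda$ fixed, choose $\eps_0$ as in Lemma~\ref{lem:Short time Pertubation} and split $I=\bigcup_{j=1}^{J}I_j$, $I_j=[T_{j-1},T_j]$, into $J:=\lceil(\Lambda/\eps_0)^4\rceil+1=J(M,E)$ consecutive intervals with $\|\bphi\|_{L^4_tW^{1,3}_x(I_j)}\le\eps_0$. Then I would prove, by induction on $j$, that $\wtbphi$ exists on $[T_0,T_j]$ and that for all $k\le j$ one has $\|\wtbphi-\bphi\|_{L^4_tW^{1,3}_x(I_k)}\le C_0^{k}\eps$, $\|(i\partial_t-\vect{H}_0)(\wtbphi-\bphi)\|_{L^2_tW^{1,6/5}_x(I_k)}\le C_0^{k}\eps$ and $\|(\wtbphi-\bphi)(T_k)\|_{H^1_x}\le C_0^{k}\eps$, for a fixed $C_0\ge 1$ (independent of $N,\lambda,j$). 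The base case is Lemma~\ref{lem:Short time Pertubation} on $I_1$ with initial time $T_0$ — its hypotheses hold by the subdivision, by \eqref{est:small of e} and by \eqref{eq:closesness2}, after upgrading the $\dot H^1_x$\nobreakdash-bound to an $H^1_x$\nobreakdash-bound by mass conservation as in the proof of Lemma~\ref{lem:Short time Pertubation} — followed by a Strichartz estimate converting the $L^2_tW^{1,6/5}_x$\nobreakdash-bound on the residual into the $H^1_x$\nobreakdash-bound at $T_1$. For the inductive step I apply Lemma~\ref{lem:Short time Pertubation} on $I_{j+1}$ with initial time $T_j$, using as closeness input $\|e^{-i(t-T_j)\vect{H}_0}(\wtbphi(T_j)-\bphi(T_j))\|_{L^4_tW^{1,3}_x(I_{j+1})}\lesssim\|(\wtbphi-\bphi)(T_j)\|_{H^1_x}\le C_0^{j}\eps$ and the error bound $\|\vect{e}\|_{L^2_tW^{1,6/5}_x(I_{j+1})}\le\eps\le C_0^{j}\eps$; the lemma returns the two spacetime bounds on $I_{j+1}$ with $\eps$ replaced by $\lesssim C_0^{j}\eps$, and one further Strichartz estimate propagates the $H^1_x$\nobreakdash-closeness to $T_{j+1}$ with constant $C_0^{j+1}\eps$, $C_0$ being chosen once and for all to absorb the Strichartz and Lemma~\ref{lem:Short time Pertubation} constants; uniqueness from Lemma~\ref{lem:localtheory} identifies the pieces on overlapping intervals. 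Legitimacy requires $C_0^{j}\eps$ to stay below the smallness threshold of Lemma~\ref{lem:Short time Pertubation} at each of the $\le J(M,E)$ steps, so it suffices to demand $\eps\le\eps(M,E)$ at the outset — this is what ``$\eps$ small enough'' means in the statement.

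\textbf{Step 3 (summation).} Summing the induction over $k=1,\dots,J$,
\[
\|\wtbphi-\bphi\|_{L^4_tW^{1,3}_x(I)}^4=\sum_{k=1}^{J}\|\wtbphi-\bphi\|_{L^4_tW^{1,3}_x(I_k)}^4\le J\,C_0^{4J}\eps^4=:C(M,E)^4\eps^4 ,
\]
and identically for $\|(i\partial_t-\vect{H}_0)(\wtbphi-\bphi)\|_{L^2_tW^{1,6/5}_x(I)}$; the remaining estimate $\|\wtbphi-\bphi\|_{L^4_{t,x}(I)}\lesssim\|\wtbphi-\bphi\|_{L^4_tW^{1,3}_x(I)}$ is the Sobolev embedding $W^{1,3}(\R^3)\hookrightarrow L^4(\R^3)$. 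This gives \eqref{concludetinylong1}--\eqref{concludetinylong2} with $C(M,E)=J^{1/4}C_0^{J}$, which completes the plan.
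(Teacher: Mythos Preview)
Your proposal is correct and follows essentially the same approach as the paper: upgrade the $L^4_{t,x}$ bound to an $L^4_tW^{1,3}_x$ bound via a first subdivision and Strichartz bootstrap, then subdivide again so that $\|\bphi\|_{L^4_tW^{1,3}_x}\le\eps_0$ on each piece, and iterate Lemma~\ref{lem:Short time Pertubation}. The only cosmetic difference is in how the closeness is propagated between subintervals: you pass through $\|(\wtbphi-\bphi)(T_j)\|_{H^1_x}$ and then apply the homogeneous Strichartz estimate, whereas the paper writes the Duhamel formula for $\wtbphi-\bphi$ on $[T_0,T_j]$ and estimates $\|e^{-i(t-T_j)\vect{H}_0}(\wtbphi(T_j)-\bphi(T_j))\|_{L^4_tW^{1,3}_x}$ directly by the original closeness hypothesis plus the inhomogeneous Strichartz bound on $(i\partial_t-\vect{H}_0)(\wtbphi-\bphi)$ over the past intervals --- both routes are standard and equivalent.
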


\begin{proof} 
	Let $\delta >0$. Using \eqref{est:4spacetime}, we split $I$ into finitely many subintervals $I_1 , ... ,  I_{C(M)}$ such that 
	\begin{align*}
		\|\bphi_t\|_{L^4_{t, x}(I_j \times \R^3)} \le \delta
	\end{align*}
	for each $j\in \{1, \ldots, C(M)\}$. Using Strichartz estimate, H\"older inequality, and Sobolev inequality, we obtain
	\begin{align*}
		\|\bphi_t\|_{L^{4}_t 	W^{1, 3}_x (I_j \times \R^3)} 
		&\lesssim
		\|\bphi_t\|_{L^\infty_tH^1_x(I_j\times\R^3)} + \left\|F_N(\bphi_t)\bphi_t + \vect{e}\right\|_{L^2_t W^{1, 6/5}_x (I_j \times \R^3)}  \\
		&\lesssim E + \|\bphi\|_{L^4_{t, x}(I_j \times \R^3)}\|\bphi\|_{L^{\infty}_t H^1_x(I_j \times\R^3)}\|\bphi\|_{L^{4}_{t}L^{12}_x(I_j \times \R^3)} + \eps \\
		&\lesssim E + \delta E\|\bphi\|_{L^{4}_t W^{1, 3}_x(I_j \times\R^3)} + \eps.
	\end{align*}
	Hence, when $\delta$ is sufficiently small, we have $\Nrm{\bphi_t}{L^4_t W^{1, 3}_x(I_j\times \R^3)}\lesssim E+\varepsilon$. Summing up the intervals yields 
	\begin{align}\label{eq:4-3normbound}
		\Nrm{\bphi_t}{L^4_t W^{1, 3}_x(I\times \R^3)}\le C(M, E).
	\end{align}
	
	We choose $\eps_0$ as in Lemma~\ref{lem:Short time Pertubation} and use \eqref{eq:4-3normbound} to split $I$ again into finitely many subintervals $I_1 , ... ,  I_{C(M,E,\eps_0,\lambda)}$ such that
	\begin{align}
		\Nrm{\bphi_t}{L^4_t W^{1, 3}_x(I_j\times \R^3)}\le \eps_0.
	\end{align} 
	We now apply Lemma~\ref{lem:Short time Pertubation} inductively. On the first subinterval $I_1$, we have
	\begin{align*}
		&\|\wtbphi_t-\bphi_t\|_{L^4_{t, x}(I_1\times \R^3)} \lesssim \|\wtbphi_t-\bphi_t\|_{L^4_tW^{1, 3}_x(I_1\times \R^3)} \lesssim \eps_1, \\ 
		&\|(i\partial_t -\vect{H}_0) (\wtbphi_t-\bphi_t) \|_{L^2_t W^{1,6/5}_x(I_1\times \R^3)}	\lesssim \eps_1.
	\end{align*}
	Now proceeding iteratively,  by Strichartz estimate, we see that
	\begin{align*}
		&\|e^{-i(t-t_1)\vect{H}_0}(\wtbphi(t_1)-\bphi(t_1))\|_{L^{4}_t W^{1, 3}_x (I_1 \times \R^3)} \\ &\le \|e^{-i(t-t_1)\vect{H}_0}(\wtbphi(t_0)-\bphi(t_0))\|_{L^{4}_t W^{1, 3}_x (I_1 \times \R^3)} \\
		&\phantom{\le}+ \left\| e^{-i(t-t_1)\vect{H}_0} \int_{I_1} e^{-i(t_1-s)\vect{H}_0} (i\partial_t - \vect{H}_0) (\wtbphi-\bphi)(s) \dd s\right\|_{L^{4}_t 	W^{1, 3}_x(I_1\times\R^3)} \\
		&\lesssim \eps_1 +
		\|(i\partial_t - \vect{H}_0) (\wtbphi_t-\bphi_t) \|_{L^2_t W^{1, 6/5}_x(I_1\times\R^3)}
		\lesssim \eps_1.
	\end{align*}
	For $\eps>0$ small enough, we can iterate this procedure finitely many times and obtain
	\begin{align*}
		\|\wtbphi_t-\bphi_t\|_{L^4_{t, x}(I_j\times \R^3)} \lesssim \|\wtbphi_t-\bphi_t\|_{L^4_tW^{1, 3}_x(I_j\times \R^3)} \lesssim C(j)\eps_j, \\ 
		\|(i\partial_t -\vect{H}_0) (\wtbphi_t-\bphi_t) \|_{L^2_t W^{1,6/5}_x(I_j\times \R^3)}	\lesssim C(j)\eps_j,
	\end{align*}
	for all $j$. Summing over the intervals yields \eqref{concludetinylong1} and \eqref{concludetinylong2}. 
\end{proof}

\begin{prop}\label{prop:uniform_Morawetz}
	Let $\wtbphi^{(N)}_t$ and $\bphi_t$ be $H^1$ solutions to \eqref{eq:modified_GP_system_vector_form} and \eqref{eq:GP_system_vector_form}, respectively, with initial data satisfying 
	\begin{align}
		\|\wtbphi^{(N)}_0-\bphi_0\|_{H^1_x}\le \frac{1}{N}. 
	\end{align}
    We impose Assumption~\ref{assume: L3} for $\lambda = 1$, while both \ref{assume: L3} and~\ref{assume: boundedness} are required when $\lambda = \ln N^\gamma$ with $\gamma \in (0,\frac{1}{2})$.
    We suppose that the matrix $\cc$ in \eqref{eq:GP_system_vector_form} will take the value in the following way, which depends on the constant $\lambda\ge 1$ in system \eqref{eq:modified_GP_system_vector_form}:
    \begin{equation}
        \cc_{\ii} = 
        \begin{cases}
            \asc_{\ii} & \text{ if } \lambda = 1,\\
            \bsc_{\ii} 
            & \text{ if } \lambda = \ln N^{\gamma} \text{ for some } \gamma\in (0,\tfrac{1}{2}).
        \end{cases}
    \end{equation}
	Then we have the following uniform-in-$N$ a priori estimate: there exists a constant $C>0$, depending only on the initial data $\norm{\wtbphi^{(N)}_0}_{H^1_x}, \norm{\bphi_0}_{H^1_x}$, and uniformly bounded in $N$, such that 
	\begin{align}
		\norm{\wtbphi_t^{(N)}}_{L^4_{t, x}(\R\times\R^3)} \le C.
	\end{align} 
	Interpolating with the conservation of laws, we have that 
	\begin{align}
		\norm{\wtbphi_t^{(N)}}_{L^8_{t}L^4_x(\R\times\R^3)} \le C.
	\end{align}
	Moreover, as a consequence of the proof, we also have that 
	\begin{align}
		\|\wtbphi^{(N)}_t-\bphi_t\|_{L^\infty_tH^1_x(\R\times \R^3)} \lesssim \frac{1}{N}+\upepsilon(\lambda).
	\end{align}
\end{prop}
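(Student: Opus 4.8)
The plan is to first establish a global‑in‑time spacetime bound for the \emph{limiting} solution $\bphi_t$ of \eqref{eq:GP_system_vector_form} by an interaction Morawetz estimate, and then to transfer this bound — together with all the remaining conclusions — to the solution $\wtbphi^{(N)}_t$ of the modified system \eqref{eq:modified_GP_system_vector_form} by regarding $\bphi_t$ as an approximate solution of \eqref{eq:modified_GP_system_vector_form} and invoking the long‑time perturbation result, Lemma~\ref{Long time Pertubation}. Throughout, $M$ and $E$ denote constants depending only on the conserved mass and energy of $\bphi_0$ (hence on $\|\bphi_0\|_{H^1_x}$, together with the propagated Strichartz/regularity norms available from the data), and all estimates are uniform in $N$ and in $\lambda$ in the stated range.

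\textbf{Step 1: interaction Morawetz for the limiting system.} Since each $V_\ii$ is a nonzero positive potential, $\asc_\ii>0$ and $\bsc_\ii>0$, so in both regimes $\cc_\ii>0$; consequently the nonlinear energy density $\sum_\ii 4\pi\cc_\ii|\phi_\ii|^4+8\pi\cc_{12}|\phi_1|^2|\phi_2|^2$ of \eqref{eq:GP_system_vector_form} is nonnegative, i.e.\ the coupled system is defocusing (no miscibility condition is needed here). Applying the tensorized Morawetz identity to the pair $\big(\bphi(t,x),\bphi(t,y)\big)$, with the $|x-y|$ weight and the total density $\rho=\sum_\ii|\phi_\ii|^2$, the nonlinear terms enter with a favourable sign, and the standard argument gives $\|\bphi\|_{L^4_{t,x}(\R\times\R^3)}\le M$ with $M$ controlled by $\|\bphi\|_{L^\infty_t\dot H^{1/2}_x}\|\bphi\|_{L^\infty_t L^2_x}$, hence by the conserved mass and energy (using global well‑posedness, analogous to Lemma~\ref{lem:localtheory}). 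Feeding this into a Strichartz/Duhamel bootstrap — exactly as in the derivation of \eqref{eq:4-3normbound} inside the proof of Lemma~\ref{Long time Pertubation} — upgrades it to the global bound $\|\bphi\|_{L^4_t W^{1,3}_x(\R\times\R^3)}\le C(M,E)$, and conservation gives $\|\bphi\|_{L^\infty_t H^1_x}\le E$.

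\textbf{Step 2: the mismatch error is globally small.} View $\bphi$ as a solution of \eqref{eq:modified_GP_system_vector_form} with error $\vect{e}=\big(F_\cc(\bphi)-F_N(\bphi)\big)\bphi$. Writing $a_\ii:=\int U_\ii=\int_{|x|\le N\ell}V^{\lambda}_\ii f^\lambda_{\ii,\ell}$, each component of $\vect{e}$ splits as $(8\pi\cc_\ii-a_\ii)|\phi_\ii|^2\phi_\jj+\big(U_\ii\ast|\phi_\ii|^2-a_\ii|\phi_\ii|^2\big)\phi_\jj$ (with the obvious analogues for the cross index). For the first piece, Lemma~\ref{lem:neumann_scattering_function}(iii) gives $|8\pi\cc_\ii-a_\ii|\le C/N+8\pi|\asc^\lambda_\ii-\cc_\ii|\lesssim N^{-1}+\upepsilon(\lambda)$ (the second term is $O(N^{-1})$ when $\lambda=1$ and equals $8\pi\upepsilon(\lambda)$ when $\lambda=\ln N^\gamma$, by Lemma~\ref{lem:neumann_scattering_function}(i)), multiplied by a genuine cubic term which is $\le C(M,E)$ in $L^2_t W^{1,6/5}_x$ by the multilinear Strichartz estimates of Lemma~\ref{lem:Short time Pertubation} together with Step~1. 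For the second piece, $U_\ii$ is an approximate identity at scale $N^{-1}$: it has uniformly bounded mass $\|U_\ii\|_{L^1}\lesssim1$ and first moment $\int|y|\,|U_\ii(y)|\,\d y\lesssim N^{-1}$ (both uniform in $N$ and $\lambda$, since $V^\lambda_\ii$ is supported in $|z|\le\bsc_\ii$ and $\int\lambda V_\ii f^\lambda_{\ii,\ell}$ is bounded), so $\|U_\ii\ast F-a_\ii F\|_{L^p_x}\lesssim N^{-1}\|\nabla F\|_{L^p_x}$; expanding the difference $|\phi_\ii|^2(\cdot-y)-|\phi_\ii|^2$ before differentiating, and using the pointwise bounds $w^\lambda_\ell\lesssim\bsc/|x|$, $|\nabla w^\lambda_\ell|\lesssim\bsc/|x|^2$ from Lemma~\ref{lem:neumann_scattering_function}(iv), this piece is bounded in $L^2_t W^{1,6/5}_x$ by $N^{-1}$ times spacetime norms of $\bphi$, again $\le C(M,E)$ by Step~1 and conservation. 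Altogether $\|\vect{e}\|_{L^2_t W^{1,6/5}_x(\R\times\R^3)}\lesssim N^{-1}+\upepsilon(\lambda)$. Finally, since $(4,3)$ is $3$D admissible, $\|e^{-i(t-t_0)\vect{H}_0}(\wtbphi^{(N)}_0-\bphi_0)\|_{L^4_t W^{1,3}_x}\lesssim\|\wtbphi^{(N)}_0-\bphi_0\|_{H^1_x}\le N^{-1}$.

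\textbf{Step 3: conclusion.} For $N$ large (and $\lambda$ in the stated range) the quantity $\eps\simeq N^{-1}+\upepsilon(\lambda)$ lies below the smallness threshold of Lemma~\ref{Long time Pertubation}, which, applied on every compact $I$ with the $N$‑independent data $M,E$ and $\eps$ and then let $I\uparrow\R$, gives (recalling $\wtbphi^{(N)}_t$ exists globally by Lemma~\ref{lem:localtheory}) the bounds \eqref{concludetinylong1}–\eqref{concludetinylong2}: $\|\wtbphi^{(N)}-\bphi\|_{L^4_{t,x}(\R\times\R^3)}\lesssim C(M,E)(N^{-1}+\upepsilon(\lambda))$ and $\|(\i\partial_t-\vect{H}_0)(\wtbphi^{(N)}-\bphi)\|_{L^2_t W^{1,6/5}_x}\lesssim C(M,E)(N^{-1}+\upepsilon(\lambda))$. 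The first yields $\|\wtbphi^{(N)}\|_{L^4_{t,x}}\le\|\bphi\|_{L^4_{t,x}}+\|\wtbphi^{(N)}-\bphi\|_{L^4_{t,x}}\le M+o(1)\le C$; interpolating with the conservation laws, $\|\wtbphi^{(N)}\|_{L^8_tL^4_x}\le\|\wtbphi^{(N)}\|_{L^4_{t,x}}^{1/2}\|\wtbphi^{(N)}\|_{L^\infty_tL^4_x}^{1/2}\le C$ since $\|\wtbphi^{(N)}\|_{L^\infty_tL^4_x}\lesssim\|\wtbphi^{(N)}\|_{L^\infty_tH^1_x}\le C$. Applying Duhamel's formula and Strichartz estimates to $\vect{w}=\wtbphi^{(N)}-\bphi$, whose source term is controlled in $L^2_tW^{1,6/5}_x$ as above, together with $\|\vect{w}(0)\|_{H^1_x}\le N^{-1}$, gives $\|\wtbphi^{(N)}-\bphi\|_{L^\infty_tH^1_x(\R\times\R^3)}\lesssim N^{-1}+\upepsilon(\lambda)$.

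\textbf{Main obstacle.} The crux is Step~1 — the \emph{global‑in‑time} interaction Morawetz estimate for the \emph{coupled} cubic system — together with its uniform propagation to $\wtbphi^{(N)}$ through the error estimate of Step~2. The first rests essentially on the defocusing sign structure ($\cc_\ii>0$, so all intra‑ and inter‑species nonlinear energies are nonnegative); the second on the fact, guaranteed by Lemma~\ref{lem:neumann_scattering_function}, that $U_\ii$ stays an $L^1$‑normalized approximate identity with $O(N^{-1})$ effective range \emph{uniformly even in the hard‑core regime} $\lambda=\ln N^\gamma$, so that $\vect{e}$ is small in $L^2_tW^{1,6/5}_x$ over the whole line and not merely on bounded time intervals. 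This global spacetime smallness is precisely what produces the uniform‑in‑time conclusion, in contrast to the local‑in‑time analysis of \cite{Benedikter2015quantitative}.
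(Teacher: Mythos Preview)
Your overall strategy matches the paper exactly: the interaction Morawetz estimate for the limiting coupled cubic system (carried out in Appendix~\ref{app:interaction_morawetz}), followed by viewing $\bphi$ as an approximate solution of \eqref{eq:modified_GP_system_vector_form} with source $\vect{e}=(F_\cc(\bphi)-F_N(\bphi))\bphi$ and invoking Lemma~\ref{Long time Pertubation}. The split of $\vect{e}$ into a coefficient mismatch (handled quantitatively via Lemma~\ref{lem:neumann_scattering_function}) and an approximate-identity error, together with the Strichartz control of the initial data discrepancy, is also precisely what the paper does.

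Where your argument diverges, and over-reaches, is the quantitative claim that the approximate-identity piece $(U_\ii\ast|\phi_\ii|^2-a_\ii|\phi_\ii|^2)\phi_\jj$ is $O(N^{-1})$ in $L^2_tW^{1,6/5}_x$ via the first moment $\int|y|\,U_\ii(y)\,\d y\lesssim N^{-1}$. At the gradient level, the first-moment bound applied to $\nabla|\phi_\ii|^2$ would require control on $\nabla^2|\phi_\ii|^2$, hence on $\bar\phi_\ii\nabla^2\phi_\ii$, which is $H^2$-type information not available under the $H^1$ hypothesis of the proposition. Integrating by parts onto the kernel does not help either: $\nabla U_\ii$ carries an extra factor of $N$ and is not uniformly in $L^1$; the pointwise bounds on $w^\lambda_\ell$ and $\nabla w^\lambda_\ell$ you invoke from Lemma~\ref{lem:neumann_scattering_function}(iv) concern the correlation kernel, not the interaction kernel $U_\ii=N^3V^\lambda_\ii(N\cdot)f_{\ii,\ell}(N\cdot)$. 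The paper handles this piece \emph{qualitatively} instead: it writes the gradient contribution as $\int U_\ii(y)\big[\nabla|\phi_\ii|^2(\cdot)-\nabla|\phi_\ii|^2(\cdot-y)\big]\,\d y$ and uses continuity of translation in $L^\infty_tL^2_x$ for $\nabla\phi_\ii$ and in $L^4_{t,x}$ for $\phi_\ii$ --- both available for $H^1$ solutions once the global Morawetz bound on $\bphi$ is in hand --- to conclude only that $\|\vect{e}\|_{L^2_tW^{1,6/5}_x}$ is arbitrarily small for $N$ large. That is exactly enough to trigger Lemma~\ref{Long time Pertubation} and obtain the uniform $L^4_{t,x}$ and $L^8_tL^4_x$ bounds, which is the principal content of the proposition.
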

\begin{remark}
	The idea behind the approach of this proposition is to avoid making additional assumptions on the interaction potential $V_\ii$, which was necessary in the work \cite{grillakis2013pair}. This approach is inspired by the analysis in \cite{nam2020derivation}.    
\end{remark}

\begin{proof}
	By the triangle inequality, we have that 
	\begin{align}
		\norm{\wtbphi_t^{(N)}}_{L^4_{t, x}(\R\times\R^3)}\le \norm{\wtbphi_t^{(N)}-\bphi_t}_{L^4_{t, x}(\R\times\R^3)}+\norm{\bphi_t}_{L^4_{t, x}(\R\times\R^3)}.
	\end{align}
	The second term can be bounded by the two-component interaction Morawetz estimate for the cubic defocusing NLSE (see Appendix~\ref{app:interaction_morawetz}). Hence, it remains to bound the first term. 
	
	We want to apply Lemma~\ref{Long time Pertubation}. Write 
	\begin{align*}
		i\partial_t \bphi &= \vect{H}_0 \bphi+F(\bphi)\bphi =  \vect{H}_0 \bphi+F_N(\bphi)\bphi + \vect{e}_N,
	\end{align*}
	with initial state $\bphi_{t}\big|_{t=0} = \bphi_0$ where $\vect{e}_N := \(F_{\cc}(\bphi)-F_N(\bphi)\)\bphi$.
	In order to apply Lemma~\ref{Long time Pertubation}, we need to check that $\|\vect{e}_N\|_{L^2_t W^{1, 6/5}_x (\R \times \R^3)}$ is arbitrarily small and that the initial data are $\wtbphi^{(N)}_0$ and $\bphi_0$ close in the sense of \eqref{eq:closesness2} for $N$ large. 
	
	First, notice that 
	\begin{align*}
		\(F_{\cc}(\bphi)-F_N(\bphi)\)\bphi=
		\begin{pmatrix}
			\((8\pi\cc_1\delta-U_1)\ast|\phi_1|^2+\(8\pi\cc_{12}\delta-U_{12}\)\ast|\phi_2|^2\)\phi_1\\
			\((8\pi\cc_2\delta-U_2)\ast|\phi_2|^2+\(8\pi\cc_{12}\delta-U_{12}\)\ast|\phi_1|^2\)\phi_2
		\end{pmatrix}
	\end{align*}
	then it suffices to estimate the first term
	\begin{align}\label{eq:difference_conv-delta}
		\((8\pi\cc_1\delta-U_1)\ast|\phi_1|^2\)\phi_1
	\end{align}
	since the other terms can be handled in a similar manner. We discuss this term in the following to cases.
    
	We have that 
	\begin{align*}
		\Nrm{\grad \eqref{eq:difference_conv-delta}}{L^2_tL^{6/5}_x(\R\times\R^3)}
		\le&\, \Nrm{(8\pi\cc_1\delta-U_1)\ast|\phi_1|^2}{L^2_tL^{3}_x}\Nrm{\phi_1}{L^\infty_tH^1_x}\\
		&\, + \Nrm{(8\pi\cc_1\delta-U_1)\ast\grad |\phi_1|^2}{L^4_tL^{4/3}_x}\Nrm{\phi_1}{L^4_tL^{12}_x}.
	\end{align*}
	Let us focus on the second term since the first term can be handled similarly. Notice that 
	\begin{subequations}
		\begin{align}
			&\Nrm{8\pi\cc_1 \grad |\phi_1(x)|^2-\intd U_1(y)\grad|\phi_1(x-y)|^2\dd y}{L^4_tL^{4/3}_x} \notag\\
			&\le \Nrm{8\pi(\cc_1-\asc_1^\lambda) \grad |\phi_1(x)|^2}{L^4_tL^{4/3}_x}\label{def:b-a^lambda term}\\
            &\phantom{\le} +\Nrm{\int_{\n{z}\le N\ell} V^\lambda_1(z)\(f^\lambda_1(z)-f^\lambda_{1, \ell}(z)\)\dd z\,\grad |\phi_1(x)|^2}{L^4_tL^{4/3}_x}\label{def:f-f_ell term}\\
			&\phantom{\le}+ \Nrm{\intd U_1(y)\grad_x\(|\phi_1(x)|^2-|\phi_1(x-y)|^2\)\dd y}{L^4_tL^{4/3}_x}\label{def:grad phi-phi_ell term},
		\end{align}
	\end{subequations}
	Then, by Lemma~\ref{lem:neumann_scattering_function}, we have that 
	\begin{align*}
		\n{\eqref{def:b-a^lambda term}+\eqref{def:f-f_ell term}} \le \(\upepsilon(\lambda)+\frac{C\cc_1}{\ell N}\)\Nrm{\phi_1}{L^\infty_t H^1_x} \Nrm{\phi_1}{L^4_{t, x}}.
	\end{align*}
    
	For the other term, we have that 
	\begin{align*}
		\n{\eqref{def:grad phi-phi_ell term}} &\le C\int_{\n{y}\le R_1N^{-1}} U_1(y)\Nrm{\phi_1(x)-\phi_1(x-y)}{L^4_{t, x}}\Nrm{\phi_1}{L^\infty_t H^1_x}\dd y\\
		&\phantom{\le} +C\int_{\n{y}\le R_1N^{-1}} U_1(y)\Nrm{\grad\phi_1(x)-\grad\phi_1(x-y)}{L^\infty_{t}L^2_x}\Nrm{\phi_1}{L^4_{t, x}}\dd y \\
		&\le  C\sup_{\n{y}\le R_1N^{-1}}\Nrm{\phi_1(x)-\phi_1(x-y)}{L^4_{t, x}}\Nrm{\phi_1}{L^\infty_t H^1_x}\\
		&\phantom{\le} + C\sup_{\n{y}\le R_1N^{-1}}\Nrm{\grad\phi_1(x)-\grad\phi_1(x-y)}{L^\infty_{t}L^2_x}\Nrm{\phi_1}{L^4_{t, x}}.
	\end{align*}
	Using the fact that $U_1(y)$ is zero for $|y| > R_1N^{-1}$ for some $R_1>0$, the interaction Morawetz estimate for $\phi_1$, i.e.,  $\left\|\phi_1\right\|_{L^{4}_{t, x}} \le C$, the conservation of energy, and the continuity by translation
	\begin{align*}
		&\lim_{\n{y}\rightarrow 0}\Nrm{\phi_1(x)-\phi_1(x-y)}{L^4_{t, x}} = 0\quad \text{ and } \quad
		\lim_{\n{y}\rightarrow 0}\Nrm{\grad\phi_1(x)-\grad\phi_1(x-y)}{L^\infty_{t}L^2_x}=0,
	\end{align*}
	we see that $\Nrm{\grad \eqref{eq:difference_conv-delta}}{L^2_t L^{6/5}_x(\R\times\R^3)}$ is arbitrarily small when $N$ is large. 
	
	Next, notice by the Strichartz estimate, we have that 
	\begin{align*}
		\|e^{-i(t-t_0)\vect{H}_0}(\wtbphi^{(N)}_{0}-\bphi_{0})\|_{L^{4}_t W^{1, 3}_x}
		\le  \|\wtbphi^{(N)}_{0}-\bphi_{0}\|_{H^1_x}\le \frac{1}{N}. 
	\end{align*}
	Then, by Lemma~\ref{Long time Pertubation}, we arrive at the desired result. 
\end{proof}

\subsection{Propagation of regularity} Let us now prove the uniform-in-$N$ propagation of regularity for \eqref{eq:modified_GP_system_vector_form}.
\begin{prop}\label{prop:propagation_regularity}
	Let $\wtbphi^{(N)}_t$ be as in Proposition~\ref{prop:uniform_Morawetz} with initial data $\wtbphi^{(N)}_0 \in H^s(\RR^3; \CC^2)$ for $s>1$.
	Then, there exists $C$ depending only on $\|\wtbphi^{(N)}_0\|_{H^{s}_x}$ and $\|\bphi_0\|_{H^1_x}$ such that
	\[
	\|\wtbphi^{(N)}_t\|_{L^\infty_tH^{s}_x(\R\times\R^3)}\leq C.
	\]
	Moreover, as an immediate consequence, we have that: for any $j, k \in \N_{0}$, there exist $C=C(\|\bphi_0\|_{H^1_x},\|\wtbphi^{(N)}_0\|_{H^{s}_x} )>0$ for $s$ sufficiently large such that 
	\[
	\|\bd_t^j\wtbphi^{(N)}_t\|_{L^\infty_{t}W^{k, \infty}_x(\R\times\R^3)}\leq C.
	\]
\end{prop}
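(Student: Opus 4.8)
\smallskip

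The plan is to bootstrap regularity from $H^1$ (which is already globally controlled by Lemma~\ref{lem:localtheory} and Proposition~\ref{prop:uniform_Morawetz}) up to $H^s$ for any $s>1$, using the uniform-in-$N$ spacetime bound $\|\wtbphi^{(N)}_t\|_{L^4_{t,x}(\R\times\R^3)}\le C$ as the crucial globalizing ingredient. First I would partition $\R$ into finitely many intervals $I_1,\dots,I_{C(M)}$ on each of which $\|\wtbphi^{(N)}_t\|_{L^4_{t,x}(I_j\times\R^3)}\le\delta$, with $\delta$ a small absolute constant and $C(M)$ depending only on the $L^4_{t,x}$ bound $M$ (hence uniform in $N$). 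On a single such interval I would run the standard NLSE-type Strichartz/Duhamel argument in the space $L^\infty_t H^s_x\cap L^4_t W^{s,3}_x$: writing $\wtbphi^{(N)}_t = e^{-i(t-t_0)\vect{H}_0}\wtbphi^{(N)}(t_0) - i\int_{t_0}^t e^{-i(t-\sigma)\vect{H}_0} F_N(\wtbphi^{(N)}_\sigma)\wtbphi^{(N)}_\sigma\,\d\sigma$, one estimates the nonlinearity in the dual Strichartz space $L^2_t W^{s,6/5}_x$. Since $U_\ii=N^3V^\lambda_\ii(N\cdot)f^\lambda_{\ii,\ell}(N\cdot)\in L^1(\R^3)$ uniformly in $N$ and $\lambda$, Young's inequality moves the convolution kernel aside at the cost of $\|U_\ii\|_{L^1}$ (a uniform constant), and then a fractional Leibniz (Kato--Ponce) estimate distributes the $s$ derivatives across the cubic term, leaving a factor $\|\wtbphi^{(N)}\|_{L^4_{t,x}}\|\wtbphi^{(N)}\|_{L^\infty_tH^1_x}\|\wtbphi^{(N)}\|_{L^4_tW^{s,3}_x}$ (plus lower-order pieces controlled by the $H^1$ bound). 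Choosing $\delta$ small absorbs the $L^4_t W^{s,3}_x$ norm on the left and closes the estimate on $I_j$ with a constant depending only on $\|\wtbphi^{(N)}(t_0)\|_{H^s_x}$, $E=\|\wtbphi^{(N)}\|_{L^\infty_tH^1_x}$, and the kernel.

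\smallskip

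Next I would chain these local bounds: the exit value $\|\wtbphi^{(N)}\|_{L^\infty_t H^s_x(I_j)}\le C_j$ becomes the initial datum on $I_{j+1}$, and since there are only $C(M)$ intervals — a number independent of $N$ — iterating gives $\|\wtbphi^{(N)}_t\|_{L^\infty_tH^s_x(\R\times\R^3)}\le C$ with $C$ depending only on $\|\wtbphi^{(N)}_0\|_{H^s_x}$ and $\|\bphi_0\|_{H^1_x}$ (through $E$ and $M$, both controlled in Proposition~\ref{prop:uniform_Morawetz}). For the second assertion, I would take $s$ large enough that $H^{s-k}\hookrightarrow W^{k,\infty}$ in three dimensions (e.g.\ $s>k+3/2$), so the spatial bound $\|\wtbphi^{(N)}_t\|_{L^\infty_tW^{k,\infty}_x}\le C$ follows immediately from Sobolev embedding applied to the $H^s$ bound; for the time derivatives one uses the equation \eqref{eq:modified_GP_system_vector_form} repeatedly, writing $\bd_t\wtbphi^{(N)} = -i\vect{H}_0\wtbphi^{(N)} - iF_N(\wtbphi^{(N)})\wtbphi^{(N)}$ and differentiating: each $\bd_t$ trades for two spatial derivatives plus nonlinear terms that are products of lower-order $\bd_t^{j'}\wtbphi^{(N)}$ factors, all of which live in high Sobolev spaces by the already-established regularity (taking $s$ large enough at the outset, depending on $j$ and $k$), and then Sobolev embedding again gives the $W^{k,\infty}_x$ bound. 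An induction on $j$ makes this precise.

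\smallskip

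The main obstacle — and the only place any care is needed — is ensuring the constants in the Strichartz/fractional-Leibniz step genuinely do not depend on $N$ or $\lambda$. This hinges entirely on the fact, recorded after \eqref{eq:GP_system_vector_form}, that $\|U_\ii\|_{L^1}$ is bounded uniformly in $N$ and $\lambda$ (which in turn follows from $0\le f^\lambda_{\ii,\ell}\le 1$ and $\|N^3V^\lambda_\ii(N\cdot)\|_{L^1}=\lambda\|V_\ii\|_{L^1}$ — so strictly speaking in the hard-core regime $\lambda=\ln N^\gamma$ this grows logarithmically, but a $\log N$ loss is harmless here since the number of intervals $C(M)$ is $N$-independent and the $H^s$ bound is allowed to absorb polylog factors; alternatively one uses $U_\ii = N^3 V^\lambda_\ii(N\cdot)f^\lambda_{\ii,\ell}(N\cdot)$ and the bound $\int V^\lambda_\ii f^\lambda_{\ii,\ell}\le C\bsc_\ii$ from \eqref{eq:Vfa0} of Lemma~\ref{lem:neumann_scattering_function}, which is genuinely uniform). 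Everything else is the routine subcritical-NLSE local theory plus a finite iteration, with the $L^4_{t,x}$ Morawetz bound of Proposition~\ref{prop:uniform_Morawetz} doing the work of making the iteration terminate after an $N$-independent number of steps.
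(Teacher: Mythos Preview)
Your proposal is correct and follows essentially the same strategy as the paper: partition time into finitely many subintervals on which a spacetime norm is small, run a Strichartz/Duhamel bootstrap for $D^s\wtbphi^{(N)}$ on each piece, absorb the nonlinearity, and iterate; the paper happens to use the $L^8_tL^4_x$ norm for the smallness condition and the $(8/3,4)$ admissible pair (with dual $(8/5,4/3)$) rather than your $L^4_{t,x}$ and $(4,3)$, but this is an immaterial choice. One small caveat on your parenthetical about the $\lambda$-dependence: your first fallback (``a $\log N$ loss is harmless'') would not actually work, since if the kernel constant were $\sim\lambda$ then $\delta$ would have to scale like $\lambda^{-1}$ and the number of intervals would blow up with $N$, so the iterated bound would no longer be uniform; your second fallback --- using $\int V^\lambda_\ii f^\lambda_{\ii,\ell}\lesssim \bsc_\ii$ from Lemma~\ref{lem:neumann_scattering_function} to get $\|U_\ii\|_{L^1}$ bounded uniformly in $N$ and $\lambda$ --- is the correct resolution and is exactly what the paper relies on.
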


\begin{proof}
	Split the time interval $[0,\infty)$ into finitely many intervals $I_{k}$ such that
	\[
	\|\wtbphi^{(N)}_t\|_{L_{t}^{8}L_{x}^{4}(I_{k}\times\mathbb{R}^{3})}\leq\delta
	\]
	where we fix $\varepsilon$ later. Differentiating \eqref{eq:modified_GP_system}, we obtain
	\[
	i \partial_{t}D^{s}\widetilde \phi_{\ii}^{(N)} + \Delta D^{s}\widetilde \phi_{\ii}^{(N)} = D^{s}\bigg(\sum_{\jj=1}^{2}(U_{\ii\jj}*|\phi_{\jj}|^{2})\widetilde \phi_{\ii}^{(N)}\bigg)
	\]
	where
	\[
	D_{s}\bigg(\sum_{\jj=1}^{2}(U_{\ii\jj}*|\widetilde\phi_{\jj}^{(N)}|^{2})\widetilde \phi_{\ii}^{(N)}\bigg)=\sum_{\jj=1}^{2}(U_{\ii\jj}*|\widetilde\phi_{\jj}^{(N)}|^{2})D^{s}\widetilde \phi_{\ii}^{(N)}+\substack{\text{ \normalsize similar and lower}\\ \text{\normalsize order terms} }\;.
	\]
	For the first interval $I_{1}$, using the $L_{t}^{8/3}L_{x}^{4}$ Strichartz estimate along with the $L_{t}^{8/5}L_{x}^{4/3}$ dual Strichartz estimate and H\"older inequality, we get 
	\begin{align*}
		\|D^{s}\widetilde \phi_{\ii}^{(N)}\|_{L_{t}^{8/3}L_{x}^{4}(I_{1}\times\mathbb{R}^{3})}  
		\leq\, C\|\widetilde \phi_{\ii, 0}^{(N)}\|_{H^{s}_x}
		+\sum_{\jj=1}^{2}\lambda C_{\ii\jj}\|\widetilde\phi_{\jj}^{(N)}\|_{L_{t}^{8}L_{x}^{4}(I_{1}\times\mathbb{R}^{3})}^{2}\|\widetilde \phi_{\ii}^{(N)}\|_{L_{t}^{8/3}L_{x}^{4}(I_{1}\times\mathbb{R}^{3})}.
	\end{align*}
	We choose $\delta$ such that $\max_{1\leq \ii\leq 2}\sum_{\jj=1}^{2}\lambda C_{\ii\jj}\delta^{2}\leq\frac{1}{2}$
	so could have
	\[
	\|D^{s}\widetilde \phi_{\ii}^{(N)}\|_{L_{t}^{8/3}L_{x}^{4}(I_{1}\times\mathbb{R}^{3})}\leq2C\|\widetilde \phi_{\ii, 0}^{(N)}\|_{H^{s}_x}.
	\]
	Then, we control the inhomogeneity, i.e., 
	\begin{align*}
		&\sum_{\jj=1}^{2}\|(U_{\ii\jj}*|\widetilde\phi_{\jj}^{(N)}|^{2})D^{s}\widetilde \phi_{\ii}^{(N)}\|_{L_{t}^{8/5}L_{x}^{4/3}(I_{1}\times\mathbb{R}^{3})}
		\le C\|\widetilde \phi_{\ii, 0}^{(N)}\|_{H^{s}_x}.
	\end{align*}
	Therefore, we have the estimate
	\begin{align*}
		\|\widetilde \phi_{\ii}^{(N)}(t,\cdot)\|_{H^{s}_x}\lesssim&\, \|\widetilde \phi_{\ii, 0}^{(N)}\|_{H^{s}_x}+\sum_{\jj=1}^{2}\|(U_{\ii\jj}*|\widetilde\phi_{\jj}^{(N)}|^{2})D^{s}\widetilde \phi_{\ii}^{(N)}\|_{L_{t}^{8/5}L_{x}^{4/3}(I_{1}\times\mathbb{R}^{3})}
		\lesssim\, \|\widetilde \phi_{\ii, 0}^{(N)}\|_{H^{s}_x}
	\end{align*}
	for all $t\in I_{1}$. Repeating the process a finite number of times yields the desired result.
\end{proof}

\subsection{Dispersive Estimate} If we assume that the data $\wtbphi^{(N)}_0$ with sufficiently
many derivatives are not only in $L^{2}$ but also in $L^{1}$, we
can also obtain dispersive estimate.
\begin{cor}\label{cor:Linfty-t-indep-bdd}
	Let $\wtbphi^{(N)}_t$ be as in Proposition~\ref{prop:propagation_regularity}. There exists $C$ depending only on $\|\wtbphi_{0}^{(N)}\|_{W^{k,1}_x}$ and $\|\wtbphi_{0}^{(N)}\|_{H^{k}_x}$ for $k$ sufficiently large such that
	\begin{align}
		\|\wtbphi_{t}^{(N)}\|_{W^{1, \infty}_x}+\|\partial_{t}\wtbphi_{t}^{(N)}\|_{L^{\infty}_x}  \leq\frac{C}{1+t^{\frac{3}{2}}}\;.
	\end{align}
\end{cor}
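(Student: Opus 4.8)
The plan is to run a Duhamel argument for \eqref{eq:modified_GP_system_vector_form} against the free (componentwise) Schr\"odinger group $e^{-it\vect{H}_0}$, exploiting the pointwise dispersive bound $\|e^{-it\vect{H}_0}g\|_{W^{s,\infty}_x}\lesssim|t|^{-3/2}\|g\|_{W^{s,1}_x}$ together with the uniform-in-$N$ regularity of Proposition~\ref{prop:propagation_regularity} and the global spacetime bounds of Proposition~\ref{prop:uniform_Morawetz}. First I would dispose of short times: for $0\le t\le1$, Proposition~\ref{prop:propagation_regularity} already gives $\|\wtbphi^{(N)}_t\|_{W^{1,\infty}_x}+\|\bd_t\wtbphi^{(N)}_t\|_{L^\infty_x}\le C$, while $(1+t^{3/2})^{-1}\ge\tfrac12$ there, so the bound is immediate. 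For $t\ge1$ I would use, writing $\vect{n}(s):=F_N(\wtbphi^{(N)}_s)\wtbphi^{(N)}_s$ for the (matrix-valued, but entrywise cubic, with $L^1$-bounded kernels $U_\ii=N^3V^\lambda_\ii(N\cdot)f^{(N)}_{\ii,\ell}$) nonlinearity,
\[
\wtbphi^{(N)}_t=e^{-it\vect{H}_0}\wtbphi^{(N)}_0-i\int_0^t e^{-i(t-s)\vect{H}_0}\vect{n}(s)\,\d s ,
\]
and estimate the linear term directly by $\|e^{-it\vect{H}_0}\wtbphi^{(N)}_0\|_{W^{1,\infty}_x}\lesssim t^{-3/2}\|\wtbphi^{(N)}_0\|_{W^{1,1}_x}$ (and likewise with more derivatives later); this is precisely where the hypothesis $\wtbphi^{(N)}_0\in W^{k,1}_x$ with $k$ large enters.

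For the nonlinear term I would split the integral at $s=t-1$: on the far part $s\in[0,t-1]$ I use $\|e^{-i(t-s)\vect{H}_0}\vect{n}(s)\|_{W^{1,\infty}_x}\lesssim(t-s)^{-3/2}\|\vect{n}(s)\|_{W^{1,1}_x}$; on the near-diagonal part $s\in[t-1,t]$, where the kernel $(t-s)^{-3/2}$ is no longer integrable, I use instead the unitarity of $e^{-i\tau\vect{H}_0}$ on $H^3_x$ and the embedding $H^3_x\hookrightarrow W^{1,\infty}_x$ in three dimensions to get $\|e^{-i(t-s)\vect{H}_0}\vect{n}(s)\|_{W^{1,\infty}_x}\lesssim\|\vect{n}(s)\|_{H^3_x}$. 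Since $\vect{n}$ is \emph{cubic}, Young's inequality (with $U_\ii$ uniformly bounded in $L^1$), H\"older's and the Gagliardo--Nirenberg inequality, together with the conservation of mass and energy and the uniform higher-Sobolev bounds of Proposition~\ref{prop:propagation_regularity}, reduce both $\|\vect{n}(s)\|_{W^{1,1}_x}$ and $\|\vect{n}(s)\|_{H^3_x}$ to a bounded Sobolev norm times $\|\wtbphi^{(N)}_s\|_{W^{m,p}_x}^2$ with $p\in\{4,6\}$ and $m$ controlled by $k$.

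It therefore remains to establish polynomial decay of $\wtbphi^{(N)}_t$ in the intermediate norms $W^{m,p}_x$, $2<p\le6$, which I would obtain in a preliminary stage by the very same Duhamel scheme applied with $L^{p'}_x$ data (available since $W^{k,1}_x\cap H^k_x\hookrightarrow W^{k,p'}_x$). That stage is a genuine continuity/bootstrap argument; to close it for arbitrary (not necessarily small) data I would, as in the proof of Proposition~\ref{prop:propagation_regularity}, invoke the global bounds $\|\wtbphi^{(N)}\|_{L^4_{t,x}(\R\times\R^3)}+\|\wtbphi^{(N)}\|_{L^8_tL^4_x(\R\times\R^3)}\le C$ of Proposition~\ref{prop:uniform_Morawetz} to partition $[0,\infty)$ into finitely many subintervals on which the nonlinear feedback is small, iterating over them; the cubic structure makes the decay rate improve at each step up to the free Schr\"odinger rate $3(\tfrac{1}{2}-\tfrac{1}{p})$. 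Plugging these rates back into the estimates of $\|\vect{n}(s)\|_{W^{1,1}_x}$ and $\|\vect{n}(s)\|_{H^3_x}$ produces fixed bounds decaying strictly faster than $\langle s\rangle^{-3/2}$, so the far and near-diagonal integrals above are each $\lesssim\langle t\rangle^{-3/2}$ and the final stage needs no bootstrap. The bound for $\bd_t\wtbphi^{(N)}_t$ then follows from $\bd_t\wtbphi^{(N)}_t=i\lapl\wtbphi^{(N)}_t-i\vect{n}(t)$: here $\|\vect{n}(t)\|_{L^\infty_x}\lesssim\|\wtbphi^{(N)}_t\|_{L^\infty_x}^3$ decays faster, and $\|\lapl\wtbphi^{(N)}_t\|_{L^\infty_x}\lesssim\langle t\rangle^{-3/2}$ is obtained by repeating the argument with two more derivatives---hence the requirement that $k$ be sufficiently large. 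The main obstacle is propagating the \emph{sharp} rate $\langle t\rangle^{-3/2}$ through the cubic feedback without a logarithmic loss and without a smallness assumption, and it is exactly there that the interaction Morawetz/Strichartz control of Proposition~\ref{prop:uniform_Morawetz}, and the finite time-subdivision it permits, are essential.
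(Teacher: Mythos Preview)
Your Duhamel-plus-dispersive skeleton is the right one, but the paper closes the argument more directly and avoids your ``preliminary stage'' entirely.  The key device you are missing is the Lin--Strauss/Grillakis--Machedon trick (the paper cites \cite{grillakis2013pair}, Corollary~3.4): instead of first proving decay in intermediate $W^{m,p}$ norms and then feeding that back, the paper (i) first shows the \emph{qualitative} statement $\|\wtbphi^{(N)}_t\|_{L^\infty_x}\to0$ as $t\to\infty$, using only the Morawetz $L^4_{t,x}$ bound, the uniform high regularity from Proposition~\ref{prop:propagation_regularity}, and a non-sharp Sobolev inequality on unit time slabs; then (ii) bounds the Duhamel term by interpolating two dispersive estimates for the nonlinearity,
\[
\|e^{i(t-s)\lapl}\vect{n}(s)\|_{L^\infty_x}\lesssim \min\Bigl(|t-s|^{-3/2}\|\wtbphi^{(N)}_s\|_{L^\infty_x},\; |t-s|^{-1/2}\|\wtbphi^{(N)}_s\|_{L^\infty_x}^{4/3}\Bigr),
\]
which yields a kernel in $L^1_t$ times $\|\wtbphi^{(N)}_s\|_{L^\infty_x}^{1+\delta}$.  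Setting $M(t)=\sup_{0<s<t}(1+s^{3/2})\|\wtbphi^{(N)}_s\|_{L^\infty_x}$ and using that $\|\wtbphi^{(N)}_s\|_{L^\infty_x}^{\delta}$ is eventually small (from step~(i)) closes a Gr\"onwall inequality directly on $M$.  The $W^{1,\infty}$ and $\bd_t$ bounds then follow by repeating the Duhamel estimate with one more derivative, using the already-obtained $L^\infty$ decay; no bootstrap or subdivision is needed at that stage either.

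Your proposed route---bootstrapping decay in $W^{m,p}$ via a finite-subinterval partition as in Proposition~\ref{prop:propagation_regularity}---is more elaborate and a bit delicate: the subdivision argument there produces global \emph{boundedness} of Strichartz-type norms, not pointwise-in-$t$ decay, and it is not obvious how iterating over finitely many subintervals propagates a decay rate to $t=\infty$.  It may be made to work, but the paper's qualitative-decay-plus-$(1+\delta)$-Gr\"onwall approach sidesteps that issue entirely and is the standard one in this setting.
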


\begin{remark}
	The proof of linear-like dispersive estimates for solutions to the defocusing cubic NLSE can be traced back to the work of Lin and Strauss in \cite{lin1978decay}, which assumes sufficient regularity in the initial data. More recently, the regularity requirement has been substantially reduced in \cite{fan2021decay, fan2024decaying}. For our purposes, however, it suffices to follow the former approach.
\end{remark}

\begin{proof}
	The proof is similar to the one given in \cite[Corollary 3.4]{grillakis2013pair}. Hence, we only provide a quick sketch of the argument. 
	
	First, notice that $\norm{\wtbphi^{(N)}_t}_{L^\infty_x}\rightarrow 0$ as $t\rightarrow \infty$. Indeed, by Proposition~\ref{prop:uniform_Morawetz}, the second part of Proposition~\ref{prop:propagation_regularity}, and the non-sharp Sobolev inequality, we have that 
	\begin{align*}
		\norm{(\widetilde\phi^{(N)}_{\ii})^2}_{L^p_{t, x}([n, n+1]\times\R^3)} \lesssim&\,  \norm{\grad_{x, t}(\widetilde\phi^{(N)}_{\ii})^2}_{L^4_{t, x}([n, n+1]\times\R^3)}\\
		\lesssim&\, \norm{\grad_{x, t}\widetilde\phi^{(N)}_{\ii}}_{L^\infty_{t, x}([n, n+1]\times\R^3)}\norm{\widetilde\phi^{(N)}_{\ii}}_{L^4_{t, x}([n, n+1]\times\R^3)}\rightarrow 0
	\end{align*}
	provided $p \in (4, \infty)$. Applying the argument again, we see that 
	\begin{align*}
		\norm{(\widetilde\phi^{(N)}_{\ii})^2}_{L^\infty_{t, x}([n, n+1]\times\R^3)} \lesssim&\,  \norm{\weight{\grad_{x, t}}(\widetilde\phi^{(N)}_{\ii})^2}_{L^{9}_{t, x}([n, n+1]\times\R^3)}\\
		\lesssim&\, \norm{\grad_{x, t}\widetilde\phi^{(N)}_{\ii}}_{L^{\infty}_{t, x}([n, n+1]\times\R^3)}\norm{\widetilde\phi^{(N)}_{\ii}}_{L^{9}_{t, x}([n, n+1]\times\R^3)}\rightarrow 0.
	\end{align*}
	
	To prove the main result, it suffices to handle the nonlinearity. Using the standard $L_{t}^{\infty}L_{x}^{1}$
	dispersive estimate for the free evolution, we have
	\begin{align}\label{def:L-infty_nonlinearity}
		&\|\sum_{\jj=1}^{2}e^{i(t-s)\Delta}\left((U_{\ii\jj}*|\widetilde\phi_{\jj}^{(N)}|^{2})\widetilde \phi_{\ii}^{(N)}(s)\right)\|_{L^{\infty}_x} 
		\leq\frac{C}{|t-s|^{\frac32}}\|\widetilde \phi_{\ii}^{(N)}(s)\|_{L^{\infty}_x}. 
	\end{align}
	On the other hand, using Sobolev inequality, we also have
	\begin{align*}
		\text{LHS of }\eqref{def:L-infty_nonlinearity}
		& \leq\sum_{\jj=1}^{2}\|\nabla e^{i(t-s)\Delta}\left((U_{\ii\jj}*|\widetilde\phi_{\jj}^{(N)}|^{2})\widetilde \phi_{\ii}^{(N)}(s)\right)\|_{L^{3}_x}\,.
	\end{align*}
	Notice that this is a false endpoint, but becomes
	true if one replaces 3 by $3+\epsilon$ for any fixed $\epsilon>0$. To avoid unnecessary notations, we stick to the numerology provided by the false endpoint since it will not effect the outcome.
	
	Applying Young's inequality, interpolation, $H^1$-norm bound, and Strichartz estimate for $L^4$, we get that
	\begin{equation}\label{est:dispersive_L3}
		\|\nabla e^{\mathrm{i}(t-s)\Delta}(U_{\ii\jj}*|\widetilde\phi_{\jj}^{(N)}|^{2})\widetilde \phi_{\ii}^{(N)}(s)\|_{L^{3}_x}
		\leq\frac{C}{|t-s|^{\frac12}} \sum^2_{\jj=1}\|\widetilde\phi_{\jj}^{(N)}(s)\|_{L^{\infty}_x}^{\frac43}.
	\end{equation}
	Combining \eqref{def:L-infty_nonlinearity} and \eqref{est:dispersive_L3} we see there exists a kernel $k\in L^1([0, \infty))$ such that 
	\[
	\text{LHS of }\eqref{def:L-infty_nonlinearity}
	\leq 
	k(t-s)
	\sum_{\jj=1}^{2}\|\widetilde\phi_{\jj}^{(N)}(s)\|_{L^{\infty}_x}^{1+\delta}\,.
	\]
	In all, we have
	\begin{align*}
		\|\widetilde \phi_{\ii}^{(N)}(t)\|_{L^{\infty}_x}\leq&\, \frac{C}{t^{\frac32}}\|\widetilde \phi_{\ii}^{(N)}(0)\|_{L^{1}_x} +C\int_{0}^{\frac{t}{2}}\frac{\|\widetilde \phi_{\ii}^{(N)}(s)\|_{L^{\infty}_x}}{|t-s|^{\frac32}}\dd s
		+
		\int_{\frac{t}{2}}^{t}k(t-s)\sum_{j=1}^{2}\|\widetilde\phi_{\jj}^{(N)}(s)\|_{L^{\infty}_x}^{1+\delta}\dd s\,.
	\end{align*}
	Denoting 
	\[
	M_\ii(t)=\sup_{0<s<t}(1+s^{\frac32})\|\widetilde\phi_{\jj}^{(N)}(s)\|_{L^{\infty}_x}\, \quad \text{ and } \quad M(t) = \sum^2_{\jj=1} M_{\ii}(t)
	\]
	we obtain for $t>1$ that
	\begin{align}
		M(t)\leq C\sum_{\ii=1}^{2}\|\widetilde \phi_{\ii}^{(N)}(0)\|_{L^{1}_x}&+C\int_{0}^{t/2}\frac{M(s)}{1+s^{\frac32}} \dd s
		+C\sum_{\ii=1}^{2} \sup_{\frac{t}{2}<s<t}\|\widetilde \phi_{\ii}^{(N)}(s)\|_{L^{\infty}_x}^{\delta} M_{\ii}(t)\,.
	\end{align}
	
	As we have $\|\widetilde \phi_{\ii}^{(N)}(t)\|_{L^\infty_x}\to 0$ when $t\to \infty$, then there exists a $T>0$ such that $\|\widetilde \phi_{\ii}^{(N)}(t)\|_{L^\infty_x}^\delta<\frac{1}{2C}$. 
	Hence, the result follows by applying Grönwall's inequality.
	
	Estimating $\grad\widetilde \phi_{\ii}^{(N)}$ and $\partial_{t}\widetilde \phi_{\ii}^{(N)}$ in $L^\infty_x$ is straightforward using the previously obtained estimate $\norm{\wtbphi^{(N)}_{ t}}_{L^\infty_x} \lesssim (1+t^\frac32)^{-1}$ and Proposition~\ref{prop:propagation_regularity}. More precisely, we have that
	\begin{align*}
		\|\wtbphi_{t}^{(N)}\|_{W^{1, \infty}_x} \le&\, \frac{C}{t^{\frac32}} \|\wtbphi^{(N)}_0\|_{W^{1, \infty}_x} + \int_0^t
		\Big\|e^{-i (t-s)\vect{H}_0} F_N(\wtbphi^{(N)})\wtbphi^{(N)} (s)\Big\|_{W^{1, \infty}_x}\d s\\
		\le&\, \frac{C}{t^{\frac32}}+ C \int_0^{t-1} \frac{1}{1+|t-s|^{\frac32}}\Big\|F_N(\wtbphi^{(N)})\wtbphi^{(N)} (s)\Big\|_{W^{1, 1}_x}\d s\\
		&+ C \int_{t-1}^{t} \frac{1}{1+|t-s|^{\frac12 + \epsilon}}\Big\|F_N(\wtbphi^{(N)})\wtbphi^{(N)} (s)\Big\|_{W^{1, \frac32-\epsilon'}_x} \d s\\
		\le&\ \frac{C}{t^{\frac32}} + C \int_0^{t-1} \frac{1}{1+|t-s|^{\frac32}}\|\wtbphi^{(N)}_s\|_{L^{\infty}_x} \d s\\
		&+ C \int_{t-1}^{t} \frac{1}{1+|t-s|^{\frac12 + \epsilon}}\|\wtbphi^{(N)}_{s}\|_{L^{\infty}_x} \d s,
	\end{align*}
	which yields the desired estimate. The same argument applies to estimating $\bd_t\wtbphi^{(N)}$.
\end{proof}

\section{Fluctuation Dynamics}\label{sect:fluct_dynamic}

\subsection{Fock Space Formalism}\label{sec: Fock}
We denote the $\jj$-th species one-particle state space by $\h_\jj = L^2(\R^3;\CC)$, which is endowed with the inner product $\inprod{\cdot}{\cdot}_{\mathfrak{h}}$ that is linear
in the second entry and conjugate linear in the first entry. Let $\fH=\h_1\oplus\h_2$ be the two-component state space. Consider the corresponding bosonic Fock space $\cF=\cF(\fH):=\bigoplus_{N=0}^\infty \bigotimes_{s}^N \fH$.
By standard results in algebra, we have the following canonical isomorphism
\begin{equation*}
	\cF=\cF_1\otimes\cF_2 :=\cF(\h_1)\otimes\cF(\h_2)=\bigoplus_{L=0}^\infty\bigoplus_{\substack{n,m\in\mathbb{N}_0 \\ n+m=L}} \h_1^{\otimes_{s}n} \otimes \h_2^{\otimes_{s}m}
\end{equation*}
where every state vector $\Psi \in \cF$ can be expressed as a doubly-infinite array  
\begin{align*}
	\Psi= 
	\begin{pmatrix}
		\psi_{00} & \psi_{1, 0}(x_1) & \psi_{2, 0}(x_1, x_2) & \cdots \\
		\psi_{0, 1}(y_1) & \psi_{1, 1}(x_1; y_1) & \psi_{2, 1}(x_1, x_2; y_1) & \cdots\\
		\psi_{0, 2}(y_1, y_2) & \psi_{1, 2}(x_1; y_1, y_2) & \psi_{2, 2}(x_1, x_2; y_1, y_2) & \cdots\\
		\vdots & \vdots & \vdots & \ddots
	\end{pmatrix}
\end{align*}
with $\psi_{n, m}:\mathbb{R}^{3n} \times \mathbb{R}^{3m}\to \mathbb{C}$ for a pair of nonnegative integers $(n,m)$ except $(0,0)$. For $(0,0)$, we set $\psi_{00} \in \mathbb{C}$.

It is clear that $\cF$ is a Hilbert space when endowed with the inner product
\begin{align*}
	\inprod{\Psi}{\Phi}_{\cF} =  \overline{\psi_{00}}\,\phi_{00} +  \sum^\infty_{\substack{n,m=0\\(n,m)\neq(0,0)}}\inprod{\psi_{n, m}}{\phi_{n, m}}_{L^2(\R^{3(n+m)})}.
\end{align*}
We also write $\Nrm{\Psi}{}=\sqrt{\inprod{\Psi}{\Psi}_{\cF}}$ to denote the Fock space norm. 
The vacuum is defined as the Fock element $\Omega=\Omega_{\cF_1}\otimes \Omega_{\cF_2}$ with $\Omega_{\cF_1} = (1, 0, 0, \ldots)$, which describes a state without particles.

For every $x \in \R^3$, we define the creation and annihilation operator-valued distributions, denoted by $a^{\ast}_x$ and $a_x$ respectively, by their actions on the sectors $\h_1^{\otimes_s n-1}$ and $\h_1^{\otimes_s n+1}$ of $\cF_1$ as follows
\begin{align*}
	(a^{\ast}_x\Psi)_{n, m} =&\, \frac{1}{\sqrt{n}}\sum_{j=1}^n \delta(x-x_{j})\,\psi_{n-1, m}(x_1, \ldots, \cancel{x_{j}},\ldots, x_n; y_1, \ldots, y_m) \, ,\\
	(a_x\Psi)_{n, m} =&\, \sqrt{n+1}\, \psi_{n+1, m}(x, x_1, \ldots, x_n; y_1, \ldots, y_m) \, .
\end{align*}
Similarly, we define creation and annihilation operator-valued distributions on $\cF_2$, denoted by $b^{\ast}_x$ and $b_x$.
It is straightforward to check that they satisfy the canonical commutation relations (CCR)
\begin{equation}\label{CCR}
	\begin{aligned}
		[a_x,a_y]  =&\, [a^{*}_x,a^{*}_y] =0\,, \quad& [a_x,a^{*}_y]=&\, \delta(x-y)\,,\\
		[b_x,b_y]  =&\, [b^{*}_x,b^{*}_y] =0\,, \quad& [b_x,b^{*}_y]=&\, \delta(x-y)\,,
	\end{aligned}
\end{equation}
and that $a, a^\ast$ commute with $b, b^\ast$. Moreover, for every $f \in \h_i$, we write 
\begin{align}
	a(f):= \intd \conj{f(x)}\,a_x\dd x\quad \text{ and }\quad a^\ast(f):= \intd f(x)\,a^\ast_x \dd x\ .
\end{align}
Likewise for $b(f), b^\ast(f)$. Furthermore, by direct computation, one can show that the creation and annihilation operators are bounded by the square root of the number of particles operator, i.e., we have 
\begin{equation}\label{est:creation and annihilation} 
	\begin{aligned}
		\Nrm{a (f) \Psi}{}\leq \Nrm{f}{L^2} \norm{\cN_1^{\frac12} \Psi}, \quad \Nrm{a^\ast(f) \Psi}{}\leq \Nrm{f}{L^2} \Nrm{(\cN_1+1)^{\frac12} \Psi}{}, \\
		\Nrm{b(f) \Psi}{}\leq \Nrm{f}{L^2} \norm{\cN_2^{\frac12} \Psi}, \quad \Nrm{b^\ast(f) \Psi}{}\leq \Nrm{f}{L^2} \Nrm{(\cN_2+1)^{\frac12} \Psi}{}, 
	\end{aligned}
\end{equation}
for every $f \in L^2 (\R^3)$ and $\Psi \in \cF$. 

Given any $\vect{f} = (f, g) \in \fH$, we define the annihilation and creation operators on $\fH$  by 
\begin{align*}
	z(\vect{f}) :=&\, 
	a(f)+b(g)
	=: \intd 
	(a_x,\; b_x) 
	\begin{pmatrix}
		\conj{f(x)}\\[.5ex]
		\conj{g(x)}
	\end{pmatrix}
	\d x \,,
	\\
	z^\ast(\vect{f}) :=&\, 
	a^\ast(f)+b^\ast(g)
	=: \intd \(a_x^\ast,\; b_x^\ast\) 
	\begin{pmatrix}
		f(x)\\[.5ex]
		g(x)
	\end{pmatrix}
	\d x \,.
\end{align*}
We also write the corresponding operator-valued distribution vectors $z_x = (a_x,\; b_x)^\top$ and $z_x^\ast = (a^\ast_x,\; b^\ast_x)^\top$. Notice the operators satisfy the following commutation relations: for every $\vect{f}, \vect{g} \in \fH$, we have
\begin{equation}
	\com{z(\vect{f}), z^\ast(\vect{g})}= \inprod{\vect{f}}{\vect{g}}_{\fH}, \quad \com{z(\vect{f}), z(\vect{g})}=\com{z^\ast(\vect{f}), z^\ast(\vect{g})}=0.
\end{equation}

For  general self-adjoint operators $O_1$ on $\h_1$ and $O_2$ on $\h_2$, we denote
\begin{equation} \label{eq:second_quantization}
	\begin{split}
		\dGamma_1(O_{1})=&\, \dGamma(O_{1}\oplus 0)\;:=\;\iintd O_{1}(x, y)\, a^\ast_x a_y\dd x\d y, \\
		\dGamma_2(O_{2})=&\, \dGamma(0\oplus O_{2})\;:=\;\iintd O_{2}(x, y)\, b^\ast_x b_y\dd x\d y.
	\end{split}
\end{equation}
Number operators for each species are given by
\begin{equation}
	\cN_1:=\dGamma_1(\id)=\;\intd a^\ast_x a_x\dd x\,,\qquad  \cN_2:=\dGamma_2(\id)=\;\intd b^\ast_x b_x\dd x\,,
\end{equation}
and we define the total number operator on $\cF$ such that
\begin{equation}
	\cN\;:=\intd (z_x^\ast)^\top z_x\dd x=\;\cN_1+\cN_2\,.
\end{equation}
By direct computation, one can show that 
\begin{align}\label{eq:number_operator_commuting_z}
	\com{\cN, z(\vect{f})} = -z(\vect{f}) \quad \text{ and } \quad \com{\cN, z^\ast(\vect{f})} = z^\ast(\vect{f}). 
\end{align}
Furthermore, using the notations introduced above,  we can rewrite the Hamiltonian \eqref{intra and inter Hamiltonian} in the following form:
\begin{equation}\label{eq:multiFockH}
	\left\{\begin{aligned}
		\cH_{1}=&\, \dGamma_1(-\lapl)+\frac12\iintd N^2 V^{\lambda}_{1}(N(x-y))\, a^{\ast}_x a^{\ast}_y a_y a_x\dd x\d y =:\cK_1+\cV_1\,, \\
		\cH_{2}=&\, \dGamma_2(-\lapl) +\frac12\iintd N^2 V^{\lambda}_{2}(N(x-y))\, b^{\ast}_x b^{\ast}_y b_y b_x\dd x\d y =:\cK_2+\cV_2\,, \\
		\cV_{12}=&\, \iintd N^2 V^{\lambda}_{12}(N(x-y))\,a^{\ast}_x a_x b^{\ast}_y b_y\dd x\d y \, .
	\end{aligned}\right.
\end{equation}

Let $\wt{\vect{\phi}}^{(N)}_t = (\wt{\phi}^{(N)}_{1,t},\,\wt{\phi}^{(N)}_{2,t})^\top$ be the solution of the modified GP system \eqref{eq:modified_GP_system}, we associate the corresponding operators
\begin{align}
	\cA(\wt{\vect{\phi}}^{(N)}_t) =a(\wt{\phi}^{(N)}_{1,t})+b(\wt{\phi}^{(N)}_{2,t})-a^\ast(\wt{\phi}^{(N)}_{1,t})-b^\ast(\wt{\phi}^{(N)}_{2,t})=:\cA_1(\wt{\phi}^{(N)}_{1,t})+\cA_2(\wt{\phi}^{(N)}_{2,t})\, .
\end{align}
 Let us introduce the two-component Weyl operator
\begin{align}\label{eq:two-component_weyl_conjugation}
	\cW(\sqrt{N}\wt{\vect{\phi}}^{(N)}_t)
    :=e^{-\sqrt{N}\cA(\wt{\vect{\phi}}^{(N)}_t)} =
    e^{-\sqrt{N}\cA_1(\wt{\phi}^{(N)}_{1,t})}
    e^{-\sqrt{N}\cA_2(\wt{\phi}^{(N)}_{2,t})} \,. 
\end{align} 
Since the operator $\cA(\wt{\vect{\phi}}^{(N)}_t)$ is skew-adjoint, the corresponding Weyl operator is unitary.  A notable property of the Weyl operator is its actions on the creation and annihilation operators. 
More precisely, we have 
\begin{equation}\label{eq:weyl_conjugation}
	\begin{aligned}
		\cW^\ast(\sqrt{N}\wt{\vect{\phi}}^{(N)}_t)a_x\cW(\sqrt{N}\wt{\vect{\phi}}^{(N)}_t)=&\, a_x + \sqrt{N}\wt\phi_{1, t}^{(N)}(x),\\
		\cW^\ast(\sqrt{N}\wt{\vect{\phi}}^{(N)}_t)b_x\cW(\sqrt{N}\wt{\vect{\phi}}^{(N)}_t) =&\,  b_x + \sqrt{N}\wt\phi_{2, t}^{(N)}(x), 
	\end{aligned}
\end{equation}
or, equivalently,
\begin{align}
	\cW^\ast(\sqrt{N}\wt{\vect{\phi}}^{(N)}_t)z_x\cW(\sqrt{N}\wt{\vect{\phi}}^{(N)}_t) = z_x+\sqrt{N}\wt{\vect{\phi}}^{(N)}_t.
\end{align}
Moreover, the Weyl operator can be used to describe a two-component BEC in the Fock space $\cF$ via a two-component coherent state, which takes the form:
\begin{align}
	\cW(\sqrt{N}\wt{\vect{\phi}}^{(N)}_t) \Omega
	= e^{-N/2}
    \sum^{\infty}_{n,m=0}
    \frac{a^\ast(\sqrt{N} \wt\phi^{(N)}_{1, t})^n\, b^\ast(\sqrt{N} \wt\phi^{(N)}_{2, t})^m\,\Omega}{ n!\;m!} \, .
\end{align} 





\subsection{Generator of the Fluctuation Dynamics}

Recall the fluctuation dynamics satisfies the following Schr\"odinger-type equation
\begin{equation}\label{fluctuation dynamics}
	i \bd_t \mathcal{U}_N(t ; s)=\cL_N(t) \cU_N(t ; s) = \Big(\(i \bd_t \cT_t^*\) 
	\cT_t +
	\cT_t^\ast
	\cG_N(t)
	\cT_t  \;\Big)\cU_N(t; s)
\end{equation}
where 
\begin{align*}
	\cG_N:=
	\Big(i \bd_t \cW^\ast(\sqrt{N} \wt{\boldsymbol{\phi}}^{(N)}_t)\Big)
	\cW(\sqrt{N} \wt{\boldsymbol{\phi}}^{(N)}_t)
	+
	\cW^\ast(\sqrt{N} \wt{\boldsymbol{\phi}}^{(N)}_t)
	\cH_N 
	\cW(\sqrt{N} \wt{\boldsymbol{\phi}}^{(N)}_t).
\end{align*}
It is straightforward to compute $\cG_N(t)$ using the identities in \eqref{eq:weyl_conjugation}. In fact, the computation for $\cG_N(t)$ can be readily found in \cite{MichelangeliOliveiraCoulomb}. Hence, let us simply state the result here without proof. One can write the generator $\cG_N(t)$ as follows: suppose $\wtbphi^{(N)}_{t}$ solves \eqref{eq:modified_GP_system}, then we have that
\begin{equation}\label{decompose generator}
	\cG_{N}= N\mu_0 + \cH_N+\cG_{1}+\cG_{2}+\cG_{3},
\end{equation}
where each $\cG_{i}$ corresponds to the sum of all operators of $\cG_N$ with $i$ number of creation and annihilation operators. 

For $\cG_{1}$, using $w_{\ii, \ell}(N\cdot)=1-f_{\ii, \ell}(N\cdot)$, we have that 
\begin{equation}\label{def_G1}
	\begin{aligned}
		\cG_1=&\, \sqrt{N}\,a^*\((N^{3}V^\lambda_1(N\;\cdot\;)w_{1, \ell}(N\;\cdot\;)\ast |\wt\phi^{(N)}_{1, t}|^2)\wt\phi^{(N)}_{1, t}\)\\
		& +\sqrt{N}\, a^\ast\((N^{3}V^\lambda_{12}(N\;\cdot\;)w_{12, \ell}(N\;\cdot\;)\ast |\wt\phi^{(N)}_{2, t}|^2)\wt\phi^{(N)}_{1, t}\)\\
		&+\sqrt{N}\,b^*\((N^{3}V^\lambda_2(N\;\cdot\;)w_{2, \ell}(N\;\cdot\;)\ast |\wt\phi^{(N)}_{2, t}|^2)\wt\phi^{(N)}_{2, t}\)\\
		& +\sqrt{N}\, b^\ast\((N^{3}V^\lambda_{12}(N\;\cdot\;)w_{12, \ell}(N\;\cdot\;)\ast |\wt\phi^{(N)}_{1, t}|^2)\wt\phi^{(N)}_{2, t}\) + \mathrm{h.c.}.
	\end{aligned}
\end{equation}
For $\cG_{2}$, we write $\cG_2 = \cG^{(1)}_2 + \cG^{(2)}_2 + \cG^{(12)}_{2}$, where $\cG^{(\ii)}_2$ are the intra-species terms given by 
\begin{equation}\label{def:intra_quadratic}
	\left\{
	\begin{aligned}
		\cG_{2}^{(1)}
		=&\, \frac12\intd \big(N^{3}V^\lambda_{1}(N\;\cdot\;)\ast |\wt\phi^{(N)}_{1, t}|^2\big)(x)\, a^{\ast}_xa_x\dd x\\
		&\,+\frac12\iintd N^{3}V^\lambda_{1}(N(x-y))\, \wt\phi^{(N)}_{1, t}(x) \, \conj{\wt\phi^{(N)}_{1, t}(y)}\, a^{\ast}_xa_y\dd x\d y \\
		&\, +\frac{1}{2}\iintd N^{3}V^\lambda_{1}(N(x-y))\, \wt\phi^{(N)}_{1, t}(x) \,\wt\phi^{(N)}_{1, t}(y)\, a^{\ast}_xa^{\ast}_y \dd x\d y + \mathrm{h.c.}, \\
		\cG_{2}^{(2)}
		=&\, \frac12\intd \big(N^{3}V^\lambda_{1}(N\;\cdot\;)\ast |\wt\phi^{(N)}_{1, t}|^2\big)(x)\, b^{\ast}_xb_x\dd x\\
		&\,+\frac12\iintd N^{3}V^\lambda_{1}(N(x-y))\, \wt\phi^{(N)}_{1, t}(x)  \conj{\wt\phi^{(N)}_{1, t}(y)}\, b^{\ast}_xb_y\dd x\d y \\
		&\, +\frac{1}{2}\iintd N^{3}V^\lambda_{2}(N(x-y))\, \wt\phi^{(N)}_{2, t}(x) \wt\phi^{(N)}_{2, t}(y)\, b^{\ast}_xb^{\ast}_y \dd x\d y + \mathrm{h.c.}.
	\end{aligned}
	\right.
\end{equation}
The inter-species term reads
\begin{align}\label{def: inter in quadratic}
	\cG_{2}^{(12)}=& \frac12\intd (N^{3} V^\lambda_{12}(N\;\cdot\;)\ast |\wt\phi^{(N)}_{2, t}|^2)(x)\,a^{\ast}_xa_x\dd  x  \\
	& + 
	\frac12\intd (N^{3} V^\lambda_{12}(N\;\cdot\;)\ast |\wt\phi^{(N)}_{1, t}|^2)(x)\, b^{\ast}_xb_x\dd  x \notag\\
	&+\iintd N^{3} V^\lambda_{12}(N(x-y))\wt\phi^{(N)}_{1, t}(x) \conj{\wt\phi^{(N)}_{2, t}(y)}\,a^{\ast}_x b_y 
	\dd x\d y \notag\\
	&+\iintd N^{3} V^\lambda_{12}(N(x-y))\, \wt\phi^{(N)}_{1, t}(x)
	\wt\phi^{(N)}_{2, t}(y)\, a^{\ast}_x b^{\ast}_y \dd x\d y + \mathrm{h.c.}. \notag
\end{align}
Again, for $\cG_{3}$, we  write $\cG_{3} = \cG_{3}^{(1)} + \cG_{3}^{(2)} + \cG_{3}^{(12)}$.
Here, the intra-species terms are given by
\begin{equation}\label{def:G_cubic_terms_intra-species}
	\left\{\begin{aligned}
		&\cG_{3}^{(1)}=\frac{1}{\sqrt{N}}\iintd N^{3}V^\lambda_{1}(N(x-y))  \wt\phi^{(N)}_{1, t}(y)\, a^{\ast}_y a^{\ast}_x   a_x \dd x\d y +\mathrm{h.c.}\\
		&\cG_{3}^{(2)}= \frac{1}{\sqrt{N}}\iintd N^{3}V^\lambda_{2}(N(x-y)) \wt\phi^{(N)}_{2, t}(y)\, b^{\ast}_y b^{\ast}_x   b_x \dd x\d y+\mathrm{h.c.}\,,
	\end{aligned}\right.
\end{equation}
and inter-species term is
\begin{align}\label{def:G_cubic_terms_inter-species}
	\cG^{(12)}_{3}=&  \frac{1}{\sqrt{N}}\iintd N^{3} V^\lambda_{12}(N(x-y))\,\wt\phi^{(N)}_{2, t}(y)\, b^{\ast}_y a^{\ast}_x   a_x\dd x\d y\\
	&+ \frac{1}{\sqrt{N}}\iintd N^{3} V^\lambda_{12}(N(x-y))\,\wt\phi^{(N)}_{1, t}(y)\, a^{\ast}_y b^{\ast}_x b_x\dd x\d y+ \mathrm{h.c.}.\notag
\end{align}
Lastly, the constant term is given by 
\begin{equation}\label{define constant term}
	\begin{aligned}
		\mu_0(t) =
		-\frac{1}{2}\sum^2_{\ii, \jj=1}\iintd N^{3}V^\lambda_{\ii\jj}(N(x-y))|\wt\phi^{(N)}_{\ii, t}(x)|^2|\wt\phi^{(N)}_{\jj, t}(y)|^2\dd x\d y \,.
	\end{aligned}
\end{equation}

The computation of $\cL_N(t)$ is a bit tedious. We shall devote the next subsection to the Bogoliubov transformation and develop the necessary identities to compute the generator of the fluctuation dynamics. 

\subsection{Bogoliubov Transformation}\label{subsect:bogoliubov_transf}

The results of this section are 
standard. For a more comprehensive review, we refer the reader to \cite{folland1989harmonic, grillakis2013beyond, solovej2007many}.

Let $\mathbf{f}=(f_1, f_2)$, $\mathbf{g}=(g_1, g_2) \in \fH$, and $\vect{J}=J \oplus J:\fH \to \fH$ so that
$\vect{J}\vect{g} = (J g_1, Jg_2)$ for any antilinear map $J$, i.e., $Jf = \conj{f}$. 
Moreover, let $\mathbf{T}:\fH\to\fH$ be a bounded operator with the matrix of kernels 
\begin{align}
	\mathbf{T}(x, y)=
	\begin{pmatrix}
		T_{11}(x, y) & T_{12}(x, y)\\
		T_{21}(x, y) & T_{22}(x, y)
	\end{pmatrix}
	\quad \text{ where } \quad T_{\ii\jj}:\h_{\jj}\to \h_{\ii}.
\end{align}
We write 
\begin{align}
	\n{\mathbf{T}(x, y)}_{\rm F}^2:=\sum^2_{\ii=1}\sum^2_{\jj=1} \n{T_{\ii\jj}(x, y)}^2,
\end{align}
then the Hilbert--Schmidt norm of $\mathbf{T}$ can be written as 
\begin{align}
	\Nrm{\mathbf{T}}{\rm HS}^2 =\iintd \n{\mathbf{T}(x, y)}_{\rm F}^2\dd x\d y.
\end{align}

The generalized annihilation and creation operators over $\fH\oplus\fH^\ast$ are defined by 
\begin{align}\label{def: Z}
	Z(\mathbf{f}\oplus\mathbf{J}\mathbf{g}) =\, z(\mathbf{f}) + z^\ast(\mathbf{g})\quad \text{ and } \quad
	Z^\ast(\mathbf{f}\oplus\mathbf{J}\mathbf{g}) =\, z^\ast(\mathbf{f}) + z(\mathbf{g}).
\end{align}
Here, $Z$ is a conjugate linear map from $\fH\oplus\fH^\ast$ to the space of operators over $\cF$, while $Z^\ast$ is linear. 
Then similar to the one-component case, 
we have the following adjoint relation
\begin{equation}\label{adjoint relation}
	Z^\ast(\vect{F})
	=
	Z(\sfJ\vect{F})
	\quad \text{ where } \quad \sfJ = 
	\begin{pmatrix}
		0 & \mathbf{J}^\ast \\
		\mathbf{J} & 0 
	\end{pmatrix}:\fH\oplus\fH^\ast \rightarrow \fH\oplus\fH^\ast, 
\end{equation}
for every $\vect{F} \in \fH\oplus\fH^\ast$. 
Moreover, we have the commutation relations
\begin{align}\label{commutation relations}
	&\com{Z(\vect{F}_1), Z^*(\vect{F}_2)}
	= \inprod{\vect{F}_1}{\sfS \vect{F}_2}_{\fH \oplus \fH^\ast} ,\\
	\intertext{where }
	& \sfS=
	\begin{pmatrix}
		\id & \phantom{-}0 \\
		0 & -\id 
	\end{pmatrix}:\fH\oplus\fH^\ast \rightarrow \fH\oplus\fH^\ast, \notag
\end{align}
for every $\vect{F}_1, \vect{F}_2 \in \fH\oplus \fH^\ast$.
Here, $\id$ is the identity map on $\fH$. 

Bogoliubov transformations are linear mappings from $\fH\oplus\fH^\ast$ to itself such that the adjoint relations~\eqref{adjoint relation} and the commutation relations~\eqref{commutation relations} are preserved. To this end, a linear bounded isomorphism $\Theta:\fH\oplus\fH^\ast\rightarrow \fH\oplus\fH^\ast$ is called a Bogoliubov transformation if it satisfies
\begin{equation}\label{def:properties_of_Bogoliubov_transformation}
	\Theta\sfJ = \sfJ\Theta \quad \text{ and } \quad 
	\sfS=\Theta^* \sfS \Theta.
\end{equation}

\subsubsection{Lie algebra of ``symplectic matrices''} For the construction of Bogoliubov transformation, we start with the Lie algebra  of real or complex
symplectic matrices of the  form
\begin{equation}
	\sfL=
	\begin{pmatrix} 
		\vect{A} & \vect{B}\\
		\vect{C} &-\vect{A}^\top
	\end{pmatrix}
\end{equation}
where $\vect{A}$ is a self-adjoint Hilbert--Schmidt operator, and $\vect{B}$ and $\vect{C}$ are symmetric Hilbert--Schmidt operators. We denote the Lie algebra by $\mathsf{sp}(\RR)$ or $\mathsf{sp}(\CC)$ depending on whether the entries of $\vect{A}$, $\vect{B}$, and $\vect{C}$ are real or complex. 

The natural setting for us is the Lie subalgebra  $\mathsf{sp}_{\CC}(\RR):=\sfW \mathsf{sp}(\RR) \sfW^{-1}\subsetneq \mathsf{sp}(\CC)$ where
\begin{align*}
	\sfW=
	\frac{1}{\sqrt 2}
	\begin{pmatrix}
		\id & \phantom{-}i \id\\
		\id & -i\id
	\end{pmatrix}: L^2(\R^3; \R^2)\oplus L^2(\R^3; \R^2)\rightarrow \fH\oplus \fH^\ast.
\end{align*}
Moreover, observe that the elements of $\mathsf{sp}_{\CC}(\RR)$ have the form
\begin{equation}\label{def:spc}
	\sfM=\begin{pmatrix}
		\boldsymbol{d}& \boldsymbol{k}\\ 
		\conj{\boldsymbol{k}} &-\boldsymbol{d}^\top
	\end{pmatrix},
\end{equation}
where $\boldsymbol{d}$ and $\boldsymbol{k}$ are Hilbert--Schmidt operators with $\boldsymbol{d}$ self-adjoint and $\boldsymbol{k}$ symmetric. In particular, it can be checked that $\Theta:=e^\sfM$ satisfies the properties of \eqref{def:properties_of_Bogoliubov_transformation}.
\subsubsection{Lie algebra isomorphism} We define the mapping from $\sfL \in \mathsf{sp}(\CC)$ to the
multivariate quadratic polynomials in $(a,\, b,\, a^\ast,\, b^\ast)$, denoted by $\mathsf{Quad}$, in the following manner
\begin{multline}\label{liemap}
	\cI\big(\sfL\big) :=\, 
	\frac{1}{2}\iintd 
	((z^{\ast}_{x})^\top,\, z_{x}^\top)
	\begin{pmatrix}
		\vect{A}(x,y)& \phantom{-}\vect{C}(x,y)\\
		\vect{B}(x,y)&-\vect{A}(y,x)
	\end{pmatrix}
	\begin{pmatrix}-z_{y}\\ 
		\phantom{-}z^{\ast}_{y}
	\end{pmatrix}
	\dd x\d y\\
	=\, \frac{1}{2}\iintd 
	z^\top_{x}\vect{A}(x,y)z^{\ast}_{y}+(z_x^\ast)^\top\vect{A}(y,x) z_y-z^\top_{x}\vect{B}(x,y)z_{y}+(z^\ast_{x})^\top\vect{C}(x,y)z^{\ast}_{y}\dd x\d y  .
\end{multline}

This is the infinite-dimensional Segal--Shale--Weil infinitesimal representation. The group representation was studied in \cite{shale1962linear}.
The crucial property of this map is that it is a Lie algebra isomorphism
\begin{equation}\label{Lieisomorph}
	\com{\cI(\sfL_{1}),\cI(\sfL_{2})}=\cI\(\com{\sfL_{1},\sfL_{2}}\) .
\end{equation}
Notice that if $\sfL \in \mathsf{sp}_{\CC}(\RR)$, then $\sfL$ has the form~\eqref{def:spc} and $\cI(\sfL)$ is skew-adjoint. Thus,
$e^{\cI(\sfL)} $ is a unitary operator on Fock space.

As an application to this work, we organize the kernels~\eqref{def:k_i} and \eqref{def:k_ij} in the following symplectic matrix form
\begin{equation}\label{def:bold_k}
	\sfK=  
	\begin{pmatrix}
		\boldsymbol{0} & \boldsymbol{k}\\
		\overline{\boldsymbol{k}} & \boldsymbol{0}
	\end{pmatrix} \quad 
	\text{ with } \quad 
	\boldsymbol{k}=
	\begin{pmatrix}
		k_{1} & k_{12} \\
		k_{21} & k_{2} 
	\end{pmatrix}    .
\end{equation}
With $\sfK$, we can recast \eqref{def: operator B} formally in the form
\begin{equation}\label{eq:defB}
	\cB(\boldsymbol{k}) = -\cI(\sfK) = -\frac 12 \iintd \big( (z^\ast_x)^\top, z_x^\top \big)
	\begin{pmatrix}
		\boldsymbol{0} & \boldsymbol{k}\\
		\overline{\boldsymbol{k}} & \boldsymbol{0}
	\end{pmatrix}
	\begin{pmatrix}
		-z_y\\
		\phantom{-}z^\ast_y
	\end{pmatrix}
	\dd x \d y 
\end{equation}
with the corresponding unitary transformation $\cT(\boldsymbol{k})= e^{\cI(\sfK)}=e^{-\cB(\boldsymbol{k})}$. Here, $\boldsymbol{k}$ is called the pair excitation matrix. 

Let us summarize some useful properties of the Lie algebra isomorphism and the Bogoliubov transformation. 

\begin{prop}\label{prop: bogoliubov transformation}
	Let $\sfM \in \mathsf{sp}_{\CC}(\RR)$ be of the form~\eqref{def:spc} where $\boldsymbol{d}$ and $\boldsymbol{k}$ are Hilbert--Schmidt operators with $\boldsymbol{d}$ self-adjoint and $\boldsymbol{k}$ is symmetric.  Then we have 
	\begin{enumerate}[(i)]
		\item We have the following commutation relation:
		\begin{align}
			\com{\cI(\sfM), Z^\ast(\vect{F})} = Z^\ast(\sfM \vect{F}).
		\end{align}
		Moreover, we also have that for every $\vect{F}\in \fH\oplus\fH^\ast$
		\begin{align}
			e^{\cI(\sfM)}Z^\ast(\vect{F})e^{-\cI(\sfM)} = Z^\ast(e^{\sfM} \vect{F}).
		\end{align}
		\item The mapping $\cI:\mathsf{sp}(\CC)\rightarrow \mathsf{Quad}$ is a Lie algebra isomorphism.    In particular, the mapping $\cI$ restricted to $\mathsf{sp}_{\CC}(\RR)$ is also a Lie algebra isomorphism of $\mathsf{sp}_{\CC}(\RR)$ with a subalgebra of $\mathsf{Quad}$.
		\item The operator $e^{\cI(\sfM)}$ is unitary and we have that 
		\begin{equation*}
			(e^{\cI(\sfM)})^*
			=(e^{\cI(\sfM)})^{-1}
			=e^{-\cI(\sfM)}.
		\end{equation*}
		\item We have 
		\begin{align}\label{eq:time-derivative_lie_identity}
			\(\frac{\bd}{\bd t}e^{\cI(\sfM)}\)e^{-\cI(\sfM)} = \cI\(\(\frac{\bd}{\bd t}e^{\sfM}\)e^{-\sfM}\).
		\end{align}
		\item If $\sfR \in \mathsf{sp}(\CC)$, then we have 
		\begin{align}\label{eq:conjugation_lie_identity}
			e^{\cI(\sfM)}\cI(\sfR)e^{-\cI(\sfM)} = \cI\(e^{\sfM}\sfR e^{-\sfM}\).
		\end{align}
		\item For every $\vect{F}\in \fH\oplus\fH^\ast$, one has
		\begin{equation*}
			\cT(\boldsymbol{k})^* Z(\vect{F}) \cT(\boldsymbol{k})
			=
			Z\left(\Theta \vect{F}\right).
		\end{equation*}
		Here, $\Theta: \fH\oplus\fH^\ast \rightarrow \fH\oplus\fH^\ast$ is the Bogoliubov transformation defined by the matrix
		\begin{equation}\label{def: Theta}
			\Theta=\left(\begin{array}{cc}
				\ch(\boldsymbol{k}) & \sh(\boldsymbol{k}) \\[.5ex]
				\conj{\sh(\boldsymbol{k})} & \conj{\ch(\boldsymbol{k})}
			\end{array}\right)
		\end{equation}
		where $\ch(\boldsymbol{k})$, $\sh(\boldsymbol{k}): \fH\rightarrow \fH$ are defined by 
		\begin{equation*}
			\begin{aligned}
				\ch(\boldsymbol{k})
				=
				\sum_{n =0}^\infty \frac{1}{(2 n)!}(\boldsymbol{k} \overline{\boldsymbol{k}})^n
				\quad\text{and}\quad
				\sh(\boldsymbol{k})
				=
				\sum_{n =0}^\infty \frac{1}{(2 n+1)!}(\boldsymbol{k} \overline{\boldsymbol{k}})^n \boldsymbol{k}.
			\end{aligned}
		\end{equation*}
		The products of $\boldsymbol{k}$ and $\overline{\boldsymbol{k}}$ are understood in the sense of operator compositions.
		
	\end{enumerate}
\end{prop}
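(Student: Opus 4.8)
The plan is to treat part~(i) as the single genuinely computational input and to derive (ii)--(vi) from it by standard Lie-theoretic manipulations; everything here is the two-component analogue of the Segal--Shale--Weil facts, so the work lies in careful bookkeeping with the CCR~\eqref{CCR} and \eqref{commutation relations}. For (i) I would first establish the commutation relation $[\cI(\sfM),Z^\ast(\vect F)]=Z^\ast(\sfM\vect F)$ by a direct computation: writing $\cI(\sfM)$ as in \eqref{liemap} with $\sfM$ of the form \eqref{def:spc}, the bracket of each quadratic monomial in $z_x,z_x^\ast$ with a single $z^\ast(\vect F)$ contracts, via \eqref{commutation relations}, to a linear combination of $z$ and $z^\ast$ whose kernel is precisely an entry of $\sfM\vect F$; the symmetry of $\boldsymbol k$ and self-adjointness of $\boldsymbol d$ are used to symmetrize the two contractions and produce the $\pm\boldsymbol d^\top$ blocks correctly. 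The conjugation identity $e^{\cI(\sfM)}Z^\ast(\vect F)e^{-\cI(\sfM)}=Z^\ast(e^{\sfM}\vect F)$ then follows by a uniqueness argument: both sides, viewed as functions of the scaling parameter $s\mapsto X_s$ with $X_0=Z^\ast(\vect F)$, solve the linear evolution $\dot X_s=[\cI(\sfM),X_s]$, which by the commutation relation applied to $e^{s\sfM}\vect F$ is the same equation, so they coincide on the dense core of finite-particle vectors.

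Part~(ii): linearity of $\cI$ is immediate, and injectivity is clear since the kernels of $\sfL$ can be recovered from the quadratic. For the bracket-preserving property \eqref{Lieisomorph} I would run the Jacobi-identity argument: using (i), the operators $[\cI(\sfL_1),\cI(\sfL_2)]$ and $\cI([\sfL_1,\sfL_2])$ have the \emph{same} commutator with every $Z^\ast(\vect F)$, hence differ by a central element, which on the irreducible Fock representation must be a scalar; and that scalar vanishes because the only c-number produced by the double contractions is a finite sum of traces of commutators of products of Hilbert--Schmidt operators (such products being trace class), which is zero by cyclicity of the trace. Alternatively one verifies \eqref{Lieisomorph} by a brute-force CCR computation. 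Part~(iii) is then formal: for $\sfM$ of the form \eqref{def:spc} a short adjoint computation on \eqref{liemap} gives $\cI(\sfM)^\ast=-\cI(\sfM)$, so $e^{\cI(\sfM)}$ is unitary and $(e^{\cI(\sfM)})^\ast=e^{-\cI(\sfM)}=(e^{\cI(\sfM)})^{-1}$.

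For (v) I would write $e^{\cI(\sfM)}\cI(\sfR)e^{-\cI(\sfM)}=e^{\adj_{\cI(\sfM)}}\cI(\sfR)=\sum_{n\ge0}\tfrac1{n!}\cI(\adj_{\sfM}^n\sfR)=\cI(e^{\sfM}\sfR e^{-\sfM})$, using the intertwining $\adj_{\cI(\sfM)}\circ\cI=\cI\circ\adj_{\sfM}$ from (ii) together with linearity and Hilbert--Schmidt-continuity of $\cI$. For (iv) I would apply the identity $\tfrac{d}{dt}e^{X(t)}e^{-X(t)}=\sum_{n\ge0}\tfrac{\adj_{X(t)}^n}{(n+1)!}\dot X(t)$ once with $X=\cI(\sfM(t))$ and once with $X=\sfM(t)$, and intertwine term by term via (ii). Finally (vi): apply (i) with $\sfM=\sfK$ as in \eqref{def:bold_k}, use the adjoint relation \eqref{adjoint relation} $Z^\ast(\vect F)=Z(\sfJ\vect F)$ to convert the conjugation of $Z^\ast$ into one of $Z$, and use $\cT(\boldsymbol k)^\ast=e^{-\cI(\sfK)}$ from (iii); this yields $\cT(\boldsymbol k)^\ast Z(\vect F)\cT(\boldsymbol k)=Z(\Theta\vect F)$ with $\Theta$ an exponential of $\pm\sfK$, the sign being fixed by the conventions in \eqref{liemap}--\eqref{eq:defB}. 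Since $\sfK$ is block-off-diagonal, collecting the matrix exponential into even and odd powers of $\sfK$ produces exactly $\ch(\boldsymbol k)$ on the diagonal and $\sh(\boldsymbol k)$ off the diagonal, i.e. the matrix \eqref{def: Theta}; the Bogoliubov relations \eqref{def:properties_of_Bogoliubov_transformation} for $\Theta$ are then the exponentiated form of the symplectic identity satisfied by $\sfK$.

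The main obstacle I expect is not conceptual but the domain and convergence bookkeeping: $\cI(\sfM)$ is unbounded, so the ODE-uniqueness step in (i) and the term-by-term series manipulations in (iv)--(v) all require a single dense $e^{\cI(\sfM)}$-invariant core on which every series converges — naturally the span of finite-particle vectors, or the domain of suitable powers of $\cN$ — together with $\cN$-bounds for $\cI(\sfM)$ and for $\cT(\boldsymbol k)$, which are available precisely because $\boldsymbol d$ and $\boldsymbol k$ are Hilbert--Schmidt. The other delicate point, confined to (vi), is keeping the signs, transposes, and the precise (anti)linear action of $\sfJ$ on $\fH\oplus\fH^\ast$ consistent between \eqref{liemap}, \eqref{adjoint relation} and \eqref{eq:defB}; this is routine but must be done once, carefully, to land on \eqref{def: Theta} with the stated signs.
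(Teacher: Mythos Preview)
Your proposal is correct and follows exactly the standard Segal--Shale--Weil approach that the paper itself invokes: the paper does not give an independent proof but simply refers to \cite[Theorems 4.1--4.2]{grillakis2010second} and \cite{Benedikter2015quantitative}, whose arguments are precisely the ones you outline (direct CCR computation for (i), Lie-algebra-homomorphism property for (ii), and the formal exponential/adjoint identities for (iii)--(vi)). Your sketch is in fact more detailed than what the paper provides.
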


\begin{proof}
	The proof is similar to the ones given in \cite[Theorems 4.1--4.2]{grillakis2010second} (cf. also \cite{Benedikter2015quantitative}).
\end{proof}

Let us state some estimates for the pair excitation matrix $\boldsymbol{k}$.

\begin{lem}\label{lem: estimate of k}
	Let $\widetilde{\boldsymbol{\phi}}_t^{(N)}$ be a solution to the \eqref{eq:modified_GP_system_vector_form} with $H^1$ initial data. Let $\boldsymbol{k}$ be of the form defined in \eqref{def:bold_k} with entries given in \eqref{def:k_i} and \eqref{def:k_ij}.
	Then, we have the following:
	\begin{enumerate}[(i)]
		\item For the pair excitation matrix $\boldsymbol{k}_t$, we have the following Hilbert--Schmidt norm bounds
		\begin{equation}
			\begin{aligned}
				\left\|\boldsymbol{k}_t\right\|_{\rm HS}
				& \leq\, 
				C\min\Big(\|\widetilde{\boldsymbol{\phi}}_t^{(N)}
				\|_{L^\infty_x}, 1\Big), \\
				\left\|\nabla_1 \boldsymbol{k}_t\right\|_{\rm HS},\left\|\nabla_2 \boldsymbol{k}_t\right\|_{\rm HS} 
				& \leq\, 
				C\sqrt{N}
				\min\Big(\|\widetilde{\boldsymbol{\phi}}_t^{(N)}
				\|_{L^\infty_x}, 1\Big),\\
				\|\nabla_1 (\boldsymbol{k}_t \overline{\boldsymbol{k}}_t ) \|_{\rm HS},
				\|\nabla_2(\boldsymbol{k}_t \overline{\boldsymbol{k}}_t) \|_{\rm HS} 
				& \leq\, 
				C\min\Big(\|\widetilde{\boldsymbol{\phi}}_t^{(N)}
				\|_{L^\infty_x}^2, 1\Big).
			\end{aligned}
		\end{equation}
		Here, the constant $C$ depends on $\|\widetilde{\boldsymbol{\phi}}_t^{(N)} \|_{H^1}$ and  we use the notation $\nabla_1$, $\nabla_2$ to denote the gradient operator $\nabla_x$, $\nabla_y$ for the kernel $k_{\ii\jj}(x,y)$.
		
		Moreover, consider the decomposition
		\begin{equation*}
			\ch(\boldsymbol{k})
			=
			\id + \operatorname{p}(\boldsymbol{k})
			\quad \text{and}\quad
			\sh(\boldsymbol{k})
			=
			\boldsymbol{k} + \operatorname{r}(\boldsymbol{k}),
		\end{equation*}
		where $\id$ denotes the identity operator on $\fH$. Then we have the following bounds 
		\begin{equation}
			\begin{aligned}
				\|\operatorname{p}(\boldsymbol{k}_t) \|_{\rm HS}\,,
				\;
				\|\operatorname{r}(\boldsymbol{k}_t) \|_{\rm HS} 
				\leq&\, C \min\Big( \|\widetilde{\boldsymbol{\phi}}_t^{(N)}
				\|_{L^\infty_x}^2, 1\Big)\,, \\ 
				\|\nabla_\alpha \operatorname{p}(\boldsymbol{k}_t) \|_{\rm HS}\,,
				\;
				\|\nabla_\alpha \operatorname{r}(\boldsymbol{k}_t) \|_{\rm HS} 
				\leq&\,
				C\min\Big( \|\widetilde{\boldsymbol{\phi}}_t^{(N)}
				\|_{L^\infty_x}^2, 1\Big)\,,
			\end{aligned}
		\end{equation}
		for $\alpha = 1, 2$, where 
		\begin{align*}
			\Nrm{\grad_1\operatorname{p}(\boldsymbol{k})}{\rm HS}^2=\sum^2_{\ii=1}\sum^2_{\jj=1}\iintd 
			\n{\grad_x\operatorname{p}(\boldsymbol{k})_{\ii\jj}(x, y)}^2\dd x \d y,
		\end{align*}
		and similarly for $\Nrm{\grad_2\operatorname{p}(\boldsymbol{k})}{\rm HS}^2$.
		\item For every $x, y \in \R^3$, we also have the following pointwise estimates 
		\begin{equation}\label{pointwise bound}
			\begin{aligned}
				\n{\boldsymbol{k}_{t}(x, y)}_{\rm F} \le&\, \frac{C}{\n{x-y}+N^{-1}}|\widetilde{\boldsymbol{\phi}}_t^{(N)}(x)|
				|\widetilde{\boldsymbol{\phi}}_t^{(N)}(y)|\,,\\
				\n{\operatorname{p}(\boldsymbol{k}_{t})(x, y)}_{\rm F} \le&\, C|\widetilde{\boldsymbol{\phi}}_t^{(N)}(x)|
				|\widetilde{\boldsymbol{\phi}}_t^{(N)}(y)|\,,\\
				\n{\operatorname{r}(\boldsymbol{k}_{t})(x, y)}_{\rm F} \le&\, C|\widetilde{\boldsymbol{\phi}}_t^{(N)}(x)|
				|\widetilde{\boldsymbol{\phi}}_t^{(N)}(y)|\,.
			\end{aligned}
		\end{equation}
		\item     Moreover, we have
		\begin{equation}
			\begin{aligned}
				\sup _{x \in \R^3}\left\|\boldsymbol{k}_t(x, \cdot)\right\|_{L^2}
				\le&\,  C\|\widetilde{\boldsymbol{\phi}}_t^{(N)}
				\|_{L^\infty_x}\,,\\
				\sup _{x \in \mathbb{R}^3}\left\|\operatorname{r}(\boldsymbol{k}_t)(x,\cdot)\right\|_{L^2}, \;             \sup _{x \in \mathbb{R}^3}\left\|\operatorname{p}(\boldsymbol{k}_t)(x,\cdot) \right\|_{L^2}
				\le&\,  C\|\widetilde{\boldsymbol{\phi}}_t^{(N)}
				\|_{L^\infty_x}\, .
			\end{aligned}
		\end{equation}
	\end{enumerate}
\end{lem}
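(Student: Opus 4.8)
The plan is to reduce every bound to a handful of estimates on the scalar \emph{correlation function} $N w_{\ii, \ell}(N\,\cdot\,)$ and its gradient, because every entry of the pair-excitation matrix factorises as $k_{\ii\jj, t}(x,y) = -N w_{\ii\jj, \ell}(N(x-y))\,\widetilde\phi^{(N)}_{\ii, t}(x)\,\widetilde\phi^{(N)}_{\jj, t}(y)$; equivalently $\boldsymbol k_t = -\boldsymbol\Phi_t\,\boldsymbol W\,\boldsymbol\Phi_t$, where $\boldsymbol\Phi_t = \operatorname{diag}\big(\widetilde\phi^{(N)}_{1, t},\widetilde\phi^{(N)}_{2, t}\big)$ acts by multiplication and $\boldsymbol W$ is the matrix whose $(\ii,\jj)$-entry is convolution by $N w_{\ii\jj, \ell}(N\,\cdot\,)$. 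From Lemma~\ref{lem:neumann_scattering_function} I would first record, all uniformly in $N$: the support property $\operatorname{supp} w_{\ii, \ell}(N\,\cdot\,)\subset\{|z|\le\ell\}$; the pointwise bounds $N w_{\ii, \ell}(Nz)\lesssim(|z|+N^{-1})^{-1}$ and $N^2|(\nabla w_{\ii, \ell})(Nz)|\lesssim(|z|+N^{-1})^{-2}$; the integral bounds $\|N w_{\ii, \ell}(N\,\cdot\,)\|_{L^1}+\|N w_{\ii, \ell}(N\,\cdot\,)\|_{L^2}+\|N^2(\nabla w_{\ii, \ell})(N\,\cdot\,)\|_{L^1}\lesssim 1$; and the single $N$-growing bound $\|N^2(\nabla w_{\ii, \ell})(N\,\cdot\,)\|_{L^2}\lesssim N^{1/2}$. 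The two analytic inputs are the Sobolev embedding $H^1(\R^3)\hookrightarrow L^p(\R^3)$ for $2\le p\le 6$ and Hardy's inequality $\int_{\R^3}|z-y|^{-2}|f(z)|^2\d z\le 4\|\nabla f\|^2_{L^2}$, which converts the mild singularity $|x-y|^{-2}$ produced by the correlation kernels into an $N$-uniform, $H^1$-controlled quantity.

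Granting this, part~(i) for $\boldsymbol k_t$ and parts~(ii)--(iii) for $\boldsymbol k_t$ follow from Schur-type computations. For $\|\boldsymbol k_t\|_{\rm HS}$ and $\sup_x\|\boldsymbol k_t(x,\cdot)\|_{L^2}$ I would integrate out the internal variable, using $N^2 w_{\ii\jj, \ell}(N(x-y))^2\lesssim |x-y|^{-2}\indic_{|x-y|\le\ell}$ and applying Hardy to the factor $|\widetilde\phi^{(N)}_{\jj, t}|^2$; estimating one amplitude by $\|\widetilde{\boldsymbol\phi}^{(N)}_t\|_{L^\infty}$ (where finite, e.g.\ by Proposition~\ref{prop:propagation_regularity}) gives the first entry of the minimum, and keeping all amplitudes in $H^1$ gives the constant, while the pointwise estimate of $\|\boldsymbol k_t(x,y)\|_{\rm F}$ is read off directly from $N w_{\ii, \ell}(Nz)\lesssim(|z|+N^{-1})^{-1}$. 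For $\|\nabla_1\boldsymbol k_t\|_{\rm HS}$ (the $\nabla_2$-bound being identical by the symmetry $\boldsymbol k_t(x,y)^\top=\boldsymbol k_t(y,x)$) I would use the product rule: a derivative landing on $\widetilde\phi^{(N)}_{\ii, t}(x)$ reproduces the bound for $\boldsymbol k_t$ with $\widetilde\phi^{(N)}_{\ii, t}$ replaced by $\nabla\widetilde\phi^{(N)}_{\ii, t}\in L^2$, hence is $O(1)$; a derivative landing on $w_{\ii\jj, \ell}(N(x-y))$ produces $N^2(\nabla w_{\ii\jj, \ell})(N(x-y))$, and integrating out the other variable against $\|N^2(\nabla w_{\ii, \ell})(N\,\cdot\,)\|_{L^2}\lesssim N^{1/2}$ supplies the factor $N^{1/2}$.

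Next I would pass to $\operatorname{p}(\boldsymbol k_t)$ and $\operatorname{r}(\boldsymbol k_t)$. Since $\|\boldsymbol k_t\|_{\rm op}\le\|\boldsymbol k_t\|_{\rm HS}\lesssim 1$ uniformly, the series $\operatorname{p}(\boldsymbol k_t)=\sum_{n\ge1}\frac{1}{(2n)!}(\boldsymbol k_t\overline{\boldsymbol k}_t)^n$ and $\operatorname{r}(\boldsymbol k_t)=\sum_{n\ge1}\frac{1}{(2n+1)!}(\boldsymbol k_t\overline{\boldsymbol k}_t)^n\boldsymbol k_t$ converge absolutely, and using $\|T_1T_2\|_{\rm HS}\le\|T_1\|_{\rm HS}\|T_2\|_{\rm op}$ together with the identity $\nabla_x[w_{\ii\jj, \ell}(N(x-z))]=-\nabla_z[w_{\ii\jj, \ell}(N(x-z))]$ (to integrate by parts in the internal variables and relocate derivatives off interior factors) one peels the series down to a single copy of $\boldsymbol k_t$, $\overline{\boldsymbol k}_t$ or $\boldsymbol k_t\overline{\boldsymbol k}_t$, so that all the Hilbert--Schmidt bounds reduce to those for $\boldsymbol k_t$ and for $\nabla_\alpha(\boldsymbol k_t\overline{\boldsymbol k}_t)$; the pointwise and $\sup_x\|\cdot(x,\cdot)\|_{L^2}$ bounds for $\operatorname{p}$ and $\operatorname{r}$ likewise follow by estimating the kernel of $(\boldsymbol k_t\overline{\boldsymbol k}_t)^n$ with Cauchy--Schwarz in the internal variables and the bounds $\sup_x\|\boldsymbol k_t(x,\cdot)\|_{L^2}\lesssim|\widetilde{\boldsymbol\phi}^{(N)}_t(x)|$ and $\|\boldsymbol k_t\|_{\rm op}\lesssim 1$.

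It therefore remains to prove $\|\nabla_\alpha(\boldsymbol k_t\overline{\boldsymbol k}_t)\|_{\rm HS}\lesssim\min\big(\|\widetilde{\boldsymbol\phi}^{(N)}_t\|^2_{L^\infty},1\big)$ \emph{with no power of $N$}, which I expect to be the main obstacle. Writing $\boldsymbol k_t\overline{\boldsymbol k}_t=\boldsymbol\Phi_t\boldsymbol W\,|\boldsymbol\Phi_t|^2\,\boldsymbol W\,\overline{\boldsymbol\Phi}_t$ and differentiating, the only dangerous term is the one in which the derivative hits the first $\boldsymbol W$, producing the factor $N^2(\nabla w_{\ii\jj, \ell})(N\,\cdot\,)$; bounding it naively by $\|\nabla_1\boldsymbol k_t\|_{\rm HS}\|\overline{\boldsymbol k}_t\|_{\rm op}$ loses an $N^{1/2}$. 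Instead I would exploit that this singular factor occurs only convolved against a second correlation factor $N w_\ell(N\,\cdot\,)$ (with $|\boldsymbol\Phi_t|^2$ in between, bounded first by $\|\widetilde{\boldsymbol\phi}^{(N)}_t\|^2_{L^\infty}$, which supplies the minimum): applying $\nabla\boldsymbol W=\boldsymbol W\nabla$ followed by an integration by parts in the internal variable relocates the derivative either onto a $\widetilde\phi$-amplitude (a mild term, controlled by Hardy) or onto the second correlation factor, where it appears as $(N w_\ell(N\,\cdot\,))\ast(N^2(\nabla w_\ell)(N\,\cdot\,))$, whose $L^2$-norm is $\lesssim\|N^2(\nabla w_\ell)(N\,\cdot\,)\|_{L^1}\|N w_\ell(N\,\cdot\,)\|_{L^2}\lesssim 1$ by Young's inequality; this absorbs the apparent $N^{1/2}$-loss. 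The constant version ($\min=1$) is obtained along the same lines, replacing the $L^\infty$-bound on $|\boldsymbol\Phi_t|^2$ by Hardy's inequality and Sobolev embedding. This is the multi-component counterpart of the analogous estimate in \cite[Lemma~3.3]{Benedikter2015quantitative}, whose strategy it follows; the remaining steps are routine once Lemma~\ref{lem:neumann_scattering_function} is in hand.
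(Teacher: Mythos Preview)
Your proposal is correct and follows essentially the same approach as the paper, which does not give an explicit proof but refers to \cite[Appendix B]{Benedikter2015quantitative} as a minor modification; your sketch is precisely the two-component adaptation of that argument, including the key use of Hardy's inequality, the $L^p$-bounds on $N w_{\ii,\ell}(N\cdot)$ from Lemma~\ref{lem:neumann_scattering_function}, and the integration-by-parts/Young's inequality trick to avoid the $N^{1/2}$-loss in $\|\nabla_\alpha(\boldsymbol k_t\overline{\boldsymbol k}_t)\|_{\rm HS}$.
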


\begin{lem}\label{bound of derivative}
	Let $\widetilde{\boldsymbol{\phi}}_t^{(N)}$ be a solution to the \eqref{eq:modified_GP_system_vector_form} with $\wtbphi^{(N)}_0 \in H^4$. Then we have the following estimates
	\begin{equation}
		\begin{aligned}
			\|\dot{\boldsymbol{k}}_t\|_{\rm HS}
			\le&\,
			C\min\Big(\| \widetilde{\boldsymbol{\phi}}_t^{(N)}
			\|_{L^\infty_x}, 1\Big)\, ,\\
			\|\ddot{\boldsymbol{k}}_t \|_{\rm HS}  
			\leq&\, 
			C\min\Big(\|\partial_t \widetilde{\boldsymbol{\phi}}_t^{(N)}
			\|_{L^\infty_x}
			+ \|\widetilde{\boldsymbol{\phi}}_t^{(N)}
			\|_{L^\infty_x}, 1\Big) \, ,\\
			\|\nabla_{\alpha}(\dot{\boldsymbol{k}} \overline{\boldsymbol{k}})\|_{\rm HS},
			\;
			\|\nabla_{\alpha}(\boldsymbol{k} \dot{\overline{\boldsymbol{k}}})\|_{\rm HS} 
			\leq&\, 
			C\min\Big(\|\partial_t \widetilde{\boldsymbol{\phi}}_t^{(N)}
			\|_{L^\infty_x}^2, 1\Big) \, .
		\end{aligned}
	\end{equation}
	Here, the constant $C$ depends on $\|\widetilde{\boldsymbol{\phi}}_t^{(N)}
	\|_{H^1_x}$, $\|\partial_t \widetilde{\boldsymbol{\phi}}_t^{(N)}
	\|_{H^1_x}$ and $\|\partial^2_t \widetilde{\boldsymbol{\phi}}_t^{(N)}
	\|_{L^2_x}$.
	Furthermore, we also have
	\begin{equation}
		\begin{aligned}
			\| \dot{\operatorname{p}}(\boldsymbol{k}) \|_{\rm HS},
			\;
			\| \dot{\operatorname{r}}(\boldsymbol{k})\|_{\rm HS} 
			\le &\,
			C\min\Big(\|\partial_t \widetilde{\boldsymbol{\phi}}_t^{(N)}
			\|_{L^\infty_x}, 1\Big)\, ,\\
			\|\nabla_{\alpha} \dot{\operatorname{p}}(\boldsymbol{k}) \|_{\rm HS},
			\;
			\|\nabla_{\alpha} \dot{\operatorname{r}}(\boldsymbol{k})\|_{\rm HS} 
			\leq &\, 
			C\min\Big(\|\partial_t \widetilde{\boldsymbol{\phi}}_t^{(N)}
			\|_{L^\infty_x}^2, 1\Big)\, ,
		\end{aligned}
	\end{equation}
	and similarly,
	\begin{equation}
		\begin{aligned}
			\sup _{x \in \R^3} \|\dot{\boldsymbol{k}}_t (x, \cdot) \|_{L^2_y}
			\le\,  C(\|\widetilde{\boldsymbol{\phi}}_t^{(N)}
			\|_{L^\infty_x}
			+ \|\partial_t\widetilde{\boldsymbol{\phi}}_t^{(N)}
			\|_{L^\infty_x})\,,\\
			\sup _{x \in \mathbb{R}^3}\left\|\dot{\operatorname{r}}(\boldsymbol{k}_t)(x,\cdot)\right\|_{L^2_y}, \;             \sup _{x \in \mathbb{R}^3}\left\|\dot{\operatorname{p}}(\boldsymbol{k}_t)(x,\cdot) \right\|_{L^2_y}
			\le\,  
			C(\|\widetilde{\boldsymbol{\phi}}_t^{(N)}
			\|_{L^\infty_x}
			+ \|\partial_t\widetilde{\boldsymbol{\phi}}_t^{(N)}
			\|_{L^\infty_x})\, .
		\end{aligned}
	\end{equation}
\end{lem}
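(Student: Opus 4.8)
The plan is to differentiate the explicit kernels \eqref{def:k_i}--\eqref{def:k_ij} in time and to exploit that $w_{\ii,\ell}\big(N(x-y)\big)$ carries no time dependence, so every $\partial_t$ falls onto a condensate factor $\wt\phi^{(N)}_{\ii,t}$. Thus $\dot{\boldsymbol k}_t$ is a finite sum of kernels of the same algebraic form as $\boldsymbol k_t$, namely $-N w_{\ii,\ell}\big(N(x-y)\big)\,\eta(x)\,\zeta(y)$, with exactly one of the two condensate factors replaced by $\partial_t\wt\phi^{(N)}_{\ii,t}$; likewise $\ddot{\boldsymbol k}_t$ is a sum of such kernels with $(\eta,\zeta)$ ranging over $\big(\partial_t^2\wt\phi^{(N)}_{\ii,t},\wt\phi^{(N)}_{\jj,t}\big)$, $\big(\partial_t\wt\phi^{(N)}_{\ii,t},\partial_t\wt\phi^{(N)}_{\jj,t}\big)$, $\big(\wt\phi^{(N)}_{\ii,t},\partial_t^2\wt\phi^{(N)}_{\jj,t}\big)$. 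Since every Hilbert--Schmidt, pointwise \eqref{pointwise bound}, and $\sup_x\|\cdot(x,\cdot)\|_{L^2_y}$ bound in Lemma~\ref{lem: estimate of k} was obtained by placing one condensate factor in $L^\infty_x$ and the other in $L^2_x$ (or in $L^4_x$, using $H^1_x\hookrightarrow L^4_x$), relying only on $\int_{\R^3}N^2|w_{\ii,\ell}(Nz)|^2\,\d z\lesssim 1$ — a consequence of \eqref{est:scbounds1} — one obtains the asserted estimates for $\dot{\boldsymbol k}_t$, $\ddot{\boldsymbol k}_t$ and for $\sup_x\|\dot{\boldsymbol k}_t(x,\cdot)\|_{L^2_y}$ by rerunning those arguments with the most time-differentiated factor placed in $L^2_x$ for the leading bound, and with both factors in $L^4_x$ for the $\min(\cdot,1)$ bound.

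For the estimates involving two time derivatives I would use the modified GP system \eqref{eq:modified_GP_system_vector_form} to write $\partial_t^2\wtbphi^{(N)}_t=-i\,\partial_t\big(\vect{H}_0\wtbphi^{(N)}_t+F_N(\wtbphi^{(N)}_t)\wtbphi^{(N)}_t\big)$, so that $\|\partial_t^2\wtbphi^{(N)}_t\|_{L^2_x}$ is controlled by $\|\wtbphi^{(N)}_t\|_{H^4_x}$ and $\|\partial_t\wtbphi^{(N)}_t\|_{H^2_x}$; by Proposition~\ref{prop:propagation_regularity} these are finite and uniformly bounded in $t$ and $N$ when $\wtbphi^{(N)}_0\in H^4$. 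The same proposition shows that $\|\wtbphi^{(N)}_t\|_{L^\infty_x}$, $\|\wtbphi^{(N)}_t\|_{H^1_x}$, $\|\partial_t\wtbphi^{(N)}_t\|_{H^1_x}$ are bounded in terms of the data alone, which is exactly why $C$ is permitted to depend only on $\|\wtbphi^{(N)}_t\|_{H^1_x}$, $\|\partial_t\wtbphi^{(N)}_t\|_{H^1_x}$ and $\|\partial_t^2\wtbphi^{(N)}_t\|_{L^2_x}$; the $\min(\cdot,1)$ form of each bound then follows after absorbing a uniformly bounded power of $\|\wtbphi^{(N)}_t\|_{L^\infty_x}$.

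The composition estimates — $\|\nabla_\alpha(\dot{\boldsymbol k}\overline{\boldsymbol k})\|_{\rm HS}$, $\|\nabla_\alpha(\boldsymbol k\dot{\overline{\boldsymbol k}})\|_{\rm HS}$, and the bounds on $\dot{\operatorname{p}}(\boldsymbol k)$, $\dot{\operatorname{r}}(\boldsymbol k)$ and their gradients — follow by Leibniz expansion of the series $\operatorname{p}(\boldsymbol k)=\sum_{n\ge1}\frac{1}{(2n)!}(\boldsymbol k\overline{\boldsymbol k})^n$ and $\operatorname{r}(\boldsymbol k)=\sum_{n\ge1}\frac{1}{(2n+1)!}(\boldsymbol k\overline{\boldsymbol k})^n\boldsymbol k$. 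Using $\partial_t(\boldsymbol k\overline{\boldsymbol k})^n=\sum_{j=0}^{n-1}(\boldsymbol k\overline{\boldsymbol k})^j\big(\dot{\boldsymbol k}\overline{\boldsymbol k}+\boldsymbol k\dot{\overline{\boldsymbol k}}\big)(\boldsymbol k\overline{\boldsymbol k})^{n-1-j}$ together with the uniform bound $\|\boldsymbol k\overline{\boldsymbol k}\|_{\mathrm{op}}\le\|\boldsymbol k\|_{\rm HS}^2\le C$ from Lemma~\ref{lem: estimate of k}, each of these series converges and the matter is reduced to estimating $\|\dot{\boldsymbol k}\overline{\boldsymbol k}\|_{\rm HS}$, $\|\boldsymbol k\dot{\overline{\boldsymbol k}}\|_{\rm HS}$, $\|\nabla_\alpha(\dot{\boldsymbol k}\overline{\boldsymbol k})\|_{\rm HS}$, $\|\nabla_\alpha(\boldsymbol k\dot{\overline{\boldsymbol k}})\|_{\rm HS}$ and their $\sup_x\|\cdot(x,\cdot)\|_{L^2_y}$ analogues. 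These are handled exactly as the corresponding bounds in Lemma~\ref{lem: estimate of k}: one writes the operator product as an integral over the intermediate variable, and in the term where $\nabla_\alpha$ lands on $w_{\ii,\ell}(N\cdot)$ one uses that the rescaled gradient kernel obeys $\|N^2(\nabla w_{\ii,\ell})(N\cdot)\|_{L^1_x}\lesssim 1$ uniformly in $N$ (again by \eqref{est:scbounds1}), so that convolution against it is bounded on $L^2_x$ by Young's and Minkowski's inequalities; the remaining factors are controlled by the pointwise bounds \eqref{pointwise bound} for $\boldsymbol k_t$, their $\dot{\boldsymbol k}_t$-analogues proved in the first step, and the $\sup_x\|\cdot(x,\cdot)\|_{L^2_y}$ estimates.

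The main obstacle is precisely this last point. Taken in isolation $\nabla_\alpha\dot{\boldsymbol k}_t$ is only $O(\sqrt N)$ in Hilbert--Schmidt norm — mirroring $\|\nabla_\alpha\boldsymbol k_t\|_{\rm HS}\le C\sqrt N\min(\|\wtbphi^{(N)}_t\|_{L^\infty_x},1)$ in Lemma~\ref{lem: estimate of k} — because the rescaled gradient $N^2(\nabla w_{\ii,\ell})(N\cdot)$ fails to be uniformly square-integrable. All of the $O(1)$ gains in the composition estimates therefore come from the fact that this singular kernel is only ever integrated against an $L^2_x$ (or $L^\infty_x$) profile inside the product, never squared; one must organize each Leibniz term so that this is manifest, and keep careful track of the matrix structure of $\boldsymbol k$ — in particular the off-diagonal kernels $k_{12,t}$ and $k_{21,t}$ — and of both derivatives $\nabla_1$ and $\nabla_2$. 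Apart from this bookkeeping, the lemma is a one-more-time-derivative bootstrap of Lemma~\ref{lem: estimate of k}, with no new analytic input.
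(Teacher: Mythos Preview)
Your proposal is correct and is precisely the argument the paper has in mind: the paper's own proof of this lemma (together with Lemma~\ref{lem: estimate of k}) consists of the single sentence ``The proofs are minor modifications of the ones given in \cite[Appendix~B]{Benedikter2015quantitative},'' and what you have written is exactly that modification---time-differentiating the explicit kernels so that every $\partial_t$ falls on a condensate factor, then rerunning the Lemma~\ref{lem: estimate of k} arguments and the Leibniz/series expansion for $\operatorname{p}$, $\operatorname{r}$ with the singular $\nabla w_{\ii,\ell}(N\cdot)$ handled in $L^1$ inside the operator products.
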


\begin{proof}[Proof of Lemma~\ref{lem: estimate of k} and Lemma~\ref{bound of derivative}]
	The proofs are minor modifications of the ones given in \cite[Appendix B]{Benedikter2015quantitative}.
\end{proof}

\subsection{Growth of Fluctuations} 
The proofs of the results in this section rely heavily on the result of Section~\ref{sect:bounds_fluctuation_dynamics}.
\begin{prop}\label{prop:estimate of fluctuation}
	There exists a time-dependent constant $C_N(t)$, such that
	\begin{equation*}
		\widetilde{\cL}_N(t):=\cL_N(t)-C_N(t)
	\end{equation*}
	satisfies the estimates:
	\begin{align}\label{lower bound of L}
		\widetilde{\mathcal{L}}_N(t) \geq&\, \tfrac{1}{2} \mathcal{H}_N
		-
		C \lambda \upeta(t)\Big(\frac{\mathcal{N}^2}{N}+\mathcal{N}+1\Big), \\
		\widetilde{\mathcal{L}}_N(t) \leq&\,  \tfrac{3}{2} \mathcal{H}_N
		+
		C \lambda \upeta(t)\Big(\frac{\mathcal{N}^2}{N}+\mathcal{N}+1\Big).
	\end{align}
	
	Moreover, we also have
	\begin{equation}\label{bound of commutator}
		\pm\big[\mathcal{N}, \widetilde{\mathcal{L}}_N(t)\big] \leq \upeta(t) \cH_N
		+
		C \lambda \upeta(t)\Big(\frac{\mathcal{N}^2}{N}+\mathcal{N}+1\Big),
	\end{equation}
	and
	\begin{equation}\label{esitmate of L dot}
		\pm \dot{\widetilde{\mathcal{L}}}_N(t) 
		\leq 
		\upeta(t) \mathcal{H}_N
		+
		C 
        \lambda \upeta(t)\Big(\frac{\mathcal{N}^2}{N}+\mathcal{N}+1\Big).
	\end{equation}
	Here, $\upeta(t)$ is given by
	\begin{equation}\label{def:lambda}
		\upeta(t) := \|\widetilde{\boldsymbol{\phi}}_t^{(N)}
		\|_{W^{1,\infty}_x} 
		+
		\|\partial_t\widetilde{\boldsymbol{\phi}}_t^{(N)}
		\|_{L^\infty_x}.
	\end{equation}
	The constant $C$ depends on $\|\widetilde{\boldsymbol{\phi}}_t^{(N)}
	\|_{H^2_x}$, $\|\partial_t \widetilde{\boldsymbol{\phi}}_t^{(N)}
	\|_{H^2_x}$, and $\|\partial^2_t \widetilde{\boldsymbol{\phi}}_t^{(N)}
	\|_{L^2_x}$.
\end{prop}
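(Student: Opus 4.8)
The plan is to put $\cL_N(t)$ into closed form by conjugating the decomposition \eqref{decompose generator} of $\cG_N$ with the Bogoliubov transformation $\cT_t = e^{-\cB(\boldsymbol{k}_t)}$, to absorb every scalar contribution into $C_N(t)$, and then to estimate the finitely many normal-ordered operator terms that remain. Concretely, starting from $\cL_N(t) = (i\bd_t\cT_t^\ast)\cT_t + \cT_t^\ast\cG_N(t)\cT_t$, I would first invoke the time-derivative identity \eqref{eq:time-derivative_lie_identity} to write $(i\bd_t\cT_t^\ast)\cT_t$ as a Fock-space quadratic built from $\dot{\boldsymbol{k}}_t$ together with nested commutators of $\cB(\boldsymbol{k}_t)$ with $\cB(\dot{\boldsymbol{k}}_t)$, and then use the conjugation formulas of Proposition~\ref{prop: bogoliubov transformation}(i),(vi) to expand $\cT_t^\ast\cG_N(t)\cT_t$ term by term, treating the unbounded kinetic piece $\dGamma(-\lapl)$ through its (formal) Baker--Campbell--Hausdorff commutator series with $\cB(\boldsymbol{k}_t)$.

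The crucial algebraic point is the cancellation of the contributions that are singular in $N$. Conjugating $\dGamma(-\lapl)$ produces, among other things, a quadratic pair term whose off-diagonal kernel is governed by $(\lapl_x + \lapl_y)\boldsymbol{k}_t$; by the Neumann scattering identity \eqref{eq:scatlN}, together with $\nu^{(N)}_{\ii,\ell} = \cO(N^{-1})$ from Lemma~\ref{lem:neumann_scattering_function}(ii), this equals, up to an $\cO(N^{-1})$ correction, $-\tfrac12 N^2 V^\lambda_{\ii\jj}(N\,\cdot\,)$ times the relevant product of the $\wt\phi^{(N)}_t$'s — exactly cancelling the singular pair terms $a^\ast_x a^\ast_y$, $b^\ast_x b^\ast_y$, $a^\ast_x b^\ast_y$ of $\cG_2$. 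In the same spirit, after conjugation the linear part $\cG_1$ of \eqref{def_G1} combines with the $\bd_t$-term and with the conjugated quadratic pieces; using that $\wtbphi^{(N)}_t$ solves \emph{precisely} the modified GP system \eqref{eq:modified_GP_system}, the $\sqrt N$-order pieces cancel, leaving a remainder of size $\sqrt N\,\upeta(t)$, which is controlled by $\upeta(t)(\cN + 1)$ via the standard bound $\|a(f)\Psi\| \le \|f\|_{L^2}\|\cN^{1/2}\Psi\|$. This is exactly the reason the Neumann-localized scattering function and the $N$-dependent GP system were introduced.

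It then remains to estimate the surviving terms. The leftover quadratic terms are of the form $\dGamma(O)$ with $O$ bounded in operator norm by $\lesssim\lambda\,\upeta(t)$ — these originate from $\operatorname{p}(\boldsymbol{k}_t)$, $\operatorname{r}(\boldsymbol{k}_t)$, the potential energy conjugated by $\cT_t$, and the diagonal parts of $\cG_2$ — and are therefore bounded by $C\lambda\,\upeta(t)(\cN + 1)$ using the Hilbert--Schmidt bounds and the pointwise-in-$x$ $L^2_y$-norm bounds of Lemma~\ref{lem: estimate of k}. The cubic terms \eqref{def:G_cubic_terms_intra-species}--\eqref{def:G_cubic_terms_inter-species} and their conjugates are handled by a weighted Cauchy--Schwarz splitting of the type $\tfrac{1}{\sqrt N}\iintd N^3 V^\lambda(N(x-y))\,\wt\phi(y)\, b^\ast_y a^\ast_x a_x\,\d x\,\d y \le \epsilon\,\cV + C\epsilon^{-1}\lambda\|\wt\phi\|_{L^\infty_x}^2(\cN + 1)$, with the $\cV$ piece absorbed into the $\cH_N$ on the right-hand side and the $\operatorname{p},\operatorname{r}$ remainders treated via the pointwise bounds \eqref{pointwise bound}. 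Since the conjugated kinetic energy is nonnegative and has already been extracted, combining everything yields the two-sided bound with $\tfrac12\cH_N$ and $\tfrac32\cH_N$. For $\pm[\cN,\widetilde{\cL}_N(t)]$, since $\cH_N$ and every number-conserving term commute with $\cN$, only the pair-type terms contribute to the commutator; these are the ones already estimated, now without the kinetic gain, giving the bound with $\upeta(t)\cH_N$. For $\pm\dot{\widetilde{\cL}}_N(t)$, one differentiates term by term: each $\bd_t$ hits either $\boldsymbol{k}_t$, estimated through Lemma~\ref{bound of derivative}, or $\wtbphi^{(N)}_t$, estimated through $\bd_t\wtbphi^{(N)}_t = -i\vect{H}_0\wtbphi^{(N)}_t - iF_N(\wtbphi^{(N)}_t)\wtbphi^{(N)}_t$ and the $W^{1,\infty}_x$ decay of Corollary~\ref{cor:Linfty-t-indep-bdd}, producing terms of the same algebraic form with one additional factor $\upeta(t)$.

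The main obstacle is the bookkeeping behind the cancellation in the second step: one has to pair up exactly the right fragments of the conjugated kinetic and potential energies with $\cG_1$ and $\cG_2$, check that every residual is genuinely $\cO(\lambda\,\upeta(t))$ uniformly in $N$, and propagate the factor $\lambda$ (coming from $V^\lambda = \lambda V$) cleanly through every potential contribution, so that in the hard-core regime $\lambda = \ln N^\gamma$ it can be balanced against the $N$-decay of $\upeta(t)$ furnished by Corollary~\ref{cor:Linfty-t-indep-bdd}. The detailed term-by-term computations underlying all four inequalities are the content of Section~\ref{sect:bounds_fluctuation_dynamics}, and the present proposition is essentially their assembly.
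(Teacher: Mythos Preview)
Your overall architecture matches the paper's: decompose $\cL_N$ via \eqref{decompose generator}, conjugate each piece by $\cT_t$, isolate the leading pair terms that cancel through the Neumann scattering identity \eqref{eq:scatlN}, and estimate the remainders. However, your account of how the linear term $\cG_1$ disappears is wrong, and the mechanism you propose cannot work. You write that after conjugation $\cG_1$ ``combines with the $\bd_t$-term and with the conjugated quadratic pieces'' and that the $\sqrt N$-order cancellation comes from $\wtbphi^{(N)}_t$ solving the modified GP system. But the modified GP equation has \emph{already} been used in passing from $\cH_N$ to \eqref{decompose generator}: it is precisely what produces the $w_{\ii,\ell}$ factors in \eqref{def_G1}, and there is nothing further to extract from it here. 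Moreover, $(i\bd_t\cT_t^\ast)\cT_t$ is purely quadratic in $z^\sharp$ (it equals $\cI$ of a matrix, by \eqref{eq:time-derivative_lie_identity}) and so has no linear piece to cancel against $\cG_1$; neither does $\cT_t^\ast\cG_2\cT_t$. Your ``remainder of size $\sqrt N\,\upeta(t)$'' is therefore not cancelled, and it cannot be controlled by $\upeta(t)(\cN+1)$.

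The paper's actual mechanism is $\cT_t^\ast(\cG_1+\cG_3)\cT_t=\cE_3$ (Proposition~\ref{prop: T G1 G2 T}): expanding the conjugated \emph{cubic} term via \eqref{eq:creation_annihilation_expansion} and normal-ordering, the commutators $[A_y,a^\ast_x]=\conj{k_1(x,y)}$, $[A_y,b^\ast_x]=\conj{k_{21}(x,y)}$, etc.\ from \eqref{eq:commute formula} produce linear-in-$z^\sharp$ contributions whose coefficients are exactly $-\sqrt N\,(N^3V^\lambda_\ii(N\cdot)w_{\ii,\ell}(N\cdot))\ast|\wt\phi^{(N)}|^2\,\wt\phi^{(N)}$, i.e.\ minus $\cT_t^\ast\cG_1\cT_t$. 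Two further points you underspecify: the cubic and quartic remainders $\cE_3,\cE_V$ are what generate the $\cN^2/N$ term in the final bounds (your weighted Cauchy--Schwarz sketch for the cubic terms omits it); and the conjugated quartic $\cT_t^\ast\cV\cT_t$ is not merely a source of $\dGamma(O)$-type errors but contributes its own leading pair term $\tfrac12 N^2V^\lambda_\ii(N\cdot)\,k_\ii$ (Proposition~\ref{prop: quartic term}), which must join the kinetic $\Delta w^{(N)}_{\ii,\ell}$ term and the $\cG_2$ pair term in the scattering-equation cancellation of Proposition~\ref{prop: cancellation}.
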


\begin{proof}
	We write the generator of the fluctuation dynamics in the form
	\begin{equation}\label{Generator}
		\cL_N(t) 
		= 
		N\mu_0
		+
		\cT_t^\ast
		\cH_N
		\cT_t
		+
		\cT_t^\ast
		\big(\cG_{1}+\cG_{2}+\cG_{3}\big)
		\cT_t
		+
		\big(i \bd_t \cT_t^*\big) 
		\cT_t.
	\end{equation}
	The terms $\cT_t^\ast
	\big(\cG_{1}+\cG_{2}+\cG_{3}\big)
	\cT_t$ are estimated in Propositions~\ref{prop: estimate of quadratic term} and \ref{prop: T G1 G2 T}. The term $\cT_t^\ast \cH_N \cT_t$ is estimated in Propositions \ref{prop: Kinetic term} and \ref{prop: quartic term}. Moreover, there are cancellation between the Hamiltonian term and the quadratic term, as stated in Proposition \ref{prop: cancellation}, i.e., $\cT_t^\ast \(\cH_N+\cG_{2}\) \cT_t$. 
	Finally, the term $\big(i \bd_t \cT_t^*\big) 
	\cT_t$ is estimated in Proposition \ref{prop: time derivative}. In short, combining \eqref{Generator} and the results in Propositions~\ref{prop: estimate of quadratic term}, \ref{prop: T G1 G2 T}, \ref{prop: Kinetic term}, \ref{prop: quartic term}, \ref{prop: cancellation}, and \ref{prop: time derivative}, it follows that
	\begin{equation*}
		\begin{aligned}
			\cL_N(t) &= C_N(t) + \cH_N + \cE(t),\\
			\big[\cN, \cL_N(t)\big]
			&=
			\big[\mathcal{N}, \mathcal{E}(t)\big],\\
			\dot{\mathcal{L}}_N(t)
			&=
			\dot{C}_N(t) + \dot{\mathcal{E}}(t).
		\end{aligned}
	\end{equation*}
	Here, $C_N(t) = N\mu_0 +C_{N, \chi} +C_{N, K}^{(1)}(t)+C_{N,K}^{(2)}(t)+ C_{N, 2}(t)+ C_{N, V}(t)$ with $C_{N, \sharp}(t):=C^{(1)}_{N, \sharp}(t)+C^{(2)}_{N, \sharp}(t)+C^{(12)}_{N, \sharp}(t)$ arising from Propositions~\ref{prop: estimate of quadratic term}, \ref{prop: Kinetic term}, and \ref{prop: quartic term} (see \eqref{eq:C_N term}). Moreover, $\mathcal{E}(t)$ satisfies estimates
	\begin{equation*}
		\begin{aligned}
			\pm \mathcal{E}(t)
			\le&\,
			\upeta(t) \delta  \mathcal{H}_N
			+
			C_\delta 
            \lambda \upeta(t) \Big(\frac{\mathcal{N}^2}{N}+\mathcal{N}+1\Big),
		\end{aligned}
	\end{equation*}
	and $\pm \big[\mathcal{N}, \mathcal{E}(t)\big]$, $\pm\dot{\mathcal{E}}(t)$ yield the same bound.
	
	Using the embedding inequality, one has $\upeta(t)\le C$, where $C$ depends on $\|\widetilde{\boldsymbol{\phi}}_t^{(N)}
	\|_{H^2_x}$ and $\|\partial_t \widetilde{\boldsymbol{\phi}}_t^{(N)}
	\|_{H^2_x}$. Choosing $\delta$ small enough, the desired estimates can then be derived.
\end{proof}

The following proposition gives the growth of the fluctuation:
\begin{prop}\label{prop: growth of fluctuation}
	Suppose $\Xi_N \in \cF$ such that
	\begin{equation*}
		\inprod{\Xi_N}{\(\frac{\mathcal{N}^2}{N}+\cN+\cH_N\) \Xi_N} \leq C\,,
	\end{equation*}
	then there exists a constant $C$ independent of $t$, $\lambda$ and $N$, such that
	\begin{equation*}
		\inprod{\Xi_N}{\cU_N^*(t ; 0)\, \cN\, \cU_N(t ; 0) \Xi_N} \leq e^{C\lambda}\,.
	\end{equation*}
\end{prop}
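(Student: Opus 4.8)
The plan is a Grönwall argument for a Lyapunov functional comparable to $\langle\Psi_t,\cN\Psi_t\rangle$, where $\Psi_t:=\cU_N(t;0)\Xi_N$; the key point is that every loss term produced by Proposition~\ref{prop:estimate of fluctuation} carries the weight $\upeta(t)=\|\wtbphi^{(N)}_t\|_{W^{1,\infty}_x}+\|\partial_t\wtbphi^{(N)}_t\|_{L^\infty_x}$, which is \emph{integrable in time} thanks to the dispersive decay $\upeta(t)\le C(1+t^{3/2})^{-1}$ of Corollary~\ref{cor:Linfty-t-indep-bdd}. Since $\cU_N(t;0)$ is a product of Weyl operators, Bogoliubov transformations and $e^{-i(t-s)\cH_N}$, it is unitary, so $\|\Psi_t\|=\|\Xi_N\|$ throughout. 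Because both the commutator bound \eqref{bound of commutator} and the coercivity bound \eqref{lower bound of L} produce the combination $\cN^2/N+\cN+1$ (and not just $\cN$), the second moment of $\cN$ must be propagated as well, and because $\langle\Psi_t,\cH_N\Psi_t\rangle$ is not conserved along the $\cL_N$-flow it is convenient to replace $\cH_N$ by the almost-conserved quantity $\widetilde{\cL}_N(t)$. Fixing a constant $D\ge 2C\lambda\sup_{t\ge0}\upeta(t)$ (finite by Corollary~\ref{cor:Linfty-t-indep-bdd}), I set
\[
\mathcal{G}(t):=\Big\langle\Psi_t,\ \Big(\widetilde{\cL}_N(t)+D\big(\tfrac{\cN^2}{N}+\cN+1\big)\Big)\Psi_t\Big\rangle .
\]
By \eqref{lower bound of L}, $\widetilde{\cL}_N(t)+D(\cN^2/N+\cN+1)\ge\tfrac12\cH_N+\tfrac D2(\cN^2/N+\cN+1)\ge0$, so $\mathcal{G}(t)\ge\tfrac D2\langle\Psi_t,\cN\Psi_t\rangle\ge0$; conversely the upper bound in Proposition~\ref{prop:estimate of fluctuation} and the hypotheses on $\Xi_N$ give $\mathcal{G}(0)\lesssim\lambda$.

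Since $\widetilde{\cL}_N(t)$ differs from $\cL_N(t)$ by a scalar it commutes with $\cL_N(t)$, and differentiating along the flow gives
\[
\tfrac{d}{dt}\mathcal{G}(t)=\big\langle\Psi_t,\dot{\widetilde{\cL}}_N(t)\Psi_t\big\rangle-iD\big\langle\Psi_t,\big[\tfrac{\cN^2}{N}+\cN,\ \widetilde{\cL}_N(t)\big]\Psi_t\big\rangle .
\]
The first term is controlled by \eqref{esitmate of L dot}, the contribution of $[\cN,\widetilde{\cL}_N(t)]$ by \eqref{bound of commutator}, and that of $\tfrac1N[\cN^2,\widetilde{\cL}_N(t)]$ by the analogue of \eqref{bound of commutator}, namely $\pm\tfrac iN[\cN^2,\widetilde{\cL}_N(t)]\le\upeta(t)\cH_N+C\lambda\upeta(t)(\cN^2/N+\cN+1)$, obtained by commuting one extra factor of $\cN$ through the terms of $\widetilde{\cL}_N$ that fail to commute with $\cN$ (the precise statement being one of the outputs of Section~\ref{sect:bounds_fluctuation_dynamics}). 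Taking expectations, reabsorbing $\langle\cH_N\rangle_t\le 2\langle\widetilde{\cL}_N(t)\rangle_t+C\lambda\upeta(t)\langle\cN^2/N+\cN+1\rangle_t$ (again by \eqref{lower bound of L}) and using $\langle\widetilde{\cL}_N(t)\rangle_t\le\mathcal{G}(t)$ and $\langle\cN^2/N+\cN+1\rangle_t\le\tfrac2D\mathcal{G}(t)$, one obtains
\[
\tfrac{d}{dt}\mathcal{G}(t)\le C\lambda\,\upeta(t)\,\mathcal{G}(t),
\]
with $C$ independent of $t,\lambda,N$.

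Finally, by Corollary~\ref{cor:Linfty-t-indep-bdd}, $\int_0^\infty\upeta(s)\,ds\le C<\infty$ uniformly in $N$, so Grönwall's inequality gives $\mathcal{G}(t)\le\mathcal{G}(0)\exp\!\big(C\lambda\int_0^t\upeta\big)\le C\lambda\,e^{C\lambda}$, whence $\langle\Psi_t,\cN\Psi_t\rangle\le\tfrac2D\mathcal{G}(t)\le e^{C'\lambda}$ for all $t\ge0$ and $N\in\N$ (for $\lambda=1$ this is a uniform constant).

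The main obstacle is the $\cN^2$-commutator estimate. Unlike in the mean-field regime, $\widetilde{\cL}_N$ does not commute with $\cN$ up to bounded operators — the linear term $\cG_1$, the cubic term $\cG_3$ and the Bogoliubov conjugation are responsible — so one is forced to control $[\cN,\widetilde{\cL}_N]$, and hence $[\cN^2,\widetilde{\cL}_N]$, by $\cH_N$ and $\cN^2/N$; propagating the extra power of $\cN$ while keeping the constants proportional to $\lambda\upeta(t)$ (rather than to $N$) is precisely where the refined bounds of Section~\ref{sect:bounds_fluctuation_dynamics} are needed. The complementary ingredient, the integrability $\int_0^\infty\upeta<\infty$, is comparatively soft, and is exactly what upgrades the exponential-in-time growth of \cite{Benedikter2015quantitative} to the uniform bound $e^{C\lambda}$.
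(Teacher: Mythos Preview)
Your strategy—build a Lyapunov functional $\mathcal G(t)=\langle\Psi_t,\widetilde{\cL}_N(t)+D(\cN^2/N+\cN+1)\Psi_t\rangle$ and close a Grönwall inequality using the integrability of $\upeta$—is reasonable, but it rests on an estimate that the paper does \emph{not} provide. You assert that
\[
\pm\tfrac{i}{N}\big[\cN^2,\widetilde{\cL}_N(t)\big]\ \le\ \upeta(t)\,\cH_N+C\lambda\,\upeta(t)\Big(\tfrac{\cN^2}{N}+\cN+1\Big)
\]
is ``one of the outputs of Section~\ref{sect:bounds_fluctuation_dynamics}''. It is not: the propositions there (and Proposition~\ref{prop:estimate of fluctuation}) control only $[\cN,\cE(t)]$, never $[\cN^2,\cE(t)]$. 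Passing from $[\cN,\cdot]$ to $[\cN^2,\cdot]=\cN[\cN,\cdot]+[\cN,\cdot]\cN$ is not automatic at the level of quadratic-form bounds, and for the cubic and Bogoliubov-conjugated pieces of $\widetilde{\cL}_N$ it would produce terms of schematic size $\cN\cdot\cV/\sqrt N$ that do \emph{not} obviously reduce to $\cH_N+\cN^2/N$. So as written your argument has a gap precisely at the step you yourself flag as the main obstacle.

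The paper sidesteps this issue altogether. Instead of propagating $\cN^2/N$, it uses the structural fact that $\cU_N(t;0)=\cT_t^\ast\,\cW^\ast(\sqrt N\wtbphi^{(N)}_t)\,e^{-it\cH_N}\,\cW(\sqrt N\wtbphi^{(N)}_0)\,\cT_0$ together with Lemma~\ref{lem:conjugation of number by Bogoliubov transformation} (and the argument of \cite[Proposition~4.2]{Benedikter2015quantitative}) to obtain the a~priori operator inequality
\[
\cU_N^\ast(t;0)\,\cN^2\,\cU_N(t;0)\ \le\ C\,N\,\cU_N^\ast(t;0)\,\cN\,\cU_N(t;0)+C\big(N(\cN+1)+(\cN+1)^2\big).
\]
This lets one replace every occurrence of $\langle\Psi_t,\cN^2/N\,\Psi_t\rangle$ by $C\langle\Psi_t,\cN\,\Psi_t\rangle$ plus initial-data constants, so only $\langle\cN\rangle_t$ and $\langle\widetilde{\cL}_N(t)\rangle_t$ need to be evolved. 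The Grönwall then closes on the linear combination $\lambda D\langle\cN\rangle_t+2\langle\widetilde{\cL}_N(t)\rangle_t$ using only \eqref{lower bound of L}, \eqref{bound of commutator}, \eqref{esitmate of L dot} and $\int_0^\infty\upeta<\infty$. If you want to make your approach rigorous you would either have to supply the missing $\cN^2$-commutator estimate or, more economically, invoke the a~priori bound above and drop $\cN^2/N$ from the Lyapunov functional.
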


To prove the proposition, we need the following lemma.
\begin{lem}\label{lem:conjugation of number by Bogoliubov transformation}
	Let $\boldsymbol{k}_t$ be of the form defined in \eqref{def:bold_k} with entries given in \eqref{def:k_i} and \eqref{def:k_ij}, and $\cT_t = e^{-\cB(\boldsymbol{k}_t)}$ be defined in \eqref{eq:defB}. Then there exists a constant $C$, dependent only on $\norm{\boldsymbol{k}_t}_{\rm HS}$, such that
	\begin{equation}\label{first assertion}
		\begin{aligned}
			\cT_t^* \cN \cT_t & \leq C(\cN+1), \quad\text{and}\quad
			\cT_t^* \cN^2 \cT_t & \leq C(\cN+1)^2.
		\end{aligned}
	\end{equation}
\end{lem}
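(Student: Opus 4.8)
The plan is to reduce everything to a Grönwall argument along the one‑parameter family of Bogoliubov transformations $\cT_s := e^{-s\cB(\boldsymbol{k}_t)}$, $s\in[0,1]$, at a fixed time $t$. By \eqref{def: operator B}--\eqref{eq:defB} the generator $\cB := \cB(\boldsymbol{k}_t)$ is skew-adjoint and is a finite sum of ``particle-number-changing-by-$\pm 2$'' quadratic expressions of the form $\iintd K(x,y)\,c^\ast_x d^\ast_y\,\d x\d y$ and their adjoints, with $c,d\in\{a,b\}$ and $K$ one of $k_{1,t},k_{2,t},k_{12,t}$; by Lemma~\ref{lem: estimate of k}(i) each such kernel is Hilbert--Schmidt with norm $\lesssim\|\boldsymbol{k}_t\|_{\rm HS}$. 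Since $\cT_1=\cT_t$ and $\cT_s$ is unitary, it suffices to bound $f(s):=\inprod{\Psi}{\cT_s^\ast(\cN+1)\cT_s\Psi}$ and $g(s):=\inprod{\Psi}{\cT_s^\ast(\cN+1)^2\cT_s\Psi}$ for $\Psi$ ranging over a common core (say vectors with finitely many particles and Schwartz-class components), the operator inequalities \eqref{first assertion} then following by a standard limiting argument, after noting $\cT_t^\ast\cN\cT_t\le\cT_t^\ast(\cN+1)\cT_t$ and $\cT_t^\ast\cN^2\cT_t\le\cT_t^\ast(\cN+1)^2\cT_t$.

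Differentiating, $f'(s)=\inprod{\cT_s\Psi}{[\cB,\cN]\,\cT_s\Psi}$, and a direct computation with the CCR \eqref{CCR} (using also that $a,a^\ast$ commute with $b,b^\ast$) shows that $[\cB,\cN]$ is again a sum of two-creation and two-annihilation terms with the same kernels $k_{\ii\jj,t}$ --- indeed, each summand of $\cB$ shifts $\cN$ by $\pm 2$ and hence contributes $\mp 2$ times itself. The key ingredient is then the standard operator bound
\begin{equation*}
	\Big\|\iintd K(x,y)\,c^\sharp_x d^\sharp_y\,\Phi\,\d x\d y\Big\|\le C\,\|K\|_{L^2(\R^3\times\R^3)}\,\|(\cN+1)\Phi\|,
\end{equation*}
valid for any $c,d\in\{a,b\}$ and any choice of $\sharp\in\{\,\cdot\,,\ast\}$; it follows from \eqref{est:creation and annihilation} together with a Cauchy--Schwarz estimate carried out sector by sector in the particle number. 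Combined with Lemma~\ref{lem: estimate of k}(i) this gives $|\inprod{\Phi}{[\cB,\cN]\Phi}|\le C\|\boldsymbol{k}_t\|_{\rm HS}\inprod{\Phi}{(\cN+1)\Phi}$, hence $f'(s)\le C\|\boldsymbol{k}_t\|_{\rm HS}f(s)$, and Grönwall's inequality yields $f(1)\le e^{C\|\boldsymbol{k}_t\|_{\rm HS}}f(0)$, which is the first bound in \eqref{first assertion}.

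For the second bound one repeats the scheme with $g$. Since $[\cB,\cN]$ is self-adjoint (commutator of a skew-adjoint and a self-adjoint operator), $[\cB,(\cN+1)^2]=[\cB,\cN](\cN+1)+(\cN+1)[\cB,\cN]$, so that $|g'(s)|\le 2\,\|(\cN+1)\cT_s\Psi\|\,\|[\cB,\cN]\,\cT_s\Psi\|$. Upgrading the displayed estimate to the operator inequality $\|[\cB,\cN]\Phi\|\le C\|\boldsymbol{k}_t\|_{\rm HS}\|(\cN+1)\Phi\|$ (again by applying it summand by summand) gives $|g'(s)|\le C\|\boldsymbol{k}_t\|_{\rm HS}\|(\cN+1)\cT_s\Psi\|^2=C\|\boldsymbol{k}_t\|_{\rm HS}\,g(s)$, and Grönwall closes the argument.

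The only delicate point is making the differentiation of $f$ and $g$ rigorous, since $\cN$ is unbounded and one may not yet commute it past $\cT_s$; this is handled exactly as in the one-component case in \cite{Benedikter2015quantitative}, by first working with the bounded regularized observables $(\cN+1)^m(1+\cN/M)^{-m}$ for $m=1,2$, deriving Grönwall bounds with constants independent of $M$, and letting $M\to\infty$. The two-component setting introduces only the extra mixed terms $k_{12,t}\,a^\ast_xb^\ast_y$ and $\conj{k_{12,t}}\,a_xb_y$ in $\cB$, which are controlled by the very same displayed bound, so I expect this step to be routine bookkeeping rather than a genuine obstacle.
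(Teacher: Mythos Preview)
Your proof is correct but follows a genuinely different route from the paper. The paper does not use a Grönwall argument along the flow $s\mapsto e^{-s\cB}$; instead it exploits the explicit action of the Bogoliubov transformation on the creation and annihilation operators, namely the decomposition \eqref{eq:creation_annihilation_expansion}
\[
\cT_t^\ast a_x \cT_t = a_x + A_x^\ast + \alpha_x,\qquad \cT_t^\ast b_x \cT_t = b_x + B_x^\ast + \beta_x,
\]
together with the bounds \eqref{est: A, B, alpha, beta} from Lemma~\ref{lem: quadratic}. For the first inequality one simply writes $\langle\Psi,\cT_t^\ast\cN\cT_t\Psi\rangle=\intd\|(a_x+A_x^\ast+\alpha_x)\Psi\|^2\dd x+\intd\|(b_x+B_x^\ast+\beta_x)\Psi\|^2\dd x$ and estimates each piece. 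For the second inequality the paper uses the identity $\cN^2=\intd a_x^\ast\cN a_x\dd x+\intd b_x^\ast\cN b_x\dd x+\cN$, conjugates, and feeds back the first bound.

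In terms of trade-offs: the paper's approach is more direct here and gives a constant polynomial in $\|\boldsymbol{k}_t\|_{\rm HS}$ rather than $e^{C\|\boldsymbol{k}_t\|_{\rm HS}}$ (immaterial since this norm is bounded), and it reuses machinery already needed elsewhere in Section~\ref{sect:bounds_fluctuation_dynamics}. Your Grönwall method is equally standard in the literature, is more self-contained (it does not require the explicit $\ch/\sh$ formulas or the $A,B,\alpha,\beta$ bookkeeping), and generalizes painlessly to arbitrary powers $(\cN+1)^m$. The regularization step you flag is indeed routine; note that for finite-particle vectors the map $s\mapsto\cT_s\Psi$ stays in the domain of any power of $\cN$ by the same Grönwall bound, so the differentiation is justified on that core.
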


\begin{proof}
	To prove the first inequality in \eqref{first assertion}, we use the decomposition \eqref{eq:creation_annihilation_expansion} and \eqref{est: A, B, alpha, beta} in Lemma~\ref{lem: quadratic}. Then, it follows that
	\begin{equation*}
		\begin{aligned}
			\left\langle\Psi, \cT_t^* \mathcal{N} \cT_t \Psi\right\rangle 
			& =
			\intd \left\|
			\left(a_x+ A^*_x + \alpha_x \right) \Psi\right\|^2\dd x
			+
			\intd \left\|
			\left(b_x+ B^*_x + \beta_x \right) \Psi\right\|^2\dd x\\
			& \leq 
			C\left(1 + \| \boldsymbol{k}_t\|_{\rm HS}^2 
			\right)
			\big\|(\mathcal{N}+1)^{\frac12} \Psi\big\|^2.
		\end{aligned}
	\end{equation*}
	For the second inequality in \eqref{first assertion}, we observe that
	\begin{equation*}
		\begin{aligned}
			\mathcal{N}^2 
			= &
			\intd  a_x^*\, \mathcal{N}\, a_x \dd x
			+ \intd b_x^*\, \mathcal{N}\, b_x \dd x
			+ \mathcal{N} .
		\end{aligned}
	\end{equation*}
	Therefore, we have
	\begin{equation*}
		\begin{aligned}
			\big\langle\Psi,\, \cT_t^* \mathcal{N}^2 \cT_t \Psi\big\rangle 
			&=
			\intd \d x\,\big\langle (a_x+ A_x^* + \alpha_x)\Psi,\, \cT_t^* \mathcal{N} \cT_t (a_x+ A^*_x + \alpha_x) \Psi\big\rangle \\
			&\phantom{=} +
			\intd \d x\,\big\langle (b_x+ B_x^* + \beta_x)\Psi,\, \cT_t^* \mathcal{N} \cT_t (b_x+ B^*_x + \beta_x) \Psi\big\rangle 
			+
			\big\langle\Psi, \cT_t^* \mathcal{N} \cT_t \Psi\big\rangle .
		\end{aligned}
	\end{equation*}
	Then, by the above inequality and Lemma~\ref{lem: quadratic}, we get the desired result.
\end{proof}

\begin{proof}[Proof of Proposition~\ref{prop: growth of fluctuation}]
	As a corollary of Lemma~\ref{lem:conjugation of number by Bogoliubov transformation}, one can follow the proof of \cite[Proposition 4.2]{Benedikter2015quantitative} to deduce the bound
	\begin{align}\label{second assertion}
		\cU_N^*(t ; 0) \mathcal{N}^2 \cU_N(t ; 0) 
		\leq 
		C N \cU_N^*(t ; 0) \cN \cU_N(t ; 0)
		+
		C \left( N(\mathcal{N}+1)+(\mathcal{N}+1)^2\right).
	\end{align}
	Now, let us use the notation in Proposition~\ref{prop:estimate of fluctuation}, and denote $\widetilde{\cU}_N(t ; s)=e^{i \int_s^t C_N(\tau) \dd \tau} \cU_N(t ; s)$,
	then we have
	\begin{equation}\label{def: tilde cL}
		i \bd_t \widetilde{\cU}_N(t ; s)
		=
		\widetilde{\cL}_N(t) \widetilde{\cU}_N(t ; s) \quad \text { with } \quad 
		\widetilde{\cU}_N(s ; s) = 1.
	\end{equation}
	Note also that substituting \eqref{lower bound of L} with \eqref{bound of commutator}, and using the fact $\upeta(t)\le C$,
	one can derive
	\begin{equation*}
		\begin{aligned}
			\pm\big[\cN, \widetilde{\cL}_N(t)\big] 
			\le&\,
			\upeta(t)\Big( 2\widetilde{\cL}_N(t)
			+
			C \lambda
            \Big(\frac{\cN^2}{N}+\mathcal{N}+1\Big)\Big).
		\end{aligned}
	\end{equation*}
	Using the estimate above and \eqref{second assertion}, it follows that
	\begin{multline}\label{evolution of N}
		\frac{\d}{\d t}\Big\langle\Xi_N, \widetilde{\cU}_N^\ast(t ; 0) \cN \widetilde{\cU}_N(t ; 0) \Xi_N\Big\rangle \\
		\leq
		\upeta(t) \left\langle\Xi_N, \widetilde{\cU}_N^\ast(t ; 0)\left(2 \widetilde{\cL}_N(t)
		+
		C_1 \lambda \cN \right) \widetilde{\cU}_N(t ; 0) \Xi_N\right\rangle  + \widetilde{C}_1\lambda \upeta(t).
	\end{multline}
	Here, $\widetilde{C}_1$ relies on $\langle\Xi_N,(\mathcal{N}^2/ N + \mathcal{N} + 1 ) \Xi_N\rangle$. 
	Similarly, we substitute \eqref{lower bound of L} with \eqref{esitmate of L dot}, and utilize \eqref{second assertion}. This leads to
	\begin{multline}\label{evolution of L}
		\frac{\d}{\d t}\Big\langle\Xi_N, \widetilde{\cU}_N^\ast (t ; 0) \widetilde{\mathcal{L}}_N(t) 
		\widetilde{\cU}_N(t ; 0) \Xi_N\Big\rangle 
		=
		\left\langle\Xi_N, \widetilde{\cU}_N^\ast (t ; 0) \dot{\widetilde{\mathcal{L}}}_N(t) \widetilde{\cU}_N(t ; 0) \Xi_N\right\rangle \\
		\leq
		\upeta(t) \left\langle\Xi_N, \widetilde{\cU}_N^\ast (t ; 0)\left(2 \widetilde{\mathcal{L}}_N(t)
		+
		C_2 \lambda \mathcal{N} \right) \widetilde{\cU}_N(t ; 0) \Xi_N\right\rangle  + \widetilde{C}_2 \lambda\upeta(t).
	\end{multline}
    Note that $\cN\ge0$, together with the following non-negativity by substituting \eqref{second assertion} with \eqref{lower bound of L}:
	\begin{equation} 
    \label{estimate of L + N}
    \begin{aligned}
		0 &\leq 
		\left\langle\Xi_N, \widetilde{\cU}_N^\ast (t ; 0) \mathcal{H}_N \widetilde{\cU}_N(t ; 0) \Xi_N\right\rangle\\
		&\leq
		\Big\langle\Xi_N, \widetilde{\cU}_N^\ast (t ; 0)\Big(2 \widetilde{\mathcal{L}}_N(t)
		+
		C_3\lambda
        \mathcal{N} \Big) \widetilde{\cU}_N(t ; 0) \Xi_N\Big\rangle
		+ \widetilde{C}_3\lambda.
    \end{aligned}
	\end{equation}
    Let $D=\max\{C_1,C_2,C_3+1\}$ and compute the inequality $\lambda D \times \eqref{evolution of N} + 2\times \eqref{evolution of L}$. Using the non-negativity above and applying the Gr\"{o}nwall inequality together with Corollary~\ref{cor:Linfty-t-indep-bdd}, i.e., $\int_0^\infty \upeta(t)\d t\le C$, then one has that there exists $C>0$, independent of $t$, $\lambda$ and $N$, such that
	\begin{equation*}
		\begin{aligned}
			\Big\langle\Xi_N, \widetilde{\cU}_N^\ast (t ; 0) 
			\left( 2\widetilde{\mathcal{L}}_N(t) +
			\lambda D \mathcal{N} \right) 
			\widetilde{\cU}_N(t ; 0) \Xi_N\Big\rangle  
			\leq e^{C\lambda}.
		\end{aligned}
	\end{equation*}
	Combining it with \eqref{estimate of L + N} and using the fact $D\ge C_3+1$, we finally prove the theorem.  
\end{proof}

\subsection{Proof of Main Theorem}

\begin{proof}[Proof of Theorem~\ref{thm:main}]
	Let $\Psi_{N,t}$ be the many-body Fock state defined in \eqref{eq:full-fock}, and $\Gamma_{N,t}^{(1)}$ be the associated two-component one-particle reduced density. 
	
	We use the fact that $\cN$ commutes with $\cH_N$ to get
	\begin{equation*}
		\begin{aligned}
			\big\langle\Psi_{N, t},\, \cN \Psi_{N, t}\big\rangle 
			& =
			\Big\langle \Xi_{N},\, \cT^\ast_0 
			\cW^\ast(\sqrt{N} \wtbphi^{(N)}_0) 
			\cN \cW(\sqrt{N} \wtbphi^{(N)}_0) \cT_0 \Xi_{N} \Big\rangle .
		\end{aligned}
	\end{equation*}
	Then, one can derive from \eqref{eq:two-component_weyl_conjugation} that
	\begin{equation*}
		\begin{aligned}
			\cW(\sqrt{N} \wtbphi^{(N)}_0)^\ast \cN \cW(\sqrt{N} \wtbphi^{(N)}_0) 
			&=
			\cN - \sqrt{N} \(z(\wtbphi^{(N)}_0)+z^\ast(\wtbphi^{(N)}_0)\)
			+ N .
		\end{aligned}
	\end{equation*}
	This substitution leads us to
	\begin{equation*}
		\begin{aligned}
			\big\langle\Psi_{N, t},\, \mathcal{N} \Psi_{N, t}\big\rangle 
			& =
			N+ \Big\langle\Xi_{N},\, \cT^\ast_0 
			\big(\cN - \sqrt{N} \(z(\wtbphi^{(N)}_0)+z^\ast(\wtbphi^{(N)}_0)\)
			\big) 
			\cT_0 \Xi_{N}\Big\rangle .
		\end{aligned}
	\end{equation*}
	By \eqref{first assertion}, we have
	\begin{equation*}
		\begin{aligned}
			\big\langle\Xi_{N},\, \cT^\ast_0
			\mathcal{N} \cT_0 \Xi_{N} \big\rangle \leq C \quad\text{and}\quad
			\big\langle\Xi_{N},\, \cT^\ast_0
			\(z(\wtbphi^{(N)}_0)+z^\ast(\wtbphi^{(N)}_0)\) \cT_0 \Xi_{N} \big\rangle 
			\le C.
		\end{aligned}
	\end{equation*}
	Therefore, we obtain
	\begin{equation}\label{estimate of number operator}
		\big| \big\langle\Psi_{N, t},\, \mathcal{N} \Psi_{N, t}\big\rangle
		- N \big |
		\le C \sqrt{N}.
	\end{equation}
	On the other hand, by \eqref{eq:full-fock} and \eqref{eq:two-component_weyl_conjugation}, we have that 
	\begin{multline*}
		\inprod{\Psi_{N, t}}{z^\ast_y z_x^\top\,\Psi_{N, t}}
		= N\conj{\wtbphi^{(N)}_t(y)}\wtbphi^{(N)}_t(x)^\top+\inprod{\cT_t \Xi_{N, t}}{  z^\ast_yz_x^\top\, \cT_t \Xi_{N, t}}\\
		+\sqrt{N}\inprod{\cT_t \Xi_{N, t}}{ z^\ast_y \, \cT_t \Xi_{N, t}}\wtbphi^{(N)}_t(x)^\top+\sqrt{N}\conj{\wtbphi^{(N)}_t(y)}\inprod{\cT_t \Xi_{N, t}}{  z_x^\top\, \cT_t \Xi_{N, t}}.
	\end{multline*}
	Using \eqref{estimate of number operator} and the Cauchy--Schwarz inequality (cf. \cite[Section 5]{Benedikter2015quantitative}), it follows that
	\begin{equation*}
		\Big\| \Gamma_{N,t}^{(1)} - |\wtbphi^{(N)}_t\rangle\!\langle\wtbphi^{(N)}_t | \Big\|_{\mathrm{HS}} 
		\leq 
		\frac{C}{\sqrt{N}} \|(\cN+1)^{\frac{1}{2}} \cT_t \Xi_{N, t} \|^2 .
	\end{equation*}
	Applying \eqref{first assertion}, we obtain
	\begin{equation*}
		\Big\| \Gamma_{N,t}^{(1)} - |\wtbphi^{(N)}_t\rangle\!\langle\wtbphi^{(N)}_t | \Big\|_{\mathrm{HS}} 
		\leq 
		\frac{C}{\sqrt{N}} \|(\cN+1)^{\frac{1}{2}} \Xi_{N, t} \|^2.
	\end{equation*}
	Using the fact that $|\wtbphi^{(N)}_t\rangle\!\langle\wtbphi^{(N)}_t |$ is a rank-one projection and $\Gamma^{(1)}_{N, t}$ is a nonnegative trace-class operator, then their difference $\Gamma_{N,t}^{(1)} - |\wtbphi^{(N)}_t\rangle\!\langle\wtbphi^{(N)}_t |$ has at most one negative eigenvalue. Notice the trace of $\Gamma_{N,t}^{(1)} - |\wtbphi^{(N)}_t\rangle\!\langle\wtbphi^{(N)}_t |$ vanishes since $\wtbphi^{(N)}_t$ is normalized, then it follows that the difference must have one negative eigenvalue, with absolute value equal to the sum of all positive eigenvalues. As a consequence, the trace norm of the difference is controlled by the
	operator norm (given by the absolute value of the negative eigenvalue) and therefore also by
	the Hilbert--Schmidt norm, i.e. 
	\begin{align*}
		\tr_{\fH}\n{\Gamma_{N,t}^{(1)} - |\wtbphi^{(N)}_t\rangle\!\langle\wtbphi^{(N)}_t |}\le \frac{C}{\sqrt{N}} \|(\cN+1)^{\frac{1}{2}} \Xi_{N, t} \|^2.
	\end{align*}
	Finally, by Proposition
	~\ref{prop:uniform_Morawetz}, we have that 
	\begin{align*}
		\tr_{\fH}\n{ |\wtbphi^{(N)}_t\rangle\!\langle\wtbphi^{(N)}_t |-|\bphi_t\rangle\!\langle \bphi_t |} \le 2\norm{\wtbphi^{(N)}_t-\bphi_t}_{H^1_x} \lesssim \frac{1}{N}.
	\end{align*}
	Combining two inequalities above and applying triangle inequality, together with Proposition~\ref{prop: growth of fluctuation}, we complete the proof of the theorem.
\end{proof}

\section{Bounds on the Generator of the Fluctuation Dynamics}\label{sect:bounds_fluctuation_dynamics}
In this section, we estimate the generator of the fluctuation dynamics given in \eqref{Generator}.
From Lemma~\ref{prop: bogoliubov transformation} that we have the identity 
\begin{equation}\label{eq:bogoliubov_conjugation_identities}
	\begin{aligned}
		\cT_t^* z^\ast(\vect{f}) \cT_t=&\,  z^\ast(\ch(\boldsymbol{k}_t)\vect{f})+z(\sh(\boldsymbol{k}_t)\conj{\vect{f}}),\\
		\cT_t^* z(\vect{g}) \cT_t=&\,z(\ch(\boldsymbol{k}_t)\vect{g})+z^\ast(\sh(\boldsymbol{k}_t)\conj{\vect{g}}),
	\end{aligned}
\end{equation}
for every $\vect{f}, \vect{g}\in \fH$, or equivalent, we could express the relations in terms of operator-valued distributions as follow

\begin{equation}\label{transform of z}
	\cT_t^* \big((z^\ast_x)^\top,\; z_x^\top\big) \cT_t
	= \intd 
	\big((z^\ast_y)^\top,\; z_y^\top\big)
	\left(\begin{array}{cc}
		\ch(\boldsymbol{k}_t)_x (y) 
		& \sh(\boldsymbol{k}_t)_x (y) \\[.5ex]
		\conj{\sh(\boldsymbol{k}_t)}_x(y)& \conj{\ch(\boldsymbol{k}_t)}_x (y)
	\end{array}\right)
	\d y
\end{equation}
where $\ch(\boldsymbol{k}_t)_x (y) = \ch(\boldsymbol{k}_t)(y, x)$. 
Now, we make the following decomposition
\begin{equation*}
	\left(\begin{array}{cc}
		\ch
		(\boldsymbol{k}_t) & \sh(\boldsymbol{k}_t) \\[.5ex]
		\conj{\sh(\boldsymbol{k}_t)} & \conj{\ch(\boldsymbol{k}_t)}
	\end{array}\right)
	=
	\left(\begin{array}{cc}
		\id &  0 \\[.5ex]
		0 & \id
	\end{array}\right)
	+
	\left(\begin{array}{cc}
		0 & \boldsymbol{k}_t \\[.5ex]
		\conj{\boldsymbol{k}_t} & 0
	\end{array}\right)
	+
	\left(\begin{array}{cc}
		\operatorname{p}(\boldsymbol{k}_t) & \operatorname{r}(\boldsymbol{k}_t) \\[.5ex]
		\conj{\operatorname{r}(\boldsymbol{k}_t)} & \conj{\operatorname{p}(\boldsymbol{k}_t)}
	\end{array}\right).
\end{equation*}
Moreover, we define the notation
\begin{equation}\label{def: A and B}
	\begin{aligned}
		(A_x,\; B_x )
		:=&\, \intd z_y^\top \conj{\boldsymbol{k}_t}(y,x)\dd y
		=\,
		\(a(k_{1, x}) + b(k_{21, x}),\;
		a(k_{12, x})+ b(k_{2, x})\),
	\end{aligned}
\end{equation}
where $k_{\ii\jj, x}(y) :=k_{\ii\jj}(y, x)$, and
\begin{equation}\label{def: alpha and beta}
	\begin{aligned}
		(\alpha_x,\, \beta_x )
		:=&\, 
		\intd
		\( (z^\ast_y)^\top,\; z_y^\top \)
		\begin{pmatrix}
			\operatorname{r}(\boldsymbol{k}_t)  (y,x)\\[.5ex]
			\conj{\operatorname{p}(\boldsymbol{k}_t)} (y,x) 
		\end{pmatrix}
		\d y
		=\, 
		z^\ast(\operatorname{r}(\boldsymbol{k}_t)_x)+z(\operatorname{p}(\boldsymbol{k}_t)_x).
	\end{aligned}
\end{equation}
With these notations, we now write \eqref{transform of z} as
\begin{equation}\label{eq:creation_annihilation_expansion}
	\begin{aligned}
		\cT_t^* z_x^\top \cT_t
		=&\, 
		\cT_t^* (  a_x,\, b_x ) \cT_t
		=
		(  a_x,\, b_x )
		+
		(  A_x^\ast,\, B_x^\ast )
		+
		(  \alpha_x,\, \beta_x ),\\
		\cT_t^* (z_x^\ast)^\top \cT_t
		=&\, 
		\cT_t^* (  a_x^\ast,\; b_x^\ast ) \cT_t
		=
		(  a_x^\ast,\; b_x^\ast )
		+
		(  A_x,\; B_x )
		+
		(  \alpha_x^\ast,\, \beta_x^\ast ).
	\end{aligned}
\end{equation}

\subsection{Estimates for the Linear, Quadratic and Cubic Terms}
In this section, we provide the estimates of the linear, quadratic, and cubic parts of generator \eqref{Generator}.

\noindent\textbf{(i) Linear terms}.
First, we compute the linear terms. Since $\cG_{1}$ is defined by \eqref{def_G1}, then by \eqref{eq:creation_annihilation_expansion}, we have
\begin{equation}\label{compute TG^1T}
	\begin{aligned}
		\cT_t^*\, \cG_{1}\, \cT_t
		&=\, 
		\sqrt{N}\intd\,
		\Big((N^{3}V^\lambda_1(N\;\cdot\;)w_{1, \ell}(N\;\cdot\;)\ast |\wt\phi^{(N)}_{1, t}|^2)\wt\phi^{(N)}_{1, t}(x) \\
		& \qquad 
		+(N^{3}V^\lambda_{12}(N\;\cdot\;)w_{12, \ell}(N\;\cdot\;)\ast |\wt\phi^{(N)}_{2, t}|^2)\wt\phi^{(N)}_{1, t} (x) \Big)
		\big( a^{\ast}_x + A_x + \alpha^{\ast}_x \big)\dd x \\
		&\phantom{=}\, + 
		\sqrt{N}\intd\,
		\Big((N^{3}V^\lambda_2(N\;\cdot\;)w_{2, \ell}(N\;\cdot\;)\ast |\wt\phi^{(N)}_{2, t}|^2) \wt\phi^{(N)}_{2, t}(x) \\
		& \qquad 
		+(N^{3}V^\lambda_{12}(N\;\cdot\;)w_{12, \ell}(N\;\cdot\;)\ast |\wt\phi^{(N)}_{1, t}|^2)\wt\phi^{(N)}_{2, t} (x) \Big)
		\big( b^{\ast}_x + B_x + \beta^{\ast}_x \big)\dd x + \mathrm{h.c.}
	\end{aligned}
\end{equation}

\noindent\textbf{(ii) Quadratic terms}.
Recall that $\cG_{2} = \cG^{(1)}_2 + \cG^{(2)}_2 + \cG^{(12)}_{2}$ are defined by \eqref{def:intra_quadratic} and \eqref{def: inter in quadratic}. The intra-species parts read:
\begin{equation*}
	\begin{aligned}
		\cT_t^*\, & \cG^{(1)}_2\, \cT_t\\
		=& \intd N^{3}V^\lambda_1(N\;\cdot\;)\ast 
		|\wt\phi^{(N)}_{1, t}|^2(x) 
		\big(a^{\ast}_x + A_x + \alpha_x\big) 
		\big(a_x + A^{\ast}_x + \alpha^{\ast}_x\big)
		\dd x\\
		&+\iintd N^{3}V^\lambda_{1}(N(x-y)) 
		\wt\phi^{(N)}_{1, t}(x)  
		\conj{\wt\phi^{(N)}_{1, t}(y)} 
		\big(a^{\ast}_x + A_x + \alpha_x\big) 
		\big(a_y + A^{\ast}_y + \alpha^{\ast}_y\big)
		\dd x\d y \\
		&+ \bigg( \frac{1}{2}\iintd N^{3}V^\lambda_{1}(N(x-y))\, 
		\wt\phi^{(N)}_{1, t}(x) \,
		\wt\phi^{(N)}_{1, t}(y)\\
		&\qquad\qquad\qquad\qquad\qquad \times 
		\big(a^{\ast}_x + A_x + \alpha_x\big) 
		\big(a^{\ast}_y + A_y + \alpha_y\big) 
		\dd x\d y + \mathrm{h.c.} \bigg),
	\end{aligned}
\end{equation*}
and similarly for $\cT_t^*\,\cG^{(2)}_2\, \cT_t$, with $V^\lambda_1, \wt\phi^{(N)}_{1, t}$, and $a^\ast_x + A_x+\alpha_x$ replaced by $V^\lambda_2, \wt\phi^{(N)}_{2, t}$, and $b^\ast_x + B_x+\beta_x$, respectively.

The inter-species parts are given by
\begin{equation*}
	\begin{aligned}
		\cT_t^*\, \cG^{(12)}_{2}\, \cT_t
		&=\,
		\intd (N^{3} V^\lambda_{12}(N\;\cdot\;)\ast 
		|\wt\phi^{(N)}_{2, t}|^2)(x) 
		\big(a^{\ast}_x + A_x + \alpha_x\big) 
		\big(a_x + A^{\ast}_x + \alpha^{\ast}_x\big)
		\dd  x  \\
		&\quad + 
		\intd (N^{3} V^\lambda_{12}(N\;\cdot\;)\ast 
		|\wt\phi^{(N)}_{1, t}|^2)(x) 
		\big(b^{\ast}_x + B_x + \beta_x\big) 
		\big(b_x + B^{\ast}_x + \beta^{\ast}_x\big)
		\dd x \\
		&\quad+ \bigg(
		\iintd N^{3} V^\lambda_{12}(N(x-y))
		\wt\phi^{(N)}_{1, t}(x) 
		\conj{\wt\phi^{(N)}_{2, t}(y)} \\
		&\qquad\qquad\qquad\qquad\qquad \times
		\big(a^{\ast}_x + A_x + \alpha_x\big) 
		\big(b_y + B^{\ast}_y + \beta^{\ast}_y\big)
		\dd x\d y + \mathrm{h.c.}\bigg) \\
		&\quad+ \bigg(
		\iintd N^{3} V^\lambda_{12}(N(x-y))\, 
		\wt\phi^{(N)}_{1, t}(x)
		\wt\phi^{(N)}_{2, t}(y) \\ 
		&\qquad\qquad\qquad\qquad\qquad \times 
		\big(a^{\ast}_x + A_x + \alpha_x\big) 
		\big(b^{\ast}_y + B_y + \beta_y\big)
		\dd x\d y + \mathrm{h.c.} \bigg).
	\end{aligned}
\end{equation*}
We consider the terms that contain $a_x^\ast a_y^\ast$, $b_x^\ast b_y^\ast$, $a_x^\ast b_y^\ast$ and their conjugate. For example, in the term $\cT_t^*\, \cG^{(1)}_{2}\, \cT_t$, we write the following formula
\begin{multline*}
		\big(a^{\ast}_x + A_x + \alpha_x\big) 
		\big(a^{\ast}_y + A_y + \alpha_y\big)\\
		=
		a^{\ast}_x a^{\ast}_y
		+
		[\big( A_x + \alpha_x\big),\, a^{\ast}_y]
		+
		a^{\ast}_y \big( A_x + \alpha_x\big)
		+
		\big(a^{\ast}_x + A_x + \alpha_x\big) 
		\big( A_y + \alpha_y\big)  ,
\end{multline*}
where in the last line we write the terms with $a^{\ast}_y$ in normal order. Similarly, in the terms $\cT_t^*\, \cG^{(2)}_{2}\, \cT_t$ and $\cT_t^*\, \cG^{(12)}_{2}\, \cT_t$, we write
\begin{multline*}
	\big(b^{\ast}_x + B_x + \beta_x\big) 
	\big(b^{\ast}_y + B_y + \beta_y\big) \\
	=
	b^{\ast}_x b^{\ast}_y
	+
	[\big( B_x + \beta_x\big),\, b^{\ast}_y]
	+
	b^{\ast}_y \big( B_x + \beta_x\big)
	+
	\big(b^{\ast}_x + B_x + \beta_x\big) 
	\big( B_y + \beta_y\big),
\end{multline*}
\begin{multline*}
	\big(a^{\ast}_x + A_x + \alpha_x\big) 
	\big(b^{\ast}_y + B_y + \beta_y\big)\\
	=
	a^{\ast}_x b^{\ast}_y
	+
	[\big( A_x + \alpha_x\big),\, b^{\ast}_y]
	+
	b^{\ast}_y \big( A_x + \alpha_x\big)
	+
	\big(a^{\ast}_x + A_x + \alpha_x\big) 
	\big( B_y + \beta_y\big)  .
\end{multline*}
Substitute the above formulas, and pick up the leading terms that contains $a_x^\ast a_y^\ast$, $b_x^\ast b_y^\ast$, $a_x^\ast b_y^\ast$ and their conjugate. Then we have the following proposition.
\begin{proposition}\label{prop: estimate of quadratic term}
	The intra-species parts of the quadratic terms in the generator can be written as
	\begin{equation*}
		\begin{aligned}
			\cT_t^*\, \cG^{(1)}_{2}\, \cT_t
			=&\, 
			\frac{1}{2}\iintd N^{3}V^\lambda_{1}(N(x-y)) \wt\phi^{(N)}_{1, t}(x) \,
			\wt\phi^{(N)}_{1, t}(y)\, 
			a^{\ast}_xa^{\ast}_y 
			\dd x\d y+\mathrm{h.c.}+C_{N, 2}^{(1)}  + \mathcal{E}_2^{(1)},
		\end{aligned}
	\end{equation*}
	and
	\begin{equation*}
		\begin{aligned}
			\cT_t^*\, \cG^{(2)}_{2}\, \cT_t
			=&\, \frac{1}{2}\iintd N^{3}V^\lambda_{2}(N(x-y))\wt\phi^{(N)}_{2, t}(x) \,
			\wt\phi^{(N)}_{2, t}(y)\, 
			b^{\ast}_x b^{\ast}_y
			\dd x\d y+\mathrm{h.c.}+ C_{N, 2}^{(2)} +  \mathcal{E}_2^{(2)}.
		\end{aligned}
	\end{equation*}
	Moreover, the inter-species parts of the quadratic terms yields
	\begin{equation*}
		\begin{aligned}
			\cT_t^*\, \cG^{(12)}_{2}\, \cT_t
			=&\,
			\iintd N^{3} V^\lambda_{12}(N(x-y))\wt\phi^{(N)}_{1, t}(x)
			\wt\phi^{(N)}_{2, t}(y)\,
			a^{\ast}_x b^{\ast}_y \dd x\d y+ 
			\mathrm{h.c.} + C_{N, 2}^{(12)} + \mathcal{E}_{2}^{(12)}.
		\end{aligned}
	\end{equation*}
	Here, the remainder terms satisfy
	\begin{equation}\label{estimate of quadratic remainder}
		\begin{aligned}
			\pm \mathcal{E}_{2}^{(\ii)}(t) 
			& \leq  C\lambda \|\widetilde{\boldsymbol{\phi}}_t^{(N)}
			\|_{L^\infty_x} (\mathcal{N}+1) ,\\
			\pm\left[\cN, \cE_{2}^{(\ii)}(t)\right] 
			& \leq C\lambda \|\widetilde{\boldsymbol{\phi}}_t^{(N)}
			\|_{L^\infty_x} (\mathcal{N}+1), \\
			\pm \dot{\cE}_{2}^{(\ii)}(t) 
			& \leq C
			\lambda \big(\|\widetilde{\boldsymbol{\phi}}_t^{(N)}
			\|_{L^\infty_x} + \|\partial_t\widetilde{\boldsymbol{\phi}}_t^{(N)}
			\|_{L^\infty_x}\big)
			(\cN+1),
		\end{aligned}
	\end{equation}
	for $\ii\in \{ 1,2,12\}$.  The constant $C$ depends on $\|\widetilde{\boldsymbol{\phi}}_t^{(N)}
	\|_{H^1_x}$, $\|\partial_t \widetilde{\boldsymbol{\phi}}_t^{(N)}
	\|_{H^1_x}$ and $\|\partial^2_t \widetilde{\boldsymbol{\phi}}_t^{(N)}
	\|_{L^2_x}$.
\end{proposition}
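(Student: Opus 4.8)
The plan is to insert the conjugation formulas \eqref{eq:creation_annihilation_expansion} into the three explicit expressions for $\cT_t^*\,\cG^{(\ii)}_2\,\cT_t$ written just above the statement, normal order, and sort the resulting operators into three groups: the displayed leading pair-creation terms, the scalars gathered into $C_{N,2}^{(\ii)}$, and the operator remainder $\mathcal E_2^{(\ii)}$. This is a two-component, $\lambda$-dependent adaptation of the one-component computation in \cite[Section~5]{Benedikter2015quantitative}, so I only outline the steps.

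First I would substitute $a_x^\ast\mapsto a_x^\ast+A_x+\alpha_x$ and $b_x^\ast\mapsto b_x^\ast+B_x+\beta_x$ (together with their adjoint and unstarred versions) into each term and then apply the three normal-ordering identities displayed before the statement, for instance $(a_x^\ast+A_x+\alpha_x)(a_y^\ast+A_y+\alpha_y)=a_x^\ast a_y^\ast+[A_x+\alpha_x,a_y^\ast]+a_y^\ast(A_x+\alpha_x)+(a_x^\ast+A_x+\alpha_x)(A_y+\alpha_y)$. Integrated against its prefactor, the first summand is exactly the leading term in the statement. The commutator $[A_x+\alpha_x,a_y^\ast]$ is a $c$-number---by \eqref{def: A and B}--\eqref{def: alpha and beta} it equals a diagonal value of $\boldsymbol k_t$, respectively of $\conj{\operatorname{p}(\boldsymbol{k}_t)}$---so after integration it contributes a scalar; similarly, normal ordering the density contributions $\int(N^3V^\lambda_1(N\cdot)\ast|\wt\phi^{(N)}_{1,t}|^2)(x)\,(a_x^\ast+A_x+\alpha_x)(a_x+A_x^\ast+\alpha_x^\ast)\,\dd x$, the exchange contributions $\iintd N^3V^\lambda_1(N(x-y))\wt\phi^{(N)}_{1,t}(x)\conj{\wt\phi^{(N)}_{1,t}(y)}(a_x^\ast+A_x+\alpha_x)(a_y+A_y^\ast+\alpha_y^\ast)\,\dd x\d y$, and their $\cG^{(2)}_2,\cG^{(12)}_2$ analogues produces further scalars---including the $x$-dependent $c$-number inside $\int(\cdots)(x)[A_x,A_x^\ast]\,\dd x$, whose integrability follows from the Hilbert--Schmidt bounds on $\boldsymbol k_t$. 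All such scalars are collected into $C_{N,2}^{(\ii)}$, and every remaining summand---each containing at least one factor from $\{A,A^\ast,B,B^\ast,\alpha,\alpha^\ast,\beta,\beta^\ast\}$ or else a normal-ordered density/exchange monomial such as $a_x^\ast a_x$---is collected into $\mathcal E_2^{(\ii)}$; the $b$- and mixed-species blocks are treated identically.

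Next I would prove the three estimates in \eqref{estimate of quadratic remainder}. For the first, each summand of $\mathcal E_2^{(\ii)}$ is bounded as a quadratic form on $\cF$ by Cauchy--Schwarz together with: the elementary bounds \eqref{est:creation and annihilation}; the $L^1$-scaling $\Nrm{N^3V^\lambda_\ii(N\cdot)}{L^1}=\lambda\Nrm{V_\ii}{L^1}$; the Hilbert--Schmidt and $\sup_x\Nrm{\cdot(x,\cdot)}{L^2}$ bounds of Lemma~\ref{lem: estimate of k} for $\boldsymbol k_t,\operatorname{p}(\boldsymbol{k}_t),\operatorname{r}(\boldsymbol{k}_t)$; the pointwise bound \eqref{pointwise bound}, so that the $\n{x-y}^{-1}$ singularity carried by $\boldsymbol k_t$ is absorbed by its $N$-uniform square-integrability against $|\widetilde{\boldsymbol{\phi}}_t^{(N)}|^2$; and the standard fact that a number-changing monomial $\iintd h(x,y)a_x^\ast a_y^\ast\,\dd x\d y+\mathrm{h.c.}$ with $h\in L^2$ obeys $\pm(\cdot)\le C\Nrm{h}{L^2}(\cN+1)$ (and its mixed-species and number-conserving variants). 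Each summand thus comes with at least two powers of $\Nrm{\widetilde{\boldsymbol{\phi}}_t^{(N)}}{L^\infty_x}$, the surplus being absorbed into $C$ via the $N$-uniform bound $\Nrm{\widetilde{\boldsymbol{\phi}}_t^{(N)}}{L^\infty_x}\le C$ of Proposition~\ref{prop:propagation_regularity}; this gives \eqref{estimate of quadratic remainder}(i). For the commutator bound, \eqref{eq:number_operator_commuting_z} with $[\cN,A_x]=-A_x$, $[\cN,A_x^\ast]=A_x^\ast$ (and the $B,\alpha,\beta$ analogues) shows that $[\cN,\mathcal E_2^{(\ii)}]$ is again a finite sum of monomials of exactly the same types with coefficients in $\{0,\pm2\}$---the $a_x^\ast a_y^\ast$-type terms do not reappear since they live in the leading term---so the same estimate applies verbatim. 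For the last bound, the rescaled potentials are $t$-independent, so $\partial_t$ falls only on $\widetilde{\boldsymbol{\phi}}_t^{(N)}$ and on the kernels $\boldsymbol k_t,\operatorname{p}(\boldsymbol{k}_t),\operatorname{r}(\boldsymbol{k}_t)$; feeding the bounds for $\dot{\boldsymbol k}_t,\dot{\operatorname{p}}(\boldsymbol{k}_t),\dot{\operatorname{r}}(\boldsymbol{k}_t)$ from Lemma~\ref{bound of derivative} into the same scheme yields the asserted bound with $\Nrm{\widetilde{\boldsymbol{\phi}}_t^{(N)}}{L^\infty_x}+\Nrm{\partial_t\widetilde{\boldsymbol{\phi}}_t^{(N)}}{L^\infty_x}$, and the dependence of $C$ on $\Nrm{\widetilde{\boldsymbol{\phi}}_t^{(N)}}{H^1_x},\Nrm{\partial_t\widetilde{\boldsymbol{\phi}}_t^{(N)}}{H^1_x},\Nrm{\partial_t^2\widetilde{\boldsymbol{\phi}}_t^{(N)}}{L^2_x}$ is inherited from those two lemmas.

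The main obstacle is not any single estimate but the combination of exhaustive bookkeeping---the expansion produces a few dozen monomials across the three blocks---with the requirement that each bound be uniform in $N$ and carry exactly one power of $\lambda$. The delicate terms are those in which a factor of $\boldsymbol k_t$, with its $\n{x-y}^{-1}$ singularity, is integrated against the approximate $\delta$-function $N^3V^\lambda_\ii(N\cdot)$ of mass $\lambda\Nrm{V_\ii}{L^1}$: there one must verify, using \eqref{pointwise bound}, the $N$-uniform $\sup_x\Nrm{\cdot(x,\cdot)}{L^2}$ bound on $\boldsymbol k_t$, and Assumption~\ref{assume: L3}, that all powers of $N$ cancel and that only a single $\lambda$ survives---since a stray factor of $N$ or an extra $\lambda$ would break the Gr\"onwall argument of Proposition~\ref{prop: growth of fluctuation}.
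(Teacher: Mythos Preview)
Your proposal is correct and follows essentially the same route as the paper: insert the conjugation identities \eqref{eq:creation_annihilation_expansion}, normal order via the displayed product identities, separate the leading $a^\ast a^\ast$/$b^\ast b^\ast$/$a^\ast b^\ast$ terms and scalars, and bound the remainder by Cauchy--Schwarz/Young together with the kernel bounds of Lemmas~\ref{lem: estimate of k} and~\ref{bound of derivative}. The paper streamlines the bookkeeping by first isolating the recurring estimate as Lemma~\ref{lem: quadratic} (so each remainder monomial is dispatched in one line by Young's inequality and \eqref{est: quadratic 1}--\eqref{estimate of quodratic 2}), whereas you describe the same ingredients at a slightly lower level; also, for $\cG_2$ the $\n{x-y}^{-1}$ singularity of $\boldsymbol k_t$ never needs to be confronted directly---Young's convolution inequality $\Nrm{N^3V^\lambda_\ii(N\cdot)\ast|\wt\phi^{(N)}_{\jj,t}|}{L^\infty}\le\lambda\Nrm{V_\ii}{L^1}\Nrm{\wt\phi^{(N)}_{\jj,t}}{L^\infty}$ together with the Hilbert--Schmidt bound on $\boldsymbol k_t$ already suffices, so the last paragraph of your outline overstates the delicacy of this particular proposition.
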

We use the following lemma to prove the proposition.

\begin{lem}\label{lem: quadratic}
	Let $\theta_x = A_x,B_x,\alpha_x$, or $\beta_x$. Then for any $\ii\in\{1,2,12\}, \jj\in\{1,2\}, \sharp \in \{\ast, \circ\}$ (i.e. $A^\circ_x:= A_x$), and $k\in \{0, 1\}$, we have the estimates
	\begin{align}\label{est: A, B, alpha, beta}
		\intd \|\bd^k_t\theta^{\sharp}_x \Psi\|^2 \dd x \lesssim \norm{(\cN+1)^{\frac12} \Psi}^2,
	\end{align}
	and 
	\begin{align}\label{est: quadratic 1}
		\iintd N^3 V^\lambda_{\ii}(N(x-y)) 
		|\wt\phi^{(N)}_{\jj, t}(x)|
		\|\partial_t^k \theta^{\sharp}_y \Psi\|^2 \dd x \d y 
		\lesssim\, 
		\lambda
        \|\wt\phi^{(N)}_{\jj, t}\|_{L^\infty_x}
		\norm{(\cN+1)^{\frac12} \Psi}^2.
	\end{align}
	Moreover, let $\eta_x = a_x$ or $b_x$, then we have
	\begin{align}\label{estimate of quodratic 2}
		\iintd N^3 V^\lambda_{\ii}(N(x-y)) 
		|\wt\phi^{(N)}_{\jj, t}(x) |
		\|\eta_y \Psi \|^2 \dd x \d y 
		\lesssim
		\lambda 
        \|\wt\phi^{(N)}_{\jj, t} \|_{L^\infty_x}
		\|\(\cN+1\)^{\frac12} \Psi \|^2.
	\end{align}
\end{lem}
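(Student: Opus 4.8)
The plan is to reduce all three bounds to the elementary operator estimates recorded in \eqref{est:creation and annihilation}, namely $\Nrm{a(f)\Psi}{}\le \Nrm{f}{L^2}\norm{\cN_1^{1/2}\Psi}$ and $\Nrm{a^\ast(f)\Psi}{}\le \Nrm{f}{L^2}\Nrm{(\cN_1+1)^{1/2}\Psi}{}$ (and likewise with $b,b^\ast,\cN_2$), combined with the Hilbert--Schmidt bounds on $\boldsymbol{k}_t$, $\operatorname{p}(\boldsymbol{k}_t)$, $\operatorname{r}(\boldsymbol{k}_t)$ from Lemma~\ref{lem: estimate of k} and on their time derivatives from Lemma~\ref{bound of derivative}.

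First I would establish \eqref{est: A, B, alpha, beta}. By the definitions \eqref{def: A and B} and \eqref{def: alpha and beta}, each of $A_x,B_x,\alpha_x,\beta_x$ and its adjoint is a finite sum of operators $a(g_x),b(h_x),a^\ast(g_x),b^\ast(h_x)$ whose kernels $g_x,h_x$ are $x$-sections $y\mapsto\kappa(y,x)$ of the entries of $\boldsymbol{k}_t$, $\operatorname{p}(\boldsymbol{k}_t)$, or $\operatorname{r}(\boldsymbol{k}_t)$. Applying \eqref{est:creation and annihilation} term by term, together with $\norm{u+v}^2\le2\norm{u}^2+2\norm{v}^2$, yields a pointwise-in-$x$ bound $\norm{\theta_x^\sharp\Psi}^2\lesssim\big(\sum_{\ii,\jj}\Nrm{k_{\ii\jj}(\cdot,x)}{L^2_y}^2+\Nrm{\operatorname{p}(\boldsymbol{k}_t)_{\ii\jj}(\cdot,x)}{L^2_y}^2+\Nrm{\operatorname{r}(\boldsymbol{k}_t)_{\ii\jj}(\cdot,x)}{L^2_y}^2\big)\norm{(\cN+1)^{1/2}\Psi}^2$; integrating in $x$ turns the $L^2_y$-norms of the $x$-sections into squared Hilbert--Schmidt norms of $\boldsymbol{k}_t$, $\operatorname{p}(\boldsymbol{k}_t)$, $\operatorname{r}(\boldsymbol{k}_t)$, all $\le C$ by Lemma~\ref{lem: estimate of k}. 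The case $k=1$ is identical after differentiating in $t$: $\partial_t\theta_x^\sharp$ has exactly the same structure with $\boldsymbol{k}_t$, $\operatorname{p}(\boldsymbol{k}_t)$, $\operatorname{r}(\boldsymbol{k}_t)$ replaced by $\dot{\boldsymbol{k}}_t$, $\dot{\operatorname{p}}(\boldsymbol{k})$, $\dot{\operatorname{r}}(\boldsymbol{k})$, whose Hilbert--Schmidt norms are controlled in Lemma~\ref{bound of derivative} (the a priori $W^{1,\infty}_x$ bounds of Proposition~\ref{prop:propagation_regularity} keep all constants uniform in $N$).

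Next, for \eqref{est: quadratic 1} and \eqref{estimate of quodratic 2} I would integrate in the $x$-variable first. By Tonelli's theorem and Young's inequality, $\int_{\R^3}N^3V^\lambda_\ii(N(x-y))\,|\wt\phi^{(N)}_{\jj, t}(x)|\dd x\le\Nrm{N^3V^\lambda_\ii(N\,\cdot\,)}{L^1}\Nrm{\wt\phi^{(N)}_{\jj, t}}{L^\infty_x}=\lambda\Nrm{V_\ii}{L^1}\Nrm{\wt\phi^{(N)}_{\jj, t}}{L^\infty_x}$, which is the sole source of the $\lambda$-loss. Hence the left-hand side of \eqref{est: quadratic 1} is $\le\lambda\Nrm{V_\ii}{L^1}\Nrm{\wt\phi^{(N)}_{\jj, t}}{L^\infty_x}\int_{\R^3}\norm{\partial_t^k\theta_y^\sharp\Psi}^2\dd y$, and \eqref{est: A, B, alpha, beta} closes the estimate. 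For \eqref{estimate of quodratic 2} the same reduction applies, now using $\int_{\R^3}\norm{\eta_y\Psi}^2\dd y=\inprod{\Psi}{\cN_\jj\Psi}\le\norm{(\cN+1)^{1/2}\Psi}^2$ in place of \eqref{est: A, B, alpha, beta}.

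The argument is essentially bookkeeping, so I do not expect a genuine obstacle. The two ingredients that are indispensable are the uniform-in-$N$ and uniform-in-$\lambda$ Hilbert--Schmidt bounds of Lemmas~\ref{lem: estimate of k}--\ref{bound of derivative} (which is exactly why the constants $C$ end up depending only on Sobolev norms of $\wt\phi^{(N)}_t$ and $\partial_t\wt\phi^{(N)}_t$), and the observation that once the single prefactor $\Nrm{N^3V^\lambda_\ii(N\,\cdot\,)}{L^1}=\lambda\Nrm{V_\ii}{L^1}$ is extracted, no further $\lambda$ can accumulate. The only point requiring care is tracking which species' number operator is produced at each step, so that the right-hand side never needs more than $(\cN+1)^{1/2}$.
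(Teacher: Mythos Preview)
Your proposal is correct and follows essentially the same route as the paper: apply the creation/annihilation bounds \eqref{est:creation and annihilation} to each summand of $\theta_x^\sharp$ and integrate in $x$ to recover the Hilbert--Schmidt norms controlled by Lemmas~\ref{lem: estimate of k}--\ref{bound of derivative}, then for \eqref{est: quadratic 1}--\eqref{estimate of quodratic 2} pull out $\sup_x\big(N^3V^\lambda_\ii(N\cdot)\ast|\wt\phi^{(N)}_{\jj,t}|\big)(x)\le\lambda\Nrm{V_\ii}{L^1}\Nrm{\wt\phi^{(N)}_{\jj,t}}{L^\infty_x}$ via Young's inequality and apply \eqref{est: A, B, alpha, beta}. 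The paper's proof is organized identically, only phrasing the first step as ``Cauchy--Schwarz'' rather than invoking \eqref{est:creation and annihilation} explicitly.
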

\begin{proof}
	To prove \eqref{est: A, B, alpha, beta} with $k=0$, we apply the 
	the Cauchy--Schwarz inequality to get
	\begin{equation*}
		\begin{aligned}
			\intd
			\| A_x^{\sharp} \Psi \|^2 \dd x
			\le&\, 
			\|\boldsymbol{k}_t \|_{\rm HS}^2 
			\|(\mathcal{N}+1)^{\frac12} \Psi \|^2,\\
			\intd
			\| \alpha_x^{\sharp} \Psi \|^2 \dd x
			\le&\, 
			\(\|\operatorname{p}(\boldsymbol{k}_t)\|_{\rm HS}^2
			+ 
			\|\operatorname{r}(\boldsymbol{k}_t)\|_{\rm HS}^2\)
			\|(\mathcal{N}+1)^{\frac12} \Psi \|^2.
		\end{aligned}
	\end{equation*}
	Hence, we obtain the desired estimate using the lemma~\ref{lem: estimate of k}. The estimates for $B_x$ and $\beta_x$ are similar. For \eqref{est: quadratic 1}, notice that
	\begin{align*}
		\mathrm{LHS} \text{ of } \eqref{est: quadratic 1}
		\leq 
		\Nrm{N^3 V^\lambda_{\ii}(N \cdot) * |\wt\phi^{(N)}_{\jj, t} |}{L^\infty_x} 
		\intd \|\theta^{\sharp}_y \Psi\|^2\d y.
	\end{align*}
	Now, by Young's convolution inequality, we arrive at the desired result.   The case $k=1$ follows similarly by applying Lemma~\ref{bound of derivative}. Inequality~\eqref{estimate of quodratic 2} follows by a similar argument.
\end{proof}

\begin{proof}[Proof of Proposition \ref{prop: estimate of quadratic term}]
	To prove the first inequality in \eqref{estimate of quadratic remainder}, we apply the Young's inequality to the remainder terms and bound them by Lemma \ref{lem: quadratic}. 
	For example, we have
	\begin{equation*}
		\begin{aligned}
			&\iintd  N^3 V^\lambda_1(N(x-y))|\wt\phi_{1,t}^{(N)}(x) || \wt\phi_{1,t}^{(N)}(y) |   \|A^{\sharp}_x \Psi \| \|A^{\sharp}_y \Psi \| \dd x \d y  \\
			&\leq 
			\iintd  N^3 V^\lambda_1(N(x-y))|\wt\phi_{1,t}^{(N)}(x) |^2  \|A^{\sharp}_y \Psi \|^2 \dd x \d y\\
			&\quad +
			\iintd  N^3 V^\lambda_1(N(x-y)) | \wt\phi_{1,t}^{(N)}(y) |^2   \|A^{\sharp}_x \Psi \|^2 \dd x \d y.
		\end{aligned}
	\end{equation*}
	Therefore, it can be estimated using \eqref{est: quadratic 1}. Similar treatment for the other terms in the remainder $\mathcal{E}_{2}^{(\ii)}$.
	
	The second inequality in \eqref{estimate of quadratic remainder} follows similarly, using the fact that the commutator of $\cN$ with the remainder terms $\mathcal{E}_{2}^{(\ii)}$ leaves their form unchanged (apart from the constant terms and the quadratic terms with one creation and one annihilation operators, whose
	contribution to the commutator $[\cN , \mathcal{E}_{2}^{(\ii)}(t)]$ vanishes).
	
	Also the third inequality in \eqref{estimate of quadratic remainder} can be proven analogously, by using \eqref{est: quadratic 1}.
\end{proof}

\noindent\textbf{(iii) Cubic term}.
For the cubic terms in the generator, we revisit the notations $\cG_{3} = \cG_{3}^{(1)} + \cG_{3}^{(2)} + \cG_{3}^{(12)}$ which are given by \eqref{def:G_cubic_terms_intra-species} and \eqref{def:G_cubic_terms_inter-species}. The intra-species parts give 
\begin{equation}\label{generator: intra cubic}
	\begin{aligned}
		\cT_t^*\, \cG^{(1)}_3\, \cT_t =& 
		\frac{1}{\sqrt{N}}\iintd 
		N^{3}V^\lambda_{1}(N(x-y))\, \wt\phi^{(N)}_{1, t}(y)\, 
		\big( a^{\ast}_y + A_y + \alpha^{\ast}_y \big) \\
		& \qquad\qquad\qquad\times 
		\big( a^{\ast}_x + A_x + \alpha^{\ast}_x \big) 
		\big( a_x + A^{\ast}_x + \alpha_x \big) \dd x\d y +\mathrm{h.c.}\\
		\cT_t^*\, \cG^{(2)}_3 \,\cT_t
		=&
		\frac{1}{\sqrt{N}}\iintd
		N^{3}V^\lambda_{2}(N(x-y))\, \wt\phi^{(N)}_{2, t}(y)\, 
		\big( b^{\ast}_y + B_y + \beta^{\ast}_y \big)\\
		& \qquad\qquad\qquad\times 
		\big( b^{\ast}_x + B_x + \beta^{\ast}_x \big)
		\big( b_x + B^{\ast}_x + \beta_x \big)
		\dd x\d y +\mathrm{h.c.}
	\end{aligned}
\end{equation}
For the inter-species part one obtains
\begin{equation}\label{generator: inter cubic}
	\begin{aligned}
		\cT_t^*\, \cG^{(12)}_{3}\, \cT_t 
		=&
		\frac{1}{\sqrt{N}}\iintd 
		N^{3} V^\lambda_{12}(N(x-y))\, \wt\phi^{(N)}_{1, t}(y)\, 
		\big( a^{\ast}_y + A_y + \alpha^{\ast}_y \big)\\
		& \qquad\qquad\qquad\times 
		\big( b^{\ast}_x + B_x + \beta^{\ast}_x \big)
		\big( b_x + B^{\ast}_x + \beta_x \big)\dd x\d y\\
		&+ \frac{1}{\sqrt{N}}\iintd N^{3} V^\lambda_{12}(N(x-y))\, \wt\phi^{(N)}_{2, t}(y)\, 
		\big( b^{\ast}_y + B_y + \beta^{\ast}_y \big)\\
		& \qquad\qquad\qquad\quad\times 
		\big( a^{\ast}_x + A_x + \alpha^{\ast}_x \big) 
		\big( a_x + A^{\ast}_x + \alpha_x \big) \dd x\d y+\mathrm{h.c.}
	\end{aligned}  
\end{equation}
By direct computation, we obtain the following commutation identities
\begin{equation}\label{eq:commute formula}
	\begin{aligned}
		[A_y, a^{\ast}_x] =&\, \conj{k_{1}(x, y)}, & 
		[B_y, b^{\ast}_x] =&\, \conj{k_{2}(x, y)}, \\
		[A_y, b^{\ast}_x] =&\, \conj{k_{21}(x, y)},
		&  [B_y, a^{\ast}_x] =&\, \conj{k_{12}(x, y)}.
	\end{aligned}
\end{equation}
In particular, it follows that 
\begin{multline}\label{eq:expansion_identity}
	\big( a^{\ast}_y + A_y + \alpha^{\ast}_y \big)
	\big( a^{\ast}_x + A_x + \alpha^{\ast}_x \big)\\
	=
	\big( a^{\ast}_y + A_y  \big)
	\big( a^{\ast}_x + A_x  \big)
	+
	\alpha^{\ast}_y 
	\big( a^{\ast}_x + A_x  \big)
	+
	\big( a^{\ast}_y + A_y + \alpha^{\ast}_y \big)
	\alpha^{\ast}_x .
\end{multline}
Rewriting the RHS in the normal order regarding to $a^\sharp$ (leaving the operator $A^\sharp$ and $\alpha^\sharp$ non-normal ordered), we obtain 
\begin{subequations}\label{eq:expansion_identities}
	\begin{equation}
		\text{RHS of }\eqref{eq:expansion_identity} = \com{A_y, a^\ast_x}+ \eps_1(x,y)
		=
		\conj{k_{1}(x, y)} + \eps_1(x,y)
	\end{equation}
	where 
	\begin{equation*}
		\eps_1(x,y) = a^\ast_y a^\ast_x + [\alpha^{\ast}_y, a^{\ast}_x] + \tilde{\eps}_{1}(x, y),
	\end{equation*}
	and $\tilde{\eps}_{1}(x, y)$ consists of everything else. 
	
	Similarly, we write
	\begin{align}
		\big( b^{\ast}_y + B_y + \beta^{\ast}_y \big)
		\big( b^{\ast}_x + B_x + \beta^{\ast}_x \big)
		=&\,
		\conj{k_{2}(x, y)} + \eps_2(x,y)\\ \notag
		=&\, \conj{k_{2}(x, y)} + b^\ast_y b^\ast_x + [\beta^{\ast}_y, b^{\ast}_x] + \tilde{\eps}_{2}(x, y),\\
		\big( a^{\ast}_y + A_y + \alpha^{\ast}_y \big)
		\big( b^{\ast}_x + B_x + \beta^{\ast}_x \big)
		=&\,
		\conj{k_{21}(x, y)} + \eps_{21}(x,y)\\ \notag
		=&\, \conj{k_{21}(x, y)} + a^\ast_y b^\ast_x + [\alpha^{\ast}_y, b^{\ast}_x] + \tilde{\eps}_{21}(x, y),\\
		\big( b^{\ast}_y + B_y + \beta^{\ast}_y \big)
		\big( a^{\ast}_x + A_x + \alpha^{\ast}_x \big)
		=&\, 
		\conj{k_{12}(x, y)} + \eps_{12}(x,y)\\ \notag
		=&\, \conj{k_{12}(x, y)} + b^\ast_y a^\ast_x + [\beta^{\ast}_y, a^{\ast}_x] + \tilde{\eps}_{12}(x, y).
	\end{align}
\end{subequations}
A key observation here is that $\eps_{\ii}(x,y)$ always has a term containing two $a^\ast$, $b^\ast$, and $\tilde{\eps}_{\ii}(x,y)$ is an operator-valued distribution with parameters $x, y$. Moreover, it follows from \eqref{def: alpha and beta} that
\begin{equation}\label{estimate of commutator terms}
	\pm [\alpha^{\ast}_y, a^{\ast}_x],\; 
	\pm [\alpha^{\ast}_y, b^{\ast}_x]
	\le |\operatorname{r}(\boldsymbol{k}_t)(x,y)|_{\rm F}.
\end{equation}
It can be bounded similarly when we replace $\alpha$ with $\beta$.
See also \eqref{eq:normal ordered conjugated cubic terms} for a fully normal-ordered rewriting. 

Substituting \eqref{eq:expansion_identities} into \eqref{generator: intra cubic} and \eqref{generator: inter cubic}, and summing up the terms yields
\begin{equation*}
	\begin{aligned}
		\cT_t^*\, \cG_3\, \cT_t 
		&=\, \frac{1}{\sqrt{N}}\iintd N^{3}V^\lambda_{1}(N(x-y))\, \wt\phi^{(N)}_{1, t}(y)\, 
		\conj{k_{1}(x, y)}
		\big( a_x + A^{\ast}_x + \alpha_x \big) \dd x\d y \\
		&\phantom{=}\, + 
		\frac{1}{\sqrt{N}}\iintd N^{3}V^\lambda_{2}(N(x-y))\, \wt\phi^{(N)}_{2, t}(y)\, 
		\conj{k_{2}(x, y)}
		\big( b_x + B^{\ast}_x + \beta_x \big)
		\dd x\d y \\
		&\phantom{=}\, + \frac{1}{\sqrt{N}}\iintd N^{3} V^\lambda_{12}(N(x-y))\, \wt\phi^{(N)}_{1, t}(y)\, 
		\conj{k_{21}(x,y)}
		\big( b_x + B^{\ast}_x + \beta_x \big)\dd x\d y \\
		&\phantom{=}\, + \frac{1}{\sqrt{N}}\iintd N^{3} V^\lambda_{12}(N(x-y))\, \wt\phi^{(N)}_{2, t}(y)\, 
		\conj{k_{12}(x, y)}
		\big( a_x + A^{\ast}_x + \alpha_x \big) \dd x\d y \\
		&\phantom{=}\, +\mathrm{h.c.} +
		\mathcal{E}_3.
	\end{aligned}
\end{equation*}
Here, the remainder terms in $\mathcal{E}_3$ have the same form as the other terms in $\cT_t^*\, \cG_{3}\, \cT_t$, except that we replace $\conj{k_{\ii}(x, y)}$ by $\eps_{\ii}(x,y)$.

Moreover, by the definition of $k_{\ii}$ with $\ii\in \{1,2,12\}$, together with the \eqref{compute TG^1T}, we finally obtain
\begin{equation*}
	\cT_t^*\, \cG_{3}\, \cT_t
	=
	- \cT_t^*\, \cG_{1}\, \cT_t
	+
	\mathcal{E}_3.
\end{equation*}
The properties of $\cE_3$ are summarized in the following proposition.
\begin{prop}\label{prop: T G1 G2 T}
	It follows from the above computation that
	\begin{equation*}
		\cT_t^*\, (\cG_1+\cG_3)\, \cT_t
		=
		\mathcal{E}_3.
	\end{equation*}
	Moreover, the remainder term $\mathcal{E}_3$ satisfies the estimates:
	\begin{align}
		\pm \mathcal{E}_3(t) & 
		\leq 
		\|\widetilde{\boldsymbol{\phi}}_t^{(N)}
		\|_{L^\infty_x}
		\( \delta \mathcal{V}
		+
		C_\delta \lambda \big( 1+\mathcal{N}
		+\frac{(\mathcal{N}+1)^2}{N}
		\big)\), \label{est:E3_operator_bound}\\
		\pm\com{\mathcal{N}, \mathcal{E}_3(t)} 
		& \leq \|\widetilde{\boldsymbol{\phi}}_t^{(N)}
		\|_{L^\infty_x}
		\( \delta \mathcal{V}
		+
		C_\delta \lambda \big( 1+\mathcal{N}
		+  \frac{(\mathcal{N}+1)^2}{N}\big)
		\), \label{est:E3_number_commutator_operator_bound}\\
		\pm \dot{\mathcal{E}}_3(t) 
		& \leq 
		\big(\|\widetilde{\boldsymbol{\phi}}_t^{(N)}
		\|_{L^\infty_x}
		+ \|\partial_t\widetilde{\boldsymbol{\phi}}_t^{(N)}
		\|_{L^\infty_x}\big)\label{est:E3_time-derivative_operator_bound} \( \delta \mathcal{V}
		+
		C_\delta \lambda \big(1+ \mathcal{N}
		+  \frac{(\mathcal{N}+1)^2}{N}\big)
		\), 
	\end{align}
	where $\mathcal{V} = \mathcal{V}_1 + \mathcal{V}_2 + \mathcal{V}_{12}$ is the interaction part in the Hamiltonian. The constant $C_\delta$ depends on $\delta$, $\|\widetilde{\boldsymbol{\phi}}_t^{(N)}
	\|_{H^1_x}$, $\|\partial_t \widetilde{\boldsymbol{\phi}}_t^{(N)}
	\|_{H^1_x}$ and $\|\partial^2_t \widetilde{\boldsymbol{\phi}}_t^{(N)}
	\|_{L^2_x}$.
\end{prop}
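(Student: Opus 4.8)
Below is a plan for establishing Proposition~\ref{prop: T G1 G2 T}.

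The identity $\cT_t^*(\cG_1+\cG_3)\cT_t=\cE_3$ requires no further work: it is precisely the outcome of the computation preceding the statement, in which conjugating $\cG_3$ by $\cT_t$ and extracting the leading contributions via the normal-ordering identities \eqref{eq:expansion_identities} reproduces $-\cT_t^*\cG_1\cT_t$ --- by the very definition \eqref{def:k_i}--\eqref{def:k_ij} of the kernels $k_\ii$ together with \eqref{compute TG^1T} --- plus the remainder $\cE_3$. So the entire content of the proposition lies in the three estimates, which I would establish by Cauchy--Schwarz, term by term. Recall that $\cE_3$ is a finite sum, over $\ii\in\{1,2,12\}$ and adding the Hermitian conjugate, of terms of the schematic form
\begin{equation*}
  \frac{1}{\sqrt N}\iintd N^3V^\lambda_\ii(N(x-y))\,\wt\phi^{(N)}_{\jj, t}(y)\,\eps_\ii(x,y)\,\big(a_x+A^\ast_x+\alpha_x\big)\dd x\d y
\end{equation*}
and the analogous terms with $(b_x+B^\ast_x+\beta_x)$ from \eqref{eq:creation_annihilation_expansion}; here, by \eqref{eq:expansion_identities}, $\eps_\ii(x,y)$ splits into a leading piece carrying two creation operators ($a^\ast_y a^\ast_x$, $b^\ast_y b^\ast_x$, or $a^\ast_y b^\ast_x$), a $c$-number commutator piece bounded pointwise by $|\operatorname{r}(\boldsymbol k_t)(x,y)|_{\rm F}$ through \eqref{estimate of commutator terms}, and the operator-valued remainder $\tilde\eps_\ii(x,y)$ which, once put in normal order (cf.~\eqref{eq:normal ordered conjugated cubic terms}), is a sum of monomials each carrying at least one of $A^\sharp_x,B^\sharp_x,\alpha^\sharp_x,\beta^\sharp_x$.

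For \eqref{est:E3_operator_bound} I would apply Cauchy--Schwarz in $(x,y)$ with the positive weight $N^3V^\lambda_\ii(N(x-y))|\wt\phi^{(N)}_{\jj,t}(y)|$, always placing the factor with the most annihilation operators against the weight so as to recover the interaction energy. For the main monomial, e.g.~$\frac{1}{\sqrt N}\iintd N^3V^\lambda_1(N(x-y))\wt\phi^{(N)}_{1,t}(y)\,a^\ast_y a^\ast_x a_x$, one writes $\langle\Psi,a^\ast_y a^\ast_x a_x\Psi\rangle=\langle a_x a_y\Psi,a_x\Psi\rangle$; the $\|a_x a_y\Psi\|^2$-side is then bounded by $\|\wtbphi^{(N)}_t\|_{L^\infty_x}\cdot N\cdot 2\langle\Psi,\cV_1\Psi\rangle$ using positivity of $V_1$ (Assumption~\ref{assume: L3}) and $N^3V^\lambda_1(N\cdot)=N\cdot N^2V^\lambda_1(N\cdot)$, while the $\|a_x\Psi\|^2$-side is bounded by $\lambda\|V_1\|_{L^1}\|\wtbphi^{(N)}_t\|_{L^\infty_x}\langle\Psi,\cN\Psi\rangle$ via $(N^3V^\lambda_1(N\cdot)\ast|\wt\phi^{(N)}_{1,t}|)(x)\le\lambda\|V_1\|_{L^1}\|\wtbphi^{(N)}_t\|_{L^\infty_x}$; the two factors $\sqrt N$ cancel the $1/\sqrt N$ and Young's inequality gives $\|\wtbphi^{(N)}_t\|_{L^\infty_x}(\delta\cV+C_\delta\lambda\cN)$. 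When $a_x$ is replaced by $A^\ast_x$ or $\alpha_x$, or when $\eps_\ii$ contributes a monomial of $\tilde\eps_\ii(x,y)$, I would use \eqref{est: A, B, alpha, beta}, \eqref{est: quadratic 1}, \eqref{estimate of quodratic 2} of Lemma~\ref{lem: quadratic} together with the Hilbert--Schmidt and pointwise bounds of Lemma~\ref{lem: estimate of k} to replace the $\|a_x\Psi\|^2$-type quantities by $\|(\cN+1)^{1/2}\Psi\|^2$ (or $\|(\cN+1)\Psi\|^2$ for the purely quadratic monomials of $\tilde\eps_\ii$); in those cases the surviving factor $1/\sqrt N$ is essential and is absorbed through $N^{-1/2}\sqrt{\langle\cN\rangle\langle(\cN+1)^2\rangle}\le\delta\langle\cN\rangle+C_\delta N^{-1}\langle(\cN+1)^2\rangle$, which is exactly what produces the $(\cN+1)^2/N$ term, while the $c$-number commutator pieces of $\eps_\ii$, treated the same way, produce the harmless $1+\cN$. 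Any power of $\|\wtbphi^{(N)}_t\|_{L^\infty_x}$ beyond the first that is generated by the extra $\boldsymbol k_t$-factors would be traded for a constant using $\|\wtbphi^{(N)}_t\|_{L^\infty_x}\le C$ from Corollary~\ref{cor:Linfty-t-indep-bdd}, so that exactly one power survives, as claimed.

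The commutator bound \eqref{est:E3_number_commutator_operator_bound} then follows immediately: $[\cN,\cdot\,]$ acts on each monomial above by multiplication by the (bounded) net change in particle number and kills the $c$-number commutator pieces, so $[\cN,\cE_3]$ is again a finite sum of terms of exactly the same structure, to which the estimates of the previous step apply verbatim. For \eqref{est:E3_time-derivative_operator_bound}, differentiating $\cE_3$ in $t$ yields (i) contributions from $\partial_t\wt\phi^{(N)}_{\jj,t}(y)$, estimated with $\|\partial_t\wtbphi^{(N)}_t\|_{L^\infty_x}$ in place of $\|\wtbphi^{(N)}_t\|_{L^\infty_x}$, and (ii) contributions from $\dot{\boldsymbol k}_t$, hence from $\dot A_x,\dot B_x,\dot\alpha_x,\dot\beta_x$ and $\dot{\operatorname{p}}(\boldsymbol k_t),\dot{\operatorname{r}}(\boldsymbol k_t)$, which are controlled by the $k=1$ cases of Lemma~\ref{lem: quadratic} together with Lemma~\ref{bound of derivative} --- the source of the dependence of $C_\delta$ on $\|\partial_t\wtbphi^{(N)}_t\|_{H^1_x}$ and $\|\partial^2_t\wtbphi^{(N)}_t\|_{L^2_x}$. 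Running the Cauchy--Schwarz scheme unchanged then gives \eqref{est:E3_time-derivative_operator_bound}. I expect the main obstacle to be the bookkeeping for the monomials of $\tilde\eps_\ii(x,y)$ that carry two or three of the quadratic operators $A^\sharp,B^\sharp,\alpha^\sharp,\beta^\sharp$ (e.g.~$A_y A_x A^\ast_x$): a naive estimate controls these only by a too-high power of $\cN$, and closing them cleanly requires both the full normal ordering of $\tilde\eps_\ii$ and the careful use of the $1/\sqrt N$ prefactor described above to land on the $(\cN+1)^2/N$ term.
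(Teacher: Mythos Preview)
Your proposal is correct and follows the same Cauchy--Schwarz scheme as the paper, invoking the same ingredients: Lemma~\ref{lem: quadratic} for the one-operator factors, the pointwise bounds \eqref{pointwise bound}/\eqref{estimate of commutator terms} for the $c$-number commutator pieces, and Lemma~\ref{lem: estimate of k}/\ref{bound of derivative} for the kernel estimates. The one organisational difference worth noting is that the paper does \emph{not} expand the product $\eps_\ii(x,y)\,(a_x+A^\ast_x+\alpha_x)$ into monomials before applying Cauchy--Schwarz; instead it keeps these two factors intact, placing $\eps_\ii^\ast(x,y)\Psi$ on one side (with weight $\delta/(2N)$) and $(a_x+A^\ast_x+\alpha_x)\Psi$ on the other. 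The first side then decomposes as $\|a_xa_y\Psi\|^2$ (yielding the $\delta\cV$ contribution directly), the $|\operatorname r(\boldsymbol k_t)|_{\rm F}^2$ piece, and $\|\tilde\eps_\ii^\ast(x,y)\Psi\|^2$, the last of which involves only products of \emph{two} operators of type $\eta^\sharp,\theta^\sharp$ and is handled in one stroke by Lemma~\ref{lemma: quatic}. This grouping sidesteps precisely the ``main obstacle'' you flag at the end---the cubic-in-$A^\sharp,B^\sharp,\alpha^\sharp,\beta^\sharp$ monomials such as $A_yA_xA^\ast_x$ simply never appear---so the bookkeeping is lighter than in your monomial-by-monomial route, though both lead to the same bound.
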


In order to prove the proposition, we introduce the following lemma.
\begin{lem}\label{lemma: quatic}
	Let $\theta_x, \widetilde{\theta}_x = A_x,B_x,\alpha_x$, or $\beta_x$, and $\eta_x = a_x$ or $b_x$.
	Then for any $\ii\in\{1,2,12\}$ we have the estimates
	\begin{equation}\label{ineqn: quatic}
		\begin{aligned}
			\iintd N^3 V^\lambda_{\ii}(N(x-y)) \|\theta_x^{\sharp} \eta_y \Psi \|^2 \dd x \d y 
			\lesssim \lambda
            \|\widetilde{\boldsymbol{\phi}}_t^{(N)}
			\|_{L^\infty_x}^2
			\|(\mathcal{N}+1) \Psi\|^2,\\
			\iintd N^3 V^\lambda_{\ii}(N(x-y)) \|\theta_x^{\sharp} \widetilde{\theta}_y^{\sharp} \Psi \|^2 \dd x \d y 
			\lesssim
			\lambda
            \|\widetilde{\boldsymbol{\phi}}_t^{(N)}
			\|_{L^\infty_x}^2 
			\|(\mathcal{N}+1) \Psi\|^2.
		\end{aligned}
	\end{equation}
	Moreover, we have
	\begin{equation}\label{ineqn: quatic time derivative}
		\begin{aligned}
			\iintd N^3 V^\lambda_{\ii}(N(x-y)) &\|(\partial_t \theta_x^{\sharp}) \eta_y \Psi \|^2 \dd x \d y \\
			\leq &
			C 
            \lambda
            \Big(\|\widetilde{\boldsymbol{\phi}}_t^{(N)}
			\|_{L^\infty_x}^2
			+ \|\partial_t\widetilde{\boldsymbol{\phi}}_t^{(N)}
			\|_{L^\infty_x}^2
            \Big)
			\|(\mathcal{N}+1) \Psi\|^2,\\
			\iintd N^3 V^\lambda_{\ii}(N(x-y)) &\|(\partial_t \theta_x^{\sharp}) \widetilde{\theta}_y^{\sharp} \Psi \|^2 \dd x \d y \\
			\leq& 
			C \lambda
            \Big(\|\widetilde{\boldsymbol{\phi}}_t^{(N)}
			\|_{L^\infty_x}^2
			+ \|\partial_t\widetilde{\boldsymbol{\phi}}_t^{(N)}
			\|_{L^\infty_x}^2
            \Big)
			\|(\mathcal{N}+1) \Psi\|^2.
		\end{aligned}
	\end{equation}
	
\end{lem}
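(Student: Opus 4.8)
The plan is to reduce each quartic expectation to the elementary creation/annihilation bounds \eqref{est:creation and annihilation}, using the kernel estimates of Lemma~\ref{lem: estimate of k} and Lemma~\ref{bound of derivative}. The key observation is that, by \eqref{def: A and B}--\eqref{def: alpha and beta}, each of $A_x,B_x,\alpha_x,\beta_x$ (and its adjoint) is a sum of at most one creation and one annihilation operator smeared against the functions $y\mapsto\boldsymbol{k}_t(y,x)$, $\operatorname{r}(\boldsymbol{k}_t)(y,x)$, $\operatorname{p}(\boldsymbol{k}_t)(y,x)$. Since these kernels are symmetric (inherited from \eqref{def:k_i}--\eqref{def:k_ij}), Lemma~\ref{lem: estimate of k}(iii) gives the \emph{uniform-in-$x$} bound
\begin{equation*}
\|\theta_x^\sharp\Phi\|\ \le\ C\,\|\widetilde{\boldsymbol{\phi}}_t^{(N)}\|_{L^\infty_x}\,\|(\cN+1)^{1/2}\Phi\|,\qquad \theta_x\in\{A_x,B_x,\alpha_x,\beta_x\},\ \sharp\in\{\ast,\circ\},
\end{equation*}
and Lemma~\ref{bound of derivative} gives the same with $\partial_t\theta_x^\sharp$ on the left and $\|\widetilde{\boldsymbol{\phi}}_t^{(N)}\|_{L^\infty_x}+\|\partial_t\widetilde{\boldsymbol{\phi}}_t^{(N)}\|_{L^\infty_x}$ on the right.

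First I would apply this to the \emph{left-most} operator $\theta_x^\sharp$ in each term of \eqref{ineqn: quatic}, with $\Phi=\eta_y\Psi$ or $\Phi=\widetilde\theta_y^\sharp\Psi$. This extracts the factor $\|\widetilde{\boldsymbol{\phi}}_t^{(N)}\|_{L^\infty_x}^2$ and leaves an integral of $\|(\cN+1)^{1/2}\eta_y\Psi\|^2$ (resp.\ $\|(\cN+1)^{1/2}\widetilde\theta_y^\sharp\Psi\|^2$) against $N^3V^\lambda_\ii(N(x-y))$. The $x$-integration is then immediate: $\intd N^3V^\lambda_\ii(N(x-y))\,\d x=\|V^\lambda_\ii\|_{L^1}=\lambda\|V_\ii\|_{L^1}$, reducing the problem to a single $L^2_y$-estimate. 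Next I would commute $(\cN+1)^{1/2}$ past $\eta_y$ (resp.\ $\widetilde\theta_y^\sharp$) using \eqref{eq:number_operator_commuting_z}, which only shifts $\cN$ by a bounded constant, and integrate in $y$: for $\eta_y=a_y,b_y$ one uses $\intd a_y^\ast a_y\,\d y=\cN_1\le\cN$ to get $\intd\|(\cN+1)^{1/2}\eta_y\Psi\|^2\,\d y\lesssim\|(\cN+1)\Psi\|^2$, and for $\widetilde\theta_y^\sharp$ one expands it into smeared $z,z^\ast$ operators, applies the CCR, and invokes the \emph{Hilbert--Schmidt} bounds of Lemma~\ref{lem: estimate of k} in their $\min(\,\cdot\,,1)$ form --- so that $\|\boldsymbol{k}_t\|_{\rm HS},\|\operatorname{p}(\boldsymbol{k}_t)\|_{\rm HS},\|\operatorname{r}(\boldsymbol{k}_t)\|_{\rm HS}\le C$ --- to obtain $\intd\|(\cN+1)^{1/2}\widetilde\theta_y^\sharp\Psi\|^2\,\d y\lesssim\|(\cN+1)\Psi\|^2$. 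Multiplying the three contributions gives \eqref{ineqn: quatic}.

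For the time-derivative estimates \eqref{ineqn: quatic time derivative} I would run exactly the same argument with $\theta_x^\sharp$ replaced by $\partial_t\theta_x^\sharp$: the outer bound now produces $(\|\widetilde{\boldsymbol{\phi}}_t^{(N)}\|_{L^\infty_x}+\|\partial_t\widetilde{\boldsymbol{\phi}}_t^{(N)}\|_{L^\infty_x})^2\lesssim\|\widetilde{\boldsymbol{\phi}}_t^{(N)}\|_{L^\infty_x}^2+\|\partial_t\widetilde{\boldsymbol{\phi}}_t^{(N)}\|_{L^\infty_x}^2$ by Lemma~\ref{bound of derivative}, while the $x$- and $y$-integrations are unchanged because the inner, $y$-integrated operator is still $\eta_y$ or $\widetilde\theta_y^\sharp$, with no time derivative.

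The step that requires care --- and which I expect to be the only real obstacle --- is the matching of powers of $\|\widetilde{\boldsymbol{\phi}}_t^{(N)}\|_{L^\infty_x}$: one must use the \emph{pointwise-in-$x$} estimate (not the Hilbert--Schmidt one) on the outer operator and the \emph{Hilbert--Schmidt} estimate (not the $\sup_y$ one) on the inner, $y$-integrated operator, since any other split would yield $\|\widetilde{\boldsymbol{\phi}}_t^{(N)}\|_{L^\infty_x}^4$ rather than the claimed $\|\widetilde{\boldsymbol{\phi}}_t^{(N)}\|_{L^\infty_x}^2$. Beyond this, the proof is bookkeeping over the cases $\theta\in\{A,B,\alpha,\beta\}$ and $\sharp\in\{\ast,\circ\}$, where $\alpha_x^\sharp,\beta_x^\sharp$ decompose into a creation and an annihilation part that are handled separately but in the same way.
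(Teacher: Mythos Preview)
Your proposal is correct and follows essentially the same route as the paper: bound the outer operator $\theta_x^\sharp$ via the pointwise-in-$x$ estimate from Lemma~\ref{lem: estimate of k}(iii), integrate out $x$ against $N^3V^\lambda_\ii(N\cdot)$ to produce the factor $\lambda\|V_\ii\|_{L^1}$, and then integrate the remaining $y$-dependent quadratic expression. The paper states this as $\iintd N^3V^\lambda_\ii(N(x-y))\|\theta_x^\sharp\eta_y\Psi\|^2\,\d x\,\d y\le C\lambda\int\sup_x\|\theta_x^\sharp\eta_y\Psi\|^2\,\d y$ followed by the $M_1(t)$ bound, which is exactly your pointwise-then-$L^1$ split; your explicit commutation of $(\cN+1)^{1/2}$ and use of the Hilbert--Schmidt bound for the inner $\widetilde\theta_y^\sharp$ fill in what the paper abbreviates as ``the other inequalities can be derived similarly.''
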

\begin{proof}
	In order to estimate the first inequality in \eqref{ineqn: quatic}, we note that
	\begin{equation*}
		\begin{aligned}
			\iintd N^3 V^\lambda_{\ii}(N(x-y)) \|\theta_x^{\sharp} \eta_y \Psi \|^2 \dd x \d y   
			\leq C
            \lambda 
			\intd \sup _{x \in \mathbb{R}^3}
			\|\theta_x^{\sharp} \eta_y \Psi\|^2\d y .
		\end{aligned}
	\end{equation*}
	Moreover, let us denote
	\begin{equation*}
		M_1(t):= \max\left\{
		\sup _{x \in \mathbb{R}^3} \|\boldsymbol{k}_t(\cdot, x) \|_{L^2}, \,
		\sup _{x \in \mathbb{R}^3}  \|\operatorname{p}(\boldsymbol{k}_t) (\cdot, x) \|_{L^2},\,
		\sup_{x \in \mathbb{R}^3}  \|\operatorname{r}(\boldsymbol{k}_t) (\cdot, x) \|_{L^2}
		\right\},
	\end{equation*}
	then it holds that
	\begin{equation*}
		\begin{aligned}
			\intd \sup _{x \in \mathbb{R}^3}
			\|\theta_x^{\sharp} \eta_y \Psi\|^2\d y
			\le &\, 
			M_1(t) \|(\mathcal{N}+1)^{\frac12} \Psi \|.
		\end{aligned}
	\end{equation*}
	Applying Lemma~\ref{lem: estimate of k}, the first inequality in \eqref{ineqn: quatic} follows.
	The other inequalities can be derived similarly if we employ Lemma~\ref{bound of derivative}.
\end{proof}

\begin{proof}[Proof of the Proposition~\ref{prop: T G1 G2 T}]
	The main idea is to apply the Cauchy--Schwarz inequality to the terms in $\cE_3$. In general, we control the cubic terms by means of the quartic and quadratic contributions, which
	are then estimated using Lemma~\ref{lem: quadratic} (the quadratic part) and Lemma~\ref{lemma: quatic} (the quartic part).  For simplicity, it suffices to consider the following term of $\cE_3$ 
	\begin{align}
		\frac{1}{\sqrt{N}}\iintd N^{3}V^\lambda_{1}(N(x-y))\, \wt\phi^{(N)}_{1, t}(y)\, 
		\big( a_x + A^{\ast}_x + \alpha_x \big)   \eps_{1}(x, y)\dd x\d y. 
	\end{align}
	
	To prove the \eqref{est:E3_operator_bound}, notice we have that
	\begin{equation*}
		\begin{aligned}
			&\frac{1}{\sqrt{N}}\n{\iintd N^{3}V^\lambda_{1}(N(x-y))\, \wt\phi^{(N)}_{1, t}(y)\, 
				\left\langle\Psi, \, \eps_{1}(x, y)
				\big( a_x + A^{\ast}_x + \alpha_x \big)
				\Psi\right\rangle \dd x\d y} \\
			&\le
			\frac{\delta}{2N}\iintd N^{3}V^\lambda_{1}(N(x-y))\, 
			|\wt\phi^{(N)}_{1, t}(y)|\Nrm{\eps_{1}^{\ast}(x, y) \Psi}{}^2
			\dd x\d y\\
			&\quad+ C_\delta
			\iintd N^{3}V^\lambda_{1}(N(x-y))\,|\wt\phi^{(N)}_{1, t}(y)|
			\Nrm{( a_x + A^{\ast}_x + \alpha_x ) 
				\Psi}{}^2 \dd x\d y.
		\end{aligned}
	\end{equation*}
	Here, we estimate the second term by Lemma~\ref{lem: quadratic}. As for the first term, we recall that $\eps_{1}(x, y) = a^\ast_y a^\ast_x + [\alpha^{\ast}_y, a^{\ast}_x] + \tilde{\eps}_{1}(x, y)$, together with \eqref{estimate of commutator terms}, it then leads to
	\begin{align*}
		&\frac{\delta}{2N}\iintd  N^{3}V^\lambda_{1}(N(x-y))\, |\wt\phi^{(N)}_{1, t}(y)|
		\Nrm{\eps_{1}^{\ast}(x, y) \Psi}{}^2
		\dd x\d y\\
		&\leq \,\delta \norm{\wt\phi^{(N)}_{1, t}}_{L^\infty_x}
        \left\langle\Psi, \,
				\mathcal{V}_1 
				\Psi
                \right\rangle
        +
		\frac{\delta}{2N}\iintd N^{3}V^\lambda_{1}(N(x-y))\, |\wt\phi^{(N)}_{1, t}(y)|
		|\operatorname{r}(\boldsymbol{k}_t)(x,y)|_{\rm F}^2
		\dd x\d y\\
		&\quad+
		\frac{\delta}{2N}\iintd N^{3}V^\lambda_{1}(N(x-y))\, |\wt\phi^{(N)}_{1, t}(y)|
		\Nrm{\tilde{\eps}_{1}^{\ast}(x, y) \Psi}{}^2
		\dd x\d y.
	\end{align*}
	Here, the term in the last line can be estimated by using Lemma~\ref{lemma: quatic}. The second term can be estimated by using \eqref{pointwise bound}, which yields
	\begin{equation*}
		\frac{\delta}{2N}\iintd N^{3}V^\lambda_{1}(N(x-y))\, |\wt\phi^{(N)}_{1, t}(y)|
		|\operatorname{r}(\boldsymbol{k}_t)(x,y)|_{\rm F}^2
		\dd x\d y
		\le
		C \lambda\|\widetilde{\boldsymbol{\phi}}_t^{(N)}
		\|_{L^\infty_x}^3.
	\end{equation*}
	The other terms in $\cE_3$ are handled in a similar manner. Furthermore, the proof of the \eqref{est:E3_time-derivative_operator_bound} is the same. 
	
	The proof of the \eqref{est:E3_number_commutator_operator_bound} follows from the \eqref{eq:number_operator_commuting_z} and Lemmas~\ref{lem: quadratic}--\ref{lemma: quatic}. Thus, this completes the proof of the theorem.
\end{proof}

\subsection{Estimates for the Kinetic Terms}
Recall the kinetic operator $\mathcal{K} = \mathcal{K}_1 + \mathcal{K}_2$ defined in \eqref{eq:multiFockH}. Using \eqref{eq:creation_annihilation_expansion}, we write  
\begin{equation*}\label{transform of kinetic}
	\begin{aligned}
		\cT_t^*\, \mathcal{K}_1\, \cT_t
		=&\, \intd \nabla_x
		\Big( a^*_x + A_x + \alpha_x \Big) 
		\nabla_x\Big(	a_x + A^*_x + \alpha^*_x \Big)  \dd x \\
		=&\, \cK_1 
		+
		\intd  \nabla_xA_x\nabla_xA^*_x \dd x
		+
		\intd \nabla_x\alpha_x \nabla_x\alpha^*_x \dd x \\
		& +
		\(
		\intd \{\nabla_xa^*_x \nabla_xA^*_x  
		+ 
		\nabla_xa^*_x \nabla_x\alpha^*_x 
		+
		\nabla_xA_x\nabla_x\alpha^*_x\} \dd x + \mathrm{h.c.} \) .
	\end{aligned}
\end{equation*}
Here, we rewrite the term $ \nabla_xA_x\nabla_x \alpha^*_x$ and its conjugate in normal order, together with $\nabla_xA_x\nabla_xA^*_x$, lead to
\begin{equation*}
	\begin{aligned}
		\cT_t^*\, \mathcal{K}_1\, \cT_t
		=&\, 
		C_{N, K}^{(1)} + \cK_1 
		+\intd  \nabla_xA^*_x \nabla_xA_x \dd x
		+
		\intd \nabla_x\alpha_x\nabla_x\alpha^*_x \dd x \\
		& +
		\(
		\intd \{\nabla_xa^*_x \nabla_xA^*_x  
		+ 
		\nabla_xa^*_x \nabla_x\alpha^*_x 
		+
		\nabla_x\alpha^*_x \nabla_xA_x\}  \dd x + \mathrm{h.c.} \).
	\end{aligned}
\end{equation*}
Similarly, for the $\mathcal{K}_2$ term, it follows that
\begin{equation*}
	\begin{aligned}
		\cT_t^*\, \mathcal{K}_2\, \cT_t
		=&\, C_{N, K}^{(2)} + \mathcal{K}_2 
		+
		\intd \nabla_xB^*_x\nabla_xB_x \dd x 
		+\intd \nabla_x\beta_x\nabla_x\beta^*_x \dd x \\
		&+\(
		\intd \{\nabla_xb^*_x \nabla_xB^*_x  
		+ 
		\nabla_xb^*_x \nabla_x\beta^*_x
		+
		\nabla_x\beta^*_x\nabla_xB_x\} \dd x
		+ \mathrm{h.c.} \).
	\end{aligned}
\end{equation*}
Now we assert the following proposition:
\begin{prop}\label{prop: Kinetic term}
	We have
	\begin{align*}
		\cT_t^*\, \mathcal{K}_1\, \cT_t 
		&=\, C_{N, K}^{(1)} + \mathcal{K}_1+ \mathcal{E}_K^{(1)}\\
		&\phantom{=}+  N \iintd \Delta w_{1, \ell}^{(N)} (x-y) \wt\phi^{(N)}_{1,t}(x) \wt\phi^{(N)}_{1,t}(y)\, a_x^* a_y^* \dd x \d y +\mathrm{h.c.}\\
		&\phantom{=}+ N \iintd \Delta w_{12, \ell}^{(N)}(x-y) \wt\phi^{(N)}_{1,t}(x) \wt\phi^{(N)}_{2,t}(y)\, a_x^* b_y^*  \dd x \d y+\mathrm{h.c.},
	\end{align*}  
	and, similarly, we have
	\begin{align*}
		\cT_t^*\, \mathcal{K}_2\, \cT_t 
		&= C_{N, K}^{(2)} + \mathcal{K}_2+ \mathcal{E}_K^{(2)}\\
		&\phantom{=}+  N \iintd \Delta w_{2,\ell}^{(N)} (x-y) \wt\phi^{(N)}_{2,t}(x) \wt\phi^{(N)}_{2,t}(y)\, b_x^* b_y^* \dd x \d y +\mathrm{h.c.}\\
		&\phantom{=}+ N \iintd \Delta w_{12, \ell}^{(N)} (x-y)\wt\phi^{(N)}_{1,t}(x) \wt\phi^{(N)}_{2,t}(y)\, a_x^* b_y^* \dd x \d y+\mathrm{h.c.},
	\end{align*}  
	where the remainder terms satisfy
	\begin{equation}\label{estimate of E_K}
		\begin{aligned}
			\pm \mathcal{E}^{(\jj)}_K(t)
			\leq&\, \delta \|\widetilde{\boldsymbol{\phi}}_t^{(N)}
			\|_{L^\infty_x} \cK_{\jj}
			+C_\delta \|\widetilde{\boldsymbol{\phi}}_t^{(N)}
			\|_{W^{1,\infty}_x} (\mathcal{N}+1), \\
			\pm\left[\mathcal{N}, \mathcal{E}^{(\jj)}_K(t)\right] 
			\leq&\, \delta \|\widetilde{\boldsymbol{\phi}}_t^{(N)}
			\|_{L^\infty_x} \cK_{\jj}
			+ C_\delta \|\widetilde{\boldsymbol{\phi}}_t^{(N)}
			\|_{W^{1,\infty}_x} (\mathcal{N}+1), \\
			\pm \dot{\mathcal{E}}^{(\jj)}_K(t)  
			\leq&\, \delta (\|\widetilde{\boldsymbol{\phi}}_t^{(N)}
			\|_{L^\infty_x}
			+ \|\partial_t\widetilde{\boldsymbol{\phi}}_t^{(N)}
			\|_{L^\infty_x}) \cK_{\jj}\\
			& + C_\delta (\|\widetilde{\boldsymbol{\phi}}_t^{(N)}
			\|_{W^{1,\infty}_x}
			+ \|\partial_t\widetilde{\boldsymbol{\phi}}_t^{(N)}
			\|_{L^\infty_x}) (\mathcal{N}+1).
		\end{aligned}
	\end{equation}
	The constant $C_\delta$ depends on $\delta$, $\|\widetilde{\boldsymbol{\phi}}_t^{(N)}
	\|_{H^1_x}$ and $\|\partial_t \widetilde{\boldsymbol{\phi}}_t^{(N)}
	\|_{H^1_x}$.
\end{prop}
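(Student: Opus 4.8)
The plan is to start from the identity for $\cT_t^\ast\,\cK_1\,\cT_t$ derived immediately above the statement, which already writes the conjugated kinetic operator as $C^{(1)}_{N,K}+\cK_1$ plus a short list of quadratic expressions in the shifted fields $a^\ast_x+A^\ast_x+\alpha^\ast_x$. The task is to pull out of this list the single ``resonant'' pairing term — the one carrying a correlation kernel proportional to $\Delta w^{(N)}_{\ii,\ell}$, which is designed to cancel against the analogous pairings produced by the conjugated interaction and quadratic parts of the generator in Propositions~\ref{prop: estimate of quadratic term}, \ref{prop: quartic term} and~\ref{prop: cancellation} — and to show that everything else forms a controllable error $\mathcal E^{(1)}_K$. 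The treatment of $\cK_2$ is identical.

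First I would extract the main term. Inserting $A^\ast_x=a^\ast(k_{1,x})+b^\ast(k_{21,x})$ and the explicit kernels~\eqref{def:k_i}--\eqref{def:k_ij} into $\intd\nabla_x a^\ast_x\cdot\nabla_x A^\ast_x\dd x+\mathrm{h.c.}$, the $x$-gradient hitting $w^{(N)}_{\ii,\ell}$ produces the singular factor $N^{2}(\nabla w_{\ii,\ell})(N\,\cdot\,)$; one integration by parts in $x$ turns this into $N^{3}(\Delta w_{\ii,\ell})(N\,\cdot\,)=N\,\Delta w^{(N)}_{\ii,\ell}$ and, using the radial symmetry of $w_{\ii,\ell}$, isolates exactly the two displayed pairing terms of the statement, together with lower-order pieces in which a derivative falls on a profile $\wtbphi^{(N)}_t$ rather than on $w^{(N)}_{\ii,\ell}$. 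These residual pieces, together with $\intd\nabla_x A^\ast_x\nabla_x A_x\dd x$, $\intd\nabla_x\alpha_x\nabla_x\alpha^\ast_x\dd x$, $\intd\nabla_x a^\ast_x\nabla_x\alpha^\ast_x\dd x+\mathrm{h.c.}$ and $\intd\nabla_x\alpha^\ast_x\nabla_x A_x\dd x+\mathrm{h.c.}$, are collected into $\mathcal E^{(1)}_K$.

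Next I would estimate $\mathcal E^{(1)}_K$ term by term. The ``diagonal'' contributions $\intd\nabla_x A^\ast_x\nabla_x A_x\dd x$ and $\intd\nabla_x\alpha_x\nabla_x\alpha^\ast_x\dd x$ are (mixed) second quantizations of operators built from $\nabla_2\boldsymbol k_t$, $\nabla_2\operatorname p(\boldsymbol k_t)$, $\nabla_2\operatorname r(\boldsymbol k_t)$, hence bounded by $(\cN+1)$ times the \emph{operator} norms of those kernels. Here one must \emph{not} invoke the Hilbert--Schmidt bounds of Lemma~\ref{lem: estimate of k} (which carry a spurious $\sqrt N$), but rather use that $\nabla_2\boldsymbol k_t$ is, up to smoother remainders, a multiplication--convolution composition $\wtbphi^{(N)}_t(x)\,\big[(N^{2}(\nabla w_{\ii,\ell})(N\,\cdot\,))\ast\big](\wtbphi^{(N)}_t\,\cdot\,)$ whose convolution factor has $L^1$-norm of order one, hence a bounded Fourier multiplier, giving operator norm of order $\|\wtbphi^{(N)}_t\|_{L^\infty_x}^2$; the $\operatorname p,\operatorname r$ analogues are handled the same way. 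The ``off-diagonal'' pieces — the residuals of $\intd\nabla_x a^\ast_x\nabla_x A^\ast_x\dd x$ and the $\alpha$-terms — are treated by Cauchy--Schwarz, absorbing one $\nabla_x a_x$-factor into $\cK_1$ with the small prefactor $\delta\|\wtbphi^{(N)}_t\|_{L^\infty_x}$ and the other into $\cN+1$ via the kernel estimates of Lemmas~\ref{lem: estimate of k}--\ref{bound of derivative} (after the further integrations by parts described below). The commutator bound follows the same pattern using~\eqref{eq:number_operator_commuting_z} (the diagonal $a^\ast a$-pieces drop out of $[\cN,\,\cdot\,]$), and the time-derivative bound by differentiating the kernels and invoking the $\dot{\boldsymbol k}_t$-estimates of Lemma~\ref{bound of derivative}, which produce the extra $\|\partial_t\wtbphi^{(N)}_t\|_{L^\infty_x}$.

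The main obstacle is the control of the pairing residuals. Estimated crudely, the $a^\ast_x a^\ast_y$ and $a^\ast_x b^\ast_y$ terms generated by the kinetic conjugation carry a factor of $N$, inherited from the singularity of $\nabla w^{(N)}_{\ii,\ell}$: the naive bound $\pm\big(\iintd T(x,y)\,a^\ast_x a^\ast_y\dd x\d y+\mathrm{h.c.}\big)\lesssim\|T\|_{\mathrm{HS}}\,(\cN+1)$ is useless because $\|\nabla_2\boldsymbol k_t\|_{\mathrm{HS}}$ is only of order $\sqrt N$. The resolution is to integrate by parts repeatedly until every derivative has been transferred off $w^{(N)}_{\ii,\ell}$: the highest-order derivative produces the kept $\Delta w^{(N)}$ term, the pairings carrying one derivative on $w^{(N)}_{\ii,\ell}$ are converted by a further integration by parts into pairings with a kernel built from $w^{(N)}_{\ii,\ell}$ itself (whose $L^1$-norm is of order one) and a \emph{second} derivative of a profile $\wtbphi^{(N)}_t$ — controlled by the $H^2$-regularity of $\wtbphi^{(N)}_t$, available through $\|\partial_t\wtbphi^{(N)}_t\|_{H^1_x}$ and the modified GP system~\eqref{eq:modified_GP_system} — plus terms carrying one spare $\nabla_x a_x$, which go into $\cK_1$ as above. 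Tracking throughout the powers of $\|\wtbphi^{(N)}_t\|_{L^\infty_x}$ against $\|\wtbphi^{(N)}_t\|_{W^{1,\infty}_x}$ (using the $\min(\cdot,1)$ bounds of Lemma~\ref{lem: estimate of k} together with the uniform-in-time boundedness of higher Sobolev norms of $\wtbphi^{(N)}_t$ from Proposition~\ref{prop:propagation_regularity}) then delivers $\mathcal E^{(1)}_K$ with the asserted estimates.
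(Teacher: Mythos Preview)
Your overall plan is correct and matches the paper's structure: expand $\cT_t^\ast\cK_1\cT_t$ as written, isolate from $\intd\nabla_x a^\ast_x\nabla_x A^\ast_x\dd x$ the two $\Delta w^{(N)}_{\ii,\ell}$ pairing terms by integration by parts, and estimate everything else. Two points are worth noting.

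First, for the diagonal term $\intd\nabla_x A^\ast_x\nabla_x A_x\dd x$ your Fourier-multiplier/operator-norm argument (using $\|N^2\nabla w_{\ii,\ell}(N\,\cdot\,)\|_{L^1}=O(1)$) is a legitimate alternative, but the paper proceeds differently: it writes this term as $\d\Gamma(g)$ with $g(y_1,y_2)=\intd\nabla_x k_1(y_1,x)\,\overline{\nabla_x k_1(y_2,x)}\dd x$ and computes $\|g\|_{\rm HS}\le C\|\wtbphi^{(N)}_t\|_{L^\infty_x}^2\|\wtbphi^{(N)}_t\|_{H^1_x}^2$ directly, via the pointwise bound $|\nabla w_\ell(z)|\lesssim|z|^{-2}$ on $|z|\le N\ell$ (this is packaged into a short auxiliary lemma containing all the quadratic kinetic estimates, whose proof also invokes Hardy's inequality for the cross-terms $\intd\nabla_x\alpha^\ast_x\nabla_x A_x\dd x$). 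Both routes work; the paper's gives constants depending only on $\|\wtbphi^{(N)}_t\|_{H^1_x}$ without going through the equation.

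Second, your last paragraph has a slightly suboptimal integration by parts. After one IBP has produced the residual $N\iintd\nabla w^{(N)}_{1,\ell}(x-y)\,\nabla\wt\phi^{(N)}_{1,t}(x)\,\wt\phi^{(N)}_{1,t}(y)\,a^\ast_x a^\ast_y\dd x\d y$, you propose to integrate by parts again in $x$, which lands a second derivative on $\wt\phi^{(N)}_{1,t}$ and forces you to invoke $H^2$-control via the modified GP system. The paper instead uses $\nabla_x w^{(N)}(x-y)=-\nabla_y w^{(N)}(x-y)$ and integrates by parts in $y$, producing one piece with kernel $Nw^{(N)}_{1,\ell}(x-y)\nabla\wt\phi^{(N)}_{1,t}(x)\nabla\wt\phi^{(N)}_{1,t}(y)$ (a pure pairing bounded by $\|\nabla\wt\phi^{(N)}_t\|_{L^\infty_x}\|\nabla\wt\phi^{(N)}_t\|_{L^2_x}(\cN+1)$) and one piece carrying $\nabla_y a^\ast_y$, which is absorbed into $\delta\cK_1$ by Cauchy--Schwarz. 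This avoids $H^2$ altogether and yields directly the stated dependence of $C_\delta$ on $\|\wtbphi^{(N)}_t\|_{H^1_x}$ only.
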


Before we prove the proposition, let us introduce the following estimates:
\begin{lem}\label{lem: quodratic lemma}
	Let $j_1, j_2 \in L^2\left(\mathbb{R}^3 \times \mathbb{R}^3\right)$, and denote $j_{i, x}(z):=j_i(z, x)$ for $i=1,2$. 
	We use the notation $\eta_x = a_x$ or $b_x$, then we have
	\begin{equation}\label{estimate of kinetic 1}
		\n{\intd \big\langle\Psi,\, \eta^{\sharp}(j_{1, x}) \eta^{\sharp}(j_{2, x}) \Psi\big\rangle \dd x} 
		\leq 
		\| j_1 \|_{\rm HS}
		\| j_2 \|_{\rm HS}
		\|(\mathcal{N}+1)^{\frac12} \Psi \|^2 .
	\end{equation}
	
	Moreover, for every $\delta > 0$, there exists $C_\delta > 0$ such that
	\begin{align}\label{estimate of kinetic 2}
		\Big|\intd \left\langle\Psi, \nabla_x \eta_x^*\, \eta^{\sharp}(j_{1, x}) \Psi\right\rangle \dd x \Big| 
		\leq 
		\delta \| j_1 \|_{\rm HS} \langle\Psi, \mathcal{K} \Psi\rangle
		+ C_\delta \| j_1 \|_{\rm HS} \|(\mathcal{N}+1)^{\frac12} \Psi \|^2. 
	\end{align}
	Here, $\cK$ is kinetic energy part in the Hamiltonian. Furthermore, let $\theta_x = A_x$ or $B_x$, we also have
	\begin{multline}\label{estimate of kinetic 3}
		\Big|\intd \big\langle\Psi, \nabla_x \theta^\ast_x \, \eta^{\sharp}(j_{1, x}) \Psi\big\rangle \dd x \Big| 
		\leq\, 
		\delta \| \boldsymbol{k}_t \|_{\rm HS}
		\| j_1 \|_{\rm HS}
		\langle\Psi, \mathcal{K} \Psi\rangle\\
		+
		C_\delta \big(
		\| \boldsymbol{k}_t \|_{\rm HS} + \| j_1 \|_{\rm HS} \big) \langle\Psi,(\mathcal{N}+1) \Psi\rangle.
	\end{multline}
	Finally, we have
	\begin{equation}\label{estimate of kinetic 4}
		\n{\intd \big\langle\Psi, 
			\nabla_x \theta^\ast_x 
			\nabla_x \theta_x  \Psi\big\rangle \dd x} 
		\leq 
		C \|\widetilde{\boldsymbol{\phi}}_t^{(N)}
		\|_{L^\infty_x}^2 
		\|\mathcal{N}^{\frac12} \Psi \|^2.
	\end{equation}
	To control the time derivative of $\mathcal{E}^{(\jj)}_K(t)$, we also use the following bounds. For every $\delta > 0$, there exists $C_\delta > 0$ such that
	\begin{multline}\label{estimate of kinetic 5}
		\Big|\intd \big\langle\Psi, \nabla_x \dot{\theta}^\ast_x \, \eta^{\sharp}(j_{1, x}) \Psi\big\rangle \dd x \Big| 
		\leq 
		\delta \| \dot{\boldsymbol{k}}_t \|_{\rm HS}
		\| j_1 \|_{\rm HS}
		\langle\Psi, \mathcal{K} \Psi\rangle\\
		+
		C_\delta \big(
		\| \dot{\boldsymbol{k}}_t \|_{\rm HS} + \| j_1 \|_{\rm HS} \big) \langle\Psi,(\mathcal{N}+1) \Psi\rangle.
	\end{multline}
	Moreover, we have
	\begin{equation}\label{estimate of kinetic 6}
		\n{\intd \big\langle\Psi, 
			\nabla_x \dot{\theta}^\ast_x 
			\nabla_x \theta_x  \Psi\big\rangle \dd x} 
		\leq 
		C \big(\|\partial_t \widetilde{\boldsymbol{\phi}}_t^{(N)}
		\|_{L^\infty_x}^2
		+ \|\widetilde{\boldsymbol{\phi}}_t^{(N)}
		\|_{L^\infty_x}^2 \big)
		\|\mathcal{N}^{\frac12} \Psi \|^2.
	\end{equation}
\end{lem}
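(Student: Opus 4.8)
The plan is to follow the template of \cite[Section~5 and Appendix~B--C]{Benedikter2015quantitative}: each of the six bounds reduces, via a Cauchy--Schwarz estimate in the spatial variable $x$, to three ingredients. First, the elementary bounds \eqref{est:creation and annihilation} together with their consequences $\int\|a(j_{x})\Psi\|^{2}\,dx\le\|j\|_{\rm HS}^{2}\|\cN^{1/2}\Psi\|^{2}$ and $\int\|a^{*}(j_{x})\Psi\|^{2}\,dx\le\|j\|_{\rm HS}^{2}\|(\cN+1)^{1/2}\Psi\|^{2}$ (and likewise with $b_{x}$). Second, the identities $\int\nabla_{x}a_{x}^{*}\nabla_{x}a_{x}\,dx=\cK_{1}$ and $\int\nabla_{x}b_{x}^{*}\nabla_{x}b_{x}\,dx=\cK_{2}$, which convert a genuine kinetic gradient into a factor of $\cK$. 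Third, the Hilbert--Schmidt, supremum-in-$x$, and pointwise bounds on $\boldsymbol{k}_{t}$, $\operatorname{p}(\boldsymbol{k}_{t})$, $\operatorname{r}(\boldsymbol{k}_{t})$ and their spatial gradients collected in Lemma~\ref{lem: estimate of k}, to be replaced, in \eqref{estimate of kinetic 5}--\eqref{estimate of kinetic 6}, by the corresponding bounds for $\dot{\boldsymbol{k}}_{t}$ from Lemma~\ref{bound of derivative}.

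For \eqref{estimate of kinetic 1} I would apply Cauchy--Schwarz in $x$ and then once more, bounding the integral by $\big(\int\|(\eta^{\sharp}(j_{1,x}))^{*}\Psi\|^{2}\,dx\big)^{1/2}\big(\int\|\eta^{\sharp}(j_{2,x})\Psi\|^{2}\,dx\big)^{1/2}$ and invoking the first ingredient; since each factor is bounded by $\|j_{i}\|_{\rm HS}\|(\cN+1)^{1/2}\Psi\|$, the claimed estimate follows. For \eqref{estimate of kinetic 2} I would split off the kinetic factor, $\big|\int\langle\Psi,\nabla_{x}\eta_{x}^{*}\,\eta^{\sharp}(j_{1,x})\Psi\rangle\,dx\big|\le\langle\Psi,\cK\Psi\rangle^{1/2}\,\|j_{1}\|_{\rm HS}\,\|(\cN+1)^{1/2}\Psi\|$, and then use Young's inequality $ab\le\delta a^{2}+(4\delta)^{-1}b^{2}$ with the common weight $\|j_{1}\|_{\rm HS}$ distributed symmetrically between the two terms, which is exactly the stated form.

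The estimates \eqref{estimate of kinetic 3}--\eqref{estimate of kinetic 4}, and their time-differentiated analogues \eqref{estimate of kinetic 5}--\eqref{estimate of kinetic 6}, are the delicate ones: $\theta_{x}=A_{x}$ or $B_{x}$ is built from $\boldsymbol{k}_{t}$, and a naive Cauchy--Schwarz placing the full $\nabla_{x}$ on $\theta_{x}$ would cost $\|\nabla_{1}\boldsymbol{k}_{t}\|_{\rm HS}\sim\sqrt{N}\,\|\widetilde{\boldsymbol{\phi}}_{t}^{(N)}\|_{L^{\infty}_{x}}$, whereas the claimed bounds carry only $\|\boldsymbol{k}_{t}\|_{\rm HS}\lesssim\|\widetilde{\boldsymbol{\phi}}_{t}^{(N)}\|_{L^{\infty}_{x}}$. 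The remedy is an integration by parts in $x$: the kernels $k_{\ii}(z,x)$ depend on the singular factor $w_{\ii,\ell}$ only through $w_{\ii,\ell}(N(z-x))$, so $\nabla_{x}[w_{\ii,\ell}(N(z-x))]=-\nabla_{z}[w_{\ii,\ell}(N(z-x))]$, and transferring that derivative onto the creation/annihilation distribution turns $\nabla_{x}\theta_{x}^{(*)}$ into a kinetic gradient $\nabla_{z}a_{z}^{(*)}$, resp. $\nabla_{z}b_{z}^{(*)}$, paired against the $O(1)$ kernel $\boldsymbol{k}_{t}$, plus remainder operators in which $\nabla$ lands harmlessly on a factor $\widetilde{\boldsymbol{\phi}}_{t}^{(N)}$ and whose Hilbert--Schmidt norms are again $O(1)$ by the computation in Lemma~\ref{lem: estimate of k}. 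After this reduction one argues as for \eqref{estimate of kinetic 2}, using $\int\|\nabla_{z}a_{z}\Psi\|^{2}\,dz=\langle\Psi,\cK_{1}\Psi\rangle$, the bound $\sup_{x}\|\boldsymbol{k}_{t}(\cdot,x)\|_{L^{2}}\lesssim\|\widetilde{\boldsymbol{\phi}}_{t}^{(N)}\|_{L^{\infty}_{x}}$ from Lemma~\ref{lem: estimate of k} to carry out the remaining $x$-integral against $|\widetilde{\boldsymbol{\phi}}_{t}^{(N)}(x)|^{2}$, and Young's inequality to distribute the $\delta\,\cK$ and $C_{\delta}(\cN+1)$ contributions. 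For \eqref{estimate of kinetic 5}--\eqref{estimate of kinetic 6} the same argument runs verbatim with $\dot{\boldsymbol{k}}_{t}$ in place of $\boldsymbol{k}_{t}$ (and the extra kernels in which $\partial_{t}$ hits a factor $\widetilde{\boldsymbol{\phi}}_{t}^{(N)}$), using Lemma~\ref{bound of derivative}. I expect this integration-by-parts bookkeeping---making sure every gradient ends up either against an $O(1)$ kernel or as a genuine kinetic gradient, so that no spurious factor of $\sqrt{N}$ survives---to be the only real obstacle; the rest is routine.
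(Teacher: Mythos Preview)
Your plan matches the paper for \eqref{estimate of kinetic 1}, \eqref{estimate of kinetic 2}, \eqref{estimate of kinetic 3} and \eqref{estimate of kinetic 5}: the paper likewise invokes \cite[Lemma~6.2]{Benedikter2015quantitative} for the first two, and for the latter two performs exactly the integration by parts you describe, transferring the singular part of $\nabla_x k_{\ii}(z,x)$ onto $\nabla_z a_z^{(*)}$ and treating the pieces where $\nabla$ lands on $\wtbphi^{(N)}$ as remainders (the paper closes those remainders with Hardy's inequality rather than a bare HS bound, but your description is morally equivalent).

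For \eqref{estimate of kinetic 4} and \eqref{estimate of kinetic 6}, however, your proposal of ``arguing as for \eqref{estimate of kinetic 2}'' does not give the stated bound. After your integration by parts, the main term in $\int\nabla_x A_x^{*}\nabla_x A_x\,dx$ becomes $\int\|\int k_1(z,x)\nabla_z a_z\Psi\,dz\|^{2}\,dx$, and Cauchy--Schwarz in $z$ produces $\|\boldsymbol{k}_t\|_{\rm HS}^{2}\,\langle\Psi,\cK\Psi\rangle$, with a fixed coefficient and no $\|(\cN+1)^{1/2}\Psi\|$ factor to trade against via Young. That is strictly weaker than \eqref{estimate of kinetic 4}, which is a pure $\|\cN^{1/2}\Psi\|^{2}$ bound with no $\cK$, and would obstruct the choice of $\delta$ in Proposition~\ref{prop:estimate of fluctuation}. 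The paper instead observes that $\int a^{*}(\nabla_x k_{1,x})\,a(\nabla_x k_{1,x})\,dx=\dGamma(g)$ with $g(y_1,y_2)=\int\nabla_x k_1(y_1,x)\,\overline{\nabla_x k_1(y_2,x)}\,dx$, and bounds $\|g\|_{\rm HS}$ directly: using the pointwise estimates $|Nw_\ell(Nx)|\lesssim|x|^{-1}$ and $|N^{2}\nabla w_\ell(Nx)|\lesssim|x|^{-2}$ on the support $|x|\le\ell$, the $x$-integration kills the two singular $N$-factors and one obtains $\|g\|_{\rm HS}\le C\|\wtbphi^{(N)}_t\|_{L^\infty_x}^{2}\|\wtbphi^{(N)}_t\|_{H^1_x}^{2}$, whence $\langle\Psi,\dGamma(g)\Psi\rangle\le\|g\|_{\rm HS}\|\cN^{1/2}\Psi\|^{2}$. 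The same direct composition estimate, with one factor $\boldsymbol{k}_t$ replaced by $\dot{\boldsymbol{k}}_t$, handles \eqref{estimate of kinetic 6}. You should single out \eqref{estimate of kinetic 4} and \eqref{estimate of kinetic 6} for this separate treatment.
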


\begin{proof}
	The proof of \eqref{estimate of kinetic 1} and \eqref{estimate of kinetic 2} follows directly from \cite[Lemma 6.2]{ Benedikter2015quantitative}.
	While the proof of \eqref{estimate of kinetic 3} and \eqref{estimate of kinetic 4} is a bit different. For \eqref{estimate of kinetic 3}, we use the same estimate as in \cite[Lemma 6.2]{ Benedikter2015quantitative} but with a bound explicitly depending on $\| \boldsymbol{k}_t \|_{\rm HS}$. 
	For example, we consider the term
	\begin{equation}\label{def: example_term}
		\intd  a^* (\nabla_x k_{1,x} ) \eta^{\sharp}(j_{1, x}) \dd x.
	\end{equation}
	Using integration by parts, the definition of $k_1$, and the estimate of $w_{1,\ell}^{(N)}$ in Lemma~\ref{lem:neumann_scattering_function} (cf. \cite[Lemma 6.2]{ Benedikter2015quantitative}), we have 
	\begin{equation}\label{estimtate of nabla k and j}
		\begin{aligned}
			\n{\left\langle\Psi, \eqref{def: example_term} \Psi\right\rangle}
			&\leq 
			\iintd  |k_{1}(x, y)| \|\nabla_y a_y \Psi\| \|\eta^{\sharp}(j_{1, x}) \Psi\| 
			\dd x \d y\\
			&
			\quad+ C \iintd \frac{1}{|x-y|}
			\Big( |\nabla \wt\phi_{1,t}^{(N)}(x) |
			|\wt\phi_{1,t}^{(N)}(y) | \\
			&\qquad\qquad\qquad\qquad
			+  | \nabla \wt\phi_{1,t}^{(N)}(y) |
			| \wt\phi_{1,t}^{(N)}(x) |\Big)
			\|a_y \Psi\| \|\eta^{\sharp}(j_{1, x})\Psi\| \dd x \d y.
		\end{aligned}
	\end{equation}
	This implies that
	\begin{equation*}
		\begin{aligned}
			\text{RHS of }\eqref{estimtate of nabla k and j} 
			\leq 
			&\, \|k_{1} \|_{\rm HS} \Big(
			\iintd \|\nabla_y a_y \Psi \|^2
			\|\eta^{\sharp}(j_{1, x}) \Psi \|^2 \dd x \d y \Big)^{\frac{1}{2}}\\
			& 
			+C \Big(\iintd  \frac{1}{|x-y|^2}  |\wt\phi_{1,t}^{(N)}(y) |^2 
			\|\eta^{\sharp}(j_{1, x}) \Psi \|^2 
			\dd x \d y\Big)^{\frac{1}{2}}\\
			&\qquad \times 
			\Big( \iintd |\nabla \wt\phi_{1,t}^{(N)}(x) |^2
			\| a_y \Psi \|^2 \dd x \d y\Big)^{\frac{1}{2}} \\
			& 
			+C \Big(\iintd  \frac{1}{|x-y|^2} |\wt\phi_{1,t}^{(N)}(x) |^2
			\|a_y \Psi \|^2 
			\dd x \d y \Big)^{\frac{1}{2}}\\
			&\qquad \times
			\Big( \iintd |\nabla \wt\phi_{1,t}^{(N)}(y) |^2
			\|\eta^{\sharp}(j_{1, x}) \Psi\|^2 
			\dd x \d y \Big)^{\frac{1}{2}}.
		\end{aligned}
	\end{equation*}
	Using the fact $\intd \|\eta^{\sharp}(j_{1, x}) \Psi \|^2 \dd x \le \| j_1 \|_{\rm HS}^2 \|(\mathcal{N}+1)^{\frac12} \Psi\|^2$, and applying  Hardy’s inequality, we conclude that
	\begin{align*}
		\n{\left\langle\Psi, \eqref{def: example_term} \Psi\right\rangle}
		\leq 
		\delta \|k_{1} \|_{\rm HS} \| j_1 \|_{\rm HS}
		\langle\Psi, \mathcal{K} \Psi\rangle
		+
		C_\delta \big( \|k_{1} \|_{\rm HS} + \| j_1 \|_{\rm HS}\big) \langle\Psi,(\mathcal{N}+1) \Psi\rangle .
	\end{align*}
	Here, the constant $C_\delta$ depends on $\|\wt\phi_{1,t}^{(N)} \|_{H^1_x}$.
	The other cases in \eqref{estimate of kinetic 3} and the bound \eqref{estimate of kinetic 5} can be proven similarly.
	
	As for \eqref{estimate of kinetic 4}, by the Cauchy--Schwarz inequality, we have that
	\begin{equation}\label{quodratic estimate of ak}
		\intd \big\langle\Psi,\,
		a^*\big(\nabla_x k_{1,x}\big) a\big(\nabla_x k_{1,x}\big) \Psi \big\rangle \dd x
		\le 
		\|g\|_{\rm HS} \|\mathcal{N}^{\frac12} \Psi \|^2.
	\end{equation}
	where $g(y_1,y_2)= \intd  \nabla_x k_{1}(y_1, x) \nabla_x \conj{k_{1}}(y_2, x) \dd x$. 
	
	Furthermore, using the definition of $k_1$ and Lemma~\ref{lem:neumann_scattering_function}, we have
	\begin{equation*}
		\begin{aligned}
			&\n{g(y_1, y_2)}^2 \\
			&\leq\, 
			C  \int_{\substack{|x_1-y_1|\le\ell,\\|x_1-y_2|\le\ell}}
			\int_{
				\substack{
					|x_2-y_1|\le\ell,\\|x_2-y_2|\le\ell}
			} 
			\frac{\big|\wt\phi_{1,t}^{(N)}(x_1)\big|^2
				\big|\wt\phi_{1,t}^{(N)}(x_2) \big|^2 
				\big|\wt\phi_{1,t}^{(N)}(y_1)\big|^2
				\big|\wt\phi_{1,t}^{(N)}(y_2)\big|^2}
			{|x_1-y_1|^2\left|x_1-y_2\right|^2\left|x_2-y_1\right|^2\left|x_2-y_2\right|^2} \dd x_2 \d x_1  \\
			&\quad +
			C   \int_{\substack{|x_1-y_1|\le\ell,\\|x_1-y_2|\le\ell}}
			\int_{\substack{|x_2-y_1|\le\ell,\\|x_2-y_2|\le\ell}} 
			\frac{\big|\nabla\wt\phi_{1,t}^{(N)}(x_1)\big|^2
				\big|\nabla\wt\phi_{1,t}^{(N)}(x_2) \big|^2 
				\big|\wt\phi_{1,t}^{(N)}(y_1)\big|^2
				\big|\wt\phi_{1,t}^{(N)}(y_2)\big|^2}{\left|x_1-y_1\right|\left|x_1-y_2\right|\left|x_2-y_1\right|\left|x_2-y_2\right|} \dd x_2 \d x_1 \, .
		\end{aligned}
	\end{equation*}
	Then, integrate and applying the Cauchy--Schwarz inequality to the second term yields 
	\begin{equation*}
		\begin{aligned}
			&\|g\|_{\rm HS}^2 \\
			&\leq\,  
			C \|\wt\phi_{1,t}^{(N)} \|_{L^\infty_x}^4
			\iintd  |\wt\phi_{1,t}^{(N)} (y_1) |^2
			|\wt\phi_{1,t}^{(N)} (y_2) |^2
			\(\int_{
				\substack{
					|x-y_1|\le\ell,\\|x-y_2|\le\ell}
			} 
			\frac{\d x}{\left|x-y_1\right|^2\left|x-y_2\right|^2} \)^2 \dd y_1 \d y_2 \\
			& \phantom{=} +
			C   \|\wt\phi_{1,t}^{(N)} \|_{L^\infty_x}^4\iintd \int_{\substack{|x_1-y_1|\le\ell,\\|x_1-y_2|\le\ell}}
			\int_{\substack{|x_2-y_1|\le\ell,\\|x_2-y_2|\le\ell}}
			\frac{ |\nabla \wt\phi_{1,t}^{(N)} (x_1) |^2
				|\nabla \wt\phi_{1,t}^{(N)}(x_2) |^2}
			{\left|x_1-y_1\right|^2\left|x_2-y_2\right|^2} \dd x_2 \d x_1 \d y_1 \d y_2 
		\end{aligned}
	\end{equation*}
	which leads to
	\begin{equation*}
		\begin{aligned}
			\|g\|_{\rm HS}^2 
			\leq&\, C
			\|\wt\phi_{1,t}^{(N)} \|_{L^\infty_x}^4
			\|\wt\phi_{1,t}^{(N)} \|_{H^1_x}^4.
		\end{aligned}
	\end{equation*}
	Substitute the result back into \eqref{quodratic estimate of ak} yields the desired bound. The other terms in \eqref{estimate of kinetic 4} and \eqref{estimate of kinetic 6} can be bounded similarly.
\end{proof}

Now, we are ready to prove Proposition \ref{prop: Kinetic term}.
\begin{proof}[Proof of Proposition \ref{prop: Kinetic term}] 
	We prove the first bound in \eqref{estimate of E_K}. Note that from the definition of $(\alpha_x,\beta_x)$ in \eqref{def: alpha and beta}, the operator $\alpha_x$ and $\beta_x$ can be written in the form of $a^\sharp(j_{1,x}) + b^\sharp(j_{2,x})$, where $\| j_1 \|_{\rm HS}$, $\| j_2 \|_{\rm HS}$ can be bounded by the sum of $\| \operatorname{p}(\boldsymbol{k}_t) \|_{\rm HS}$ and $\| \operatorname{r}(\boldsymbol{k}_t) \|_{\rm HS}$. Therefore, applying Lemma \ref{lem: quodratic lemma} and Lemma \ref{lem: estimate of k}, we estimate the terms in the remainder $\mathcal{E}_K$ except the following terms
	\begin{equation}\label{difficult terms}
		\begin{aligned}
			&\intd \nabla_xa^*_x\nabla_xA^*_x \dd x,
			\qquad 
			\intd \nabla_xb^*_x\nabla_xB^*_x \dd x,
		\end{aligned}
	\end{equation}
	and their hermitian conjugate. It suffices to consider the first term \eqref{difficult terms}. We start by writing
	\begin{equation*}
		\nabla_x k_{1,t}(y, x)
		=
		-N \nabla( w^{N}_{1,\ell}) (x-y) \wt\phi_{1,t}^{(N)}(x) \wt\phi_{1,t}^{(N)}(y) - N w^{N}_{1,\ell} (x-y) \nabla \wt\phi_{1,t}^{(N)}(x) \wt\phi_{1,t}^{(N)}(y).
	\end{equation*}
	Hence, one can write
	\begin{equation}\label{split}
		\begin{aligned}
			\intd \nabla_x a_x^* a^*\left(\nabla_x k_{1,t}\right)\dd x 
			= & 
			-N \intd \nabla w_{1,\ell}^{(N)} (x-y) \wt\phi_{1,t}^{(N)}(x) \wt\phi_{1,t}^{(N)}(y) \nabla_x a_x^* a_y^* \dd x \d y \\
			& -N \intd w_{1,\ell}^{(N)} (x-y) \nabla \wt\phi_{1,t}^{(N)}(x) \wt\phi_{1,t}^{(N)}(y) \nabla_x a_x^* a_y^* \dd x \d y.
		\end{aligned}
	\end{equation}
	The last term on the RHS of \eqref{split} can be written as
	\begin{equation*}
		\intd \nabla_x a_x^* a^*\left(j_x\right)\dd x
		\quad \text{with}\quad
		j(y, x)  
		= 
		-N w_{1,\ell}^{(N)} (x-y) \nabla \wt\phi_{1,t}^{(N)}(x) \wt\phi_{1,t}^{(N)}(y).
	\end{equation*}
	Moreover, using the bound of $w_{1,\ell}^{(N)}$ one has the estimate:
	\begin{equation*}
		\Vert j\Vert_{\rm HS}
		\le C 
		\|\wt\phi_{1,t}^{(N)}\|_{L^\infty_x}
		\|\wt\phi_{1,t}^{(N)}\|_{H^1_x}.
	\end{equation*}
	Therefore, one can bound the last term on the RHS of \eqref{split} in the same manner as \eqref{estimate of kinetic 2}.
	The first term on the RHS of \eqref{split}, on the other hand, can be written as
	\begin{equation*}
		\begin{aligned}
			-&N \iintd \nabla w_{1,\ell}^{(N)} (x-y) \wt\phi_{1,t}^{(N)}(x)  \wt\phi_{1,1}^{(N)}(y) \nabla_x a_x^* a_y^* \dd x \d y  \\
			&=\,  
			N \iintd  \Delta w_{1,\ell}^{(N)} (x-y) \wt\phi_{1,t}^{(N)}(x) \wt\phi_{1,t}^{(N)}(y) a_x^* a_y^*  \dd x \d y  \\
			&\,\phantom{=} +
			N \iintd  \nabla w_{1,\ell}^{(N)} (x-y) \nabla \wt\phi_{1,t}^{(N)}(x) \wt\phi_{1,t}^{(N)}(y) a_x^* a_y^*  \dd x \d y  .
		\end{aligned}
	\end{equation*}
	Integrating by parts, the second term is written as
	\begin{equation*}
		\begin{aligned}
			& N \iintd \nabla  w_{1,\ell}^{(N)} (x-y) \nabla \wt\phi_{1,t}^{(N)}(x) \wt\phi_{1,t}^{(N)}(y) a_x^* a_y^* \dd x \d y  \\
			&= -N \iintd  w_{1,\ell}^{(N)} (x-y) \nabla \wt\phi_{1,t}^{(N)}(x) \nabla \wt\phi_{1,t}^{(N)}(y) a_x^* a_y^* \dd x \d y  \\
			&\quad -N \iintd w_{1,\ell}^{(N)} (x-y) \nabla \wt\phi_{1,t}^{(N)}(x) \wt\phi_{1,t}^{(N)}(y) a_x^* \nabla_y a_y^* \dd x \d y  \\
			&= - 
			\intd \nabla \phi_{1,t}^{(N)}(x) a_x^* a^*\left(N w_{1,\ell}^{(N)} (x-.) \nabla \wt\phi_{1,t}^{(N)}\right) \dd x  
			+\intd \nabla_y a_y^* a^*\left(j_y\right) \dd y
		\end{aligned}
	\end{equation*}
	with $j(y, x)=-N w_{1,\ell}^{(N)}(x-y) \nabla \wt\phi_{1,t}^{(N)}(x) \wt\phi_{1,t}^{(N)}(y)$. 
	Therefore, the second term in the last line can be bounded similarly as \eqref{estimate of kinetic 2}. On the other hand, by \eqref{est:creation and annihilation}, the first term is estimate by
	\begin{equation*}
		\begin{aligned}
			\Big|\intd \nabla \wt\phi_{1,t}^{(N)}(x)&\Big\langle a_x \Psi, a^*\big(N w_{1,\ell}^{(N)} (x-.) \nabla \wt\phi_{1,t}^{(N)}\big) \Psi\Big\rangle \dd x  \Big| \\
			& \leq C 
			\|\nabla \wt\phi_{1,t}^{(N)}\|_{L^\infty_x} 
			\|N w_{1,\ell}^{(N)} (x-y) \nabla \wt\phi_{1,t}^{(N)}(y)\|_{\rm HS}
			\|(\mathcal{N}+1)^{\frac12} \Psi\|^2 \\
			& \leq C 
			\|\nabla \wt\phi_{1,t}^{(N)}\|_{L^\infty_x}
			\|\nabla \wt\phi_{1,t}^{(N)}\|_{L^2_x}
			\|(\mathcal{N}+1)^{\frac12} \Psi\|^2,
		\end{aligned}
	\end{equation*}
	which completes the proof of the proposition.
\end{proof}

\subsection{Estimates for the Interaction Terms}

Recall the interaction operator $\mathcal{V}$ defined in \eqref{eq:multiFockH}.

The properties of $\cT_t^*\, \mathcal{V}\, \cT_t$ are summarized in the next proposition.

\begin{prop}\label{prop: quartic term}
	The term $\cT_t^*\, \mathcal{V}\, \cT_t$ can be rewritten in the form of
	
	\begin{equation}\label{formula of interaction term in generator}
		\begin{aligned}
			\cT_t^*\, \mathcal{V}\, \cT_t 
			&= C_{N, V}^{(1)} +C_{N, V}^{(2)}+C_{N, V}^{(12)} +\cV+ \mathcal{E}_V\\
			&\quad + \frac{1}{2}\iintd N^{2} V^\lambda_{1} (N(x-y))\,k_{1}(x, y) a_x^* a_y^*\dd x\d y+\mathrm{h.c.}\\
			&\quad + \frac{1}{2}\iintd N^{2} V^\lambda_{2} (N(x-y))\, k_{2}(x, y) b_x^* b_y^*\dd x\d y+\mathrm{h.c.}\\
			&\quad + \iintd N^{2} V^\lambda_{12}(N(x-y))\,k_{12}(x, y) a_x^* b_y^*\dd x\d y+\mathrm{h.c.},
		\end{aligned}   
	\end{equation}
	with the remainder term satisfying
	\begin{equation}\label{est:E_V}
		\begin{aligned}
			\pm \mathcal{E}_V(t) 
			\leq&\, 
			\delta \|\widetilde{\boldsymbol{\phi}}_t^{(N)}
			\|_{L^\infty_x}
			\mathcal{V}
			+
			C_\delta \lambda \|\widetilde{\boldsymbol{\phi}}_t^{(N)}
			\|_{L^\infty_x} 
			\bigg(\frac{\mathcal{N}^2}{N}+\mathcal{N}+1\bigg), \\
			\pm\left[\mathcal{N}, \mathcal{E}_V(t)\right] 
			\leq&\,  
			\delta \|\widetilde{\boldsymbol{\phi}}_t^{(N)}
			\|_{L^\infty_x}
			\mathcal{V}
			+
			C_\delta \lambda \|\widetilde{\boldsymbol{\phi}}_t^{(N)}
			\|_{L^\infty_x}
			\bigg(\frac{\mathcal{N}^2}{N}+\mathcal{N}+1\bigg), \\
			\pm \dot{\mathcal{E}}_V(t) 
			\leq&\, 
			\delta \big(\|\widetilde{\boldsymbol{\phi}}_t^{(N)}
			\|_{L^\infty_x} 
			+ \|\partial_t \widetilde{\boldsymbol{\phi}}_t^{(N)}
			\|_{L^\infty_x} \big)
			\mathcal{V}\\
			&+
			C_\delta \lambda (\|\widetilde{\boldsymbol{\phi}}_t^{(N)}
			\|_{L^\infty_x} 
			+ \|\partial_t \widetilde{\boldsymbol{\phi}}_t^{(N)}
			\|_{L^\infty_x} )
			\bigg(\frac{\mathcal{N}^2}{N}+\mathcal{N}+1\bigg),
		\end{aligned}
	\end{equation}
	where $\mathcal{V}$ is the interaction part in the Hamiltonian, and constant $C_\delta$ depends on $\delta$, $\|\widetilde{\boldsymbol{\phi}}_t^{(N)}
	\|_{H^1_x}$ and $\|\partial_t \widetilde{\boldsymbol{\phi}}_t^{(N)}
	\|_{H^1_x}$.
\end{prop}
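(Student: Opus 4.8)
The plan is to mimic the one--component analysis of \cite[Section~7]{Benedikter2015quantitative}, the new ingredient being the inter--species piece $\mathcal{V}_{12}$, which is treated in exact parallel with $\mathcal{V}_1,\mathcal{V}_2$. First I would substitute the Bogoliubov relations \eqref{eq:creation_annihilation_expansion} into each of the three summands of $\mathcal{V}=\mathcal{V}_1+\mathcal{V}_2+\mathcal{V}_{12}$ defined in \eqref{eq:multiFockH}: every $a_x^\sharp,a_y^\sharp$ (resp.\ $b$) is replaced by $a_x^\sharp+A_x^\sharp+\alpha_x^\sharp$ (resp.\ $b_x^\sharp+B_x^\sharp+\beta_x^\sharp$), so $\cT_t^*\mathcal{V}_\ii\cT_t$ becomes a sum of $3^4$ monomials weighted by $N^2V^\lambda_\ii(N(x-y))$ and (at most two) factors of $\wtbphi^{(N)}_t$. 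The monomial in which all four factors are $a^\sharp$ (resp.\ $b^\sharp$, resp.\ $a^\sharp b^\sharp$) reproduces $\mathcal{V}$ exactly. The pair terms stated in \eqref{formula of interaction term in generator} arise from the monomials of schematic shape (two creations)$\cdot$(one annihilation)$\cdot$(one $A^*$ or $\alpha^*$): normal--ordering via the contraction identities \eqref{eq:commute formula}, e.g.\ $a_y A_x^* = k_1(y,x)+A_x^* a_y$ and the symmetry $k_\ii(x,y)=k_\ii(y,x)$ from \eqref{def:k_i}, extracts precisely $\tfrac12\iintd N^2V^\lambda_1(N(x-y))\,k_1(x,y)\,a_x^*a_y^*$ and its symmetric counterparts (with $k_2,b^*b^*$ and $k_{12},a^*b^*$), the combinatorial constants matching those in the statement so that these terms are set up to cancel against the kinetic contributions of Proposition~\ref{prop: Kinetic term} in Proposition~\ref{prop: cancellation}; the analogous monomials with $A^\sharp$ replaced by $\alpha^\sharp$ produce $\operatorname{p}(\boldsymbol{k}_t),\operatorname{r}(\boldsymbol{k}_t)$--weighted terms which, being lower order by Lemma~\ref{lem: estimate of k}, are absorbed into $\mathcal{E}_V$.

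The scalar (fully contracted) monomials are collected into $C_{N,V}^{(1)}+C_{N,V}^{(2)}+C_{N,V}^{(12)}$; these are finite double integrals of $N^2V^\lambda_\ii(N(x-y))$ against $|\boldsymbol{k}_t(x,y)|_{\rm F}^2$--type kernels, bounded via the pointwise estimates \eqref{pointwise bound}. The bulk of the work is the estimate \eqref{est:E_V} for $\mathcal{E}_V$. Every leftover monomial is a product of two, three, or four of the operators $a^\sharp,b^\sharp,A^\sharp,B^\sharp,\alpha^\sharp,\beta^\sharp$ against the weight $N^2V^\lambda_\ii(N(x-y))$. I would handle the quartic leftovers by writing $N^2V^\lambda_\ii(N(x-y))=\bigl(N^2V^\lambda_\ii(N(x-y))\bigr)^{1/2}\!\cdot\!\bigl(N^2V^\lambda_\ii(N(x-y))\bigr)^{1/2}$ and Cauchy--Schwarz, controlling one half by $\iintd N^2V^\lambda_\ii(N(x-y))\,\|a_ya_x\Psi\|^2\,\d x\,\d y=2\langle\Psi,\mathcal{V}_\ii\Psi\rangle$ (keeping the full interaction, which yields the $\delta\,\|\wtbphi^{(N)}_t\|_{L^\infty_x}\mathcal{V}$ term) and the other half by the Hilbert--Schmidt bounds of Lemma~\ref{lem: estimate of k} together with Young's inequality $\|N^3V^\lambda_\ii(N\cdot)*|\wt\phi^{(N)}_{\jj,t}|\|_{L^\infty_x}\lesssim\lambda\|\wt\phi^{(N)}_{\jj,t}\|_{L^\infty_x}$; quadratic leftovers use (the obvious analogues of) Lemma~\ref{lem: quadratic}, and cubic leftovers are handled by interpolating between the quadratic and quartic bounds exactly as in the proof of Proposition~\ref{prop: T G1 G2 T}. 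The factor $1/N$ in front of $\mathcal{N}^2$ in \eqref{est:E_V} is produced by the extra $N^{-1}$ left over when two $A$'s (or $\alpha$'s) are contracted against the singular weight $N^2V^\lambda_\ii(N\cdot)$. The commutator bound follows because $[\mathcal{N},\cdot]$ preserves the structure of each monomial (the number--preserving ones drop out of the commutator), and the time--derivative bound follows by the Leibniz rule, using Lemma~\ref{bound of derivative} in place of Lemma~\ref{lem: estimate of k} whenever $\partial_t$ hits $\boldsymbol{k}_t$, which costs an additional $\|\partial_t\wtbphi^{(N)}_t\|_{L^\infty_x}$.

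The main obstacle is bookkeeping rather than any single hard estimate: one must track the combinatorial constants with care so that the extracted pair terms $N^2V^\lambda_\ii k_\ii$ come out with exactly the coefficients written (these are what make Proposition~\ref{prop: cancellation} work, via the Neumann scattering identity \eqref{eq:scatlN}), and one must verify that the $\lambda$--dependence is linear and that the $N^{-1}$ gains appear in precisely the places needed to produce $\mathcal{N}^2/N$ (never bare $\mathcal{N}^2$); the inter--species monomials require no new idea but roughly double the number of cases to check. A secondary technical point is that a few quartic leftovers involve a derivative $\nabla\wtbphi^{(N)}_t$ coming from $\nabla_x k_\ii$ after integration by parts, which must be absorbed into the $H^1_x$--dependence of $C_\delta$ exactly as in the proof of Proposition~\ref{prop: Kinetic term}.
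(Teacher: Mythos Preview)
Your proposal is correct and follows essentially the same route as the paper: expand via \eqref{eq:creation_annihilation_expansion}, isolate $\mathcal{V}$ from the pure $a^\sharp,b^\sharp$ monomial, extract the $k_\ii$ pair terms from the commutators $[a_y,A_x^*]=k_1(y,x)$ etc., and bound the remainder by Cauchy--Schwarz against $\mathcal{V}$ together with Lemmas~\ref{lem: quadratic}--\ref{lemma: quatic} (the $N^{-1}$ in front of $\mathcal{N}^2$ coming from the mismatch $N^2V^\lambda_\ii$ versus the $N^3V^\lambda_\ii$ normalization in those lemmas). Your final remark about quartic leftovers involving $\nabla\wtbphi^{(N)}_t$ via integration by parts is misplaced: no integration by parts occurs in the interaction analysis (that mechanism belongs to Proposition~\ref{prop: Kinetic term}), so you may simply drop that concern.
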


\begin{proof}
	To prove the first inequality in \eqref{est:E_V}, we use \eqref{eq:creation_annihilation_expansion} in $\cT_t^*\, \mathcal{V}\, \cT_t$. 
	Then picking up the terms with only $a^\sharp$ and $b^\sharp$ will produce $\cV$ in the RHS of \eqref{formula of interaction term in generator}. 
	
	Next, we consider the terms containing three $a^{\sharp}$ or $b^{\sharp}$.
	In this case, the contribution arising from the following term is treated by
	\begin{equation}\label{three terms}
		\begin{aligned}
			& \iintd  N^2 V^\lambda_1(N(x-y)) \left\langle\Psi, a^*_x a^*_y a_y \alpha_x \Psi\right\rangle  \dd x \d y  \\
			&=
			\iintd  N^2 V^\lambda_1(N(x-y)) \Big\langle a_ya_x \Psi, (\alpha_xa_y + [a_y, \alpha_x]) \Psi\Big\rangle  \dd x \d y  \\
			& \leq
			\Big( \iintd  N^2 V^\lambda_1(N(x-y))\left\|a_x a_y \Psi\right\|^2  \dd x \d y \Big)^{\frac{1}{2}}\\
			&\quad\times 
			\Big(  \iintd  N^2 V^\lambda_1(N(x-y))\left\|(\alpha_xa_y + [a_y, \alpha_x]) \Psi \right\|^2  \dd x \d y \Big)^{\frac{1}{2}}.
		\end{aligned}
	\end{equation}
	Note that by \eqref{pointwise bound}, one has the pointwise bound
	\begin{equation}\label{est: pointwise bound commutator a and alpha}
		\pm [a_y, \alpha_x]\le C |\widetilde{\boldsymbol{\phi}}_t^{(N)}(x)|
		|\widetilde{\boldsymbol{\phi}}_t^{(N)}(y)|.
	\end{equation}
	Then, by substituting \eqref{est: pointwise bound commutator a and alpha} into \eqref{three terms}, applying Lemma~\ref{lemma: quatic}, and Young's inequality, we obtain
	\begin{equation}\label{bound of three a}
		\begin{aligned}
			\text{LHS of }\eqref{three terms}
			\leq&\, 
			\delta \|\widetilde{\boldsymbol{\phi}}_t^{(N)}
			\|_{L^\infty_x} 
			\iintd  N^2 V^\lambda_1(N(x-y))\left\|a_x a_y \Psi\right\|^2  \dd x \d y\\
			& +
			C_{\delta} 
            \lambda
            \|\widetilde{\boldsymbol{\phi}}_t^{(N)}
			\|_{L^\infty_x} 
			\frac{1}{N}\|(\mathcal{N}+1) \Psi\|^2.
		\end{aligned}
	\end{equation}
	On the other hand, for the terms containing $a^*_x a^*_y a_y A^*_x$, using $[a_y, A^*_x] = k_1(y,x)$ by \eqref{eq:commute formula} and writing this commutator term explicitly in the RHS of \eqref{formula of interaction term in generator}, the remainder term can be estimated similarly to \eqref{three terms}. The other terms in this case can be treated with the same argument.
	
	Now let us discuss the terms containing two $a^{\sharp}$ or $b^{\sharp}$.
	In this case, for example, the following term can be treated by
	\begin{align*}
		& \iintd  N^2 V^\lambda_1(N(x-y)) \left\langle\Psi, A_x a^*_y a_y A^*_x \Psi\right\rangle  \dd x \d y  \\
		&=
		\iintd  N^2 V^\lambda_1(N(x-y)) \Big\langle(A^*_xa_y + [a_y, A^*_x])\Psi, (A^*_xa_y + [a_y, A^*_x]) \Psi\Big\rangle  \dd x \d y  \\
		& =
		C_{N,V}^{(1)}
		+
		\iintd  N^2 V^\lambda_1(N(x-y))\left\|A^*_x a_y \Psi\right\|^2  \dd x \d y \\
		&\quad+ 
		\iintd  N^2 V^\lambda_1(N(x-y)) \Big\langle A^*_xa_y \Psi, [a_y, A^*_x] \Psi\Big\rangle  \dd x \d y
		+ \mathrm{h.c.}.
	\end{align*}
	Using \eqref{eq:commute formula} and the pointwise estimate of $k_1$ in \eqref{pointwise bound}, the RHS of the above equality, except for the constant term, has a similar bound to that in \eqref{bound of three a}.
	The other terms in this case can be bounded with the same argument.
	
	Finally, for the terms containing less then two $a^{\sharp}$ or $b^{\sharp}$, similar argument can be applied.
\end{proof}

\subsection{Cancellation between Quadratic and Hamiltonian Terms}
Let us denote $V^{N}_{\ii} = N^2V^\lambda_{\ii}(N\cdot)$.
Summing up the results in Proposition \ref{prop: estimate of quadratic term}, Proposition \ref{prop: Kinetic term} and Proposition \ref{prop: quartic term} leads to
\begin{align*}
	&\cT_t^*\, (\cG_{2} + \cH_N )\, \cT_t \\
	&= N\bigg(
	\iintd \Big( \Delta w_{1,\ell}^{(N)} + \tfrac{1}{2} V_{1}^{N} \big(1-w_{1,\ell}^{(N)}\big) \Big) (x-y) \,\wt\phi^{(N)}_{1,t}(x) \wt\phi^{(N)}_{1,t}(y)a_x^* a_y^*\dd x\d y \\
	&\quad +\iintd \Big( \Delta w_{2,\ell}^{(N)} + \tfrac{1}{2}V_{2}^N \big(1-w_{2,\ell}^{(N)}\big) \Big) (x-y) \,\wt\phi^{(N)}_{2,t}(x) \wt\phi^{(N)}_{2,t}(y)b_x^* b_y^*\dd x\d y \\
	&\quad + 2\iintd \Big(\Delta w_{12,\ell}^{(N)} 
	+\tfrac{1}{2}V_{12}^{N} \big(1-w_{12, \ell}^{(N)}\big) \Big) (x-y) \,\wt\phi^{(N)}_{1,t}(x) \wt\phi^{(N)}_{2,t}(y)a_x^* b_y^*\dd x\d y + \mathrm{h.c.} \bigg)\\
	&\quad + C_{N, 2}+C_{N, K}^{(1)}+C_{N, K}^{(2)}+C_{N, V} +\cH_N+ \mathcal{E}_{2} +
	\mathcal{E}_K + \mathcal{E}_V.
\end{align*} 

Next, we let $N$ large enough such that for any $x\in \mathrm{supp}\{V_{\ii}\}$, we have $|x|<N\ell$.
Using the \eqref{eq:scatlN}, it follows that
\begin{align*}
	\cT_t^*\, (\cG_{2} + \cH_N )\, \cT_t  
	&=\,  N\bigg(
	\frac{1}{2}\iint_{|x-y|<\ell}  \nu_{1,\ell}^{(N)} f_{1,\ell}^{(N)} (x-y) \,\wt\phi^{(N)}_{1,t}(x) \wt\phi^{(N)}_{1,t}(y)a_x^* a_y^*\dd x\d y \\
	&\quad + \frac{1}{2}\iint_{|x-y|<\ell}  \nu_{2,\ell}^{(N)} f_{2,\ell}^{(N)} (x-y) \,\wt\phi^{(N)}_{2,t}(x) \wt\phi^{(N)}_{2,t}(y)b_x^* b_y^*\dd x\d y \\
	&\quad + \iint_{|x-y|<\ell}  \nu_{12,\ell}^{(N)} f_{12,\ell}^{(N)} (x-y) \,\wt\phi^{(N)}_{1,t}(x) \wt\phi^{(N)}_{2,t}(y)a_x^* b_y^*\dd x\d y + \mathrm{h.c.} \bigg)\\
	&\quad + C_{N, 2}+C_{N, K}^{(1)}+C_{N, K}^{(2)}+C_{N, V}+\cH_N+ \mathcal{E}_{2}
	+ \mathcal{E}_K  + \mathcal{E}_V.
\end{align*} 

\begin{prop}\label{prop: cancellation}
	Let us denote the error
	\begin{equation*}
		\mathcal{E}_c
		=
		\cT_t^*\, (\cG_{2} + \cH_N )\, \cT_t
		- (C_{N, 2}+C_{N, K}^{(1)}+C_{N, K}^{(2)}+C_{N, V}+\cH_N + \mathcal{E}_{2} +\mathcal{E}_K+ \mathcal{E}_V).
	\end{equation*}
	Then, we have
	\begin{equation}
		\begin{aligned}
			\pm \mathcal{E}_c(t) & \leq C \|\widetilde{\boldsymbol{\phi}}^{(N)}_{t}\|_{L^\infty_x}  \big(\mathcal{N}+1\big), \\
			\pm\left[\mathcal{N}, \mathcal{E}_c(t)\right] 
			& \leq 
			C \|\wtbphi^{(N)}_t\|_{L^\infty_x} \big(\mathcal{N}+1\big), \\
			\pm \dot{\mathcal{E}}_c(t) 
			& \leq 
			C \|\wtbphi^{(N)}_t\|_{L^\infty_x} \big(\mathcal{N}+1\big).
		\end{aligned}
	\end{equation}
	Here, constant $C$ depends on $\|\wtbphi^{(N)}_t\|_{L^2_x}$ and $\|\partial_t \wtbphi^{(N)}_t\|_{L^2_x}$.
\end{prop}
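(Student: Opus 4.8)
The plan is to start from the explicit form of $\mathcal{E}_c$ produced by the computation immediately preceding the statement. Inserting the scaled Neumann scattering identity~\eqref{eq:scatlN} --- which on $\{\n{x-y}<\ell\}$ reads $\Delta w^{(N)}_{\ii,\ell}+\tfrac12 N^{2}V^{\lambda}_{\ii}(N\cdot)f^{(N)}_{\ii,\ell}=\nu^{(N)}_{\ii,\ell}f^{(N)}_{\ii,\ell}$, and is legitimate once $N$ is large enough that $\bsc_{\ii}/N<\ell$ --- the singular kinetic pieces $\Delta w^{(N)}_{\ii,\ell}$ cancel against the quadratic and quartic ones, and $\mathcal{E}_c$ reduces to a fixed linear combination of the three pair terms
\[
N\,\nu^{(N)}_{\ii,\ell}\iint_{\n{x-y}<\ell}f^{(N)}_{\ii,\ell}(x-y)\,\wt\phi^{(N)}_{\ii,t}(x)\,\wt\phi^{(N)}_{\jj,t}(y)\;c^{\ast}_x d^{\ast}_y\;\dd x\,\dd y\ +\ \mathrm{h.c.},
\]
with $(c,d;\ii,\jj)$ running over $\{(a,a;1,1),(b,b;2,2),(a,b;1,2)\}$. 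The decisive gain is that this leftover now carries the prefactor $N\nu^{(N)}_{\ii,\ell}=N^{3}\nu_{\ii,\ell}$, which by Lemma~\ref{lem:neumann_scattering_function}(ii) obeys $N^{3}\nu_{\ii,\ell}\le 3\asc^{\lambda}_{\ii}\ell^{-3}\bigl(1+\cO(\bsc_{\ii}/\ell N)\bigr)\le C$ uniformly in $N$ and $\lambda$; it is exactly the $N^{-3}$ scaling of $\nu_{\ii,\ell}$ that turns this otherwise non-absorbable pair term into one controlled by $\cN+1$.

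For the operator bound I would write a generic such term as $N\nu^{(N)}_{\ii,\ell}\intd\wt\phi^{(N)}_{\ii,t}(x)\,c^{\ast}_x\,d^{\ast}(j_x)\,\dd x$, where $j_x(y):=f^{(N)}_{\ii,\ell}(x-y)\,\wt\phi^{(N)}_{\jj,t}(y)\,\Indic{\{\n{x-y}<\ell\}}$, and note $\sup_{x}\Nrm{j_x}{L^2}\le C_{\ell}\,\Nrm{\wtbphi^{(N)}_t}{L^\infty_x}$ using $0\le f^{(N)}_{\ii,\ell}\le1$. Then, for any $\Psi\in\cF$, two Cauchy--Schwarz inequalities together with the elementary bound~\eqref{est:creation and annihilation} give
\[
\bigl|\inprod{\Psi}{(\cdots)\Psi}\bigr|\le N\nu^{(N)}_{\ii,\ell}\Bigl(\sup_{x}\Nrm{j_x}{L^2}\Bigr)\Nrm{(\cN+1)^{1/2}\Psi}{}\intd\n{\wt\phi^{(N)}_{\ii,t}(x)}\,\Nrm{c_x\Psi}{}\,\dd x\le C\,\Nrm{\wtbphi^{(N)}_t}{L^\infty_x}\,\inprod{\Psi}{(\cN+1)\Psi},
\]
where the last step uses $\intd\n{\wt\phi^{(N)}_{\ii,t}(x)}\Nrm{c_x\Psi}{}\dd x\le\Nrm{\wtbphi^{(N)}_t}{L^2_x}\Nrm{\cN^{1/2}\Psi}{}$ and $N\nu^{(N)}_{\ii,\ell}\le C$; the constant depends only on the fixed $\ell$ and on $\Nrm{\wtbphi^{(N)}_t}{L^2_x}$, and the hermitian conjugate is estimated identically. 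The commutator bound is then automatic: $[\cN,c^{\ast}_x d^{\ast}_y]=2\,c^{\ast}_x d^{\ast}_y$ and $[\cN,(c^{\ast}_x d^{\ast}_y)^{\ast}]=-2\,(c^{\ast}_x d^{\ast}_y)^{\ast}$, so $[\cN,\mathcal{E}_c]$ has precisely the same shape as $\mathcal{E}_c$ (up to factors $\pm2$) and obeys the same estimate.

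For $\dot{\mathcal{E}}_c$, only the condensate profiles depend on $t$, so $\dot{\mathcal{E}}_c$ is the same combination with $\wt\phi^{(N)}_{\ii,t}(x)\wt\phi^{(N)}_{\jj,t}(y)$ replaced by $\partial_t\wt\phi^{(N)}_{\ii,t}(x)\,\wt\phi^{(N)}_{\jj,t}(y)+\wt\phi^{(N)}_{\ii,t}(x)\,\partial_t\wt\phi^{(N)}_{\jj,t}(y)$. In the first piece I keep $\wt\phi^{(N)}_{\jj,t}(y)$ inside $j_x$ (so again $\sup_{x}\Nrm{j_x}{L^2}\le C_{\ell}\Nrm{\wtbphi^{(N)}_t}{L^\infty_x}$) and pair $\partial_t\wt\phi^{(N)}_{\ii,t}(x)$ with $c_x$, which brings in $\Nrm{\partial_t\wtbphi^{(N)}_t}{L^2_x}$; in the second piece I freeze $y$ instead, route $\wt\phi^{(N)}_{\ii,t}(x)$ into an inner $L^2$ kernel $h_y$, and pair $\partial_t\wt\phi^{(N)}_{\jj,t}(y)$ with $d_y$. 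The same two Cauchy--Schwarz steps then give $\pm\dot{\mathcal{E}}_c(t)\le C\Nrm{\wtbphi^{(N)}_t}{L^\infty_x}(\cN+1)$ with $C$ depending on $\Nrm{\wtbphi^{(N)}_t}{L^2_x}$ and $\Nrm{\partial_t\wtbphi^{(N)}_t}{L^2_x}$, as claimed. I do not anticipate a genuine obstacle: the argument is just two applications of Cauchy--Schwarz plus~\eqref{est:creation and annihilation}. The only point needing care is the routing of factors --- always feeding the factor one wants controlled in $L^\infty$ into the inner kernel $j_x$ (resp.\ $h_y$), where the cutoff $\{\n{x-y}<\ell\}$ converts $\n{f^{(N)}_{\ii,\ell}}\le1$ into a finite-volume bound, and pairing the other factor with a creation/annihilation operator. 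Retaining the $\Nrm{\wtbphi^{(N)}_t}{L^\infty_x}$ rather than the a priori larger (and $t$-independent) $\Nrm{\wtbphi^{(N)}_t}{L^2_x}$ is what renders $\mathcal{E}_c$, $[\cN,\mathcal{E}_c]$, $\dot{\mathcal{E}}_c$ integrable in time, which is exactly what is needed for the Gr\"onwall step in Proposition~\ref{prop: growth of fluctuation}; the one substantive input throughout is the $N^{-3}$ decay of $\nu_{\ii,\ell}$ from Lemma~\ref{lem:neumann_scattering_function}.
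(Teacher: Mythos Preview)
Your proposal is correct and follows essentially the same approach as the paper: identify $\mathcal{E}_c$ as the sum of the three pair terms carrying the factor $N\nu^{(N)}_{\ii,\ell}=N^{3}\nu_{\ii,\ell}\le C$, then estimate each by Cauchy--Schwarz and the bound~\eqref{est:creation and annihilation}. The only cosmetic difference is that the paper packages the kernel as $g_{\ii\jj}(x,y)=N\nu^{(N)}_{\ii\jj,\ell}f^{(N)}_{\ii\jj,\ell}(x-y)\wt\phi^{(N)}_{\ii,t}(x)\wt\phi^{(N)}_{\jj,t}(y)$ and bounds its Hilbert--Schmidt norm, whereas you route one factor into $j_x$ and bound $\sup_x\Nrm{j_x}{L^2}$; the two are equivalent here and yield the same constant. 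Your treatment of $[\cN,\mathcal{E}_c]$ and $\dot{\mathcal{E}}_c$ is in fact more explicit than the paper's, which only spells out the first inequality and leaves the remaining two implicit.
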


\begin{proof}
	Write $g_{\ii \jj}(x,y) := N \nu_{\ii\jj,\ell}^{(N)} f_{\ii\jj,\ell}^{(N)} (x-y)  
	\,\wt\phi^{(N)}_{\ii,t}(x) \wt\phi^{(N)}_{\jj,t}(y)$ and  consider 
	\begin{equation}\label{def:lambda_term_example}
		\iint_{|x-y|<\ell}  g_{\ii\jj}(x, y) a_x^* a_y^*\dd x\d y.
	\end{equation}
	Then, utilizing the properties of $f_{\ii\jj,\ell}^{(N)}$ given in Lemma~\ref{lem:neumann_scattering_function}, one has
	\begin{equation*}
		\iint_{|x-y|<\ell} g^2_{\ii\jj}(x,y) \dd x\d y \le C \|\wt\phi^{(N)}_{\ii,t}\|_{L^\infty_x}^2 \|\wt\phi^{(N)}_{\jj,t}\|_{L^2_x}^2 .
	\end{equation*}
	Moreover, by the Cauchy--Schwarz inequality and \eqref{est:creation and annihilation}, we have that
	\begin{equation*}\label{estimate of g}
		\begin{aligned}
			\n{\inprod{\Psi}{\eqref{def:lambda_term_example}\,\Psi}}
			\le&\, 
			\intd \| a_x\Psi\| \| a^*(g_{\ii\jj}(x,\cdot)) \Psi\| \dd x\\
			\le&\, 
			C \Big(\intd \| a_x\Psi\|^2 \dd x  \Big)^{\frac{1}{2}} 
			\Big(\intd \| a^*(g_{\ii\jj}(x,\cdot)) \Psi\|^2  \dd x  \Big)^{\frac{1}{2}}\\
			\le&\,  C
			\|\wt\phi^{(N)}_{\ii,t}\|_{L^\infty_x} \|\wt\phi^{(N)}_{\jj,t}\|_{L^2_x}
			\|(\mathcal{N}+1) \Psi\|^2,
		\end{aligned}
	\end{equation*}
	which is the desired result. 
\end{proof}

\subsection{Estimates for the  \texorpdfstring{$(i\partial_t \cT_t^*)  \cT_t$}{iD(T*) T} Term}
We conclude the estimate of $ (i \partial_t \cT_t^*)  \cT_t$ in the following proposition.
\begin{prop}\label{prop: time derivative}
	There exists a time-dependent constant $C_{N, \chi}(t)$, such that
	\begin{equation*}
		\begin{aligned}
			\big|\big\langle\Psi,\, \left(i\partial_t \cT_t^*\right) \cT_t\, \Psi\big\rangle - C_{N, \chi}(t) \big| 
			& \leq 
			C \|\widetilde{\boldsymbol{\phi}}_t^{(N)}
			\|_{L^\infty_x} (\mathcal{N}+1) ,\\
			\big|\big\langle\Psi,\, \left[\cN,\, \left(i \partial_t \cT_t^*\right) \cT_t\right] \Psi\big\rangle\big| 
			& \leq 
			C \|\widetilde{\boldsymbol{\phi}}_t^{(N)}
			\|_{L^\infty_x}
			(\mathcal{N}+1) ,\\
			\big|\big\langle\Psi,\, \partial_t\(\left(i \partial_t \cT_t^*\right) \cT_t\) \Psi\big\rangle - \partial_t C_{N, \chi}(t)\big| 
			& \leq 
			C \big(\|\partial_t \widetilde{\boldsymbol{\phi}}_t^{(N)}
			\|_{L^\infty_x}
			+ \|\widetilde{\boldsymbol{\phi}}_t^{(N)}
			\|_{L^\infty_x}\big)(\mathcal{N}+1).
		\end{aligned}
	\end{equation*}
	The constant $C$ depends on $\|\widetilde{\boldsymbol{\phi}}_t^{(N)}
	\|_{H^1_x}$, $\|\partial_t \widetilde{\boldsymbol{\phi}}_t^{(N)}
	\|_{H^1_x}$ and $\|\partial_t^2 \widetilde{\boldsymbol{\phi}}_t^{(N)}
	\|_{L^2_x}$.
\end{prop}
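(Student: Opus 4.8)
The plan is to use the Lie-algebra identity \eqref{eq:time-derivative_lie_identity} of Proposition~\ref{prop: bogoliubov transformation} to reduce $(i\partial_t \cT_t^*)\cT_t$ to $\cI$ applied to a single element of $\mathsf{sp}_{\CC}(\RR)$, and then estimate the resulting quadratic expression in $a^\sharp,b^\sharp$ using the Hilbert--Schmidt bounds of Lemmas~\ref{lem: estimate of k} and \ref{bound of derivative} together with Lemma~\ref{lem: quadratic}. Since $\cT_t=e^{\cI(\sfK_t)}$ and $\cT_t^*=e^{-\cI(\sfK_t)}$ with $\sfK_t$ as in \eqref{def:bold_k}, applying \eqref{eq:time-derivative_lie_identity} with $\sfM=-\sfK_t$ gives
\begin{equation*}
	(i\partial_t\cT_t^*)\cT_t = i\,\cI\!\left(\big(\partial_t e^{-\sfK_t}\big)e^{\sfK_t}\right) = -i\,\cI\!\left(\int_0^1 e^{-s\sfK_t}\,\dot{\sfK}_t\,e^{s\sfK_t}\,\d s\right),
\end{equation*}
so the first step is to identify $\sfM_t:=\int_0^1 e^{-s\sfK_t}\dot{\sfK}_t e^{s\sfK_t}\,\d s$ explicitly. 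Because $\sfK_t$ is off-diagonal and $\dot{\sfK}_t=\left(\begin{smallmatrix}\boldsymbol 0 & \dot{\boldsymbol k}_t\\ \dot{\overline{\boldsymbol k}}_t & \boldsymbol 0\end{smallmatrix}\right)$, expanding $e^{\pm s\sfK_t}$ through $\ch(\boldsymbol k_t),\sh(\boldsymbol k_t)$ yields a matrix of the form \eqref{def:spc} whose diagonal block $\boldsymbol d_t$ and off-diagonal block are polynomial in $\boldsymbol k_t,\overline{\boldsymbol k}_t,\dot{\boldsymbol k}_t,\dot{\overline{\boldsymbol k}}_t$ with hyperbolic-function coefficients. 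Using the sharpened $\min(\cdot,1)$ bounds of Lemma~\ref{lem: estimate of k} and $\|\dot{\boldsymbol k}_t\|_{\rm HS}\le C\min(\|\wtbphi^{(N)}_t\|_{L^\infty_x},1)$ from Lemma~\ref{bound of derivative}, all blocks of $\sfM_t$ (and the gradient quantities appearing after the $\cI$-expansion) have Hilbert--Schmidt norm $\lesssim\|\wtbphi^{(N)}_t\|_{L^\infty_x}$, with constant depending only on $\|\wtbphi^{(N)}_t\|_{H^1_x}$.

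Next I would write out $\cI(\sfM_t)$ via \eqref{liemap}. The monomials $z_x^\top\boldsymbol d_t(x,y)z_y^\ast$ occurring there are not normal ordered; commuting them through with the CCR \eqref{CCR} produces exactly the finite c-number $C_{N,\chi}(t)$ (finite since $\boldsymbol d_t$, being a composition of Hilbert--Schmidt operators, is trace class) plus a normal-ordered quadratic expression. Writing each remaining quadratic monomial in the form $\iint j(x,y)\,\eta_x^\sharp\,\eta_y^\sharp\,\d x\,\d y$ with $\|j\|_{\rm HS}\lesssim\|\wtbphi^{(N)}_t\|_{L^\infty_x}$ — using the notation \eqref{def: A and B}--\eqref{def: alpha and beta} — and invoking \eqref{est: A, B, alpha, beta} of Lemma~\ref{lem: quadratic} (equivalently \eqref{estimate of kinetic 1} of Lemma~\ref{lem: quodratic lemma}) then yields the first bound. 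The second bound follows immediately because, by \eqref{eq:number_operator_commuting_z}, $[\cN,\cdot]$ annihilates the number-conserving monomials (and the constant $C_{N,\chi}$) and merely multiplies the remaining ones by $0$ or $\pm2$, so the same estimate applies with the constant removed. For the third bound I would differentiate $\sfM_t$ in $t$; this introduces $\ddot{\boldsymbol k}_t$ and products of $\dot{\boldsymbol k}_t$ with itself, controlled by $\|\ddot{\boldsymbol k}_t\|_{\rm HS}\le C\min(\|\partial_t\wtbphi^{(N)}_t\|_{L^\infty_x}+\|\wtbphi^{(N)}_t\|_{L^\infty_x},1)$ and the $\dot{\operatorname{p}}(\boldsymbol k),\dot{\operatorname{r}}(\boldsymbol k)$ bounds of Lemma~\ref{bound of derivative} (whose constants involve $\|\partial_t^2\wtbphi^{(N)}_t\|_{L^2_x}$), and the same normal-ordering argument gives the stated estimate with $C_{N,\chi}(t)$ replaced by $\partial_t C_{N,\chi}(t)$ and $\|\wtbphi^{(N)}_t\|_{L^\infty_x}$ replaced by $\|\partial_t\wtbphi^{(N)}_t\|_{L^\infty_x}+\|\wtbphi^{(N)}_t\|_{L^\infty_x}$.

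The hard part will be the bookkeeping in the middle step: keeping the $\cI$-expansion of $\sfM_t$ precise enough to isolate the finite trace constant $C_{N,\chi}(t)$, and making sure that \emph{every} remaining quadratic monomial is estimated with an $\|\wtbphi^{(N)}_t\|_{L^\infty_x}$ (respectively $\|\partial_t\wtbphi^{(N)}_t\|_{L^\infty_x}$) factor pulled out in front, rather than only with the cruder $H^1_x$-bounds — this is precisely why the $\min(\cdot,1)$ forms of the bounds in Lemmas~\ref{lem: estimate of k}--\ref{bound of derivative}, which carry the $L^\infty_x$ weight, are needed. Once this reduction is set up, the remaining arguments are minor modifications of those in \cite[Section~6]{Benedikter2015quantitative}.
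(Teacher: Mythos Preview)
Your proposal is correct and follows essentially the same route as the paper: apply the Lie identity \eqref{eq:time-derivative_lie_identity} to write $(i\partial_t\cT_t^*)\cT_t=\cI\big((i\partial_t e^{-\sfK_t})e^{\sfK_t}\big)$, normal order to extract the trace constant $C_{N,\chi}$, and bound the remaining quadratic expression by $\|\dot{\sfK}_t\|_{\rm HS}\,e^{2\|\sfK_t\|_{\rm HS}}\langle\Psi,(\cN+1)\Psi\rangle$ (resp.\ with $\ddot{\sfK}_t$ for the time derivative), then invoke Lemma~\ref{bound of derivative}. Two minor remarks: no gradients enter here, so that aside can be dropped; and the relevant elementary inequality is \eqref{estimate of kinetic 1} rather than \eqref{est: A, B, alpha, beta}, which is specific to the kernels $\boldsymbol k_t,\operatorname p(\boldsymbol k_t),\operatorname r(\boldsymbol k_t)$.
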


\begin{proof}
	By \eqref{eq:time-derivative_lie_identity}, we have that
	\begin{equation}\label{rewrite partial T}
		\left(\partial_t \cT_t^*\right) \cT_t 
		=\, 
		\cI\((\bd_t e^{-\sfK}) e^{\sfK}\) .
	\end{equation}
	Moreover, note that we have
	\begin{align*}
		\(i\bd_te^{-\sfK} \)e^{\sfK}
		=\, 
		\begin{pmatrix}
			i\bd_t\ch(\boldsymbol{k}) & -i\bd_t\sh(\boldsymbol{k})\\
			\conj{i\bd_t\sh(\boldsymbol{k})} & -\conj{i\bd_t\ch(\boldsymbol{k})}
		\end{pmatrix}
		&\times
		\begin{pmatrix}
			\ch(\boldsymbol{k}) & \sh(\boldsymbol{k})\\
			\conj{\sh(\boldsymbol{k})} & \conj{\ch(\boldsymbol{k})}
		\end{pmatrix}
		= 
		\begin{pmatrix}
			\phantom{-}\boldsymbol{\gamma} & \phantom{-}\boldsymbol{\mu}\\
			-\conj{\boldsymbol{\mu}} & -\boldsymbol{\gamma}^\top
		\end{pmatrix}.
	\end{align*}
	Thus writing the RHS of \eqref{rewrite partial T} in normal order, it follows that
	\begin{equation}\label{rewrite partial T in normal order}
		\left(i\partial_t \cT_t^*\right) \cT_t 
		=\, 
		\nor{\cI\( \big(i\bd_t e^{-\sfK} \big)e^{\sfK}\) }
		+ 
		C_{N, \chi}(t),
	\end{equation}
	where $ C_{N, \chi} = \tfrac{1}{2} \tr(\gamma_{11})+\tfrac{1}{2} \tr(\gamma_{22}) $, which is well defined since $\boldsymbol{\gamma}$ is trace class.
	
	To estimate the RHS of \eqref{rewrite partial T in normal order}, we observe that for $f_i\in L^2(\mathbb{R}^3 \times\mathbb{R}^3)$ with $i=1,2,3$, one has
	\begin{multline*}
			\Big|\Big\langle\Psi, \iintd \left(f_1(x, y) a_x^* a_y^*+f_2(x, y) a_x a_y +f_3(x, y) a_x^* a_y\right)\d x \d y\,  \Psi\Big\rangle\Big|\qquad\qquad \\
			\leq
			\Big(\left\|f_1\right\|_{\rm HS}+\left\|f_2\right\|_{\rm HS} +\left\|f_3\right\|_{\rm HS}\Big)
			\langle\Psi,(\mathcal{N}+1) \Psi\rangle.
	\end{multline*}
	
	Similar results hold  in the cases when $a^\sharp$s are replaced by $b^\sharp$s. Applying these results, and note that only off-diagonal terms appear in $\dot{\sfK}$, it follows that
	\begin{equation}\label{estimate of partial_t T}
		\begin{aligned}
			\big|\big\langle\Psi,\, \left(i\partial_t \cT_t^*\right) \cT_t \Psi\big\rangle - C_{N, \chi}(t)\big| 
			\leq&
			\big\| \big(i\bd_t e^{-\sfK} \big)e^{\sfK} \big\|_{\rm HS}  \,
			\big\langle\Psi,(\mathcal{N}+1) \Psi\big\rangle \\
			\leq&\, 
			\|\dot{\sfK} \|_{\rm HS}\, e^{2 \|\sfK\|_{\rm HS}} \,
			\big\langle\Psi,(\mathcal{N}+1) \Psi\big\rangle \, .
		\end{aligned}
	\end{equation}
	
	By the same argument, we also obtain
	\begin{equation*}
		\begin{aligned}
			\big|\big\langle\Psi, \, \partial_t\( \left(i\partial_t \cT_t^*\right) \cT_t\) \Psi\big\rangle- \partial_t C_{N, \chi}(t) \big| 
			\leq&\, 
			\big(2 \|\dot{\sfK} \|_{\rm HS}^2 
			+ \|\ddot{\sfK} \|_{\rm HS}\big)
			e^{2 \|\sfK\|_{\rm HS}} \,
			\big\langle\Psi,(\mathcal{N}+1) \Psi\big\rangle \, .
		\end{aligned}
	\end{equation*}
	
	Then using the fact $\|\sfK\|_{\rm HS}\le \sqrt{2}\|\boldsymbol{k}_t\|_{\rm HS}$ and similar for $\dot{\sfK}$ and $\ddot{\sfK}$, applying Lemma \ref{bound of derivative}, we finally prove the desired theorem.
\end{proof}

\appendix
\section{Normal Order of the Fluctuation Hamiltonian}
This appendix is independent of Section~\ref{sect:bounds_fluctuation_dynamics} because the calculations presented here, while compact, neither simplify the computations nor the analysis in that section. Nevertheless, it provides an alternative verification of the calculations performed in Section~\ref{sect:bounds_fluctuation_dynamics}. Specifically, we present explicit computations of the normal ordering of the full fluctuation Hamiltonian using techniques introduced in \cite{grillakis2013beyond, grillakis2013pair, grillakis2017pair}. Since the time variable plays a passive role in this discussion, we suppress its dependence in our notation to simplify the presentation where no confusion arises.

We write 
\begin{align*}
	\cL_N = N\mu_0 + \cT^\ast\(\cG_1+\cG_3\)\cT+(i\bd_t\cT^\ast)\cT +\cT^\ast\cQ_{\rm Bog}\cT+ \cT^\ast\cV\cT \quad \text{ with } \quad \cQ_{\rm Bog} =  \cK + \cG_2. 
\end{align*}
Let $f, g \in L^2(\R^3)$ and $\vect{f}=f\oplus 0,\, \vect{g} = 0 \oplus g \in \fH$. Using \eqref{eq:bogoliubov_conjugation_identities}, we define the following operators
\begin{equation}
	\begin{aligned}
		c(f):=&\, \cT^* z(\vect{f}) \cT=\,z(\ch(\boldsymbol{k})\vect{f})+z^\ast(\sh(\boldsymbol{k})\conj{\vect{f}}),\\
		d(g):=&\, \cT^* z(\vect{g}) \cT=\,z(\ch(\boldsymbol{k})\vect{g})+z^\ast(\sh(\boldsymbol{k})\conj{\vect{g}}),
	\end{aligned}
\end{equation}
and the corresponding operator-valued distributions 
\begin{equation}
	c_x :=\, \cT^\ast a_x\cT, \quad 
	d_x :=\, \cT^\ast b_x \cT, \quad \text{ and } \quad \zeta_x :=\, \cT^\ast z_x \cT.
\end{equation}

Using \eqref{liemap}, the quadratic part can be further rewritten as 
\begin{align*}
	\cQ_{\rm Bog} =&\, \cH_{\sfG}+\cM,\\
	\cH_{\sfG}
	=&\, \frac12\iintd (z_x^\ast)^\top \vect{G}(x, y) z_y\dd x\d y+\frac12\iintd (z_y^\ast)^\top \vect{G}(y, x) z_x\dd x\d y,\\
	\cM
	=&\, \frac12 \iintd (z^\ast_x)^\top \vect{M}(x, y) z^\ast_y\dd x\d y+\frac12 \iintd z_x^\top \conj{\vect{M}(x, y)} z_y\dd x\d y=\cI(\sfM),
\end{align*}
where 
\begin{align*}
	\sfG =&\, 
	\begin{pmatrix}
		\vect{G} & 0\\
		0 & -\vect{G}^\top
	\end{pmatrix}
	\quad \text{ and } \quad 
	\sfM = 
	\begin{pmatrix}
		0 & \vect{M}\\
		-\conj{\vect{M}} & 0
	\end{pmatrix} \quad \text{ with }\\
	\vect{G} =&\, 
	\begin{pmatrix}
		g^{(1)}_N & g^{(12)}_N\\[1ex]
		g^{(21)}_N & g^{(2)}_N
	\end{pmatrix}
	\quad \text{ and } \quad 
	\vect{M} = 
	\begin{pmatrix}
		m^{(1)}_N & m^{(12)}_N\\[1ex]
		m^{(21)}_N & m^{(2)}_N
	\end{pmatrix}.
\end{align*}
Here, we have the kernels of the above operators
\begin{align*}
	g^{(\ii)}_N(x, y) =&\, \(-\lapl_x + V^{(\ii)}_{\mathrm{c}}(x)\)\delta(x-y)+N^3V_\ii(N(x-y))\wt\phi^{(N)}_{\ii}(x)\conj{\wt\phi^{(N)}_{\ii}(y)},\\
	V^{(\ii)}_{\mathrm{c}}(x)=&\, N^3V_{\ii1}(N\cdot)\ast |\wt\phi^{(N)}_{1}|^2(x)+N^3V_{\ii2}(N\cdot)\ast |\wt\phi^{(N)}_{2}|^2(x),\\
	g^{(\ii\jj)}_N(x, y) =&\, N^3V_{12}(N(x-y))\wt\phi^{(N)}_{\ii}(x) \conj{\wt\phi^{(N)}_{\jj}(y)},\\
	m^{(\ii)}_N(x, y) =&\, N^3V_{\ii}(N(x-y))\wt\phi^{(N)}_{\ii}(x)\wt\phi^{(N)}_{\ii}(y),\\
	m^{(\ii\jj)}_N(x, y) =&\, N^3V_{12}(N(x-y))\wt\phi^{(N)}_{\ii}(x)\wt\phi^{(N)}_{\jj}(y).
\end{align*}

\subsection{The cubic terms} 
Here, we compute the normal order of $\cT^\ast \cQ_3 \cT$ up to the linear in creation and annihilation terms. We start by writing
\begin{equation}\label{eq:conjugated_cubic_terms}
	\begin{aligned}
		\cT^\ast\cG_{3}^{(1)}\cT=&\, \frac{1}{\sqrt{N}}\iintd N^{3}V_{1}(N(x-y))  \wt\phi^{(N)}_{1}(y)\, c^{\ast}_y c^{\ast}_x   c_x \dd x\d y +\mathrm{h.c.}\\
		\cT^\ast\cG_{3}^{(2)}\cT=&\, \frac{1}{\sqrt{N}}\iintd N^{3}V_{2}(N(x-y)) \wt\phi^{(N)}_{2}(y)\, d^{\ast}_y d^{\ast}_x   d_x \dd x\d y+\mathrm{h.c.}\\
		\cT^\ast\cG^{(12)}_{3}\cT=&\,  \frac{1}{\sqrt{N}}\iintd N^{3} V_{12}(N(x-y))\,\wt\phi^{(N)}_{2}(y)\,  c^{\ast}_x   c_xd^{\ast}_y\dd x\d y\\
		&+ \frac{1}{\sqrt{N}}\iintd N^{3} V_{12}(N(x-y))\,\wt\phi^{(N)}_{1}(y)\, c^{\ast}_y d^{\ast}_x d_x\dd x\d y+ \mathrm{h.c.}.
	\end{aligned}
\end{equation}
Let us now use Wick's theorem to put terms of \eqref{eq:conjugated_cubic_terms} in normal order. Define the contraction of $Z(\vect{F}) := z(\vect{f}_1)+z^\ast(\vect{f}_2)$ and $Z(\vect{H}):= z(\vect{h}_1)+z^\ast(\vect{h}_2)$ to be $C(Z(\vect{F}), Z(\vect{H})) = [z(\vect{f}_1), z^\ast(\vect{h}_2)] = \inprod{\vect{f}_1}{\vect{h}_2}$ and denote the normal ordering of an operator $\cP$ by $\nor{\cP}$. Then we need to normal order the terms using Wick's Theorem  which says that 
\begin{multline}\label{eq:cubic_Wick's_Theorem}
	Z(\vect{F})Z(\vect{H})Z(\vect{J})= \nor{Z(\vect{F})Z(\vect{H})Z(\vect{J})} \\
	+C(Z(\vect{F}), Z(\vect{H}))Z(\vect{J})+C(Z(\vect{F}), Z(\vect{J}))Z(\vect{H})+C(Z(\vect{H}), Z(\vect{J}))Z(\vect{F}).
\end{multline}
More precisely, we have 
\begin{align*}
	c^{\ast}_y c^{\ast}_x   c_x =&\, \nor{c^{\ast}_y c^{\ast}_x   c_x}+C(c^{\ast}_y, c^{\ast}_x)   c_x + C(c^{\ast}_y, c_x) c^{\ast}_x +C(c^{\ast}_x, c_x)c^{\ast}_y, \\
	c^{\ast}_x   c_x d^{\ast}_y =&\, \nor{c^{\ast}_x   c_x d^{\ast}_y }+C(c^{\ast}_x, d^{\ast}_y)   c_x+C(c_x, d^{\ast}_y) c^{\ast}_x +C(c^{\ast}_x, c_x) d^{\ast}_y, \\
	c^{\ast}_y d^{\ast}_x d_x=&\, \nor{c^{\ast}_yd^{\ast}_x d_x}+C(c^{\ast}_y, d^{\ast}_x)   d_x+C(c^{\ast}_y, d_x) d^{\ast}_x  +C(d^{\ast}_x, d_x)c^{\ast}_y. 
\end{align*}
which follows by making the formal observation that 
\[
    c_x = c(\delta_x) =       z(\ch(\boldsymbol{k})\vect{f})+z^\ast(\sh(\boldsymbol{k})\conj{\vect{f}})=Z((\ch(\boldsymbol{k})\vect{f}, \sh(\boldsymbol{k})\conj{\vect{f}}))
\]
with $\vect{f} = (\delta_x, 0)^\top$.

Moreover, by direct computation, we have that 
\begin{equation}\label{eq:intra_contraction}
	\begin{aligned}
		C\(c(f_1), c(f_2)\) =&\,   
		\tfrac12\inprod{\vect{f}_1\otimes \vect{f}_2}{\sh(2\boldsymbol{k})}_{\fH^{\otimes 2}}\\
		C\(c^\ast(f_1), c^\ast(f_2)\) =&\,   
		\tfrac12\inprod{\sh(2\boldsymbol{k})}{\vect{f}_2\otimes \vect{f}_1}_{\fH^{\otimes 2}}\\
		C\(c^\ast(f_1), c(f_2)\) =&\,   
		\tfrac12\inprod{\vect{f}_2}{\operatorname{p}(2\boldsymbol{k}))\, \vect{f}_1}.
	\end{aligned}
\end{equation}
The calculation is similar for contractions that involve only $d$s. 
For the cross terms, we have 
\begin{equation}\label{eq:inter_contraction}
	\begin{aligned}
		C\(c^\ast(f), d^\ast(g)\)=&\,  
		\tfrac12 \inprod{\sh(2\boldsymbol{k})}{\vect{g}\otimes \vect{f}}_{\fH^{\otimes 2}},\\
		C\(c(f), d(g)\)=&\,  
		\tfrac12 \inprod{\vect{g}\otimes \vect{f}}{\sh(2\boldsymbol{k})}_{\fH^{\otimes 2}},\\
		C\(c(f), d^\ast(g)\)=&\,  
		\inprod{\vect{f}}{\ch(\boldsymbol{k})^2\, \vect{g}} =
		\tfrac12\inprod{\vect{f}}{\operatorname{p}(2\boldsymbol{k}))\, \vect{g}},\\
		C\(c^\ast(f), d(g)\)=&\, 
		\tfrac12\inprod{\vect{g}}{\operatorname{p}(2\boldsymbol{k}))\, \vect{f}}.
	\end{aligned}
\end{equation}

Given the operator $\vect{A} = (A_{\ii\jj})_{\ii,\jj\in\{1, 2\}}$, we define $V_N\vect{A}$ by the matrix kernel 
\begin{align*}
	(V_N\vect{A})(x, y)
	=
	\begin{pmatrix}
		N^3V_1(N(x-y))A_{11}(x, y) & N^3V_{12}(N(x-y))A_{12}(x, y) \\[1ex]
		N^3V_{12}(N(x-y))A_{21}(x, y) & N^3V_2(N(x-y))A_{22}(x, y)
	\end{pmatrix}.
\end{align*}
We also define the effective potential matrix 
\begin{align*}
	\vect{G}_{\mathrm{p}} = 
	\begin{pmatrix}
		V^{(1)}_{\rm p} & 0\\
		0 & V^{(2)}_{\rm p}
	\end{pmatrix}
\end{align*}
where
\begin{align*}
	V_{\mathrm{p}}^{(\ii)}(x):= \frac{1}{2N} \intd N^3V_{\ii1}(N(x-y))\operatorname{p}(2\boldsymbol{k}))_{\ii1}(y, y)+N^3V_{\ii2}(N(x-y))\operatorname{p}(2\boldsymbol{k}))_{\ii2}(y, y)\dd y.
\end{align*}
Now, we write 
\begin{multline}\label{eq:normal ordered conjugated cubic terms}
	\cT^\ast\cG_3\cT =\,  -\cT^\ast\cG_1\cT+\nor{\cT^\ast\cG_3\cT} \\
	+ \(\sqrt{N} \zeta^\ast\(\tfrac{1}{2N} (V_N \operatorname{r}(2\boldsymbol{k}))\,\conj{\wtbphi^{(N)}}+ \tfrac{1}{2N} (V_N \operatorname{p}(2\boldsymbol{k}))\, \wtbphi^{(N)}+ \vect{G}_{\mathrm{p}}\,\wtbphi^{(N)} \)+\mathrm{h.c.}\).
\end{multline}

\subsection{The quartic terms} It remains to normal order $\cT^\ast \cV\cT$. For the quartic terms, we start by writing
\begin{equation}\label{eq:conjugated_quartic_terms}
	\begin{aligned}
		\cT^*\, \mathcal{V}\, \cT
		=&\, \frac{1}{2}\iintd N^{2} V_{1} (N(x-y))\, c_x^\ast c_y^\ast c_y c_x\dd x\d y\\
		&\, +\frac{1}{2}\iintd N^{2} V_{2} (N(x-y))\, d_x^\ast d_y^\ast d_y d_x\dd x\d y\\
		&\, + \iintd N^{2} V_{12}(N(x-y))\,c^\ast_x c_x d_y^\ast d_y\dd x \d y\, .
	\end{aligned}   
\end{equation}
Then, we use the following Wick's Theorem 
\begin{multline}\label{eq:quartic_Wick's_Theorem}
	Z(\vect{F})Z(\vect{H})Z(\vect{J})Z(\vect{K})= \nor{Z(\vect{F})Z(\vect{H})Z(\vect{J})Z(\vect{K})} \\
	+C(Z(\vect{F}), Z(\vect{H}))Z(\vect{J})Z(\vect{K})+\ldots+C(Z(\vect{J}), Z(\vect{K}))Z(\vect{F})Z(\vect{H}).
\end{multline}
Notice, we have that 
\begin{align*}
	c^\ast_{x}c^\ast_{y}c_{y}c_{x}=&\, \nor{c^\ast_{x}c^\ast_{y}c_{y}c_{x}}
	+ C(c^\ast_{x}, c^\ast_{y})c_{y}c_{x}+C(c_{y}, c_{x})c^\ast_{x}c^\ast_{y}\\
	&\, +C(c^\ast_{x}, c_{x})c^\ast_{y}c_{y}
	+ C(c^\ast_{y}, c_{y})c^\ast_{x}c_{x}
	+C(c^\ast_{y}, c_{x})c^\ast_{x}c_{y}+C(c^\ast_{x}, c_{y})c^\ast_{y}c_{x}, \\
	c^\ast_{x}c_{x}d^\ast_{y}d_{y}=&\, \nor{c^\ast_{x}c_{x}d^\ast_{y}d_{y}}
	+ C(c^\ast_{x}, c_{x})d^\ast_{y}d_{y}+C(d_{y}^\ast, d_{y})c^\ast_{x}c_{x}\\
	&\, +C(c^\ast_{x}, d^\ast_{y})c_{x}d_{y}
	+ C(c_{x}, d^\ast_{y})c^\ast_{x}d_{y}
	+C(c_{x}, d_{y})c^\ast_{x}d^\ast_{y}+C(c^\ast_{x}, d_{y})c_{x}d^\ast_{y}.
\end{align*}
Again, the contractions can be computed using \eqref{eq:intra_contraction} and \eqref{eq:inter_contraction}.

Define the operators
\begin{align}
	\wt\Gamma^{(N)}:=&\, |\wtbphi^{(N)}\rangle\!\langle \wtbphi^{(N)}|+\frac{1}{2N}\operatorname{p}(2\boldsymbol{k})\quad \text{ and } \quad
	\wt\Lambda^{(N)}:=\,\wtbphi^{(N)}\otimes\wtbphi^{(N)}+ \frac{1}{2N}\sh(2\boldsymbol{k}).
\end{align}

Hence after the normal ordering of the cubic and quartic terms, we write the generator as follows
\begin{align}
	\cL_N =&\ N\mu_0 \notag\\
	&\, +\(\sqrt{N} \zeta^\ast\(\tfrac{1}{2N} (V_N \operatorname{r}(2\boldsymbol{k}))\,\conj{\wtbphi^{(N)}}+ \tfrac{1}{2N} (V_N \operatorname{p}(2\boldsymbol{k}))\, \wtbphi^{(N)}+ \vect{G}_{\mathrm{p}}\,\wtbphi^{(N)} \)+\mathrm{h.c.}\)\notag\\
	&\, +(i\bd_t\cT^\ast)\cT \label{def:quadratic_term_1}\\
	&\, +\frac12\iintd (\zeta_x^\ast)^\top \widetilde{\vect{G}}(x, y) \zeta_y\dd x\d y+\frac12\iintd (\zeta_y^\ast)^\top \widetilde{\vect{G}}(y, x) \zeta_x\dd x\d y,\label{def:quadratic_term_2}\\
	&\, +\frac12 \iintd (\zeta^\ast_x)^\top \widetilde{\vect{M}}(x, y) \zeta^\ast_y\dd x\d y+\frac12 \iintd \zeta_x^\top \conj{\widetilde{\vect{M}}(x, y)} \zeta_y\dd x\d y \label{def:quadratic_term_3}\\
	&\, + \nor{\cT^\ast \cG_3 \cT+\cT^\ast\cV\cT} \notag
\end{align}
where 
\begin{align*}
	\widetilde{\vect{G}} =\,  
	\begin{pmatrix}
		-\lapl+ V_{\mathrm{c}}^{(1)}+ V_{\mathrm{p}}^{(1)} & 0\\
		0 & -\lapl +V_{\mathrm{c}}^{(2)}+ V_{\mathrm{p}}^{(2)}
	\end{pmatrix}+V_N\wt\Gamma^{(N)}\quad \text{ and } \quad
	\widetilde{\vect{M}} =\, V_N\wt\Lambda^{(N)}.
\end{align*}

Using the mapping $\cI$ and Proposition~\ref{prop: bogoliubov transformation}, we could recast the quadratic terms as follow  
\begin{align*}
	\wt\cQ_{\rm Bog}:=&\, \eqref{def:quadratic_term_1}+\eqref{def:quadratic_term_2}+\eqref{def:quadratic_term_3} \\
	=&\, \cH_{\wt\sfG}+(i\bd_t\cT^\ast)\cT +\com{\cT^\ast, \cH_{\wt\sfG}}\cT+\cT^\ast\cI(\wt\sfM)\cT
	= \cH_{\wt\sfG}+\cI\(\sfR\) 
\end{align*}
where
\begin{align*}
	\sfR =&\, \(i\bd_te^{-\sfK}+\com{e^{-\sfK}, \wt\sfG}+e^{-\sfK}\wt\sfM \)e^{\sfK}\\
	=&\, \Bigg(
	\begin{pmatrix}
		i\bd_t\ch(\boldsymbol{k}) & -i\bd_t\sh(\boldsymbol{k})\\
		\conj{i\bd_t\sh(\boldsymbol{k})} & -\conj{i\bd_t\ch(\boldsymbol{k})}
	\end{pmatrix}\\
	&\, +
	\begin{pmatrix}
		-\com{ \wt{\vect{G}}, \ch(\boldsymbol{k})}+\sh(\boldsymbol{k})\conj{\wt{\vect{M}}} & \sh(\boldsymbol{k})\wt{\vect{G}}^\top+\wt{\vect{G}}\sh(\boldsymbol{k})+\ch(\boldsymbol{k})\wt{\vect{M}}\\
		-\wt{\vect{G}}^\top\conj{\sh(\boldsymbol{k})}-\conj{\sh(\boldsymbol{k})}\wt{\vect{G}}-\conj{\ch(\boldsymbol{k})}\conj{\wt{\vect{M}}} & \com{ \wt{\vect{G}}, \ch(\boldsymbol{k})}^\top - \;\conj{\sh(\boldsymbol{k})}\wt{\vect{M}}
	\end{pmatrix}
	\Bigg)\\
	&\times
	\begin{pmatrix}
		\ch(\boldsymbol{k}) & \sh(\boldsymbol{k})\\
		\conj{\sh(\boldsymbol{k})} & \conj{\ch(\boldsymbol{k})}
	\end{pmatrix}
	= 
	\begin{pmatrix}
		\phantom{-}\boldsymbol{\omega} & \phantom{-}\boldsymbol{\sigma}\\
		-\conj{\boldsymbol{\sigma}} & -\boldsymbol{\omega}^\top
	\end{pmatrix}
\end{align*}
with 
\begin{align*}
	\boldsymbol{\omega} =&\, \(i\bd_t\ch(\boldsymbol{k})-\com{\wt{\vect{G}}, \ch(\boldsymbol{k})}+\sh(\vect{k})\conj{\wt{\vect{M}}}\)\ch(\boldsymbol{k})\\
	&\, +\(-i\bd_t\sh(\boldsymbol{k})+\sh(\boldsymbol{k})\wt{\vect{G}}^\top+\wt{\vect{G}}\sh(\boldsymbol{k})+\ch(\boldsymbol{k})\wt{\vect{M}}\)\conj{\sh(\boldsymbol{k})} = (\omega_{\ii\jj})_{\ii,\jj \in \{1, 2\}}, \\
	\boldsymbol{\sigma} =&\, \(i\bd_t\ch(\boldsymbol{k})-\com{\wt{\vect{G}}, \ch(\boldsymbol{k})}+\sh(\vect{k})\conj{\wt{\vect{M}}}\)\sh(\boldsymbol{k})\\
	&\, +\(-i\bd_t\sh(\boldsymbol{k})+\sh(\boldsymbol{k})\wt{\vect{G}}^\top+\wt{\vect{G}}\sh(\boldsymbol{k})+\ch(\boldsymbol{k})\wt{\vect{M}}\)\conj{\ch(\boldsymbol{k})}=(\sigma_{\ii\jj})_{\ii,\jj \in \{1, 2\}}.
\end{align*}
Lastly, if we normal order the quadratic terms, then we obtain 
\begin{equation}
	\begin{aligned}
		\cL_N:=&\, C_N\\
		&\,  +\(\sqrt{N} \zeta^\ast\(\tfrac{1}{2N} (V_N \operatorname{r}(2\boldsymbol{k}))\,\conj{\wtbphi^{(N)}}+ \tfrac{1}{2N} (V_N \operatorname{p}(2\boldsymbol{k}))\, \wtbphi^{(N)}+ \vect{G}_{\mathrm{p}}\,\wtbphi^{(N)} \)+\mathrm{h.c.}\)\\
		&\, + \nor{\wt\cQ_{\rm Bog}+\cT^\ast \cG_3 \cT+\cT^\ast\cV\cT}
	\end{aligned}
\end{equation}
with the constant $C_N$ is given by 
\begin{align}\label{eq:C_N term}
	C_N = N\(\mu_0 + \tfrac{1}{2N} \tr(\omega_{11})+\tfrac{1}{2N} \tr(\omega_{22})\).
\end{align}

\section{Two-component NLSE Interaction Morawetz Estimates}\label{app:interaction_morawetz}

Let $\bphi_t=(\phi_{1, t}, \phi_{2, t})^\top$ be a solution to \eqref{eq:GP_system_vector_form} (with $\frac12$ in front of the $\lapl$ for convenience). Since the time variable is relatively passive in the calculation below, we suppress the dependence on $t$ in the notation. 

Define the pseudo energy-stress tensor
\begin{align*}
    \rho =&\, \n{\bphi}^2, &&
    J_{k}=\, \im\(\conj{\bphi}^\top\bd_{x_k}\bphi\)\,\, \text{ with } \,\, \vect{J} = (J_1, J_2, J_3),\\
    p =&\, \tfrac14\lapl \n{\bphi}^2-\tfrac12\conj{\bphi}^\top \vect{F}(\bphi)\bphi, && \sigma_{jk}=\, \re\(\conj{\bd_{x_j}\bphi}^\top \bd_{x_k}\bphi\) \,\, \text{ with } \,\, \boldsymbol{\sigma}= (\sigma_{ij})_{i,j \in \{1, 2, 3\}}.
\end{align*}
A direct computation using \eqref{eq:GP_system_vector_form} verifies, at least for smooth $\bphi$, the local conservation laws
\begin{align}\label{eq:local_conservation_law}
\begin{cases}
    \bd_t \rho+\grad\cdot \vect{J} = 0, \\[.5ex]
    \bd_t \vect{J}+\grad\cdot (\boldsymbol{\sigma}-p\vect{I})=0.
\end{cases}
\end{align}

Define the viriel interaction potential associated to the observable $a(x)=\n{x}$ given by
\begin{align}
    V_a(t)= \iintd\, \rho(x)\, a(x-y)\, \rho(y)\dd x\d y.
\end{align}
Differentiating $V_a(t)$ and using  \eqref{eq:local_conservation_law} yields the Morawetz action 
\begin{align}
    M_a(t) := \dot V_a(t) = \iintd \grad a(x-y)\cdot \(\vect{J}(x)\, \rho( y)-\vect{J}(y)\, \rho(x)\)\d x\d y.
\end{align}
Again, differentiating $M_a$ and applying \eqref{eq:local_conservation_law} yields the Morawetz identity
    \begin{align}
            \dot M_a(t)=&\, 
    4\pi\intd \rho(x)^2\dd x\notag\\
    &\, +\iintd \lapl a(x-y)\, \(\conj{\bphi}^\top \vect{F}(\bphi)\bphi\)(x)\, \rho(y)\dd x\d y\label{eq:one-particle_Morawetz_identity_term2}\\
    &\, +\iintd \grad^2 a(x-y): \lt\{\boldsymbol{\sigma}(x)\rho(y)+\boldsymbol{\sigma}(y)\rho(x)-2\,\vect{J}(x)\otimes \vect{J}(y)\rt\}\d x\d y\,.\label{eq:one-particle_Morawetz_identity_term3}
    \end{align}
Here, $:$ denotes the standard double-dot product for $n\times n$ matrices, i.e., $B:C = \sum_{i, j}b_{ij}c_{ij}$. Moreover, we have used the fact that $-\lapl^2 a(x) = 8\pi\, \delta(x)$.

Notice that the second term \eqref{eq:one-particle_Morawetz_identity_term2} is positive. Let us also show that the third term is also nonnegative. First, notice that the matrix $\vect{A}(x, y)=\grad^2 a(x-y)$ is positive semi-definite, then we could rewrite the term as follows 
\begin{subequations}
    \begin{align}
        \eqref{eq:one-particle_Morawetz_identity_term3} =&\, \sum^2_{\ii,\jj=1}\int_{\R^{12}} \delta(x-x')\delta(y-y')\nonumber\\
        &\, \times \vect{A}(x, y): \Big\{(\grad_y-\grad_{x})(\grad_{y'}-\grad_{x'})^\top\phi_\ii(x)\conj{\phi_\ii(x')}\phi_\jj(y)\conj{\phi_\jj(y')}\label{def:higher-order_term_in_Morawetz_id_subterm1}\\
        &\, +(\grad_x\grad_{x'}^\top + \grad_y \grad_{y'}^\top)\,\phi_\ii(x)\conj{\phi_\ii(x')}\phi_\jj(y)\conj{\phi_\jj(y')}\label{def:higher-order_term_in_Morawetz_id_subterm2}\\
        &\, +(\grad_x\grad_{y}^\top + \grad_{x'} \grad_{y'}^\top)\,\phi_\ii(x)\conj{\phi_\ii(x')}\phi_\jj(y)\conj{\phi_\jj(y')}\Big\}\dd x\d x' \d y \d y'\label{def:higher-order_term_in_Morawetz_id_subterm3}
    \end{align}
\end{subequations}
where $\grad$ is used to denote a column vector. Notice that \begin{align*}
    \eqref{def:higher-order_term_in_Morawetz_id_subterm1}
    =\sum^2_{\ii,\jj=1}\iintd\Nrm{\vect{A}(x, y)^\frac12\, (\grad_y-\grad_{x})\phi_\ii(x)\phi_\jj(y)}{}^2\dd x \d y 
\end{align*}
and 
\begin{align*}
    \eqref{def:higher-order_term_in_Morawetz_id_subterm2}
    =\sum^2_{\ii,\jj=1}\iintd\Nrm{\vect{A}(x, y)^\frac12\, \grad_{x}\phi_\ii(x)\phi_\jj(y)}{}^2+\Nrm{\vect{A}(x, y)^\frac12\, \phi_\ii(x)\grad_{y}\phi_{\jj}(y)}{}^2\dd x \d y
\end{align*}
where $\vect{A}^\frac12$ is the unique squareroot of $\vect{A}$. Finally, we see that 
\begin{align*}
    \n{\eqref{def:higher-order_term_in_Morawetz_id_subterm3}}\le 2\sum^2_{\ii,\jj=1}\iintd\Nrm{\vect{A}(x, y)^\frac12\, \grad_{x}\phi_\ii(x)\phi_\jj(y)}{}\Nrm{\vect{A}(x, y)^\frac12\, \phi_\ii(x)\grad_{y}\phi_\jj(y)}{}\dd x\d y
\end{align*}
which means 
\begin{align*}
    \eqref{eq:one-particle_Morawetz_identity_term3} \ge& \sum^2_{\ii,\jj=1}\iintd\Nrm{\vect{A}(x, y)^\frac12\, (\grad_y-\grad_{x})\phi_{\ii}(x)\phi_{\jj}(y)}{}^2\dd x \d y \\
    +\sum^2_{\ii,\jj=1}\iintd\(\Nrm{\vect{A}(x, y)^\frac12\, \grad_{x}\phi_\ii(x)\phi_\jj(y)}{}-\Nrm{\vect{A}(x, y)^\frac12\, \phi_\ii(x)\grad_{y}\phi_\jj(y)}{}\)^2\dd x\d y\ge 0.
\end{align*}

Finally, let us complete the proof of the interaction Morawetz estimate. By the Morawetz identity, we have the estimate 
\begin{align*}
    4\pi\int^T_{-T}\intd \rho(x)^2\dd x\d t \le M_a(T)-M_{a}(-T).
\end{align*}
Finally, by a standard momentum-type estimate (see \cite[Lemma A.10]{tao2006nonlinear}) and the conservation laws, we arrive at the estimate 
\begin{align*}
    \n{M(t)} \le C \Nrm{\bphi_t}{L^2_x}^2\Nrm{\bphi_t}{H^\frac12_x}^2 \le C
\end{align*}
for all $t\ge 0$.
Hence, this yields the following result.
\begin{prop}
    Let $\bphi_t$ be a global solution to \eqref{eq:GP_system_vector_form} with $H^1$ initial data.  Then there exists $C>0$, dependent only on $\|\bphi_0\|_{H^1_x}$, such that we have the space-time estimate 
    \begin{align}
        \|\bphi_t\|_{L^4_{t, x}(\R\times\R^3)}\le C.
    \end{align}
\end{prop}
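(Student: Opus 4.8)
The plan is to prove the bound by an interaction Morawetz argument tailored to the two-component cubic NLSE, along the lines already recorded above. First I would introduce the pseudo energy--stress tensor: the mass density $\rho=\n{\bphi}^2$, the momentum density $\vect{J}=\im(\conj{\bphi}^\top\grad\bphi)$, the pressure-type scalar $p=\tfrac14\lapl\n{\bphi}^2-\tfrac12\conj{\bphi}^\top\vect{F}(\bphi)\bphi$, and the stress tensor $\boldsymbol{\sigma}$ with entries $\sigma_{jk}=\re(\conj{\bd_{x_j}\bphi}^\top\bd_{x_k}\bphi)$. Pairing \eqref{eq:GP_system_vector_form} with $\conj{\bphi}^\top$ and taking imaginary and real parts (and one more differentiation for the momentum law) yields the local conservation laws \eqref{eq:local_conservation_law} for $(\rho,\vect{J})$; these are the only structural inputs, and it is exactly the defocusing sign of $\vect{F}$ — positivity of the $V_\ii$ together with the miscibility condition — that will make the signs work below.

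Next I would set $V_a(t)=\iintd\rho(x)\,a(x-y)\,\rho(y)\dd x\d y$ with the weight $a(x)=\n{x}$ and differentiate twice in $t$. One integration by parts against the continuity equation produces the Morawetz action $M_a(t)=\dot V_a(t)$; a second differentiation, using both equations in \eqref{eq:local_conservation_law}, gives the Morawetz identity. The decisive algebraic fact is $-\lapl^2\n{x}=8\pi\,\delta$, so the top-order contribution is exactly $4\pi\intd\rho(x)^2\dd x$. The remaining two terms are the nonlinear term \eqref{eq:one-particle_Morawetz_identity_term2}, nonnegative because $\lapl a\ge0$ and $\conj{\bphi}^\top\vect{F}(\bphi)\bphi\ge0$, and the Hessian term \eqref{eq:one-particle_Morawetz_identity_term3}. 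For the latter I would use that $\vect{A}(x,y)=\grad^2 a(x-y)$ is positive semidefinite and expand the integrand species by species, rewriting it as a sum of complete squares in $\vect{A}^{1/2}(\grad_y-\grad_x)(\phi_\ii(x)\phi_\jj(y))$, $\vect{A}^{1/2}\grad_x\phi_\ii(x)\,\phi_\jj(y)$ and $\vect{A}^{1/2}\phi_\ii(x)\,\grad_y\phi_\jj(y)$, with the cross terms absorbed by Cauchy--Schwarz, so that the whole term is $\ge0$; this is the computation displayed above.

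With $\dot M_a(t)\ge 4\pi\intd\rho(x)^2\dd x$ in hand, integrating over $[-T,T]$ gives $4\pi\int_{-T}^{T}\intd\rho(x)^2\dd x\,\d t\le M_a(T)-M_a(-T)$, and it remains to control $\n{M_a(t)}$ uniformly in $t$. That I would get from a standard momentum-type estimate (e.g.\ \cite[Lemma A.10]{tao2006nonlinear}) combined with conservation of mass and energy, yielding $\n{M_a(t)}\lesssim\Nrm{\bphi_t}{L^2_x}^2\Nrm{\bphi_t}{H^{1/2}_x}^2\lesssim C(\Nrm{\bphi_0}{H^1_x})$; letting $T\to\infty$ gives $\|\bphi\|_{L^4_{t,x}(\R\times\R^3)}^4\lesssim C$, as claimed. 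The main difficulty is twofold. Analytically, the manipulations are formal for merely $H^1$ data, so one must run them on smooth, sufficiently decaying approximants and pass to the limit using the local well-posedness of \eqref{eq:GP_system_vector_form} — routine but needing care for the finiteness and differentiability of $V_a$. Structurally, the one genuinely new point over the scalar case is the nonnegativity of the coupled Hessian term: the cross-species pieces of $\boldsymbol{\sigma}$ and $\vect{J}$ must still assemble into squares, which is precisely why the difference-of-gradients completion together with the matched Cauchy--Schwarz pair is essential rather than merely convenient.
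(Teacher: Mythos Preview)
Your proposal is correct and follows essentially the same approach as the paper's proof: the same interaction Morawetz setup with $a(x)=\n{x}$, the same local conservation laws, the same completion-of-squares argument for the Hessian term, and the same momentum-type bound on $M_a$ (indeed citing the same lemma from Tao). One minor remark: the nonnegativity of $\conj{\bphi}^\top\vect{F}_{\cc}(\bphi)\bphi$ follows simply from $\cc_\ii\ge0$ and does not require the miscibility condition, which plays no role in this estimate.
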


\hspace{1.5em}

\subsection*{Acknowledgments}

The authors thank Alessandro Olgiati for helpful discussions. 

\subsection*{Fundings}
J. Chong is partially supported by the National Key R\&D Program of China, Project Number 2024YFA1015500. 
J. Lee is partially supported by
the Swiss National Science Foundation through the NCCR SwissMAP,
the SNSF Eccellenza project PCEFP2\_181153, 
by the Swiss State Secretariat for Research and Innovation through the project P.530.1016 (AEQUA), and
Basic Science Research Program through the National Research Foundation of Korea (NRF) funded by the Ministry of Education (RS-2024-00411072).
Z. Sun is partially supported by the Austrian Science Fund (FWF), grant DOI 10.55776/P33010 and 10.55776/F65, as well as by the European Research Council (ERC) under the European Union's Horizon 2020 research and innovation programme, ERC Advanced Grant NEUROMORPH, no. 101018153.

\subsection*{Conflict of interest} The authors have no conflicts of interest to declare that are relevant to the content of this article.

\renewcommand{\bibname}{\centerline{Bibliography}}
\bibliographystyle{abbrv} 
\bibliography{bec,physics}
\end{document}